\documentclass{llncs}

\usepackage{microtype}
\usepackage{azalea}
\usepackage{ori}

\allowdisplaybreaks

\begin{document}

\title{On the Semantics of Snapshot Isolation}


\author{Azalea Raad \inst{1}  \and Ori Lahav \inst{2} \and Viktor Vafeiadis \inst{1}}
\institute{
	MPI-SWS, Germany
	\and Tel Aviv University, Israel \\
	\email{\{azalea,viktor\}@mpi-sws.org} \\
	\email{orilahav@tau.ac.il}
}





\crefformat{appendix}{#2\S{}#1#3}
\Crefformat{appendix}{Appendix #2#1#3}

\maketitle

\begin{abstract}
Snapshot isolation (SI) is a standard transactional consistency model used in databases, distributed systems and software transactional memory (STM).
Its semantics is formally defined both declaratively as an acyclicity axiom, 
and operationally as a concurrent algorithm with memory bearing timestamps.

We develop two simpler equivalent operational definitions of SI
as lock-based reference implementations that do not use timestamps.
Our first locking implementation is prescient in that requires \textit{a priori} knowledge of the data accessed by a transaction and carries out transactional writes eagerly (in-place).
Our second implementation is non-prescient and performs transactional writes lazily by recording them in a local log and propagating them to memory at commit time.
Whilst our first implementation is simpler and may be better suited for developing a program logic for SI transactions, 
our second implementation is more practical due to its non-prescience.
We show that both implementations are sound and complete against the declarative SI specification and thus yield equivalent operational definitions for SI.

We further consider, for the first time \emph{formally}, the use of SI in a context with racy non-transactional accesses,
as can arise in STM implementations of SI.
We introduce \emph{robust snapshot isolation} (RSI), an adaptation of SI with similar semantics and guarantees in this mixed setting.
We present a declarative specification of RSI as an acyclicity axiom and analogously develop two operational models as lock-based reference implementations (one eager, one lazy). 
We show that these operational models are both sound and complete against the declarative RSI model.

\end{abstract}

\renewcommand{\paragraph}[1]{
\medskip
\noindent\textbf{#1}\;\;
}

\section{Introduction}
\label{sec:intro}
Transactions are the \emph{de facto} synchronisation mechanism in databases and geo-replicated distributed systems, 
and are thus gaining adoption in the shared-memory setting via \emph{software transactional memory} (STM)~\cite{stm,tm}.
In contrast to other synchronisation mechanisms, transactions readily provide atomicity, isolation, and consistency guarantees for sequences of operations,
allowing programmers to focus on the high-level design of their systems.

However, providing these guarantees comes at a significant cost.
As such, various transactional consistency models in the literature trade off consistency guarantees for better performance.
At nearly the one end of the spectrum, we have \emph{serialisability}~\cite{Papadimitriou79}, 
which requires transactions to appear to have been executed in some total order consistent with the program order.
Serialisability provides strong guarantees, but is widely considered too expensive to implement.
The main problem is that two conflicting transactions (e.g.~one reading from and one updating the same datum) 
cannot both execute and commit in parallel.

Consequently, most major databases,
both centralised (e.g.~Oracle and MS SQL Server) and distributed~\cite{SI-distributed-lazy-replication,SI-distributed-replication,SI-distributed-notifications}, 
have opted for a slightly weaker model called \emph{snapshot isolation} (SI)~\cite{SI} 
as their default consistency model.
SI has much better performance than serialisability
by allowing conflicting transactions to execute concurrently and commit successfully as long as they do not have a write-write conflict.
This in effect allows reads of SI transactions to read from an earlier memory snapshot than the one affected by their writes, 
and permits the \emph{write skew anomaly}~\cite{SI-Cerone} depicted in \cref{fig:litmus_tests}.
Besides this anomaly, however, SI is essentially the same as serialisability: 
Cerone et al.~\cite{SI-Cerone} provide a widely applicable condition under which SI and serialisability coincide for a given set of transactions.
For these reasons, SI has also started gaining adoption in the generic programming language setting via STM implementations~\cite{SI-STM-clojure,SI-Java,SI-STM-hindsight,SI-STM-anomalies,SI-STM-aborts} that provide SI semantics for their transactions.

The formal study of SI, however, has so far not accounted for the more general STM setting
in which both transactions and uninstrumented non-transactional code can access the same memory locations.
While there exist two equivalent definitions of SI---%
one declarative in terms of an acyclicity constraint~\cite{cerone15,SI-Cerone}
and one operational in terms of an optimistic multi-version concurrency control algorithm~\cite{SI}---%
neither definition supports 
\emph{mixed-mode} (i.e.\ both transactional and non-transactional) accesses to the same locations. 
Extending the definitions to do so is difficult for two reasons:
(1) the operational definition attaches a timestamp to every memory location, 
which heavily relies on the absence of non-transactional accesses; and
(2) there are subtle interactions between the transactional implementation
and the weak memory model underlying the non-transactional accesses.

In this article, we address these limitations of SI.
We develop two simple lock-based reference implementations for SI that do not use timestamps. 
Our first implementation is \emph{prescient}~\cite{tm-book} in that it requires \textit{a priori} knowledge of the data accessed by a transaction, and performs transactional writes \emph{eagerly} (in-place).
Our second implementation is non-prescient and carries out transactional writes \emph{lazily} by first recording them in a local log and subsequently propagating them to memory at commit time.
Our first implementation is simpler and may be better suited for understanding and developing a program logic for SI transactions,
whilst our second implementation is more practical due to its non-prescience.
We show that both implementations are sound and complete against the declarative SI specification and thus yield equivalent operational definitions for SI.


We then extend both our eager and lazy implementations to make them robust under uninstrumented non-transactional accesses,
and characterise declaratively the semantics we obtain.
We call this extended model \emph{robust snapshot isolation} (RSI) and show that it gives reasonable semantics with mixed-mode accesses.

To provide SI semantics, instead of timestamps, our implementations use \emph{multiple-readers-single-writer} (MRSW) locks.
They acquire locks in reader mode to take a snapshot of the memory locations accessed by a transaction
and then promote the relevant locks to writer mode to enforce an ordering on transactions with write-write conflicts.
As we shall discuss, the equivalence of the RSI implementation and its declarative characterisation depends heavily upon the axiomatisation of MRSW locks:
here, we opted for the weakest possible axiomatisation that does not order any concurrent reader lock operations 
and present an MRSW lock implementation that achieves this.

\paragraph{Outline}
In \cref{sec:ideas} we present an overview of our contributions by describing our reference implementations for both SI and RSI.
In \cref{sec:framework} we define the declarative framework for specifying STM programs. 
In \cref{sec:si} we present the declarative SI specification against which we demonstrate the soundness and completeness of our SI implementations.
In \cref{sec:rsi} we formulate a declarative specification for RSI and demonstrate the soundness and completeness of our RSI implementations.
We discuss related and future work in \cref{sec:related_and_future_work}.%

\newcommand{\rlock}{\code{lock\_r}\xspace}
\newcommand{\runlock}{\code{unlock\_r}\xspace}
\newcommand{\wlock}{\code{lock\_w}\xspace}
\newcommand{\wunlock}{\code{unlock\_w}\xspace}

\section{Background and Main Ideas}
\label{sec:ideas}

\makeatletter
\def\namedlabel#1#2{\begingroup(#2)\def\@currentlabel{#2}\phantomsection\label{#1}\endgroup}
\makeatother

\newcommand{\CtransI}[2]{\textbf{\smaller{#1:}}\left[\inarr{#2}\right.}
\newcommand{\Ctrans}[2]{\textbf{\smaller{#1:}}\\\,\left[\inarr{#2}\right.}
\newcommand{\Anomaly}[6]{%
\begin{minipage}[b]{#1\textwidth}\centering\vspace{1ex}
\ensuremath{#6}\\[1ex]\namedlabel{#3}{#2} #4\\[1ex]\end{minipage}}

\newcommand\denot[1]{\ensuremath{\llbracket{#1}\rrbracket}}
\newcommand\snapshotx{\ensuremath{\code{s}_{\code{x}}}}

As noted earlier, the key challenge in specifying STM transactions lies 
in accounting for the interactions between mixed-mode accesses to the same data.
One simple approach is to treat each non-transactional access as a singleton mini-transaction
and to provide \emph{strong isolation}~\cite{Blundell-isolation,Blundell-isolation2}, 
i.e.\ full isolation between transactional and non-transactional code.
This, however, requires \emph{instrumenting} non-transactional accesses to adhere to same access policies as transactional ones (e.g.\  acquiring the necessary locks), 
which incurs a substantial performance penalty for non-transactional code.
A more practical approach is to enforce isolation only amongst transactional accesses,
an approach known as \emph{weak isolation}~\cite{Blundell-isolation,Blundell-isolation2}, adopted by the relaxed transactions of C++~\cite{C++}.

As our focus is on STMs with SI guarantees, instrumenting non-transactional accesses is not feasible. 
In particular, as we expect many more non-transactional accesses than transactional ones,
we do not want to incur any performance degradation on non-transactional code when executed in parallel with transactional code. 
As such, we opt for an STM with SI guarantees under \emph{weak isolation}. 
Under weak isolation, however, transactions with explicit abort instructions are problematic 
as their intermediate state may be observed by non-transactional code. 
As such, weakly isolated STMs (e.g.\ C++ relaxed transactions~\cite{C++}) often forbid explicit aborts altogether.
Throughout our development we thus make two simplifying assumptions: 
(1) transactions are not nested; and 
(2) there are no explicit abort instructions, following the example of weakly isolated relaxed transactions of C++. 
%
As we describe later in \cref{subsec:ideas_si_alt_implementation}, it is straightforward to lift the latter restriction (2) for our lazy implementations. 

For non-transactional accesses, we naturally have to pick some consistency model.
For simplicity and uniformity, we pick the release/acquire (RA) subset of the C++ memory model~\cite{C11,SRA},
a well-behaved platform-independent memory model, whose compilation to x86 requires no memory fences.

\paragraph{Snapshot Isolation (SI)}
The initial model of SI in~\cite{SI} is described informally in terms of a multi-version concurrent algorithm as follows. 
A transaction $\code T$ proceeds by taking a \emph{snapshot} $S$ of the shared objects.
The execution of $\code T$ is then carried out locally: read operations query $S$ and write operations update $S$. 
Once $\code T$ completes its execution, it attempts to \emph{commit} its changes and succeeds \emph{only if} it is not \emph{write-conflicted}. 
Transaction $\code T$ is write-conflicted if another \emph{committed} transaction $\code T'$  has written to a location also written to by $\code T$, since \code{T} recorded its snapshot. 
If $\code T$ fails the conflict check it aborts and may restart; otherwise, it commits its changes, and its changes become visible to all other transactions that take a snapshot thereafter.

To realise this, the shared state is represented as a series of \emph{multi-versioned} objects: each object is associated with a history of several versions at different \emph{timestamps}.
In order to obtain a snapshot, a transaction $\code T$ chooses a \emph{start-timestamp} $t_0$,
and reads data from the committed state as of $t_0$, ignoring updates after $t_0$.
That is, updates committed after $t_0$ are invisible to $\code T$.
In order to commit, $\code T$ chooses a \emph{commit-timestamp} $t_c$ larger than any existing start- or commit-timestamp. 
Transaction $\code T$ is deemed write-conflicted if another transaction $\code T'$ has written to a location also written to by $\code T$ \emph{and} the commit-timestamp of $\code T'$ is in the execution interval of \code T ($[t_0, t_c]$).

\begin{figure}[t]
\centering
  \begin{tabular}{@{}|@{}c@{}|@{}c@{}|@{}c@{}|@{}}
    \hline
    \Anomaly{.33}{LU}{ex:LU}{Lost Update\\ SI: \xmark}{x=0}{\inarrII
		{ \Ctrans{T1}{ a := x; \comment{0} \!\!\\ x := a+1;\!\!} }
		{ \!\!\Ctrans{T2}{ b := x; \comment{0}\\ x := b+1;} }}
    & 
    \Anomaly{.32}{WS}{ex:WS}{Write Skew\\ SI: \cmark}{x=y=0}{\inarrII
		{ \Ctrans{T1}{ a := x; \comment{0}\\ y := 1;} }
		{ \Ctrans{T2}{ b := y; \comment{0}\\ x := 1;} }}
    & 
    \Anomaly{.33}{WS2}{ex:WS2}{Write Skew Variant \\ SI: \cmark}{x=y=0}{\inarrII
		{ \CtransI{T1}{ y := 1; }\\ \Ctrans{T3}{a := x; \comment{0}} }
		{ \Ctrans{T2}{ b := y; \comment{0}\\ x := 1;} }}
    \\\hline
	\Anomaly{.34}{LU2}{ex:LU2}{Lost Update Variant \\ SI: \xmark}{x=y=0}{\inarrII
		{ \CtransI{T1}{ x := 1;\\ y := 1;} \\[9pt] \Ctrans{T3}{ a:=y; \comment{2} }}
		{ \Ctrans{T2}{ b := x; \comment{0}\\ y := 2;} }}
	&
	\Anomaly{.32}{SBT}{ex:SBT}{Store Buffering \\ RSI: \cmark}{x=y=z=0}{\inarrII
		{ x:= 1; \\ \Ctrans{T1}{ a := z; } \\[1ex] b:=y; \comment{0} }
		{ y:= 1; \\ \Ctrans{T2}{ c := z; } \\[1ex] d:=x; \comment{0} }}			
	&
	\Anomaly{.34}{MPT}{ex:MPT}{Message Passing \\ RSI: \xmark}{x=y=0}{\inarrII
		{ x:= 1; \\ y:=1; \\[1ex]}
		{ \Ctrans{T2}{ a := y; \comment{1}\\ b := x; \comment{0}} \\[3ex]}}	
	\\\hline
  \end{tabular}
\caption{Litmus tests illustrating transaction anomalies and their admissibility under SI and RSI.
In all tests, initially, $x=y=z=0$.
The {\!\comment{v}} annotation next to a read records the value read.}
\label{fig:litmus_tests}
\end{figure}

\subsection{Towards an SI Reference Implementation without Timestamps}\label{subsec:ideas_implementations}
\label{sec:ideas_lock}

While the SI description above is suitable for understanding SI, it is not useful for integrating the SI model in a language such as C/C++ or Java. 
From a programmer's perspective, in such languages the various threads directly access the \emph{uninstrumented} (single-versioned) shared memory; they do not access their own instrumented snapshot at a particular timestamp, which is loosely related to the snapshots of other threads. 
Ideally, what we would therefore like is an equivalent description of SI in terms of accesses to uninstrumented shared memory and a synchronisation mechanism such as locks.

In what follows, we present our first lock-based reference implementation for SI that does not rely on timestamps. 
To do this, we assume that the locations accessed by a transaction can be statically determined.
Specifically, we assume that each transaction $\code T$ is supplied with its \emph{read set}, $\readset$, and \emph{write set}, $\writeset$, containing those locations read and written by $\code T$, respectively (a static over-approximation of these sets suffices for soundness.).
As such, our first reference implementation is \emph{prescient}~\cite{tm-book} in that it requires \textit{a priori} knowledge of the locations accessed by the transaction.
Later in \cref{subsec:ideas_si_alt_implementation} we lift this assumption and develop an SI reference implementation that is \emph{non-prescient} and similarly does not rely on timestamps. 
%


Conceptually, a candidate implementation of transaction $\code T$ would 
(1) obtain a snapshot of the locations read by $\code T$;
(2) lock those locations written by $\code T$;
(3) execute $\code T$ \emph{locally}; and
(4) unlock the locations written. 
The snapshot is obtained via \code{snapshot(\readset)} in \cref{fig:si_implementation} where the values of locations in \readset are recorded in a local array \code{s}.
The local execution of \code{T} is carried out by executing $\denot{\code{T}}$ in \cref{fig:si_implementation}, which is obtained from \code{T} by 
(i) modifying read operations to read locally from the snapshot in \code{s}, and 
(ii) updating the snapshot after each write operation. 
Note that the snapshot must be obtained \emph{atomically} to reflect the memory state at a particular instance (\textit{cf.}~start-timestamp). 
An obvious way to ensure the snapshot atomicity is to lock the locations in the read set, obtain a snapshot, and unlock the read set. 
However, as we must allow for two transactions \emph{reading} from the same location to execute in parallel, 
we opt for \emph{multiple-readers-single-writer} (MRSW) locks.

\newcommand{\Cforin}[2]{\code{{for}\,({#1}\,$\in$\,\ensuremath{#2})}}

\begin{figure}[t]
\centering
  \begin{tabular}{@{}|@{\hspace{3pt}}l@{\hspace{3pt}}|@{\hspace{3pt}}l@{\hspace{3pt}}|@{\hspace{3pt}}l@{\hspace{3pt}}|@{}}
    \hline
    \begin{subfigure}[b]{0.3\textwidth}
    \small
			\begin{enumerate}[label={\color{grey} \arabic*.}, ref=\theenumi, leftmargin=*, itemsep=0pt, labelsep=2pt]
				\item \code{\Cforin{x}{\readset}\,\rlock x\!\!}
				\label{lin:imp1-snapshot1}
				\item \code{snapshot(\readset);} \label{lin:imp1-snapshot2}
				\item \code{\Cforin{x}{\readset}\,\runlock \!x\!\!\!}
				\label{lin:imp1-snapshot3}
				\item \code{\Cforin{x}{\writeset}\,\wlock x}\label{lin:imp1-update1}
		 	    \item \denot{\code{T}};\label{lin:imp1-update2}
				\item \code{\Cforin{x}{\writeset}\,\wunlock \!x\!\!\!\!\!\!}
				\label{lin:imp1-update3}
			\end{enumerate}
    		\caption{}
    		\label{subfig:imp1}
    \end{subfigure} 
    & 
	 \begin{subfigure}[b]{0.335\textwidth}
	 \small
			\begin{enumerate}[label={\color{grey} \arabic*.}, ref=\theenumi, leftmargin=*, itemsep=0pt, labelsep=2pt]
				\item \code{\Cforin{x}{\writeset} \wlock x;}
				\item \code{\Cforin{x}{\readset{\setminus}\writeset}\,\rlock x}
				\item \code{snapshot(\readset);} 
				\item \code{\Cforin{x}{\readset{\setminus}\writeset}\,\runlock \!x\!\!\!\!}
		 	    \item \denot{\code{T}};
				\item \code{\Cforin{x}{\writeset} \wunlock x}
			\end{enumerate}
    		\caption{}
    		\label{subfig:imp2}
    \end{subfigure}     
    & 
	 \begin{subfigure}[b]{0.32\textwidth}
	 \small
		\begin{enumerate}[label={\color{grey} \arabic*.}, ref=\theenumi, leftmargin=*, itemsep=0pt, labelsep=2pt]
			\item \code{\Cforin{x}{\readset\,{\cup}\,\writeset}\,\rlock\!x\!\!}
			\item \code{snapshot(\readset);} 
			\item \code{\Cforin{x}{\readset\,{\cup}\,\writeset} \{}\label{lin:imp3_promote1}
			\item \quad \code{if\,(x\,$\in$\,\writeset)\,promote x}\label{lin:imp3_promote2}
			\item \quad \code{else \runlock x;\quad\}}\label{lin:imp3_promote3} 
		 	\item \denot{\code{T}};
			\item \code{\Cforin{x}{\writeset}\,\wunlock\!x} 
		\end{enumerate}
		\vspace{-10pt}
    		\caption{}
    		\label{subfig:imp3}
    \end{subfigure}         
    \\
\hline
 	Sound: \xmark 
 	& Sound: \cmark 
 	& Sound: \cmark   \\
 	
 	\;\; allows (\ref{ex:LU}), (\ref{ex:LU2})
 	& Complete: \xmark
 	& Complete: \xmark \\
 	
 	\phantom{complete}
 	& \;\; disallows (\ref{ex:WS})
 	& \;\; disallows (\ref{ex:WS2})\\
\hline 	
  \end{tabular}
\caption{Candidate SI implementations of transaction \code{T} given read/write sets \readset,\writeset}
\label{fig:si_candidate_implementations}
\end{figure}


Let us now try to make this general pattern more precise.
As a first attempt, consider the implementation in \cref{subfig:imp1},
which releases all the reader locks at the end of the snapshot phase before acquiring any writer locks.
This implementation is unsound as it admits the lost update \eqref{ex:LU} anomaly in \cref{fig:litmus_tests} disallowed under SI~\cite{SI-Cerone}.
To understand this, consider a scheduling where T2 runs between lines \ref{lin:imp1-snapshot3} and \ref{lin:imp1-update1} of T1 in \cref{subfig:imp1},
which would result in T1 having read a stale value.
The problem is that the writer locks on \writeset are acquired too late, allowing two conflicting transactions to run concurrently. 
To address this, writer locks must be acquired early enough to pre-empt the concurrent execution of write-write-conflicting transactions. 
Note that locks have to be acquired early even for locations only written by a transaction
to avoid exhibiting a variant of the lost update anomaly \eqref{ex:LU2}.

As such, our second candidate implementation in \cref{subfig:imp2} brings forward the acquisition of writer locks. 
Whilst this implementation is sound (and disallows lost update), it nevertheless disallows behaviours deemed valid under SI such as the write skew anomaly \eqref{ex:WS} in \cref{fig:litmus_tests}, and is thus incomplete.
The problem is that such early acquisition of writer locks not only pre-empts concurrent execution of \emph{write-write-conflicting} transactions, but also those of \emph{read-write-conflicting} transactions (e.g.~\ref{ex:WS}) due to the exclusivity of writer locks.

To remedy this, in our third candidate implementation in \cref{subfig:imp3} we first acquire weaker reader locks on all locations in \readset or \writeset, and later \emph{promote} the reader locks on \writeset to exclusive writer ones, while releasing the reader locks on \readset.
The promotion of a reader lock signals its intent for exclusive ownership and awaits the release of the lock by other readers before claiming it exclusively as a writer. 
To avoid deadlocks, we further assume that $\readset\,{\cup}\,\writeset$ is ordered so that locks are promoted in the same order by all threads.

Although this implementation is ``more complete'' than the previous one, 
it is still incomplete as it disallows certain behaviour admitted by SI. 
In particular, consider a variant of the write skew anomaly \eqref{ex:WS2} depicted in \cref{fig:litmus_tests}, which is admitted under SI, but not admitted by this implementation. 

To understand why this is admitted by SI, recall the operational SI model using timestamps.%
 Let the domain of timestamps be that of natural numbers $\Nats$.
The behaviour of \eqref{ex:WS2} can be achieved by assigning the following execution intervals for T1: $[t_0^{\code T_1}{=}2, t_c^{\code T_1}{=}2]$; T2: $[t_0^{\code T_2}{=}1, t_c^{\code T_2}{=}4]$; and T3: $[t_0^{\code T_3}{=}3, t_c^{\code T_3}{=}3]$. 
To see why the implementation in \cref{subfig:imp3} does not admit the behaviour in  \eqref{ex:WS2},
let us assume without loss of generality that \x is ordered before \y. 
Upon executing lines \ref{lin:imp3_promote1}-\ref{lin:imp3_promote3}, 
a) T1 promotes \y; 
b) T2 promotes \x and then c) releases the reader lock on \y; and 
d) T3 releases the reader lock on \x. 
To admit the behaviour in  \eqref{ex:WS2}, the release of \y in (c) must occur before the promotion of \y in (a) since otherwise T2 cannot read 0 for \y. 
Similarly, the release of \x in (d) must occur before its promotion in (b). 
On the other hand, since T3 is executed by the same thread after T1, we know that (a) occurs before (d). 
This however leads to circular execution: (b)$\relarrow{}$(c)$\relarrow{}$(a)$\relarrow{}$(d)$\relarrow{}$(b), which cannot be realised.

\begin{figure}[t]
\centering
\begin{tabular}{l|l}
	\begin{minipage}[]{0.5\textwidth}
		\small
		\begin{enumerate}[label={\color{grey} \arabic*.}, ref=\theenumi, leftmargin=*, itemsep=-2pt, labelsep=6pt]
			\setcounter{enumi}{-1}
			\item \code{\color{blue}LS\,:= $\emptyset$;}
			\label{lin:si_init0}
		 	\item \code{\Cforin{x}{\readset\,{\cup}\,\writeset} \rlock \x}
		 	\item \code{snapshot(\readset);}  \label{lin:si_snapshot}
	
			\item \code{\Cforin{x}{\readset{\setminus}\writeset} \runlock \x} \label{lin:si_runlock}
			\item \code{\Cforin{x}{\writeset} \{ } \label{lin:si_plock}
			\item \quad \code{\textcolor{blue}{if (}can-promote x\color{blue}) LS.add(x) } \label{lin:si_plock_success}
			\item \quad \code{\color{blue}else \{}\label{lin:si_plock_fail1}
			\item \qquad \code{\color{blue}\Cforin{x}{\code{LS}} \wunlock \x} 
	 		\item \qquad \code{\color{blue}\Cforin{x}{\writeset\setminus\code{LS}} \runlock \x} 
	 		\item \qquad \code{\color{blue}goto line \ref{lin:si_init0} \}}
	 		\item \code{\} } \label{lin:si_plock_fail5}
		 	\item \denot{\code{T}}; 
		 	\item \code{\Cforin{x}{\writeset} \wunlock \x}
		 \end{enumerate}
	\vspace{-5pt}
	\end{minipage} &
	$\begin{array}{@{\hspace{5pt}} r @{\hspace{2pt}} l @{\hspace{5pt}}}
		\code{snapshot(\readset)} \eqdef & \code{\Cforin{x}{\readset} \snapshotx:=\,x} \\[2ex]

		\denot{\code{a:=x}} \eqdef & \code{a:=\snapshotx} \\[1ex]
		\denot{\code{x:=a}} \eqdef & \code{x:=a;\,\snapshotx:=a} \\[1ex]
		\denot{S_1\code{;}S_2} \eqdef & \denot{S_1}\code{;}\denot{S_2} \\[1ex]
		\denot{\code{while(e)}\,S} \eqdef & \code{while(e)}\,\denot{S} \\
		& \hspace{-40pt} \ldots\; \text{and so on} \ldots \\[2ex]

		\code{snapshot}_{\mathrm{RSI}}\code{(\readset)} \eqdef &\\
		& \hspace{-60pt} \codelabel{start}  
		\begin{array}[t]{@{} l @{}}
			\code{\Cforin{x}{\readset} \snapshotx:=\,x} \\
			\code{\Cforin{x}{\readset} \{ }\\
			\quad \code{if\,(\snapshotx!=\,x) goto start} \\
			\code{\}}
		\end{array}
	\end{array}$
	\end{tabular}
\caption{SI implementation of transaction \code{T} given \readset, \writeset; the code in {\color{blue}blue} ensures deadlock avoidance. 
The RSI implementation (\cref{sec:rsi}) is obtained by replacing \code{snapshot} on line \ref{lin:si_snapshot} with $\code{snapshot}_{\mathrm{RSI}}$.}
\label{fig:si_implementation}
\end{figure}

To overcome this, in our final candidate execution in \cref{fig:si_implementation} (ignoring the code in {\color{blue} blue}), after obtaining a snapshot, we \emph{first} release the reader locks on \readset, and \emph{then} promote the reader locks on \writeset, rather than simultaneously in one pass.
As we demonstrate in \cref{sec:si}, the implementation in \cref{fig:si_implementation} is both \emph{sound and complete} against its declarative SI specification.

\paragraph{Avoiding Deadlocks}
As two distinct reader locks on \x may simultaneously attempt to promote their locks, promotion is done on a `first-come-first-served' basis to avoid \emph{deadlocks}. 
A call to \code{can-promote \x} by reader $r$ thus returns a boolean denoting either (i) successful promotion (true); or (ii) failed promotion as another reader $r'$ is currently promoting a lock on \x (false). 
In the latter case, $r$ must release its reader lock on \x to ensure the successful promotion of \xl by $r'$ and thus avoid deadlocks.
To this end, our implementation in \cref{fig:si_implementation} includes a deadlock avoidance mechanism (code in {\color{blue}blue}) as follows.  
We record a list \code{LS} of those locks on the write set that have been successfully promoted so far.
When promoting a lock on \x succeeds (line~\ref{lin:si_plock_success}), the \code{LS} is extended with \x.
On the other hand, when promoting \x fails (line~\ref{lin:si_plock_fail1}), all those locks promoted so far (i.e.~in \code{LS}) as well as those yet to be promoted (i.e.~in \writeset$\setminus$\code{LS}) are released and the transaction is restarted.

\begin{remark}
Note that the deadlock avoidance code in {\color{blue} blue} does not influence the correctness of the implementation in \cref{fig:si_implementation}, 
and is merely included to make the reference implementation more realistic. 
In particular, the implementation without the deadlock avoidance code is both sound and complete against the SI specification, provided that the conditional \code{can-promote} call on line \ref{lin:si_plock_success} is replaced by the blocking \code{promote} call. 
\end{remark}

\subsubsection{Avoiding Over-Synchronisation due to MRSW Locks}
Consider the store buffering program \eqref{ex:SBT} shown in \cref{fig:litmus_tests}.
If, for a moment, we ignore transactional accesses, 
our underlying memory model (RA)---as well as all other weak memory models---allows the annotated weak behaviour.
Intuitively, placing the two transactions that only \emph{read} $z$ in \eqref{ex:SBT} should still allow the weak behaviour
since the two transactions do not need to synchronise in any way.
Nevertheless, most MRSW lock implementations forbid this outcome because 
they use a single global counter to track the number of readers that have acquired the lock, 
which inadvertently also synchronises the readers with one another.
As a result, the two read-only transactions act as memory fences forbidding the weak outcome of \eqref{ex:SBT}\@.
To avoid such synchronisation, in the technical appendix (\cref{app:lock_implementations}) we provide a different MRSW implementation
using a separate location for each thread so that reader lock acquisitions do not synchronise.

To keep the presentation simple, we henceforth assume an abstract specification of a MRSW lock library
providing operations for acquiring/releasing reader/writer locks, as well as promoting reader locks to writer ones.
We require that (1) calls to writer locks (to acquire, release or promote) \emph{synchronise} with all other calls to the lock library; and (2) writer locks provide \emph{mutual exclusion} while held. We formalise these notions in \cref{sec:si}.
These requirements do not restrict synchronisation between \emph{two read} lock calls: two read lock calls may or may not synchronise. 
Synchronisation between read lock calls is relevant only for the completeness of our RSI implementation (handling mixed-mode code);
for that result, we further require that (3) read lock calls not synchronise. 

\subsection{Handling Racy Mixed-Mode Accesses}
Let us consider what happens when data accessed by a transaction is modified concurrently by an uninstrumented atomic non-transactional write.
Since such writes do not acquire any locks, the snapshots taken may include values written by non-transactional accesses.
The result of the snapshot then depends on the order in which the variables are read.
Consider the \eqref{ex:MPT} example in \cref{fig:litmus_tests}.  
In our implementation, if in the snapshot phase $y$ is read before $x$, then the annotated weak behaviour is not possible 
because the underlying model (RA) disallows this weak ``message passing'' behaviour.
If, however, $x$ is read before $y$, then the weak behaviour is possible.
In essence, this means that the SI implementation described so far is of little use
when there are races between transactional and non-transactional code.
Technically, our SI implementation violates \emph{monotonicity} with respect to wrapping code inside a transaction.
The weak behaviour of the \eqref{ex:MPT} example is disallowed by RA if we remove the transaction block T2,
and yet it is exhibited by our SI implementation with the transaction block.

To get monotonicity under RA, it suffices for the snapshots to read the variables in the same order they are accessed by the transactions.
Since a static calculation of this order is not always possible, 
following~\cite{raad1:psi}, we achieve this by reading each variable twice.
In more detail, our $\code{snapshot}_{\mathrm{RSI}}$ implementation in \cref{fig:si_implementation}
takes \emph{two} snapshots of the locations read by the transaction, 
and checks that they both return the same values for each location.
This ensures that every location is read both before and after every other location in the transaction,
and hence all the high-level happens-before orderings in executions of the transactional program 
are also respected by its implementation.
As we demonstrate in \cref{sec:rsi}, our RSI implementation is both \emph{sound and complete} against our proposed declarative semantics for RSI. 
There is however one caveat: 
since equality of values is used to determine whether the two snapshots agree, 
we will miss cases where different non-transactional writes to a location write the same value.
In our formal development (see \cref{sec:rsi}), we thus assume that 
if multiple non-transactional writes write the same value to the same location, they
cannot race with the same transaction.
Note that this assumption cannot be lifted without instrumenting non-transactional writes, and thus impeding performance substantially.
That is, to lift this restriction we must instead replace every non-transactional \emph{write} \code{x:=\,v} with \code{\wlock x;\,x:=\,v;\,\wunlock x}. 
\subsection{Non-Prescient  Reference Implementations without Timestamps}
\label{subsec:ideas_si_alt_implementation}
Recall that the SI and RSI implementations in \cref{subsec:ideas_implementations} are prescient in that they require knowledge of the read and write sets of  transactions beforehand.
In what follows we present alternative SI and RSI implementations that are \emph{non-prescient}. 
%
%

\subsubsection{Non-Prescient SI Reference Implementation}
In  \cref{fig:si_alt_implementation} we present a \emph{lazy} lock-based reference implementation for SI.
This implementation is \emph{non-prescient} and does not require \textit{a priori} knowledge of the read set \readset and the write set \writeset. 
Rather, the \readset and \writeset are computed on the fly as the execution of the transaction unfolds.
As with the SI implementation in \cref{fig:si_implementation}, this implementation does not rely on timestamps and uses MRSW locks to synchronise concurrent accesses to shared data. 
As before, the implementation consults a local \emph{snapshot} at \code s for read operations. 
However, unlike the eager implementation in \cref{fig:si_implementation} where transactional writes are performed \emph{in-place}, 
the implementation in \cref{fig:si_alt_implementation} is \emph{lazy} in that it logs the writes in the local array \code{s} and propagates them to memory at commit time, as we describe shortly.
\begin{figure}[t]
\centering
\begin{tabular}{l|l}
	\begin{minipage}[]{0.47\textwidth}
		\small
		\begin{enumerate}[label={\color{grey} \arabic*.}, ref=\theenumi, leftmargin=15pt, itemsep=-2pt, labelsep=6pt]
			\setcounter{enumi}{-1}
			\item {\color{blue}\code{LS:=\,$\emptyset$;}}
	 		\label{lin:si_alt_init0}
			\item \code{\readset:=\,$\emptyset$; \writeset:=\,$\emptyset$;}
			\label{lin:si_alt_init1}
			\item \code{\Cforin{x}{\code{Locs}} s[x]:=\,$\bot$}
			\label{lin:si_alt_init2}
		 	\item \denot{\code{T}}; 
		 	\label{lin:si_alt_body}
		 	\item \code{\Cforin{x}{\readset{\setminus}\writeset} \runlock \x}
		 	\label{lin:si_alt_runlock}
		 	\item \code{\Cforin{x}{\writeset} \{}
		 	\item \code{\textcolor{blue}{\quad if (}can-promote x\color{blue}) LS.add(x)} 
		 	\label{lin:si_alt_deadlock_begin}
			\item \code{\color{blue}\quad else \{}
			\item \code{\color{blue}\qquad \Cforin{x}{\code{LS}} \wunlock \x} 
	 		\item \code{\color{blue}\qquad \Cforin{x}{\writeset\,{\setminus}\,\code{LS}} \runlock \x} 
	 		\item \code{\color{blue}\qquad goto line \ref{lin:si_alt_init0} \}} \code{ \}}
	 		\label{lin:si_alt_deadlock_end}
		 	\item \code{\Cforin{x}{\writeset} x\,:=\,s[x]}
		 	\label{lin:si_alt_write}
		 	\item \code{\Cforin{x}{\writeset} \wunlock \x}
		 	\label{lin:si_alt_wunlock}
		 \end{enumerate}
	\vspace{-5pt}
	\end{minipage} &
	$\begin{array}{@{\hspace{5pt}} r @{\hspace{2pt}} l @{\hspace{5pt}}}
		\denot{\code{a:=x}} \eqdef & 
		\begin{array}[t]{@{} l @{}}
			\code{%
				if\,(x\,{$\not\in$}\,{\readset}\,{$\cup$}\,$\writeset$)\,\{
			}\\
			\quad 
			\code{%
					\rlock \x;
					\readset.add(\x);
			} \\
			\quad 
			\code{%
				s[x]:=\,x; 
			} \\
			\code{\}} \\
			\code{%
				a:=\,s[x];
			}	
		\end{array} \vspace{5pt}\\
		\denot{\code{x:=a}} \eqdef & 
		\begin{array}[t]{@{} l @{}}
			\code{%
				if\,(x\,{$\not\in$}\,{\readset}\,{$\cup$}\,$\writeset$)
					\rlock \x;
			} \\
			\code{%
				\writeset.add(\x);
				s[x]:=\,a;
			} 
		\end{array}  \vspace{5pt}\\	
%
%
		\denot{S_1\code{;}S_2} \eqdef & \denot{S_1}\code{;}\denot{S_2}  \vspace{5pt}\\	
		\denot{\code{while(e)}\,S} \eqdef & \code{while(e)}\,\denot{S}  \vspace{5pt}\\	
		& \hspace{-40pt} \ldots\; \text{and so on} \ldots 
	\end{array}$
	\end{tabular}
\caption{Non-prescient SI implementation of transaction \code{T} with \readset and \writeset computed on the fly; 
the code in {\color{blue} blue} ensures deadlock avoidance.
}
\label{fig:si_alt_implementation}
\end{figure}

Ignoring the code in {\color{blue} blue}, the implementation in \cref{fig:si_alt_implementation} proceeds with initialising \readset and \writeset with $\emptyset$ (line \ref{lin:si_alt_init1}); it then populates the local snapshot array at \code{s} with initial value $\bot$ for each location \x (line \ref{lin:si_alt_init2}).
%
It then executes $\denot{\code T}$ which is obtained from \code T as follows. 
For each \emph{read} operation \code{a:=\,x} in \code T, first the value of \code{s[x]} is inspected to ensure it contains a snapshot of \x.
If this is not the case (i.e.\ \x$\not\in {\readset}\,\cup\writeset$), a reader lock on \x is acquired, a snapshot of \x is recorded in \code{s[x]}, and  
the read set \readset is extended with \x. 
The snapshot value in \code{s[x]} is subsequently returned in \code a. 
Analogously, for each write operation \code{x:=\,a}, the  \writeset is extended with \x, and the written value is lazily logged in \code{s[x]}. 
Recall from our candidate executions in \cref{fig:si_candidate_implementations} that to ensure implementation correctness, for each written location \x, the implementation must first acquire a reader lock on \x, and subsequently promote it to a writer lock.
As such, for each write operation in \code T, the implementation first checks if a reader lock for \x has been acquired (i.e.\ \x$\in {\readset}\,\cup\writeset$) and obtains one if this is not the case.

Once the execution of $\denot{\code T}$ is completed, the implementation proceeds to \emph{commit} the transaction. 
To this end, the reader locks on \readset are released (line \ref{lin:si_alt_runlock}), 
reader locks on \writeset are promoted to writer ones (line \ref{lin:si_alt_deadlock_begin}), 
the writes logged in \code s are propagated to memory (line \ref{lin:si_alt_write}), 
and finally the writer locks on \writeset are released (line \ref{lin:si_alt_wunlock}).
As we demonstrate later in \cref{sec:si}, the implementation in \cref{fig:si_alt_implementation} is both sound and complete against the declarative SI specification.

Note that the implementation in \cref{fig:si_alt_implementation} is optimistic in that it logs the writes performed by the transaction in the local array \code{s} and propagates them to memory at commit time, rather than performing the writes \emph{in-place} as with its pessimistic counterpart in \cref{fig:si_implementation}.
As before, the code in {\color{blue} blue} ensures deadlock avoidance and is identical to its counterpart in \cref{fig:si_implementation}.
As before, this deadlock avoidance code does not influence the correctness of the implementation and is merely included to make the reference implementation more practical. 

\subsubsection{Non-Prescient RSI Reference Implementation}
In  \cref{fig:rsi_alt_implementation} we present a \emph{lazy} lock-based reference implementation for RSI.
As with its SI counterpart, this implementation is non-prescient and computes the \readset and \writeset on the fly.
As before, the implementation does not rely on timestamps and uses MRSW locks to synchronise concurrent accesses to shared data. 
Similarly, the implementation consults the local \emph{snapshot} at \code s for read operations, whilst logging write operations lazily in a \emph{write sequence} at \code{wseq}, as we describe shortly.
\begin{figure}[t]
\centering
\begin{tabular}{l|l}
	\begin{minipage}[]{0.46\textwidth}
		\small
		\begin{enumerate}[label={\color{grey} \arabic*.}, ref=\theenumi, leftmargin=15pt, itemsep=-2pt, labelsep=6pt]
			\setcounter{enumi}{-1}
			\item {\color{blue}\code{LS:=\,$\emptyset$;}}
	 		\label{lin:rsi_alt_init0}
			\item \code{\readset:=\,$\emptyset$; \writeset:=\,$\emptyset$; wseq:=\,[];}
			\label{lin:rsi_alt_init1}
			\item \code{\Cforin{x}{\code{Locs}} s[x]:=\,($\bot$,$\bot$)}
			\label{lin:rsi_alt_init2}
		 	\item \denot{\code{T}}; 
		 	\label{lin:rsi_alt_body}
		 	\item \code{\Cforin{x}{\readset} \{(r,-):=\,s[x];}
		 	\label{lin:rsi_alt_validate_begin}
		 	\item \code{\quad if (x!=r) \{ \comment{read \x again}}
		 	\label{lin:rsi_alt_validate_failure}
			\item \code{\qquad \Cforin{x}{\readset\,{\cup}\,\writeset} \runlock \x}
			\label{lin:rsi_alt_validate_release}
			\item \code{\qquad goto line \ref{lin:rsi_alt_init0} \} \}}
			\label{lin:rsi_alt_validate_end}
		 	\item \code{\Cforin{x}{\readset{\setminus}\writeset} \runlock \x}
		 	\label{lin:rsi_alt_runlock}
		 	\item \code{\Cforin{x}{\writeset} \{}
		 	\item \code{\textcolor{blue}{\quad if (}can-promote x\color{blue}) LS.add(x)\!\!} 
		 	\label{lin:rsi_alt_plock}
		 	\label{lin:rsi_alt_deadlock_begin}
			\item \code{\color{blue}\quad else \{}
			\item \code{\color{blue}\qquad \Cforin{x}{\code{LS}} \wunlock \x} 
	 		\item \code{\color{blue}\qquad \Cforin{x}{\writeset\,{\setminus}\,\code{LS}} \runlock \x} 
	 		\item \code{\color{blue}\qquad goto line \ref{lin:rsi_alt_init0} \}} \code{ \}}
	 		\label{lin:rsi_alt_deadlock_end}
		 	\item \code{\Cforin{(x,v)}{\code{wseq}} x\,:=\,v}
		 	\label{lin:rsi_alt_write}
		 	\item \code{\Cforin{x}{\writeset} \wunlock \x}
		 	\label{lin:rsi_alt_wunlock}
		 \end{enumerate}
	\vspace{-5pt}
	\end{minipage} &
	$\begin{array}{@{\hspace{5pt}} r @{\hspace{2pt}} l @{\hspace{5pt}}}
		\vspace{-10pt}\\
		\denot{\code{a:=x}} \eqdef & 
		\begin{array}[t]{@{} l @{}}
			\code{%
				if\,(x\,{$\not\in$}\,{\readset}\,{$\cup$}\,$\writeset$)\,\{
			}\\
			\quad 
			\code{%
					\rlock \x;
					\readset.add(\x);
			} \\
			\quad \code{%
				r:=\,x;
				s[x]:=\,(r,r);	
			} \\
			\code{%
				\}
				(-,c):=\,s[x];
				a:=\,c;
			}	
		\end{array} \vspace{5pt}\\
		\denot{\code{x:=a}} \eqdef & 
		\begin{array}[t]{@{} l @{}}
			\code{%
				if\,(x\,{$\not\in$}\,{\readset}\,{$\cup$}\,$\writeset$)
					\rlock \x
			} \\
			\code{%
				\writeset.add(\x);
			} \\
			\code{%
				(r,-):=\,s[x];\,s[x]:=\,(r,a);	
			}	\\
			\code{%
				wseq:=\,wseq++[(x,a)];
			}
		\end{array}  \vspace{5pt}\\	
%
%
		\denot{S_1\code{;}S_2} \eqdef & \denot{S_1}\code{;}\denot{S_2}  \vspace{5pt}\\	
		\denot{\code{while(e)}\,S} \eqdef & \code{while(e)}\,\denot{S}  \vspace{5pt}\\	
		& \hspace{-40pt} \ldots\; \text{and so on} \ldots 
	\end{array}$
	\end{tabular}
\caption{Non-prescient RSI implementation of transaction \code{T} with \readset and \writeset computed on the fly; 
the code in {\color{blue} blue} ensures deadlock avoidance.
}
\label{fig:rsi_alt_implementation}
\end{figure}

Recall from the RSI implementation in \cref{subsec:ideas_implementations} that to ensure snapshot validity, each location is read twice to preclude intermediate non-transactional writes. 
As such, when writing to a location \x, the initial value read (recorded in \code s) must not be \emph{overwritten} by the transaction to allow for subsequent validation of the snapshot. 
To this end, for each location \x, the snapshot array \code s contains a \emph{pair} of values, $(r, c)$, where $r$ denotes the snapshot value (initial value read), and $c$ denotes the current value which may have overwritten the snapshot value. 

Recall that under weak isolation, the intermediate values written by a transaction may be observed by non-transactional reads. 
For instance, given the $\CtransI{T}{ x:=1; x:=2} \big|\big|\, a:=x$ program, 
the non-transactional read $a:=x$, may read either $1$ or $2$ for $x$. 
As such, at commit time, it is not sufficient solely to propagate the last written value (in program order) to each location (e.g.~to propagate only the $x:=2$ write in the example above).
Rather, to ensure implementation completeness, one must propagate all written values to memory, in the order they appear in the transaction body.
To this end, we track the values written by the transaction as a (FIFO) \emph{write sequence} at location \code{wseq}, containing items of the form $(x, v)$, denoting the location written ($x$) and the associated value ($v$).

Ignoring the code in {\color{blue} blue}, the implementation in \cref{fig:rsi_alt_implementation} initialises \readset and \writeset with $\emptyset$, initialises \code{wseq} as an empty sequence \code{[]} (line \ref{lin:rsi_alt_init1}), 
and populates the local snapshot array \code{s} with initial value $(\bot, \bot)$ for each location \x (line \ref{lin:rsi_alt_init2}). 
It then executes $\denot{\code T}$, obtained from \code T in an analogous manner to that in \cref{fig:si_alt_implementation}.
For every read \code{a:=\,x} in $\denot{\code T}$, the current value recorded for \x in \code s (namely \code c when \code{s[x]} holds \code{(-,c)}) is returned in \code a.
Dually, for every write \code{x:=\,a} in $\denot{\code T}$, the current value recorded for \x in \code s is updated to \code a,
and the write is logged in the write sequence \code{wseq} by appending \code{(x,a)} to it. 

Upon completion of $\denot{\code T}$, the snapshot in \code s is \emph{validated} (lines \ref{lin:rsi_alt_validate_begin}-\ref{lin:rsi_alt_validate_end}). 
Each location \x in \readset is thus read again and its value is compared against the snapshot value in \code{s[x]}. 
If validation fails (line \ref{lin:rsi_alt_validate_failure}), the locks acquired are released (line \ref{lin:rsi_alt_validate_release}) and the transaction is restarted (line \ref{lin:rsi_alt_validate_end}).

If validation succeeds, the transaction is committed:
the reader locks on \readset are released (line \ref{lin:rsi_alt_runlock}), 
the reader locks on \writeset are promoted (line \ref{lin:rsi_alt_plock}), 
the writes in \code{wseq} are propagated to memory in FIFO order (line \ref{lin:rsi_alt_write}), 
and finally the writer locks on \writeset are released (line \ref{lin:rsi_alt_wunlock}). 

As we show in \cref{sec:rsi}, the implementation in \cref{fig:rsi_alt_implementation} is both sound and complete against our proposed declarative specification for RSI.
As before, the code in {\color{blue} blue} ensures deadlock avoidance; it does not influence the implementation correctness and is merely included to make the implementation more practical.

\paragraph{Supporting Explicit Abort Instructions}
It is straightforward to extend the lazy implementations in \cref{fig:si_alt_implementation} and \cref{fig:rsi_alt_implementation} to handle transactions containing explicit abort instructions. 
More concretely, as the effects (writes) of a transaction are logged locally and are not propagated to memory until commit time, upon reaching an \code{abort} in $\denot{T}$ no roll-back is necessary, 
and one can simply release the locks acquired so far and return.
That is, one can extend $\denot{.}$ in \cref{fig:si_alt_implementation} and \cref{fig:rsi_alt_implementation}, and define 
$
	\denot{\code{abort}} 
	\eqdef 
	\code{%
		\Cforin{x}{{\readset}\,{\cup}\,\writeset}\,\runlock \x;
		return%
	}
$.

\section{A Declarative Framework for STM}
\label{sec:framework}

We present the notational conventions used in the remainder of this article, 
and describe a general framework for declarative concurrency models.
Later in this article, we present SI, its extension with non-transactional accesses,
and their lock-based implementations as instances of this general definition.


\paragraph{Notation}
Given a relation $\makerel r$ on a set $A$, we write $\refC{\makerel{r}}$,  $\transC{\makerel{r}}$ and $\reftransC{\makerel{r}}$ for the  reflexive, transitive and reflexive-transitive closure of $\makerel{r}$, respectively. 
We write $\inv{\makerel r}$ for the inverse of $\makerel r$;
$\coerce{\makerel r}{A}$ for $\makerel r \cap (A\times A)$; 
$[A]$ for the identity relation on $A$, i.e.~$\setcomp{(a, a)}{a \in A}$;
$\irr{r}$ for $\nexsts{a} (a, a) \in r$;
and $\acyc{\makerel r}$ for $\irr{\transC{\makerel r}}$.
Given two relations $\makerel r_1$ and $\makerel r_2$, we write $\makerel r_1; \makerel r_2$ for their (left) relational composition, i.e.~$\setcomp{(a, b)}{\exsts{c} (a, c) \in \makerel r_1 \land (c, b) \in \makerel r_2}$.
Lastly, when $\makerel r$ is a strict partial order, we write $\imm{\makerel r}$ for the \emph{immediate} edges in $\makerel r$: $\setcomp{(a, b) \in \makerel r}{\nexsts{c} (a, c) \in \makerel{r} \allowbreak \land (c, b) \in \makerel{r}}$.


Assume finite sets of \emph{locations} $\Locs$; 
\emph{values} $\Vals$; \emph{thread identifiers} $\TIDs$,
and  \emph{transaction identifiers} $\TXIDs$.
We use $x, y, z$ to range over locations, $v$ over values, $\tau$ over thread identifiers,
and $\txid$ over transaction identifiers.

\begin{definition}[Events]
An event is a tuple $\tup{n,\tau,\txid,l}$, where $n\in\Nats$ is an event identifier, 
$\tau\in\TIDs \uplus \set{0}$ is a thread identifier ($0$ is used for initialisation events),
$\txid\in\TXIDs \uplus \set{0}$ is a transaction identifier ($0$ is used for non-transactional events),
and $l$ is an event label that takes one of the following forms:
\begin{itemize}
\item A memory access label: $\readE{\acq}{x}{v}$ for \emph{reads}; 
	 $\writeE{\rel}{x}{v}$ for \emph{writes}; 
	and $\updateE{\acqrel}{x}{v_r}{v_w}$ for \emph{updates}.
\item A lock label: 	$\rlockE{x}$ for \emph{reader lock acquisition};
$\runlockE{x}$ for \emph{reader lock release};
$\wlockE{x}$ for \emph{writer lock acquisition};
$\wunlockE{x}$ for \emph{writer lock release};
and $\plockE{x}$ for \emph{reader to writer lock promotion}.
\end{itemize}	
We typically use $a$, $b$, and $e$ to range over events.
The functions $\lTHRD$, $\lTX$, $\lLAB$, $\lTYP$, $\lLOC$, $\lVALR$ and $\lVALW$
respectively project the thread identifier, transaction identifier, label, type 
(in $\simpleset{\lR,\lW,\lU,\lRL,\lRU,\lWL,\lWU,\lPL}$), 
location, and read/written values of an event, where applicable. 
We assume only reads and writes are used in transactions 
($\tx{a}\neq 0\implies \typ{a}\in\simpleset{\lR,\lW}$).
\end{definition}

Given a relation $\makerel r$ on events, we write $\perloc{\makerel r}$ for $\setcomp{(a, b) \in \makerel r}{\loc{a} = \loc{b}}$.
Analogously, given a set $A$ of events, we write $A_{x}$ for $\setcomp{a\in A}{\loc a {=} x}$.

\begin{definition}[Execution graphs]
An \emph{execution graph}, $G$, is a tuple of the form 
$(\Events, \po, \rf, \co, $ $\lo)$, where:
\begin{itemize}
\item $\Events$ is a set of events, assumed to contain a set $\Events_0$ of initialisation events,
consisting of a write event with label $\writeE{\rel}{x}{0}$ for every $x\in\Locs$.
The sets of \emph{read events} in $\Events$ is denoted by $\Reads \eqdef \setcomp{e \in \Events}{\typ{e} \in\set{\lR,\lU}}$;
\emph{write events} by $\Writes \eqdef \setcomp{e \in \Events}{\typ{e} \in\set{\lW,\lU}}$;
\emph{update events} by $\Updates \eqdef \Reads \cap \Writes$;
and \emph{lock events} by $\LEvents \eqdef \setcomp{e \in \Events}{\typ{e} \in \simpleset{\lRL,\lRU,\lWL,\lWU,\lPL}}$. 
The sets of reader lock acquisition and release events, $\RLocks$ and $\RUnlocks$, 
writer lock acquisition and release events, $\WLocks$ and $\WUnlocks$, 
and lock promotion events $\PLocks$ are defined analogously.
  The set of transactional events in $\Events$ is denoted by $\Transactions$
  ($\Transactions \eqdef  \setcomp{e \in \Events}{\tx{e}\neq 0}$); 
  and the set of non-transactional events is denoted by $\NT$ ($\NT \eqdef \Events \setminus \Transactions$).
	\item $\po \suq \Events \times \Events$ denotes the \emph{`program-order' relation},  defined as a disjoint
	union of strict total orders, each ordering the events of one thread,
	together with $\Events_0\times (\Events \setminus \Events_0)$ that places the initialisation events
	before any other event.
	We assume that events belonging to the same transaction are ordered by $\po$, 
	and that any other event $\po$-between them also belongs to the same transaction.
	\item $\rf \suq \Writes \times \Reads$ denotes the \emph{`reads-from' relation}, 
	defined as a relation between write and read events of the same location with matching read and written values; 
	it is total and functional on reads, 
	i.e.~every read event is related to exactly one write event.
	\item $\co \suq \Writes \times \Writes$ denotes the \emph{`modification-order' relation}, 
defined as a disjoint	union of strict total orders, each of which ordering the write events to one location.
	\item $\lo \suq \LEvents \times \LEvents$ denotes the \emph{`lock-order' relation}, 
	defined as a disjoint union of strict orders, each of which (partially) ordering the lock events to one location.
\end{itemize}
\end{definition}

In the context of an execution graph $G{=}(\Events, \po, \rf, \co, \lo)$---we often use ``$G.$'' as a prefix to make this explicit---the \emph{`same-transaction' relation}, $\st \in \Transactions \times \Transactions$, 
is the equivalence relation given by
$
		\st \eqdef 
		\setcomp{
			(a, b) \in \Transactions \times \Transactions
		}{
			\tx{a} = \tx{b}
		}	
$.
We write $\tlift{\makerel r}$ for lifting a relation $\makerel r \subseteq \Events \times \Events$ to transaction classes:
$
	\tlift{\makerel r} \eqdef \st ; (\makerel r \setminus \st); \st
$.
Analogously, we write $\tin{\makerel r}$ to restrict $\makerel r$ to its \emph{intra-transactional} edges (within a transaction): $\tin{\makerel r} \eqdef \makerel r \cap \st$;
and write $\tout{\makerel r}$ to restrict $\makerel r$ to its \emph{extra-transactional} edges (outside a transaction):  $\tout{\makerel r} \eqdef \makerel r \setminus \st$. 
Lastly,  the `reads-before' relation is defined by $\fr \eqdef (\inv{\rf}; \co) \setminus [\Events]$.

\smallskip
Execution graphs of a given program represent 
traces of shared memory accesses generated by the program.
The set of execution graphs associated with a program can be 
straightforwardly defined by induction over the structure of programs
(see e.g.\ \cite{rsl}).
Each execution of a program $P$ has a particular program \emph{outcome}, 
prescribing the final values of local variables in each thread.
In this initial stage, 
the execution outcomes are almost unrestricted as there are very few constraints on the 
$\rf$, $\co$ and $\lo$ relations.
Such restrictions and thus the permitted outcomes of a program are determined by defining the set of \emph{consistent} executions, 
which is defined separately for each model we consider.
Given a program $P$ and a model $M$, the set $\outcomes P M$ collects the outcomes of every $M$-consistent execution of $P$.


\newcommand{\absGSI}{\ensuremath{\impG}}
\newcommand{\impGSI}{\ensuremath{\impG}}

\newcommand{\silor}{{\color{colorLO}\makerel{rlo}}}
\newcommand{\silo}{\lo}
\newcommand{\siloi}{\tin[colorLO]{\silo}}
\newcommand{\silot}{\tlift[colorLO]{\silo}}

\newcommand{\sicon}[1][\absGSI]{\ensuremath{\predFont{si-consistent}(#1)}}

\newcommand{\sihb}{\makerel[colorHB]{si-hb}}
\newcommand{\sipo}{\makerel[colorPO]{si-po}}
\newcommand{\sirf}{\makerel[colorRF]{si-rf}}
\newcommand{\sico}{\makerel[colorMO]{si-mo}}
\newcommand{\sifr}{\makerel[colorFR]{si-rb}}

\newcommand{\EReads}{\tout{\Reads}}

\section{Snapshot Isolation (SI)}\label{sec:si}
We present a declarative specification of SI and demonstrate that the SI implementations presented in \cref{fig:si_implementation} and \cref{fig:si_alt_implementation} are both sound and complete with respect to the SI specification. 

In~\cite{SI-Cerone} Cerone and Gotsman developed a declarative specification for SI 
using dependency graphs~\cite{Adya-graphs,Adya-thesis}. 
Below we adapt their specification to the notation of \cref{sec:framework}.
As with~\cite{SI-Cerone}, throughout this section, we take \emph{SI execution graphs} to be those in which $\Events = \Transactions \suq (\Reads \cup \Writes) \setminus \Updates$.
That is, the SI model handles transactional code only, consisting solely of read and write events (excluding updates).

\begin{definition}[SI consistency\textnormal{~\cite{SI-Cerone}}]\label{def:si_consistency}
An SI execution $\absGSI = (\Events, \po, \rf, \co, \lo)$ is \emph{SI-consistent}
if the following conditions hold:
\begin{itemize}
	\item $\rfi \cup \coi \cup \fri \subseteq \po $ \labelAxiom{int}{ax:si_int}
	\item $\acyc{(\pot \cup \rft \cup \cot); \refC{\frt}}$ \labelAxiom{ext}{ax:si_ext}
\end{itemize}
\end{definition}
Informally, \eqref{ax:si_int} ensures the consistency of each transaction internally,
while \eqref{ax:si_ext} provides the synchronisation guarantees among transactions.
In particular, we note that the two conditions together ensure that if two read events in the same transaction read from the same location $x$, and no write to $x$ is $\po$-between them,
then they must read from the same write (known as `internal read consistency').
%

Next, we provide an alternative formulation of SI-consistency which will serve as the basis of our extension with non-transactional accesses in \cref{sec:rsi}. 
In the technical appendix we prove that the two formulations are equivalent.

\begin{proposition}\label{prop:si_consistency}
An SI execution  $\absGSI = (\Events, \po, \rf, \co, \lo)$ is \emph{SI-consistent} if and only if $\ref{ax:si_int}$ holds and 
the `SI-happens-before' relation $\sihb \eqdef \transC{(\pot \cup \rft \cup \cot \cup \sifr)}$ is irreflexive, 
where
$\sifr \eqdef [\EReads]; \frt; [\Writes]$ and $\EReads \eqdef \set{ r \mid \exsts{w} (w,r)\in\rfe }$. 
\end{proposition}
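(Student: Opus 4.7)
Let $X \eqdef \pot \cup \rft \cup \cot$, so that \ref{ax:si_ext} states $\acyc{X; \refC{\frt}}$ and the new condition states irreflexivity of $\transC{(X \cup \sifr)}$. I would prove each implication by contraposition. A useful preliminary fact is that $\st$ is an equivalence relation and each of $\pot, \rft, \cot, \frt$ is defined via $\tlift{-} = \st; (\cdot \setminus \st); \st$, so composition with $\st$ on either side leaves these relations unchanged: in particular $\st; X = X; \st = X$ and $\st; \frt = \frt; \st = \frt$. This $\st$-closure lets us freely shift endpoints of $X$- and $\frt$-edges within $\st$-classes.

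For the direction ``new condition implies \ref{ax:si_ext}'', suppose for contradiction that $X \cup \sifr$ contains a cycle. Every $\sifr$-edge runs from an external read to a write, so $\sifr$ cannot be immediately followed by another $\sifr$-edge; hence in the cycle every $\sifr$-edge is preceded by some $X$-edge. Grouping each such pair into an $X; \sifr \subseteq X; \frt$ step and treating the remaining $X$-edges as $X; \refC{\frt}$-steps with empty $\frt$-part yields a cycle in $(X; \refC{\frt})^+$, contradicting \ref{ax:si_ext}.

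For the converse direction, suppose for contradiction that $(X; \refC{\frt})^+$ contains a cycle. For each $\frt$-edge $(a,b)$ in it, take a witness $(r', w'') \in \fr \setminus \st$ with $(a, r'), (w'', b) \in \st$, and split into two cases. If $r' \in \EReads$, then $(r', w'') \in \sifr$; by $\st$-closure, the preceding $X$-edge into $a$ extends to an $X$-edge into $r'$ and the following edge out of $b$ can be re-anchored at $w''$, replacing the $X; \frt$ sub-path by an $X; \sifr$ sub-path. If $r' \notin \EReads$, then the unique $\rf$-predecessor $w$ of $r'$ lies in $r'$'s own transaction while $w''$ does not, so $(w, w'') \in \co \setminus \st$, whence $(w, w'') \in \cot$ and therefore $(a, b) \in \st; \cot; \st = \cot \subseteq X$; the $\frt$-edge is then subsumed by an $X$-edge between the same endpoints. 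Iterating over all $\frt$-edges yields a cycle in $X \cup \sifr$, contradicting irreflexivity of $\sihb$.

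The main obstacle is the second case of the converse direction: the insight that when the $\fr$-witness reads internally, the internal write it reads from can stand in for the read in the $\fr$-chain, turning the $\frt$-edge into a $\cot$-edge at the same endpoints. The rest is careful bookkeeping via $\st$-closure to justify the endpoint shifts in the first case, which is routine once the closure fact is stated.
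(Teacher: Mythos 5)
Your proof is correct, and it rests on the same two key observations as the paper's: for one implication, that $\sifr;\sifr=\emptyset$ (since $\EReads\cap\Writes=\emptyset$ in SI graphs), so a cycle in $\pot\cup\rft\cup\cot\cup\sifr$ regroups into a cycle in $(\pot\cup\rft\cup\cot);\refC{\sifr}\subseteq(\pot\cup\rft\cup\cot);\refC{\frt}$; and for the other, the internal/external case split on the $\fr$-witness of each $\frt$-edge, where an internal read turns the $\frt$-edge into a $\cot$-edge between the same transactions and an external read turns it into a $\sifr$-edge. The difference lies in how the second implication is organised: the paper proves an auxiliary induction (\cref{lem:si_equivalent_aux}) that lifts the entire $\reftransC{(\refC{\frt};(\pot\cup\rft\cup\cot))}$-suffix of the cycle to $\sihb$ between transaction classes, whereas you perform local, edge-by-edge surgery on the cycle, justified by the absorption identities $\st;X=X;\st=X$ and $\st;\frt=\frt;\st=\frt$. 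Your version dispenses with the separate lemma and is somewhat more economical; nothing is lost, since that lemma is not reused elsewhere. One presentational slip to fix: the headings of your two paragraphs are swapped. The paragraph announced as ``new condition implies \ref{ax:si_ext}'' in fact assumes \ref{ax:si_ext} and refutes a cycle in $X\cup\sifr$, which is the contrapositive of ``\ref{ax:si_ext} implies irreflexivity of $\sihb$''; the paragraph announced as the converse proves the other implication. Both directions are established, so correctness is unaffected.
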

\begin{proof}
The full proof is given in  the technical appendix (see \cref{prop:SI-consistency-full} in \cref{app:aux_lemmas}). 
\end{proof}

Intuitively, SI-happens-before orders events of different transactions
due to either the program order ($\pot$), or synchronisation enforced by the implementation ($\rft \cup \cot \cup \sifr$).
By contrast, events of the same transaction are unordered,
as the implementation may well execute them in a different order 
(in particular, by taking a snapshot, it executes external reads before the writes).

In more detail, the $\rft$ corresponds to transactional synchronisation due to \emph{causality},
i.e.\ when one transaction $\code T_2$ observes an effect of an earlier transaction $\code T_1$.
The inclusion of $\rft$ ensures that $\code T_2$ cannot read from $\code T_1$ without observing its \emph{entire} effect. 
 This in turn ensures that transactions exhibit `all-or-nothing' behaviour:
 they cannot mix-and-match the values they read. 
For instance, if $\code T_1$ writes to both $x$ and $y$, transaction $\code T_2$ may not read  $x$ from $\code T_1$ but read $y$ from an earlier (in `happens-before' order) transaction $\code T_0$.

The $\cot$ corresponds to transactional synchronisation due to \emph{write-write conflicts}.
Its inclusion enforces write-conflict-freedom of SI transactions: if $\code T_1$ and $\code T_2$ both write to $x$ via events $w_1$ and $w_2$ such that $(w_1,w_2)\in\co$, then $\code T_1$ must commit before $\code T_2$, and thus its entire effect must be visible to $\code T_2$.

To understand $\sifr$, first note that $\EReads$ denotes the \emph{external} transactional reads (i.e.~those reading a value written by another transaction).
That is, the $\EReads$ are the read events that get their values from the transactional snapshot phases.
By contrast, internal reads (those reading a value written by the same transaction) happen only after the snapshot is taken.
Now let there be an $\frt$ edge between two transactions, $\code T_1$ and $\code T_2$. 
This means there exist a read event $r$ of $\code T_1$ and a write event $w$ of $\code T_2$ such that $(r,w)\in\fr$;
i.e.\ there exists $w'$ such that $(w', r) \in \rf$ and $(w', w) \in \co$.
If $r$ reads internally (i.e.~$w'$ is an event in $\code T_1$), then $\code T_1$ and $\code T_2$ are conflicting transactions and as accounted by $\cot$ described above, all events of $\code T_1$ happen before those of $\code T_2$. 
Now, let us consider the case when $r$ reads externally ($w'$ is not in $\code T_1$).
From the timestamped model of SI, there exists a start-timestamp $t_0^{\code T_1}$ as of which the $\code T_1$ snapshot (all its external reads including $r$) is recorded.
Similarly, there exists a commit-timestamp $t_c^{\code T_2}$ as of which the updates of $\code T_2$ (including $w$) are committed.
Moreover, since $(r, w) \in \fr$ we know $t_0^{\code T_1} < t_c^{\code T_2}$ (otherwise $r$ must read the value written by $w$ and not $w'$). 
That is, we know all events in the snapshot of $\code T_1$ (i.e.~all external reads in $\code T_1$) happen before all writes of $\code T_2$.\footnote{By taking 
 $\frt$ instead of $\sifr$ in \cref{prop:si_consistency} one obtains a characterisation of \emph{serialisability}.}



We use the  declarative framework in \cref{sec:framework} to formalise the semantics of our implementation.
Here, our programs include only non-transactional code,
and thus \emph{implementation execution graphs} are taken as those in which $\Transactions=\emptyset$. 
Furthermore, we assume that locks in implementation programs are used in a \emph{well-formed} manner: the sequence of
lock events for \emph{each location}, in each thread (following $\po$), should match (a prefix of) the regular expression
$(\lRL\cdot\lRU \;|\;\lWL\cdot\lWU \;|\; \lRL\cdot\lPL\cdot\lWU)^*$.\label{par:lock_traces}
For instance, a thread never releases a lock, without having acquired it earlier in the program.
As a consistency predicate on execution graphs, we use the C11 release/acquire consistency 
augmented with certain constraints on lock events.

\begin{definition}\label{def:si_implementation_consistency}
An implementation execution graph $\impGSI = (\Events, \po, \rf, \co, \lo)$ is \emph{RA-consistent}  if the  following hold, 
where $\hb \eqdef \transC{(\po \cup \rf \cup \lo)}$ denotes the `RA-happens-before' relation:
\begin{itemize}
\item 
$\for{x}\for{a \in \WLocks_x \cup \WUnlocks_x \cup \PLocks_x, b \in \LEvents_x} a=b \lor (a, b) \in \lo \lor (b, a) \in \lo$ \labelAxiom{WSync}{ax:lock_wsync}
\item 
$[\WLocks \cup \PLocks]; (\silo \setminus \po); [\LEvents] \suq \po; [\WUnlocks]; \silo$
\labelAxiom{WEx}{ax:lock_wex}
\item 
$[\RLocks]; (\silo \setminus \po); [\WLocks \cup \PLocks] \suq \po;[\RUnlocks \cup \PLocks]; \silo$
\labelAxiom{RShare}{ax:lock_rshared}
\item 
$\acyc{\hbloc \cup \co \cup \fr}$
\labelAxiom{Acyc}{ax:ra_acyc}
\end{itemize}
\end{definition}

The \eqref{ax:lock_wsync} states that write lock calls (to acquire, release or promote) \emph{synchronise} with all other calls to the same lock.

The next two constraints ensure the `single-writer-multiple-readers' paradigm.
In particular, \eqref{ax:lock_wex} states that write locks provide \emph{mutual exclusion} while held: any lock event $l$ of thread $\tau$  $\lo$-after a write lock acquisition or promotion event $l'$ of another thread $\tau'$, is $\lo$-after a subsequent write lock release event $u$ of $\tau'$ (i.e.~$(l', u) \in \po$ and $(u, l) \in \lo$). 
As such, the lock cannot be acquired (in read or write mode) by another thread until it has been released by its current owner.

The \eqref{ax:lock_rshared} analogously states that once a thread acquires a lock in read mode, 
the lock cannot be acquired in write mode by other threads until it has either been released, or promoted to a writer lock (and subsequently released) by its owner.
Note that this does not preclude other threads from simultaneously acquiring the lock in read mode.
In the technical appendix we present two MRSW lock implementations that satisfy the conditions outlined above. 

The last constraint (\ref{ax:ra_acyc}) is that of C11 RA consistency~\cite{SRA}, 
with the $\hb$ relation extended with $\lo$.

\begin{remark}\label{rem:sihb_general_model}
Our choice of implementing the SI STMs on top of the RA fragment is purely for presentational convenience. 
Indeed, it is easy to observe that execution graphs of $\denot{P}$ are data race free,
and thus, \ref{ax:ra_acyc} could be replaced by any condition that implies 
$\for{x} ([\Writes_x];\transC{(\po \cup \lo)};[\Writes_x];\transC{(\po \cup \lo)};[\Reads_x]) \cap \rf=\emptyset$ 
and that is implied by 
$\acyc{\po \cup \rf \cup \lo \cup \co \cup \fr}$.
In particular, the C11 non-atomic accesses or sequentially consistent accesses may be used.
\end{remark}

We next demonstrate that our SI implementations in \cref{fig:si_implementation} and \cref{fig:si_alt_implementation} are both \emph{sound and complete} with respect to the declarative specification given above. 
The proofs are non-trivial and the full proofs are given in the technical appendix.

\begin{theorem}[Soundness and completeness]
\label{thm:si_snd_cmp}
Let $P$ be a transactional program;
let $\denot{P}_{\textsc e}$ denote its eager implementation as given in \cref{fig:si_implementation} 
and $\denot{P}_{\textsc l}$ denote its lazy implementation as given in \cref{fig:si_alt_implementation}.
Then:
\[
	\outcomes{P}{\mathrm{SI}} = \outcomes{\denot{P}_{\textsc e}}{\mathrm{RA}} = \outcomes{\denot{P}_{\textsc l}}{\mathrm{RA}}
\]	
\end{theorem}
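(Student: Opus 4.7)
The plan is to split the triple equality into four inclusions---soundness and completeness for each of the eager and lazy implementations against SI---and to prove each by an explicit translation between execution graphs, appealing throughout to the $\sihb$-characterisation of \cref{prop:si_consistency}.

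For soundness $\outcomes{\denot{P}_{\star}}{\mathrm{RA}} \subseteq \outcomes{P}{\mathrm{SI}}$ with $\star \in \set{\textsc e,\textsc l}$, I would define a projection $\pi$ that takes an RA-consistent implementation graph $\impGSI$ and erases every event of $\LEvents$ together with every event that is purely local bookkeeping (reads and writes of the snapshot array \code{s}, of the $\readset$/$\writeset$ sets, and, in the lazy case, of the write-log \code{wseq}), groups the surviving memory events by their enclosing transaction, and inherits $\po$, $\rf$, and $\co$. In the eager case the survivors are the in-place reads and writes; in the lazy case each abstract read is witnessed by the snapshot read that consulted it, and each abstract write by the commit-time write on line~\ref{lin:si_alt_write}. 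I would then verify that $\pi(\impGSI)$ is SI-consistent: \ref{ax:si_int} follows from the snapshot-plus-log invariants of each implementation, and to obtain irreflexivity of $\sihb$ I would show that every edge in each generator $\pot$, $\rft$, $\cot$, $\sifr$ is witnessed by an $\hb$-path in $\impGSI$ that traverses a \wlock or \plock event of the target transaction, whereupon \ref{ax:lock_wsync} combined with \ref{ax:lock_wex} and \ref{ax:lock_rshared} turns it into a genuine $\hb$-edge. A $\sihb$-cycle in $\pi(\impGSI)$ would thus yield an $\hb$-cycle contradicting \ref{ax:ra_acyc}.

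For completeness $\outcomes{P}{\mathrm{SI}} \subseteq \outcomes{\denot{P}_{\star}}{\mathrm{RA}}$, I would take an SI-consistent $\absGSI$ and elaborate each transaction into the event sequence dictated by the relevant implementation in \cref{fig:si_implementation} or \cref{fig:si_alt_implementation}, preserving $\rf$ and $\co$ on the surviving memory events. The substantive step is the construction of the lock order $\lo_x$ per location $x$: using \cref{prop:si_consistency}, fix a total order $\prec_x$ on the transactions accessing $x$ that extends the restriction of $\sihb$ to them, and schedule the per-transaction lock-blocks along $\prec_x$ so that writing transactions appear in $\prec_x$-order via their \plock and \wunlock events, while purely reading transactions at $x$ interleave arbitrarily with one another (using that \cref{def:si_implementation_consistency} imposes no constraint on reader--reader edges). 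The four RA-consistency clauses then follow: the well-formed lock regular expression discharges \ref{ax:lock_wsync}, \ref{ax:lock_wex} and \ref{ax:lock_rshared}, while any cycle in $\hbloc \cup \co \cup \fr$ would project under $\pi$ to a $\sihb$-cycle in $\absGSI$, contradicting its SI-consistency.

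The principal obstacle is the \emph{two-phase} lock structure of the eager implementation---releasing all $\readset$ reader-locks \emph{before} promoting any $\writeset$ reader-locks---and its transfer to the lazy variant. As the discussion of \eqref{ex:WS2} illustrates, a one-pass version is sound but incomplete, and what saves completeness of the two-pass version is precisely that $\sifr$ uses the filter $[\EReads]$ rather than all of $\frt$: an internal read satisfied by an earlier write in the same transaction is not required to precede conflicting writes of a concurrent transaction. Turning this observation into a concrete lock-event interleaving that simultaneously realises every SI-consistent $\absGSI$ and satisfies \ref{ax:ra_acyc} is the heart of the argument. The lazy case requires the additional observation that propagating writes at commit time (line~\ref{lin:si_alt_write}) rather than in-place preserves the abstract $\rf$ and $\co$---since all lazy writes of a single transaction execute under its writer locks---after which the lock-order argument from the eager case transfers verbatim.
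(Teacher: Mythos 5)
Your overall architecture matches the paper's: soundness by projecting an RA-consistent implementation graph onto an abstract graph and pushing the abstract relations into $\hb$, completeness by elaborating each transaction into its lock-bracketed trace and building the lock order from a total order on transactions extending the SI-consistency relations (cf.\ the order $\TCO$ and the per-location phase construction in \cref{app:si}), with reader--reader lock events left unordered. However, the soundness half contains a step that fails as stated. You claim that every edge of every generator of $\sihb$ --- including $\sifr$ --- is witnessed by an $\hb$-path in $\impGSI$, so that a $\sihb$-cycle yields an $\hb$-cycle. This per-edge claim is false for $\sifr$: the abstract read $r$ of a transaction $\txid_1$ is realised by the body's read of $\code{s[x]}$, which sits $\po$-\emph{after} $\txid_1.\mathit{ru}_{x}$, so for $(r,w)\in\sifr$ with $w$ in $\txid_2$ one only obtains $\txid_1.\mathit{ru}_{x} \relarrow{\hb} \txid_2.\mathit{pl}_{x} \relarrow{\po} w$; nothing places $r$ itself $\hb$-before $w$. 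Indeed the two transaction bodies in the write-skew execution \eqref{ex:WS} run concurrently in the implementation, so the body read of one is not $\hb$-before the conflicting write of the other even though the two events are $\sifr$-related. (The footnote to \cref{prop:si_consistency} is the warning sign here: per-edge inclusion of reads-before edges is what characterises serialisability, not SI.)

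The missing idea is the one the paper's \cref{lem:si_soundness} supplies: one proves the \emph{composed} inclusions $\pot;\frt\suq\hb$, $\cot;\frt\suq\hb$ and $\rft;\frt\suq\hb$ --- the source of the edge \emph{preceding} an $\frt$ edge reaches, via $\po$ or a lock $\hb$-path landing at $\txid_1.\mathit{rl}$, the whole of $\txid_1$'s trace and in particular its reader-unlock, from which the $\hb$-path to $\txid_2$ continues --- and then uses $\sifr;\sifr=\emptyset$ (as in \cref{prop:SI-consistency-full}) to rewrite any $\sihb$-cycle as a cycle in $\transC{((\pot\cup\rft\cup\cot);\refC{\frt})}$, each of whose edges genuinely lies in $\hb$. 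Without this composition step your cycle argument does not go through. A secondary imprecision: in the eager implementation reads are not in-place (only writes are), so your projection $\pi$ cannot simply erase all accesses to $\code{s}$; it must identify the abstract read with the body's $\code{s[x]}$ read while routing its $\rf$-source through the snapshot read $\mathit{rs}_{x}$, which is exactly the two-level bookkeeping ($\mathsf{RF}_{\txid}$ in \cref{app:si}) that makes the composed lemmas necessary in the first place.
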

\begin{proof}
The full proofs of both implementations is given in the technical appendix (\cref{app:si} and \cref{app:si_alternative}, respectively).
\end{proof}


\paragraph{Stronger MRSW Locks} 
As noted in \cref{sec:ideas}, for both (prescient and non-prescient) SI implementations 
our soundness and completeness proofs show that 
the same result holds for a stronger lock specification,
in which reader locks synchronise as well. Formally, this specification is obtained by adding the constraint:
\begin{itemize}[itemsep=0pt]
\item 
$\for{x}\for{a, b \in \RLocks_x \cup \RUnlocks_x} a=b \lor (a, b) \in \lo \lor (b, a) \in \lo$ \labelAxiom{RSync}{ax:lock_rsync}
\end{itemize}
Soundness of this stronger specification ($\outcomes{\denot{P}_{\textsc x}}{\mathrm{RA}} \suq \outcomes{P}{\mathrm{SI}}$ for $\textsc x \in \{\textsc e, \textsc l\}$) follows immediately from \cref{thm:si_snd_cmp}.
Completeness ($\outcomes{P}{\mathrm{SI}} \suq \outcomes{\denot{P}_{\textsc x}}{\mathrm{RA}}$ for $\textsc x \in \{\textsc e, \textsc l\}$), however, 
is more subtle, as we need to additionally satisfy (\ref{ax:lock_rsync}) when constructing $\lo$. 
While we can do so for SI, 
it is essential for the completeness of our RSI implementations that reader locks not synchronise, as shown by \eqref{ex:SBT} in \cref{sec:ideas}.

In the technical appendix we present two MRSW lock implementations (see  \cref{app:lock_implementations}).
Both implementations are \emph{sound}  against the $\lo$ conditions in \cref{def:si_implementation_consistency}.
Additionally, the first implementation is \emph{complete} against the conditions of \cref{def:si_implementation_consistency} augmented with \eqref{ax:lock_rsync},
whilst the second is complete against the conditions of \cref{def:si_implementation_consistency} alone.

%
%

\newcommand{\absGRSI}{\ensuremath{\impG}}
\newcommand{\impGRSI}{\ensuremath{\impG}}

\newcommand{\rsilo}{\lo}
\newcommand{\rsiloi}{\tin[colorLO]{\rsilo}}
\newcommand{\rsilot}{\tlift[colorLO]{\rsilo}}

\newcommand{\rsicon}[1][\absGRSI]{\ensuremath{\predFont{rsi-consistent}(#1)}}

\newcommand{\rsihb}{\makerel[colorHB]{rsi-hb}}
\newcommand{\rsipo}{\makerel[colorPO]{rsi-po}}
\newcommand{\rsirf}{\makerel[colorRF]{rsi-rf}}
\newcommand{\rsico}{\makerel[colorMO]{rsi-mo}}
\newcommand{\rsifr}{\makerel[colorFR]{rsi-rb}}

\newcommand{\rsipoi}{\tin{\makerel{rsi-po}}}
\newcommand{\rsihbloc}{{\perloc{\rsihb}}}

\section{Robust Snapshot Isolation (RSI)}\label{sec:rsi}
We explore the semantics of SI STMs in the presence of non-transactional code with \emph{weak isolation} guarantees
(see \cref{sec:ideas}). We refer to this model as \emph{robust snapshot isolation} (RSI), due to its ability to provide SI guarantees between transactions even in the presence of non-transactional code.
We propose the first declarative specification of RSI programs and develop two lock-based reference implementations that are both \emph{sound and complete} against our proposed specification.
%

\begin{figure}[t]
\begin{center}
\begin{tabular}{|@{}c@{}|@{}c@{}|@{}c@{}|}
 \hline
\begin{subfigure}[b]{0.29\textwidth}\small\centering
\begin{tikzpicture}[yscale=1,xscale=1.05]
  \draw[draw=black,rounded corners,dotted,fill=blue!10] (-1.65,1.4) rectangle (-0.15,-0.3);
  \node[draw=black,rounded corners,dotted,fill=blue!20] at (-1.55,1.4) {$\code T$};
  \node (01)  at (-0.75,2) {$\evlab{\wlab}{}{x}{0}$};
  \node (02)  at (0.75,2) {$\evlab{\wlab}{}{y}{0}$};
  \node (21)  at (-0.9,1) {$w_1\colon\evlab{\wlab}{}{y}{1}$ };
  \node (22)  at (-0.9,0) {$w_2\colon\evlab{\wlab}{}{x}{1}$ };
  \node (31)  at (0.8,1) {$\evlab{\rlab}{}{x}{1}$ };
  \node (32)  at (0.8,0) {$\evlab{\rlab}{}{y}{0}$ };
  \draw[po] (21) edge (22);
  \draw[po] (31) edge (32);
  \draw[po] (01) edge (21) edge (31);
  \draw[po] (02) edge (21) edge (31);
  \draw[rf,bend left=40] (02) edge node[right]{$\!\rf$} (32);
  \draw[rf] (22) edge node[above]{$\rf$} (31);
  \draw[fr] (32) edge node[below]{$\fr$}  (21);
  \draw[mo,pos=.3,bend right=10] (02) edge node[above=2pt]{$\co$} (21); 
\end{tikzpicture} 
\caption{}
\label{subfig:rsi_po}
\end{subfigure}
&
\begin{subfigure}[b]{0.29\textwidth}\small\centering
\begin{tikzpicture}[yscale=1,xscale=1.05]
  \draw[draw=black,rounded corners,dotted,fill=blue!10] (-1.7,1.4) rectangle (-0.25,-0.3);
  \node[draw=black,rounded corners,dotted,fill=blue!20] at (-1.6,1.4) {$\code T$};
  \node (01)  at (-0.8,2) {$\evlab{\wlab}{}{x}{0}$};
  \node (02)  at (0.8,2) {$\evlab{\wlab}{}{y}{0}$};
  \node (21)  at (-1,1) {$r'\!\colon\!\evlab{\rlab}{}{y}{0}$ };
  \node (22)  at (-1,0) {$r\colon\!\evlab{\rlab}{}{x}{1}$ };
  \node (31)  at (0.9,1) {$\evlab{\wlab}{}{y}{1}$ };
  \node (32)  at (0.9,0) {$w\colon\!\evlab{\wlab}{}{x}{1}$ };
  \draw[po] (21) edge (22);
  \draw[po] (31) edge (32);
  \draw[po] (01) edge (21) edge (31);
  \draw[po] (02) edge (21) edge (31);
  \draw[mo,bend left=20] (02) edge node[right]{$\co$} (31);
  \draw[rf] (32) edge node[above]{$\rf$} (22);
  \draw[fr] (21) edge node[below]{$\fr$}  (31);
  \draw[rf,bend right=10] (02) edge node[above=2pt]{$\rf$} (21);
\end{tikzpicture}
\caption{}
\label{subfig:rsi_ntrf}
\end{subfigure}
&
\begin{subfigure}[b]{0.41\textwidth}\small\centering
\begin{tikzpicture}[yscale=1,xscale=1]
  \draw[draw=black,rounded corners,dotted,fill=blue!10] (-2.55,0.4) rectangle (-1.05,-0.2);
  \node[draw=black,rounded corners,dotted,fill=blue!20] at (-2.4,0.5) {$\code T_1$};
  \draw[draw=black,rounded corners,dotted,fill=orange!10] (2.1,1.35) rectangle (0.7,0.7);
  \node[draw=black,rounded corners,dotted,fill=orange!20] at (2,1.5) {$\code T_2$};
  \node (01)  at (-0.8,1.8) {$\evlab{\wlab}{}{x}{0}$};
  \node (02)  at (0.8,1.8) {$\evlab{\wlab}{}{y}{0}$};
  \node (21)  at (-1.8,1) {$\evlab{\wlab}{}{y}{1}$ };
  \node (22)  at (-1.8,0.1) {$w\colon\evlab{\wlab}{}{x}{1}$ };
  \node (23)  at (-1.8,-0.8) {$\evlab{\rlab}{}{x}{2}$ };
  \node (31)  at (-0.2,1) {$\evlab{\wlab}{}{x}{2}$ };
  \node (41)  at (1.4,1) {$r\colon\evlab{\rlab}{}{x}{2}$ };
  \node (42)  at (1.4,0) {$\evlab{\rlab}{}{y}{0}$ };
  \draw[po] (21) edge (22);
  \draw[po] (22)  edge (23);
  \draw[po] (41) edge (42);
  \draw[po] (01) edge (21) edge (31) edge (41);
  \draw[po] (02) edge (21) edge (31) edge (41);
  \draw[rf] (31) edge node[below]{$\rf$} (41);
  \draw[mo] (22) edge node[below]{$\co$}  (31);
  \draw[rf,bend right=40] (02) edge (42);
  \draw[fr] (42) edge node[below]{$\fr$} (21);
\end{tikzpicture} 
\vspace{-17pt}
\caption{}
\label{subfig:rsi_trf}
\end{subfigure}\\
\hline
\end{tabular}
\end{center}\vspace{-10pt}
\caption{RSI-inconsistent executions due to (\subref{subfig:rsi_po}) $\rsipo$; 
(\subref{subfig:rsi_ntrf}) $[\NT];\rf;\st$;
(\subref{subfig:rsi_trf}) $\tlift{(\co;\rf)}$
}
\label{fig:rsi_inconsistent_executions}
\end{figure}

\paragraph{A Declarative Specification of RSI STMs}
We formulate a declarative specification of RSI semantics by adapting the SI semantics in \cref{prop:si_consistency} to account for non-transactional accesses. 
To specify the abstract behaviour of RSI programs, \emph{RSI execution graphs} are taken to be those in which $\LEvents=\emptyset$. 
Moreover, as with SI graphs, RSI execution graphs are those in which $\Transactions \suq (\Reads \cup \Writes) \setminus \Updates$.
That is, RSI transactions comprise solely read and write events, excluding updates. 

\begin{definition}[RSI consistency]\label{def:rsi_consistency}
An execution $\absGRSI = (\Events, \po, \rf, \co, \lo)$ is \emph{RSI-consistent} iff $\ref{ax:si_int}$ holds and 
$\acyc{\rsihbloc \cup \co \cup \fr}$, 
where $\rsihb \eqdef \transC{(\rsipo \cup \rsirf \cup \cot \cup \sifr)}$ is the `RSI-happens-before' relation, with 
$\rsipo \eqdef (\po \setminus \poi) \cup [\Writes]; \poi;[\Writes]$ and 
$\rsirf \eqdef  (\rf; [\NT]) \cup ([\NT]; \rf; \st) \cup \tlift{\rf} \cup  \tlift{(\co; \rf)}$.
\end{definition}

As with SI and RA, we characterise the set of executions admitted by RSI as graphs that lack cycles of certain shapes.
To account for non-transactional accesses, similar to RA, we require $\rsihbloc \cup \co \cup \fr$ to be acyclic
(recall that $\rsihbloc\defeq\setcomp{(a, b) \in \rsihb}{\loc{a} = \loc{b}}$).
The RSI-happens-before relation $\rsihb$ includes both the synchronisation edges enforced by the transactional implementation (as in $\sihb$),
and those due to non-transactional accesses (as in $\hb$ of the RA consistency).
The $\rsihb$ relation itself is rather similar to $\sihb$. 
In particular, the $\cot$ and $\sifr$ subparts can be justified as in $\sihb$;
the difference between the two lies in $\rsipo$ and $\rsirf$. 

To justify $\rsipo$, recall from \cref{sec:si} that $\sihb$ includes $\pot$.
The $\rsipo$ is indeed a strengthening of $\pot$ to account for non-transactional events: it additionally includes 
(i) $\po$ to and from non-transactional events; and 
(ii) $\po$ between two \emph{write} events in a transaction.  
We believe (i) comes as no surprise to the reader; for (ii), consider the execution graph in \cref{subfig:rsi_po}, where transaction $\code T$ is denoted by the dashed box labelled $\code T$, comprising the write events $w_1$ and $w_2$.
Removing the $\code T$ block (with $w_1$ and $w_2$ as non-transactional writes), this execution is deemed inconsistent, as this weak ``message passing'' behaviour is disallowed in the RA model.
We argue that the analogous transactional behaviour in \cref{subfig:rsi_po} must be similarly disallowed 
to maintain monotonicity with respect to wrapping non-transactional code in a transaction (see \cref{thm:rsi-mono}). 
As in SI, we cannot include the entire $\po$ in $\rsihb$ because the write-read order in transactions is not preserved by the implementation.

Similarly, $\rsirf$ is a strengthening of $\rft$ to account for non-transactional events: 
in the absence of non-transactional events $\rsirf$ reduces to $\rft \cup \tlift{(\co; \rf)}$ which is contained in $\sihb$.
The $\rf; [\NT]$ part is required to preserve the `happens-before' relation for non-transactional code. 
That is, as $\rf$ is included in the $\hb$ relation of underlying memory model (RA), it is also included in $\rsihb$.

The $[\NT]; \rf; \st$ part asserts that in an execution 
where a read event $r$ of transaction $\code T$ reads from a non-transactional write $w$,
the snapshot of $\code T$ reads from $w$ and so all events of $\code T$ happen after $w$.
Thus, in \cref{subfig:rsi_ntrf}, $r'$ cannot read from the overwritten initialisation write to $y$.

For the $\tlift{(\co; \rf)}$ part, 
consider the execution graph in \cref{subfig:rsi_trf} where there is a write event $w$ of transaction $\code T_1$ and a read event $r$ of transaction $\code T_2$ such that $(w,r) \in \co;\rf$.
Then, transaction $\code T_2$ must acquire the read lock of $\loc w$ after $\code T_1$ releases the writer lock,
which in turn means that every event of $\code T_1$ happens before every event of $\code T_2$.

\begin{remark}\label{rem:rsihb_general_model}
Recall that our choice of modelling SI and RSI STMs in the RA fragment is purely for presentational convenience (see \cref{rem:sihb_general_model}). Had we chosen a different model, the RSI consistency definition (\cref{def:rsi_consistency}) would largely remain unchanged, with the exception of $\rsirf \eqdef \shade{(\sw; [\NT]) \cup ([\NT]; \sw; \st)} \cup \rft \cup \tlift{(\co; \rf)}$, 
where in the \shadetext{highlighted} changes the $\rf$ relation is replaced with $\sw$, denoting the `synchronises-with' relation. As in the RA model $\sw \eqdef \rf$, we have inlined this in \cref{def:rsi_consistency}.
\end{remark}

\paragraph{SI and RSI Consistency}
We next demonstrate that in the absence of non-transactional code, the definitions of SI-consistency (\cref{prop:si_consistency}) and RSI-consistency (\cref{def:rsi_consistency}) coincide. 
That is, for all executions $G$, if $G.\NT = \emptyset$, then $G$ is SI-consistent if and only if $G$ is RSI-consistent.
This is captured in the following theorem with its full proof given in the technical appendix. 

\begin{theorem}
For all executions $G$, if $G.\NT = \emptyset$, then:
\[
	G \text{ is SI-consistent}
	\iff
	G \text{ is RSI-consistent}
\]
\end{theorem}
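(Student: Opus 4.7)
The plan is to prove the biconditional by establishing both implications separately, with most of the work in the $({\Rightarrow})$ direction.

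For the $({\Leftarrow})$ direction (RSI-consistent $\Rightarrow$ SI-consistent), I would unfold the definitions under $\NT = \emptyset$ and verify that every base relation of $\sihb$ is contained in the corresponding base of $\rsihb$: namely, $\pot \subseteq \rsipo$ (since transactions are $\po$-contiguous in each thread, any lifted edge has a direct extra-transactional $\po$ witness already captured by the $\po \setminus \poi$ summand of $\rsipo$); $\rft \subseteq \rsirf$ (the $[\NT]$ summands of $\rsirf$ vanish); and $\cot,\sifr$ appear identically in both definitions. Hence $\sihb \subseteq \rsihb$. Any reflexive $(a,a) \in \sihb$ (assumed toward contradicting SI-consistency) therefore lies in $\rsihb$, and since $\loc(a)=\loc(a)$ it also lies in $\rsihbloc$, giving an immediate cycle in $\rsihbloc \cup \co \cup \fr$ that contradicts RSI-consistency.

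For the $({\Rightarrow})$ direction (SI-consistent $\Rightarrow$ RSI-consistent), I would classify each edge of $\rsihbloc \cup \co \cup \fr$ as lying in either $\sihb$ or $\poi$, and then reduce any cycle accordingly. Extra-transactional edges fall in $\sihb$: $\cot$ and $\sifr$ by definition; the external $\fr$ by a case split on whether the source read is external (giving $\sifr$) or internal (giving $\cot$ via the underlying $\coe$ between the source and target writes); and the new $\rsirf$-base $\tlift{(\co;\rf)}$ by a case split on the transaction of the intermediate write, resolving into a $\cot$, an $\rft$, or a $\cot;\rft$ path. Intra-transactional edges fall in $\poi$: $\coi$ and $\fri$ by \ref{ax:si_int}, and the $[\Writes];\poi;[\Writes]$ summand of $\rsipo$ visibly. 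Any residual $\rsihbloc$-edge whose witnessing $\rsihb$-path leaves and re-enters the transaction will fall back into $\sihb$ via the absorption below.

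The crux is then the absorption $\sihb;\poi \subseteq \sihb$ and $\poi;\sihb \subseteq \sihb$, which, combined with the transitivity of $\sihb$ and $\poi$, yields $\transC{(\sihb \cup \poi)} = \sihb \cup \poi$. A reflexive element of this set is either in $\sihb$ (contradicting $\acyc{\sihb}$) or in $\poi$ (contradicting $\po$'s strictness). For $\pot,\rft,\cot$ absorption is immediate from their $\tlift$-shape. The delicate case is $\sifr = [\EReads];\frt;[\Writes]$, whose $[\Writes]$-suffix and $[\EReads]$-prefix must be preserved. Here I would invoke the structural observation that every $\poi$-edge arising in the reduction originates from $\coi$, $\fri$, or $[\Writes];\poi;[\Writes]$, all of which have write targets, so that post-composition with $\sifr$ preserves the $[\Writes]$-suffix; conversely no $\poi$-edge can pre-compose with a $\sifr$, because $\poi$-targets are writes while $\sifr$-sources are external reads, and since SI graphs exclude updates no event is simultaneously a write and an external read. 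This $\sifr$-absorption step is the main obstacle: it is not purely formal and relies on the above type-preservation observation about $\poi$-edge provenance, which has to be verified by careful case analysis on the origin of each intra-transactional edge.
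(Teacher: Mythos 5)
Your proof is correct and takes essentially the same route as the paper's: the $(\Leftarrow)$ direction rests on $\sihb \suq \rsihb$, and the $(\Rightarrow)$ direction rests on reducing $\rsihb$ to $\sihb$ plus intra-transactional $\poi$-edges and then absorbing the latter via exactly the type discipline you identify ($\sifr$ has external-read sources and write targets, so the lifted relations absorb write-targeted $\poi$-edges on either side). The only difference is packaging: the paper keeps $A = [\Writes];\poi;[\Writes]$ and $B = \fr \cap \poi$ separate and proves six absorption inclusions, whereas you merge them into a single class of write-targeted $\poi$-edges and prove two.
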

\begin{proof}
The full proof is given in the technical appendix (see \cref{thm:SI_RSI_Equiv} in  \cref{app:aux_lemmas}).
\end{proof}

Note that the above theorem implies that for all transactional programs $P$, if $P$ contains no non-transactional accesses, then 
$\outcomes{P}{\mathrm{SI}} = \outcomes{P}{\mathrm{RSI}}$.

\paragraph{RSI Monotonicity}
We next prove the \emph{monotonicity} of RSI when wrapping non-transactional events into a transaction.
That is, wrapping a block of non-transactional code inside a new transaction does not introduce additional behaviours.
More concretely, given a program $\prog$, when a block of non-transactional code in $\prog$ is wrapped inside a new transaction to obtain a new program $\prog_{\code T}$,  then $\outcomes{P_{\code T}}{\mathrm{RSI}} \suq \outcomes{P}{\mathrm{RSI}}$.
This is captured in the theorem below, with its full proof given in the technical appendix.

%
\begin{theorem}[Monotonicity]\label{thm:rsi-mono}
Let $P_{\code T}$ and $P$ be RSI programs such that 
$P_{\code T}$ is obtained from $P$ by wrapping a block of non-transactional code inside a new transaction.
Then: 
\[
	\outcomes{P_{\code T}}{\mathrm{RSI}} \suq \outcomes{P}{\mathrm{RSI}}
\]	
\end{theorem}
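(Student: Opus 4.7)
\medskip\noindent\textbf{Proof plan.}
My plan is to prove the theorem by constructing, for each RSI-consistent execution $G_{\code T}$ of $P_{\code T}$, an RSI-consistent execution $G$ of $P$ with the same outcome. The construction is natural: I would take $G$ to have the same events and the same $\po$, $\rf$, $\co$, $\lo$ relations as $G_{\code T}$, but reset the transaction identifier of every event belonging to the newly introduced transaction $\code T$ to $0$, making those events non-transactional in $G$. The program outcome depends only on the values returned by reads (hence on $\rf$), so it is preserved, and the resulting $G$ is a valid execution graph of $P$; what remains is to verify that $G$ is RSI-consistent.

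The intra-transactional axiom \ref{ax:si_int} is immediate: unwrapping $\code T$ strictly shrinks the same-transaction relation $\st$, so $G.\rfi$, $G.\coi$, and $G.\fri$ are subsets of their $G_{\code T}$ counterparts and hence remain contained in $\po$. The main work is the acyclicity condition $\acyc{G.\rsihbloc \cup G.\co \cup G.\fr}$. I plan to argue by contradiction: given a cycle $C$ in $G.\rsihbloc \cup G.\co \cup G.\fr$, I would construct a corresponding cycle in $G_{\code T}.\rsihbloc \cup G_{\code T}.\co \cup G_{\code T}.\fr$, contradicting RSI-consistency of $G_{\code T}$. Since $G.\co = G_{\code T}.\co$ and $G.\fr = G_{\code T}.\fr$, only the $\rsihb$-portions of $C$ that traverse $\code T$ need to be translated; intra-$\code T$ subpaths in $C$ consist of same-location $\po$ and $\rf$ edges and are therefore linear by \ref{ax:si_int}.

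The translation would exploit the fact that $G_{\code T}.\rsihb$ treats $\code T$ as a monolithic unit for boundary-crossing edges: edges into or out of $\code T$ arising from $\tlift{\rf}$, $\tlift{(\co;\rf)}$, $\cot$, $\sifr$, and the $[\NT];\rf;\st$ component of $\rsirf$ all lift (by $\tlift$ or by composition with $\st$) to the entire transaction $\code T$; moreover $\rsipo$ preserves write-write intra-transactional $\po$ via its $[\Writes];\poi;[\Writes]$ summand. For each maximal $\code T$-subpath of $C$ with entry edge $(e,a)$ and exit edge $(b,e')$, I would then construct a direct $G_{\code T}.\rsihb$ path from $e$ to $e'$ by case analysis on the kinds of the entry and exit edges and on the read/write types of $a$ and $b$.

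The main obstacle will be the case where the intra-$\code T$ portion uses a $\po$ step not retained by $G_{\code T}.\rsipo$, for example a read-to-write $\po$ edge inside $\code T$. I would resolve this by exploiting the type of the exit event: if $C$ exits $\code T$ at a write $b$ via $\rf$, $\co$, or $\fr$, then the corresponding $G_{\code T}.\rsihb$ edge lifts via $\tlift$ to every event of $\code T$, in particular back to the entry $a$, yielding $(a,e') \in G_{\code T}.\rsihb$ without reference to the absent intra-$\code T$ $\po$ edge; the symmetric case with entry at a write is handled analogously. This is precisely why $\rsipo$ was enriched with $[\Writes];\poi;[\Writes]$ and $\rsirf$ with lifted $\rf$-variants: these additions make transactions act as atomic synchronisation units, sufficient to reroute any boundary-crossing cycle through $\code T$.
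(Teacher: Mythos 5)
Your overall strategy---unwrap $\code T$, note that \ref{ax:si_int} survives because $\st$ shrinks, and transfer a cycle from $G$ back to $G_{\code T}$ by exploiting the lifted edges of $\rsihb$ and the $[\Writes];\poi;[\Writes]$ summand of $\rsipo$---is the same as the paper's, but your target for the contradiction is too narrow, and this is a genuine gap. You propose to turn every cycle in $G.\rsihbloc \cup G.\co \cup G.\fr$ into a cycle in $G_{\code T}.\rsihbloc \cup G_{\code T}.\co \cup G_{\code T}.\fr$. That is not always possible. Take $\code T=\{a,b\}$ with $a$ a write to $x$, $b$ a $\po$-later read of $x$ reading from the initialisation write, so $(a,b)\in\po$, $\loc a=\loc b$, and $(b,a)\in\fr$. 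In $G$ both events are non-transactional, so $(a,b)\in\rsipo\subseteq\rsihbloc$ and $G$ has a cycle; in $G_{\code T}$ the pair $(a,b)$ is an intra-transactional write-to-read $\po$ edge lying in none of $\rsipo$, $\rsirf$, $\cot$, $\sifr$, $\co$, $\fr$, so no corresponding cycle exists---the inconsistency of $G_{\code T}$ comes instead from \ref{ax:si_int}, since $(b,a)\in\fri$ but $(b,a)\notin\po$. The paper's proof is built around exactly this: it first normalises the cycle to the shape $\rsihb;(\co\cup\fr)$ anchored at a \emph{write}, then proves the asymmetric inclusion $[\Writes];G.\rsihb \subseteq G_{\code T}.\poi \cup G_{\code T}.\rsihb$, and finally splits the contradiction into two cases, the $\poi$ disjunct refuting \ref{ax:si_int} and the $\rsihb$ disjunct refuting acyclicity. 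Your argument needs an analogous escape hatch (or an explicit repair step that, using \ref{ax:si_int} and totality of $\co$, replaces a residual $[\Writes];\poi;\fr$ segment by a single $\co$ edge).

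A second, more local, problem is your resolution of the ``main obstacle''. You claim that if the cycle exits $\code T$ at a write $b$ via $\rf$, $\co$ or $\fr$ then the edge lifts to every event of $\code T$. But the $\tlift{\cdot}$ lifting (and likewise $\cot$ and $\sifr$) applies only to edges between two \emph{transactional} events: an $\rf$ edge from $b$ to a non-transactional read contributes only $\rf;[\NT]$, anchored at $b$ alone; a $\co$ or $\fr$ cycle edge whose other endpoint is non-transactional does not lift at all and is not even part of $\rsihb$; and an $\fr$ edge leaves $\code T$ from a \emph{read}, not a write. So the lifting argument covers only exits into other transactions; for exits to non-transactional events you must still reach $b$ itself inside $\code T$, which is precisely where the write-anchored normalisation of the paper (or the repair step above) is doing the real work.
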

\begin{proof}
The full proof is given in the technical appendix (see \cref{thm:rsi-mono-full} in \cref{app:aux_lemmas}).
\end{proof}

Lastly, we show that our RSI implementations in \cref{sec:ideas} (\cref{fig:si_implementation} and \cref{fig:rsi_alt_implementation}) are sound and complete with respect to \cref{def:rsi_consistency}.
This is captured in the theorem below. 
The soundness and completeness proofs are non-trivial; the full proofs are given in the technical appendix.

\begin{theorem}[Soundness and completeness]\label{thrm:rsi_sound_complete}
Let $P$ be a program that possibly mixes transactional and non-transactional code.
Let $\denot{P}_{\textsc e}$ denote its eager RSI implementation as given in \cref{fig:si_implementation}
and $\denot{P}_{\textsc l}$ denote its lazy RSI implementation as given in \cref{fig:rsi_alt_implementation}.

If for every location $x$ and value $v$, 
every RSI-consistent execution of $P$ contains either (i) at most one non-transactional write of $v$ to $x$; or
(ii) all non-transactional writes of $v$ to $x$ are happens-before-ordered with respect to all transactions accessing $x$, 
then:
\[
	\outcomes{P}{\mathrm{RSI}} = \outcomes{\denot{P}_{\textsc e}}{\mathrm{RA}} = \outcomes{\denot{P}_{\textsc l}}{\mathrm{RA}}
\]	
\end{theorem}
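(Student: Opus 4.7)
The plan is to prove the two equalities by establishing, for each implementation $\textsc{x}\in\{\textsc{e},\textsc{l}\}$, the two inclusions $\outcomes{\denot{P}_{\textsc x}}{\mathrm{RA}} \subseteq \outcomes{P}{\mathrm{RSI}}$ (soundness) and $\outcomes{P}{\mathrm{RSI}} \subseteq \outcomes{\denot{P}_{\textsc x}}{\mathrm{RA}}$ (completeness). In both directions I would set up an explicit translation between an RA-consistent implementation execution graph $G_I$ and an RSI-consistent abstract execution graph $G$ that preserves the outcome, reusing the infrastructure already developed for the pure-SI case (\cref{thm:si_snd_cmp}) and only adding the machinery needed to cope with non-transactional events and with the extra snapshot-validation loop of the RSI implementations.

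For \emph{soundness}, given an RA-consistent $G_I$ of $\denot{P}_{\textsc x}$, I would construct $G$ by (i) erasing all lock events, (ii) collapsing each transaction's two snapshot reads of a location $x$ into a single transactional read event of $x$, (iii) for the lazy implementation, collapsing each logged write-back into the corresponding transactional write, and (iv) inheriting $\po$, $\rf$ and $\co$ from $G_I$ after the collapse (so the value observed is the snapshot value for external reads and the locally-logged value for internal reads). The key lemma is that, for events of distinct transactions (or between an $\NT$ event and a transactional event), any $\rsihb$ edge in $G$ is matched by an $\hb$ edge in $G_I$: the $\pot$-part is immediate, $\cot$ and $\sifr$ come from the write-lock synchronisation axioms \ref{ax:lock_wsync} and \ref{ax:lock_wex} that force write-locks (or promotions) on the same location to be $\lo$-totally-ordered, $\rf;[\NT]$ comes directly from $\rf\subseteq\hb$, and $[\NT];\rf;\st$ is forced because a non-transactional write to $x$ can only be read after the transaction's reader lock on $x$ was acquired, which by \ref{ax:lock_rshared} synchronises with the whole of the transaction's writer-lock phase. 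With this lemma the acyclicity of $\rsihbloc\cup\co\cup\fr$ in $G$ reduces to \ref{ax:ra_acyc} in $G_I$, and \ref{ax:si_int} follows from internal-read-consistency of the snapshot-based denotation.

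For \emph{completeness}, given an RSI-consistent $G$, I would build $G_I$ by (a) ordering the transactions of $G$ by a linear extension of $\rsihb\cup\cot$, (b) expanding each transaction into its implementation code, inserting reader/writer lock events and snapshot reads at the appropriate positions (two snapshot reads per read location for the RSI implementation, write-back events at commit for the lazy one), and (c) defining $\lo$ per location $x$ as the order induced by this transaction linearisation on write-lock/promotion events of $x$, leaving reader-lock events of $x$ unrelated whenever they belong to concurrent transactions, which is exactly why the completeness proof relies on the weaker MRSW specification (without \ref{ax:lock_rsync}). The snapshot values chosen for the first and second read of each location $x$ in a transaction are the values specified by the $\rf$-edge of the corresponding transactional read in $G$; I would then use the $[\NT];\rf;\st$ and $\tlift{(\co;\rf)}$ components of $\rsirf$ together with RA-consistency of the surrounding non-transactional code to show that a non-transactional write that races with the transaction and that would invalidate the snapshot is either $\rsihb$-before the snapshot or $\rsihb$-after the commit, so that the two reads of the same location indeed read the same value.

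The main obstacle is exactly this snapshot-validation argument for completeness, and this is where the hypothesis on non-transactional writes enters in an essential way. If two distinct non-transactional writes write the \emph{same} value $v$ to $x$ and both race with the same transaction, the implementation's equality check cannot distinguish a snapshot read from the first write followed by a second read from the second write from a legitimate race-free snapshot; the RSI consistency axiom, however, does not rule out such an interleaving of $\rf$-edges. Under the stated hypothesis this pathology is excluded, so I can always pick $\rf$-edges for the two snapshot reads that witness the same value, making the check on line~\ref{lin:rsi_alt_validate_failure} (and its eager analogue) succeed, and producing a deadlock-free schedule through standard fairness arguments on the linear extension of $\rsihb$. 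Soundness and completeness of the eager and lazy variants then differ only in the bookkeeping for the lazy write log and the on-the-fly computation of $\readset,\writeset$, which is essentially the same extra step as in the proof of \cref{thm:si_snd_cmp} and can be handled by the same technique.
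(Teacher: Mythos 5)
Your overall architecture is the paper's: soundness by collapsing an RA-consistent implementation graph into an abstract RSI graph (erasing locks, merging the two snapshot reads, and for the lazy variant merging the logged write-backs), with a key lemma matching each cross-transaction $\rsihb$ component to an implementation $\hb$ path through the lock events; completeness by expanding each transaction into its implementation trace and defining $\lo$ from the abstract $\co$/$\fr$/$\co;\rf$ relations (the paper does not actually linearise the transactions for RSI as it does for SI, but your linear-extension variant is a workable alternative and you correctly note that reader locks must be left unsynchronised).

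There is, however, one genuine misplacement that leaves a hole in your soundness argument: the hypothesis on non-transactional writes of equal values is needed for \emph{soundness}, not completeness. In the completeness direction you control the construction, so you simply make both snapshot reads of $x$ read from the write designated by the abstract $\rf$-edge and the validation trivially succeeds; RSI-consistency (via $[\NT];\rf;\st$ and $\rsipo$) already rules out any coherence violation for the second read. The real danger is the one you yourself describe --- the implementation's equality check accepting an execution in which $\mathit{rs}_x$ and $\mathit{vs}_x$ (resp.\ $\mathit{vr}_x$) read from two \emph{distinct} non-transactional writes of the same value --- and that is a spurious \emph{implementation} behaviour, i.e.\ a soundness concern. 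Concretely, your abstract $\rf$ construction assigns the transactional read the write seen by the first snapshot read, but the irreflexivity of $\rsihb;\fr$ then needs the \emph{validation} read to also read-before the conflicting write $w$ (this is exactly where the paper invokes the race-freedom stipulation in case~(4) of its $\irr{\rsihb;\fr}$ argument); without the hypothesis the second read may read from $w$ itself or a $\co$-later write and the argument collapses, and indeed the abstract $\rf$ may fail to be total on reads. Your claim that soundness ``reduces to \ref{ax:ra_acyc} in $G_I$'' therefore skips the step where the hypothesis must be used; the rest of the proposal is sound.
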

\begin{proof}
The full proofs of both implementations is given in the technical appendix (\cref{app:rsi} and \cref{app:rsi_alternative}, respectively).
\end{proof}

\section{Related and Future Work}\label{sec:related_and_future_work}

Much work has been done in formalising the semantics of weakly consistent \emph{database transactions}~\cite{SI,PSI,Adya-thesis,Adya-graphs,cerone15,PSI-Cerone,SI-Cerone,Cerone-Algebraic,Crooks-isolation,Gotsman-reasoning}, both operationally and declaratively. 
On the operational side, Berenson et al.~\cite{SI} gave an operational model of SI as a multi-version concurrent algorithm. 
Later, Sovran et al.~\cite{PSI} described and operationally defined the \emph{parallel snapshot isolation} model (PSI), as a close relative of SI with weaker guarantees. 

On the declarative side, Adya et al.~\cite{Adya-thesis,Adya-graphs} introduced \emph{dependency graphs} (similar to execution graphs of our framework in \cref{sec:framework}) for specifying transactional semantics and formalised several ANSI isolation levels.
Cerone et al.~\cite{cerone15,PSI-Cerone} introduced \emph{abstract executions} and formalised several isolation levels including SI and PSI.
Later in~\cite{SI-Cerone}, they used dependency graphs of Adya to develop equivalent SI and PSI semantics; 
recently in~\cite{Cerone-Algebraic}, they provided a set of algebraic laws for connecting these two declarative styles.

To facilitate client-side reasoning about the behaviour of database transactions,
Gotsman et al.~\cite{Gotsman-reasoning} developed a proof rule for proving invariants of client applications under a number of consistency models. 

Recently, Kaki et al.~\cite{Kaki-popl18} developed a program logic to reason about transactions under ANSI SQL isolation levels (including SI).
To do this, they formulated an operational model of such programs (parametric in the isolation level).
They then proved the soundness of their \emph{logic} with respect to their proposed operational model.
However, the authors did not establish the \emph{soundness} or \emph{completeness} of their \emph{operational model} against existing formal semantics, e.g.~\cite{SI-Cerone}. 
The lack of the completeness result means that their proposed operational model may exclude behaviours deemed valid by the corresponding declarative models. 
This is a particular limitation as possibly many valid behaviours cannot be shown correct using the logic and is thus detrimental to its usability. 

By contrast, the semantics of transactions in the STM setting with mixed-mode (both transactional and non-transactional) accesses is under-explored on both operational and declarative sides.
Recently, Dongol at al.~\cite{Dongol-popl18} applied execution graphs~\cite{cats-Alglave} to specify the behaviour of \emph{serialisable} STM programs under weak memory models. 
Raad et al.~\cite{raad1:psi} formalised the semantics of PSI STMs both declaratively (using execution graphs) and operationally (as lock-based reference implementations).
Neither work, however, handles the semantics of SI STMs under weaker isolation guarantees.

Finally, Khyzha et al.~\cite{Khyzha18} formalise the sufficient conditions on STMs and the programs running on them
that together ensure strong isolation. 
That is, non-transactional accesses can be viewed as singleton transactions (transactions containing single instructions).
However, their conditions require \emph{serialisability} for fully transactional programs, and as such, RSI transactions do not meet their conditions. 
Nevertheless, we conjecture that a DRF guarantee for strong atomicity, similar to the one in~\cite{Khyzha18}, may be established for RSI.
That is, if all executions of a given fully transactional program have no races between singleton and non-singleton transactions, 
then it is safe to replace all singleton transactions by non-transactional accesses.

In the future, we plan to build on the work presented here by developing reasoning techniques that would allow us to verify properties of STM programs.
This can be achieved by either extending existing program logics for weak memory, or developing new ones for currently unsupported models.
In particular, we can reason about the SI models presented here by developing custom proof rules in the existing program logics for RA such as~\cite{ogra,rsl}.

\paragraph{Acknowledgements}
This research was supported in part by a European Research Council
(ERC) Consolidator Grant for the project ``RustBelt", under the European
Union's Horizon 2020 Framework Programme (grant agreement no. 683289).

\bibliographystyle{splncs04}
\bibliography{biblio}

\appendix
\AtAppendix{\counterwithin{lemma}{section}}
\AtAppendix{\counterwithin{proposition}{section}}
\AtAppendix{\counterwithin{notation}{section}}
\AtAppendix{\counterwithin{assumption}{section}}
\AtAppendix{\counterwithin{definition}{section}}
\AtAppendix{\counterwithin{remark}{section}}

\newpage

\section{MRSW Lock Implementations}\label{app:lock_implementations}
We consider two different MRSW library implementations, both satisfying the lock synchronisation guarantees required by \cref{def:si_implementation_consistency}. As we demonstrate shortly, our first implementation (\cref{subsec:full_sync_locks}) offers additional synchronisation guarantees by ensuring that any two calls to the library (including those of read locks) synchronise. 
That is, our first implementation satisfies the (\ref{ax:lock_rsync}) axiom on page \pageref{ax:lock_rsync}.
By contrast, the synchronisation guarantees of our second implementation (\cref{subsec:write_sync_locks}) are exactly those required by \cref{def:si_implementation_consistency}, where a library call to a \emph{write} lock synchronises with all other lock library calls.
As we discussed earlier in \cref{sec:ideas}, whilst \emph{both} library implementations can be used in our \emph{SI} implementation, \emph{only the weaker} (second) implementation can be used in our \emph{RSI} implementation.

\subsection{Fully Synchronising MRSW Lock Implementation}\label{subsec:full_sync_locks}
Our first MRSW lock library is implemented in the RA fragment of C11~\cite{SRA}, and is given in \cref{fig:MRSW_full_sync}.
In this implementation, the lock associated with each location \x resides at location $\x{+}1$, written \xl. 
The state of a lock \xl is represented by an integer value. A lock \xl may hold either:
\begin{enumerate}[label={\roman*)}]
	\item value $0$, denoting that the lock is free (not held in read or write mode); or
	\item value $1$, denoting that the lock is held (exclusively) in write mode; or
	\item an even value $2n$ with $n > 0$, denoting that the lock is held in (shared) read mode by $n$ readers; or
	\item an odd value $2n{+}1$ with $n > 0$, denoting that the lock is currently being promoted, awaiting the release of $n$ readers. 
\end{enumerate}
As such, the implementation of \code{\wlock x} simply spins until it can atomically update (via \code{CAS}) the value of \xl from zero (free) to one (acquired in write mode).
The \code{CAS(xl,\,v\,,v')} denotes the atomic `compare-and-set' operation, where \emph{either} \xl currently holds value $v$ in which case it is atomically updated to $v'$ and true is returned; \emph{or} \xl currently holds a value other than $v$ in which case it is left unchanged and false is returned. 
Dually, the implementation of \code{\wunlock x} simply releases the write lock by atomically assigning \xl to zero. 

The implementation of \code{can-promote x} is more involved. As multiple readers may attempt to promote their reader locks simultaneously, promotion is granted on a `first-come-first-served' bases. As such, the implementation of \code{can-promote x} first reads the value of \xl. If \xl holds an odd value, then another reader is currently promoting \xl and thus promotion of \xl fails by returning false.  On the other hand, if \xl holds an even value, then its value is atomically decremented (to an odd value) to signal the intention to promote.
The implementation then proceeds by spinning until all other readers have released their locks on \x (i.e.~\xl == 1), at which point true is returned to denote the successful acquisition of \x in write mode. 
Note that once a reader has signalled its intention to promote \x (by decrementing \xl to an odd value), any other such attempt to promote the lock on \x, as well as calls to acquire it in read mode will fail thereafter until such time that \x is released by its current promoter.

The implementation of \code{\rlock x} is similar. It first checks whether \xl is odd (held in write mode or being promoted). If so then the implementation spins until \xl is even (free or held in read mode), at which point its value is incremented by two (to increase the number of readers by one) using the atomic `fetch-and-add' (\code{FAA}) operation, and \x is successfully acquired in read mode. 
Dually, the implementation of \code{\runlock x} atomically decrements the value of \xl by two to decrease the reader count by one.

\begin{figure}[t]
\hrule
\[
\begin{array}[t]{@{} l @{\hspace{25pt}} l @{}}
	\begin{array}[t]{@{} l @{}}
		\code{\rlock x} \eqdef  \\
		\quad \begin{array}[t]{@{} l @{\hspace{3pt}} l @{}}
			\codelabel{start}& \code{a\,:=\,\xl;}\\
			& \code{if (is-odd a) goto start;} \\
			& \code{if (!CAS(xl,\,a,\,a+2)) } \\
			& \quad \code{goto start;}
		\end{array} \\\\
		
		\code{\runlock x} 
		\eqdef 
		\code{FAA(xl,\,-2);} \\\\

		\code{\wlock x} \eqdef  \code{while\,(!CAS(xl,0,1))\,skip;}\\
	\end{array}
	& 
	\begin{array}[t]{@{} l @{}}
		\code{can-promote x} \eqdef \\
		\quad \begin{array}[t]{@{} r @{\hspace{3pt}} l @{}}
			\codelabel{start}& \code{a\,:=\,xl;}\\
			& \code{if\,(is-odd a)\,return false;} \\
			& \code{if\,(!CAS(xl,\,a,\,a-1)) } \\
			& \quad \code{goto start;} \\
			& \code{while\,(xl\,!=\,1) skip;} \\
			& \code{return true;}
		\end{array} \\\\
		
		\code{\wunlock x} 
		\eqdef  
		\code{xl\,:=\,0;} 
		
	\end{array} 
\end{array}	
\]
%
%
%
%
%
%
%
\hrule\vspace{5pt}
\caption{Fully synchronising MRSW lock implementation in the RA fragment of C11}
\label{fig:MRSW_full_sync}
\end{figure}
\subsubsection{Synchronisation Guarantees}
In what follows we demonstrate that our implementation satisfies the (\ref{ax:lock_wsync}), (\ref{ax:lock_wex}) and (\ref{ax:lock_rshared}) conditions in \cref{def:si_implementation_consistency}, as well as the stronger (\ref{ax:lock_rsync}) condition discussed on page \pageref{ax:lock_rsync}.

Observe that a successful acquisition of a writer lock is done via an atomic update operation (when the \code{CAS} succeeds).
That is, a call to \code{\wlock x} returns only when the \code{CAS} is successful, i.e.~when no other thread holds a lock on \x. 
Similarly, a call to \code{\wunlock x} returns after an atomic write to \xl assigning it to zero. 
Moreover, once \x is acquired in write mode by a thread $\tid$, no other thread can acquire it (all other calls to \code{\rlock x} and \code{\wlock x} spin until \x is released by $\tid$).
As such, any RA-consistent execution graph of a program $P$ containing a call by thread $\tid$ to \code{\wlock x} followed by its subsequent release via \code{\wunlock x} includes a trace of the following form, where $wl$ and $wu$ are events of thread $\tid$, $wl$ denotes the update event associated with the successful \code{CAS} acquiring the writer lock, and $wu$ denotes the write event associated with its release:
\begin{align}
	\relarrow{\rf}\, wl:\updateE{}{\xl}{0}{1} \relarrow{\imm{\po_\xl}} wu:\writeE{}{\xl}{0}
	\label{trace:wlock}
\end{align}
Note that if the use of locks in $P$ is well-formed (i.e.~the sequence of lock events in each thread following $\po$ matches (a prefix of) the regular expression $(\lRL\cdot\lRU \;|\;\lWL\cdot\lWU \;|\; \lRL\cdot\lPL\cdot\lWU)^*$ -- see page \pageref{par:lock_traces}), no other write event on \xl can happen between $wl$ and $wu$. 
This is because  i) well-formedness of lock traces ensures that the lock is released by the acquiring thread $\tid$ itself, and ii) the lock on \x cannot be acquired after $wl$ and before $wu$: calls to \code{\wlock x} fail because \xl is non-zero; and calls to \code{can-promote x} and \code{\rlock x} fail because $\xl=1$ holds an odd value.
This in turn ensures the mutual exclusion property of writer locks, as required by the (\ref{ax:lock_wex}) condition in \cref{def:si_implementation_consistency}.

Similarly, a call to \code{can-promote x} returns after the \code{CAS} succeeds and subsequently the condition of the while loop amounts to false and no other thread owns a lock on \x (in read or write mode). 
As such, any RA-consistent execution graph of a program $P$ containing a call by thread $\tid$ to \code{can-promote x} followed by its subsequent release via \code{\wunlock x} includes a trace of the following form, where $sp$, $pl$ and $wu$ are events of thread $\tid$, $sp$ denotes the update event associated with the successful \code{CAS} signalling promotion, $pl$ denotes the atomic read event in the final iteration of the while loop denoting successful promotion, and $wu$ denotes the write event associated with its release:
\begin{align}
	\hspace*{-10pt}
	\relarrow{\rf}\, sp:\updateE{}{\xl}{2i {+} 2}{2i {+} 1} 
	\relarrow{\rf} b_1 \relarrow{\rf} \cdots \relarrow{\rf} b_i
	\relarrow{\rf} pl: \readE{}{\xl}{1}	
	\relarrow{\imm{\po_\xl}} wu:\writeE{}{\xl}{0}
	\label{trace:plock}
\end{align}
for some $i$ denoting the reader count on \x, with $b_1 \cdots b_m \in \RUnlocks_{\x}$, $rl \relarrow{\imm{\po_\xl}} ru$.
In other words, once $\tid$ has signalled its intention to promote \x, no other lock on \x can be acquired -- calls to \code{\wlock x} fail as the value of \xl is non-zero; calls to \code{\rlock x} fail as \xl holds an odd value. 
However, existing reader locks must be released before \x can be successfully promoted.
As with writer locks above, if the use of locks in $P$ is well-formed, no other event on \xl can happen between $pl$ and $wu$. 
This once again ensures the mutual exclusion property of promoted locks, as required by the (\ref{ax:lock_wex}) condition in \cref{def:si_implementation_consistency}.

Analogously, a call to \code{\rlock x} returns when the \code{CAS} succeeds; and a call to \code{\runlock x} returns after the atomic \code{FAA} operation. 
Moreover, once \x is acquired in reader mode by a thread $\tid$, no other thread can acquire it in write mode (all other calls to \code{\wlock x} spin) until it has been released by $\tid$ (and potentially other readers). 
As such, any RA-consistent execution graph of a program $P$ containing a call to \code{\rlock x} followed by its subsequent release via \code{\runlock x} includes a trace of the following form, where $rl$ and $ru$ are events of thread $\tid$, $rl$ denotes the update event associated with the successful \code{CAS} acquiring the reader lock, and $ru$ denotes the update event associated with its release via \code{FAA}:
\small
\begin{align}
	& \hspace*{-10pt} \relarrow{\rf} rl:\updateE{}{\xl}{2i}{2i {+} 2} 
	\underbrace{\relarrow{\rf} a_1 \relarrow{\rf} \cdots \relarrow{\rf} a_n}_{\text{zero or more }}
	\underbrace{\relarrow{\rf} sp: \updateE{}{\xl}{2k {+} 2}{2k{+} 1}}_{\text{zero or one }}
	\underbrace{\relarrow{\rf} b_1 \relarrow{\rf} \cdots \relarrow{\rf} b_m}_{\text{zero or more }} 
	\nonumber \\
	&\hspace*{-10pt}
	\relarrow{\rf} ru:\updateE{}{\xl}{j{+} 2}{j}
	\label{trace:rlock}
\end{align}
\normalsize
for some $i, j, k>0$, where $a_1 \cdots a_n \in \RLocks_{\x} \cup \RUnlocks_{\x}$,  $b_1 \cdots b_m \in \RUnlocks_{\x}$, $rl \relarrow{\imm{\po_\xl}} ru$.
In other words, other threads may also acquire \x in read mode (denoted by events of $a_1 \cdots a_i$) and their calls may be interleaved between the read lock and unlock of $\tid$. 
Moreover, another thread may signal to promote its reader lock on \x (denoted by $sp$)  in between the acquisition and release of the reader lock on \x by $\tid$. 
Lastly, once the lock has been signalled for promotion, no other reader lock on \x can be acquired, though existing reader locks may be released (denoted by  $b_1 \cdots b_m$).
Note that if the use of locks in $P$ is well-formed, no thread can acquire \x in write mode (as \x holds a non-zero value) or successfully promote it before \x is released by $\tid$ via $ru$. 
More concretely, threads may signal their intention to promote (see $sp$ above). However, they cannot successfully promote it before $\tid$ has released it in $ru$ as $\xl$ holds a value greater than $1$ (see $pl$ in \eqref{trace:plock}). 
This ensures the \eqref{ax:lock_rshared} condition in \cref{def:si_implementation_consistency}.

Observe that given an RA-consistent execution graph involving calls to the above MRSW lock library above, for each location \x and its lock at \xl, the trace of events on \xl comprises $\po$ and $\rf$ edges, as demonstrated by the traces in \eqref{trace:wlock}, \eqref{trace:plock} and \eqref{trace:rlock}. 
In other words, any two events on \xl are related by $\transC{\perloc{(\po \cup \rf)}}$, i.e.\ $\transC{\perloc{(\po \cup \rf)}}$ is total for each \xl. 
Moreover, from the RA-consistency of our execution we know that $\transC{\perloc{(\po \cup \rf)}}$ is acyclic. As such, since $\transC{\perloc{(\po \cup \rf)}}$ is transitively closed, we know that it is a strict total order. 
For each location \x, we thus define $\lo_\x$ as the strict total order given by $\lo_\x \eqdef \transC{(\po_{\xl} \cup \rf_\xl)}$.
As such, our implementation satisfies both the (\ref{ax:lock_wsync}) condition in \cref{def:si_implementation_consistency} and the stronger (\ref{ax:lock_rsync}) condition discussed on page \pageref{ax:lock_rsync}.

\subsection{Write Synchronising MRSW Lock Implementation}\label{subsec:write_sync_locks}

\newcommand{\onearray}{\textbf{1}}
\newcommand{\zeroarray}{\textbf{0}}
\newcommand{\aplus}{\cdot}
Our second MRSW lock implementation is similarly implemented in the RA fragment of C11~\cite{SRA} and is given in \cref{fig:MRSW_write_sync}. 
In this implementation, each lock \x is represented as an \emph{ordered} map at location $\x {+} 1$, written \xl. The map at \xl contains one entry per thread as follows. 
For each thread with identifier $\tid$, the $\code{xl[\tid]}$ map entry records the current locking privileges of $\tid$ on \x. 
More concretely, when $\code{xl[\tid]}=0$, then $\tid$ does not hold the \x lock; when $\code{xl[\tid]}=2$, then $\tid$ holds \x in read mode; and when $\code{xl[\tid]}=1$; then \emph{some} thread (either $\tid$ or another thread) either holds \x in write mode, or it is in the process of acquiring \x in write mode.
The \x lock is held in write mode only when all entries in \xl are mapped to one. As we describe shortly, for thread $\tid$ to acquire \x in write mode, it must inspect each entry in \xl (in order), wait for it be free (zero) and then set it to one. 
As we discuss shortly, this in-order inspection of entries allows us to avoid deadlocks. 
In our implementation, we assume that the thread identifier can be obtained by calling \code{getTID}.
We identify the top-most thread by $\tid=0$; as such, the entry of top-most thread in each map is ordered before all other threads.

\begin{figure}[t]
\hrule
\[
\begin{array}[t]{@{} l @{\hspace{25pt}} l @{}}
	\begin{array}[t]{@{} l @{}}
		\code{\rlock x} \eqdef  \\
		\quad \begin{array}[t]{@{} l @{}}
			\code{t:=\,getTID;} \\
			\code{while\,(!CAS(xl[t],\,0,\,2)) skip;} \\
		\end{array} \\\\
		
		\code{\runlock x} 
		\eqdef 
		\code{t:=\,getTID; } 
		\code{xl[t]:=\,0} \\\\

		\code{\wlock x} \eqdef  \\
		\quad \begin{array}[t]{@{} l @{}}
			\code{for\,(t\,$\in$\,dom(xl))} \\
			\quad \code{while\,(!CAS(xl[t],0,1))\,skip;} \\
		\end{array} \\\\

		\code{\wunlock x} 
		\eqdef  \\
		\quad \begin{array}[t]{@{} l @{}}
			\code{for\,(t\,$\in$\,dom(xl)) xl[t]:=\,0; }
		\end{array} 
	\end{array}
	& 
	\begin{array}[t]{@{} l @{}}
		\code{can-promote x} \eqdef \\
		\quad \begin{array}[t]{@{} r @{\hspace{3pt}} l @{}}
			& \code{t:=\,getTID; }  \\
			& \code{if\,(t\,==\,0) xl[t]:=\,1;} \\
			& \code{else \{} \\
			\codelabel{retry}& \quad \code{a:=\,xl[0];}\\
			& \quad \code{if (a\,==\,1) return false;} \\
			& \quad \code{if (!CAS(xl[0],\,0,\,1)) } \\
			& \qquad \code{goto retry; } \\
			& \code{\} xl[t]:=\,1;} \\
			& \code{for\,(i\,$\in$\,dom(xl) \&\& i\,$\not\in$\{0,t\})} \\
			& \quad \code{while\,(!CAS(xl[i],0,1))\,skip;} \\
			& \code{return true;}
		\end{array}

	\end{array} 
\end{array}
\]
\hrule\vspace{5pt}
\caption{Write synchronising MRSW lock implementation in the RA fragment of C11}
\label{fig:MRSW_write_sync}
\end{figure}

We proceed with a more detailed explanation of our implementation after introducing our map notation.
%
%
We write $\onearray$ to denote a map where all entries have value $1$; similarly, we write $\zeroarray$ to denote a map where all entries have value $0$. 
Lastly, we write $S \subseteq \xl$, to denote that the values held in map \xl are a superset of $S$.
The lock map \xl associated with location \x can be in one of the following states: 
\begin{itemize}
	\item $\xl=\zeroarray$ when \x is free;
	\item $\xl = \onearray$ when \x is held in write mode;
	\item $\{2\} \suq \xl$ when \x is held in read mode (by those threads $\tau$ where $\code{xl[\tid]} = 2$).
\end{itemize}

When thread $\tid$ calls \code{\rlock \x}, it simply spins until the lock is free ($\xl = \zeroarray$ and thus $\code{xl[\tid]}=0$), at which point it acquires it in read mode by setting $\code{xl[\tid]}$ to two. 
Dually, when $\tid$ calls \code{\runlock \x} it simply sets $\code{xl[\tid]}$ to zero. 

Analogously, when $\tid$ calls \code{\wlock x}, it traverses the \xl map in order, spinning on each entry until it is free ($0$) and subsequently acquiring it (by setting it to $1$).
Conversely, when $\tid$ calls \code{\wunlock x}, it releases \x by traversing \xl in order and setting each entry to one. 

To understand the implementation of lock promotion, first consider the case where \code{can-promote x} is called by $\tid \ne 0$, i.e.~a thread other than the top-most thread.  
The implementation of \code{can-promote \x} then inspects the \emph{first} entry in the map ($\code{xl[0]}$), i.e.~that of the top-most thread. If $\code{xl[0]}=1$, then \x is currently being acquired by another thread; the promotion thus fails and false is returned. 
If on the other hand $\code{xl[0]} \ne 1$ (i.e.~$\code{xl[0]}=0$ or $\code{xl[0]}=2$), the implementation spins until it is zero and atomically updates it to one, signalling its intention to promote \x. 
This pre-empts the promotion of \x by other threads: any such attempt would fail as now $\code{xl[0]}=1$.
The implementation then sets its own entry (\code{xl[\tid]}) to one, traverses the map in order, and spins on each entry until they too can be set to one. At this point the lock is successfully promoted and true is returned. 
Note that it is safe for $\tid$ to update its own entry \code{xl[\tid]} to one: at this point in execution no thread holds the writer lock on \x, no thread can promote its lock on \x, and those threads with a reader lock on \x never access the \code{xl[\tid]} entry -- the read lock calls of another thread \tid' solely accesses \code{xl[\tid']}.

Let us now consider the case when the top-most thread with $\tid=0$ calls \code{can-promote x}. Since prior to a \code{can-promote x} call $\tid$ owns a reader lock on \x, i.e.~$\code{xl[\tid]}=2$, no other thread can promote its \x lock. As such, $\tid$ successfully sets \code{xl[\tid]} to one, signalling its intention to promote \x. 
In other words, the promotion is skewed in favour of the top-most thread: if a thread races against the top-most thread to promote \x, the top-most thread always wins.
With the exception of the top-most thread, promotion is done on a `first-come-first-served' basis. 
The rest of the implementation is then carried out as before: the map \xl is traversed in turn and each entry is set to one.

\subsubsection{Synchronisation Guarantees} 
In what follows we demonstrate that our implementation satisfies the (\ref{ax:lock_wsync}), (\ref{ax:lock_wex}) and (\ref{ax:lock_rshared}) conditions in \cref{def:si_implementation_consistency}, while it does not satisfy the stronger (\ref{ax:lock_rsync}) condition discussed on page \pageref{ax:lock_rsync}.

Observe that a successful acquisition of a writer lock is done via several atomic update operations (via successful \code{CAS} operations in the for loop).
That is, a call to \code{\wlock x} returns only once the \code{CAS} on all \xl entries succeeds, i.e.~when no other thread holds a lock on \x. 
Similarly, a call to \code{\wunlock x} returns after atomic writes on each entry in \xl, assigning them to zero. 
Moreover, once \x is acquired in write mode by a thread $\tid$, no other thread can acquire it: all other calls to \code{\rlock x} spin until the relevant \xl entry is set to zero by $\tid$; all calls to \code{\wlock x} spin until all entries in \xl are set to zero by $\tid$).
As such, any RA-consistent execution graph of a program $P$ containing a call by thread $\tid$ to \code{\wlock x} or \code{can-promote x}, followed by its subsequent release via \code{\wunlock x} includes a trace of the following form, where the domain of the \xl map is $0 \cdots n$ for some $n$, the $wl_0, \cdots wl_n, wu_0, \cdots, wu_n$ are events of thread $\tid$, each $wl_i$ denotes the update event associated with the successful \code{CAS} (in \code{\wlock x}) on the \code{xl[$i$]} entry, and each $wu_i$ denotes the write event setting the \code{xl[$i$]} entry to zero (in \code{\wunlock x}):
\begin{align}
	\begin{array}{@{} r @{\hspace{2pt}} c @{\hspace{2pt}} l @{}}
		\relarrow{\rf}
		& wl_0:\updateE{}{\xl[0]}{0}{1} \\
		& \downarrow \po   \\
		& \vdots  \\
		& \downarrow \po  \\
		\relarrow{\rf}
		& wl_n:\updateE{}{\xl[n]}{0}{1} 
		& \relarrow{\po}
		wu_0:\writeE{}{\xl[0]}{0}
		\relarrow{\po}
		\cdots
		\relarrow{\po}
		wu_n:\writeE{}{\xl[0]}{0}
	\end{array}
	\label{trace:wlock2}
\end{align}
As before, note that if the use of locks in $P$ is well-formed (see page \pageref{par:lock_traces}), no other write event on \code{xl[$i$]} can happen between $wl_i$ and $wu_i$. 
This is because no thread can acquire the write lock on \xl after $wl_0$ and before $wu_n$; and thread $\tid_i$ cannot acquire the read lock on \xl after $wl_i$ and before $wu_i$.
This in turn ensures the mutual exclusion property of writer locks, as required by the (\ref{ax:lock_wex}) condition in \cref{def:si_implementation_consistency}.

Analogously, a call to \code{\rlock x} returns when the \code{CAS} succeeds; and a call to \code{\runlock x} returns after the atomic write operation setting \code{xl[\tid]} to zero. 
As such, any RA-consistent execution graph of a program $P$ containing a call to \code{\rlock x} by thread $\tid$, followed by its subsequent release via \code{\runlock x} includes a trace of the following form, where $rl$ and $ru$ are events of thread $\tid$, $rl$ denotes the update event associated with the successful \code{CAS} acquiring the reader lock, and $ru$ denotes the write event associated with its release by assigning it to zero:
\small
\begin{align}
	\relarrow{\rf} rl:\updateE{}{\xl[\tid]}{0}{2} 
	\relarrow{\po} ru:\writeE{}{\xl[\tid]}{0}
	\label{trace:rlock2}
\end{align}
\normalsize
Note that each read lock call on \x by thread \tid, accesses \code{xl[\tid]} alone and no other entry in \xl. As such, two threads may simultaneously acquire the reader lock on \x, ensuring the (\ref{ax:lock_rshared}) condition in \cref{def:si_implementation_consistency}.
Moreover, as read lock calls by two distinct threads \tid, \tid' access \emph{disjoint} memory locations (\code{xl[\tid]} and \code{xl[\tid']}), they never synchronise. That is, the MRSW lock implementation in \cref{fig:MRSW_write_sync} does not satisfy the (\ref{ax:lock_rsync}) axiom on page \pageref{ax:lock_rsync}.

Lastly, we demonstrate that our implementation satisfies the \eqref{ax:lock_wsync} condition in \cref{def:si_implementation_consistency}, when accessed by an arbitrary (finite) number of threads $n$.\\
Observe that given an RA-consistent execution graph involving calls to the MRSW lock library above, for each location \x, its lock at \xl, and each thread $\tid$, the trace of events on \code{xl[\tid]} comprises $\po$ and $\rf$ edges, as demonstrated by the traces in (\ref{trace:wlock2}) and (\ref{trace:rlock2}). 
In other words, any two event on \code{xl[\tid]} are related by $\transC{\perloc{(\po \cup \rf)}}$, i.e.~$\transC{\perloc{(\po \cup \rf)}}$ is total for each \code{xl[\tid]}. 
We thus use this total order to determine the $\lo$ (synchronisation order) between two lock events, where at least one of them is a write lock event. 
Let us now pick two distinct lock events, $w, l \in \LEvents_\x$, where at least one of them $w$ is a write lock event, i.e.~$w \in \WLocks_\x \cup \PLocks_\x \cup \WUnlocks_\x$. 
Either 1) $w, l$ are events of the same thread; or 2) $w$ and $l$ are events of distinct threads $\tid$, $\tid'$, respectively.
In the first case the two events are related by $\po$ one way or another, and as $\po \subseteq \hb$, we know that the two events synchronise.
In the second case, there are two additional cases to consider: either i) $l \in \WLocks_\x \cup \PLocks_\x \cup \WUnlocks_\x$; or $l \in \RLocks_\x \cup \RUnlocks_\x$. 
In case (i), as we discussed above we know that each call amounts to a trace akin to that in \eqref{trace:wlock2}. 
That is, we know the trace of $\tid$ contains $wl: \updateE{}{\xl[0]}{0}{1} \relarrow{\po} wu: \writeE{}{\xl[0]}{0}$, and the trace of $\tid'$ contains $wl': \updateE{}{\xl[0]}{0}{1} \relarrow{\po} wu': \writeE{}{\xl[0]}{0}$.
Moreover, as we discussed above, we know that no other write event on \code{xl[0]} can happen between $wl$ and $wu$, and between $wl'$ and $wu'$.
As  $\transC{\perloc{(\po \cup \rf)}}$ is total for each \code{xl[\tid]}, we then know that we either have $wl \relarrow{\po} wu \relarrow{\transC{\perloc{(\po \cup \rf)}}} wl' \relarrow{\po} wu'$, or we have $wl' \relarrow{\po} wu' \relarrow{\transC{\perloc{(\po \cup \rf)}}} wl \relarrow{\po} wu$. As such, in both cases we know that the two events synchronise. 

In case (ii), we know that the write call results in a trace akin to that in \eqref{trace:wlock2}, while the read call results in a trace similar to that in  \eqref{trace:rlock2}. 
That is, we know the trace of $\tid$ contains $wl: \updateE{}{\xl[\tid']}{0}{1} \relarrow{\po} wu: \writeE{}{\xl[\tid']}{0}$, and the trace of $\tid'$ contains $rl: \updateE{}{\xl[\tid']}{0}{2} \relarrow{\po} ru: \writeE{}{\xl[\tid']}{0}$.
Moreover, as we discussed above, we know that no other write event on \code{xl[$\tid'$]} can happen between $wl$ and $wu$.
As  $\transC{\perloc{(\po \cup \rf)}}$ is total for \code{xl[$\tid'$]}, we then know that we either have $wl \relarrow{\po} wu \relarrow{\transC{\perloc{(\po \cup \rf)}}} rl \relarrow{\po} ru$, or we have $rl \relarrow{\po} ru \relarrow{\transC{\perloc{(\po \cup \rf)}}} wl \relarrow{\po} wu$. As such, in both cases we know that the two events synchronise.



\newpage
\section{Auxiliary Results}
\label{app:aux_lemmas}

\begin{proposition}[SI-consistency]
\label{prop:SI-consistency-full}
An execution graph $G$ is SI-consistent if and only if 
$\ref{ax:si_int}$ holds and 
the `SI-happens-before' relation $\sihb \eqdef \transC{(\pot \cup \rft \cup \cot \cup \sifr)}$ is irreflexive, 
where
$\sifr \eqdef [\EReads]; \frt; [\Writes]$ and $\EReads \eqdef \codom(\rfe)$. 
\begin{proof}
Pick an arbitrary execution graph $G$. 
We are then required to show: 
\[
	\acyc{(\pot \cup \rft \cup \cot); \refC{\frt}} \iff 
	\irr{\sihb}
\]
\textbf{The $\Rightarrow$ direction}\\
We proceed by contradiction. 
Assume that $\acyc{(\pot \cup \rft \cup \cot); \refC{\frt}}$ and $\neg\irr{\sihb}$ both hold.
We then know there exists $e$ such that $(e, e) \in \sihb = \transC{(\pot \cup \rft \cup \cot \cup \sifr)}$. 
Note that $\sifr;\sifr=\emptyset$ because $\EReads \cap \Writes = \emptyset$.
Hence, the $\sihb$ cycle cannot have adjacent $\sifr$ edges,
which means that we have a cycle in $\transC{((\pot\cup\rft \cup \cot);\refC{\sifr})}$,
which contradicts our first assumption as $\sifr \suq \frt$.
\\

\noindent\textbf{The $\Leftarrow$ direction}\\
We proceed by contradiction. Let us assume $\irr{\sihb}$ and $\neg\acyc{(\pot \cup \rft \cup \cot); \refC{\frt}}$.
We then know there exists $e$ such that $(e, e) \in \transC{((\pot \cup \rft \cup \cot); \refC{\frt})}$, i.e.~there exists $e$ such that $(e, e) \in \transC{(\refC{\frt}; (\pot \cup \rft \cup \cot))}$.
There are now two cases to consider: either
1) $(e, e) \in \transC{(\pot \cup \rft \cup \cot)}$; or 
2) $\exsts{a, d} (e, a) \in \reftransC{(\refC{\frt}; (\pot \cup \rft \cup \cot))}$,  $(a, d) \in \frt; (\pot \cup \rft \cup \cot)$ and $(d, e) \in \reftransC{(\refC{\frt}; (\pot \cup \rft \cup \cot))}$, i.e.~$\exsts{a} (a, a) \in \frt; (\pot \cup \rft \cup \cot); \reftransC{(\refC{\frt}; (\pot \cup \rft \cup \cot))}$.
In case (1) from the definition of $\sihb$ we then have $(e, e) \in \sihb$, contradicting the assumption that $\irr{\sihb}$ holds. 

In case (2) we then know there exists $b, c$ such that $\class b \st \ne \class c \st$, $(a, b) \in \frt$, $(b, c) \in \pot \cup \rft \cup \cot$ and $(c, a) \in \reftransC{(\refC{\frt}; (\pot \cup \rft \cup \cot))}$. 
From the definitions of $\pot$, $\rft$ and $\cot$ we then have $\class b \st \times \class c \st \subseteq \pot \cup \rft \cup \cot$ and thus from the definition of $\sihb$ we have $\class b \st \times \class c \st \subseteq \sihb$. 
As such, from \cref{lem:si_equivalent_aux} below we have $\class b \st \times \class a \st \subseteq \sihb$. 
On the other hand, from $(a, b) \in \frt$ we know there exist $r, w, w'$ such that $r \in \class a \st$, $w \in \class b \st$, $\class a \st \ne \class b \st$, $(r, w) \in \fr$, $(w', r) \in \rf$ and $(w', w) \in \co$. 
There are now two cases to consider: i) $w' \in \class a \st$; or ii) $w' \not\in \class a \st$.

In case (2.i) we then have $\class a \st \times \class b \st \subseteq \cot \subseteq \sihb$.
As such, we have $(a, b) \in \sihb$. 
Since we have also established $\class b \st \times \class a \st \subseteq \sihb$, we have $(b, a) \in \sihb$. 
By transitivity, we have $(a,a)\in\sihb$, contradicting the assumption that $\irr{\sihb}$ holds. 

In case (2.ii) we then know $r \in \EReads$. As such we have $(r, w) \in \sifr \subseteq \sihb$.
Since we have also established $\class b \st \times \class a \st \subseteq \sihb$, we have $(w, r) \in \sihb$. 
By transitivity, we have $(r,r)\in \sihb$, contradicting the assumption that $\irr{\sihb}$ holds. 

\end{proof} 
\end{proposition}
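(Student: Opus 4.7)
The plan is to prove the equivalence by handling each direction separately, observing that the only substantive difference between the two formulations is that $\sifr$ restricts $\frt$ to external reads on the left and writes on the right. The forward direction exploits this restriction directly; the backward direction needs a transaction-lifting argument to handle $\frt$ edges that do not immediately sit inside $\sihb$.

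For the $\Rightarrow$ direction I would argue by contradiction. Assume $\acyc{(\pot \cup \rft \cup \cot); \refC{\frt}}$ and pick a hypothetical $\sihb$-cycle. The key algebraic fact is that $\sifr;\sifr = \emptyset$: its right endpoint is a write, its left endpoint is an external read, and in the SI setting updates are excluded so $\Writes \cap \EReads = \emptyset$. Consequently the cycle contains no two adjacent $\sifr$ edges, and every $\sifr$ edge is preceded in the cycle by a non-$\sifr$ edge (in particular the cycle contains at least one non-$\sifr$ edge, so we can rotate it to start with one). Grouping each $\sifr$ edge with its immediate predecessor yields a cycle in $\transC{((\pot\cup\rft\cup\cot);\refC{\sifr})}$, which since $\sifr\subseteq\frt$ is a cycle in $\transC{((\pot\cup\rft\cup\cot);\refC{\frt})}$, contradicting the assumed acyclicity.

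For the $\Leftarrow$ direction, again by contradiction, assume $\irr{\sihb}$ and a cycle in $(\pot\cup\rft\cup\cot);\refC{\frt}$. If no $\frt$ edge appears in the cycle, then it is already a $\sihb$-cycle and we are done. Otherwise, isolate one $\frt$ edge $(a,b)$ and its successor edge $(b,c)\in\pot\cup\rft\cup\cot$; note $\tx{b}\neq\tx{c}$, so by the transaction-lifted definition of these relations, $\class{b}{\st}\times\class{c}{\st}\subseteq\sihb$. Unfolding $(a,b)\in\frt$ yields witnesses $r\in\tx{a}$, $w\in\tx{b}$, and $w'$ with $(w',r)\in\rf$, $(w',w)\in\co$. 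Split on whether $r$ reads internally or externally: if $w'\in\tx{a}$, then $(w',w)\in\cot$, so $\class{a}{\st}\times\class{b}{\st}\subseteq\cot\subseteq\sihb$; if $w'\notin\tx{a}$, then $r\in\EReads$ and $(r,w)\in\sifr\subseteq\sihb$. In either case I must then close the cycle back to $a$, which is where the main work lies.

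The main obstacle is precisely the transaction-lifting step: $\sifr$ is \emph{not} a full transaction lift, since it is restricted on both sides, so knowing $(r,w)\in\sihb$ does not a priori give $\class{a}{\st}\times\class{b}{\st}\subseteq\sihb$. To handle this I would prove an auxiliary lemma (matching the $\lem{si\_equivalent\_aux}$ cited in the target proof): if $\class{b}{\st}\times\class{c}{\st}\subseteq\sihb$ and there is a chain from $c$ back to $a$ using $\refC{\frt};(\pot\cup\rft\cup\cot)$ steps, then $\class{b}{\st}\times\class{a}{\st}\subseteq\sihb$, proved by induction on the chain and reusing the same internal/external case split at each $\frt$ step. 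With the lemma in hand, the internal case gives $(a,b)\in\sihb$ together with $(b,a)\in\sihb$ and immediately contradicts irreflexivity, while the external case gives $(w,r)\in\sihb$ which composes with $(r,w)\in\sihb$ to produce the forbidden self-loop at $r$. This lifting lemma is the technically delicate ingredient; everything else is routine case analysis.
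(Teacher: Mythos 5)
Your proposal is correct and follows essentially the same route as the paper: the forward direction via $\sifr;\sifr=\emptyset$ and regrouping the cycle into $\transC{((\pot\cup\rft\cup\cot);\refC{\sifr})}$, and the backward direction via the same internal/external case split on the $\frt$ witness together with the transaction-lifting lemma (the paper's \cref{lem:si_equivalent_aux}, proved by exactly the induction on the chain that you sketch). No gaps.
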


\newpage
\begin{theorem}
\label{thm:SI_RSI_Equiv}
For all executions $G$, if $G.\NT = \emptyset$, then:
\[
	G \text{ is SI-consistent}
	\iff
	G \text{ is RSI-consistent}
\]
\begin{proof}[the $\Leftarrow$ direction]
Pick an arbitrary $G$ such that $G.\NT = \emptyset$ and $G$ is RSI-consistent. 
Let us proceed by contradiction and assume $G$ is not SI-consistent. 
That is, there exists $a$ such that $(a, a) \in G.\sihb$. 
From the definition of $\sihb$ we then have $G.\sihb \suq G.\rsihb$. 
As such, we have $(a, a) \in G.\rsihb$, contradicting the assumption that $G$ is RSI-consistent. 
\end{proof}
\begin{proof}[the $\Rightarrow$ direction]
Pick an arbitrary $G$ such that $G.\NT = \emptyset$ and $G$ is SI-consistent. 
As $G.\NT = \emptyset$, we then have $G.\rsihb = \transC{(\pot \cup \rft \cup \cot \cup \sifr \cup A)}$ with $A \eqdef G.([\Writes]; \poi; [\Writes])$. 
That is, $G.\rsihb = \transC{(G.\sihb \cup A)}$.
In what follows we demonstrate:
\begin{align}
	A; \sihb \suq \sihb
	\quad \text{and} \quad 
	\sihb; A \suq \sihb
	\label{subgoal:A_sihb_in_sihb}
\end{align}
As such, since $\sihb$ is transitively closed, we have $G.\rsihb = G.\sihb \cup A$.

Moreover, we have $G.\co \suq G.\cot \cup G.\coi$. 
As such, since from \eqref{ax:si_int} we have $G.\coi \suq G.\poi$, we have $G.\coi \suq A$ and thus $G.\co \suq G.\rsihbloc$.

Let $B \eqdef G.\fr \cap G.\poi$. 
We also have $G.\fr \suq G.\sifr \cup G.\fri$. 
As such, since from \eqref{ax:si_int} we have $G.\fri \suq G.\poi$, we have $G.\fri \suq B$ and thus $G.\fr \suq G.\rsihbloc \cup B$.
Consequently, we have 
$G.(\rsihbloc \cup \co \cup \fr) 
= G.(\rsihbloc \cup B) 
= G.(\perloc{\sihb} \cup \perloc{A} \cup B)$. 
Therefore, $G.\transC{(\rsihbloc \cup \co \cup \fr)} = G.\transC{(\perloc{\sihb} \cup \perloc{A} \cup B)}$.

Observe that $A; B = \emptyset$. 
In what follows we demonstrate that:
\begin{align}
	& B; A \suq B
	\label{subgoal:B_A_in_B} \\
	& \sihb; B \suq \sihb 
	\quad \text{and} \quad
	B; \sihb \suq \sihb
	\label{subgoal:B_sihb_in_sihb}
\end{align}
Note that since $G$ is SI-consistent we know that $A$ is irreflexive. 
As such, to show that $G.\transC{(\rsihbloc \cup \co \cup \fr)} = G.\transC{(\perloc{\sihb} \cup \perloc{A} \cup B)}$ is irreflexive, from \eqref{subgoal:A_sihb_in_sihb}, \eqref{subgoal:B_A_in_B}, \eqref{subgoal:B_sihb_in_sihb}, and since $B$ is irreflexive, it suffices to show that $G.\sihb$ is irreflexive, which follows immediately from the SI-consistency of $G$.\\


\noindent \textbf{TS. \eqref{subgoal:A_sihb_in_sihb}}\\
To show $A; \sihb \suq \sihb$, pick an arbitrary $(a, b) \in A; \sihb$.
We then know there exists $c$ such that $a, c \in \Writes$, $(a, c) \in A$ and $(c, b) \in \sihb$.
From the definition of $\sihb$ and since $c \in \Writes$, we know there exists $d$ such that 
$(c, d) \in \pot \cup \rft \cup \cot$ and $(d, b) \in \sihb^*$.
As $(a, c) \in A \suq \st$, we thus know that $(a, d) \in \pot \cup \rft \cup \cot \suq \sihb$. 
Consequently, since $(d, b) \in \sihb^*$ and $\sihb$ is transitively closed, 
we have $(a, b) \in \sihb$, as required. 

To show $\sihb; A \suq \sihb$, pick an arbitrary $(a, b) \in \sihb; A$.
We then know there exists $c$ such that $b, c \in \Writes$, $(a, c) \in \sihb$ and $(c, b) \in A$.
From the definition of $\sihb$, we know there exists $d$ such that 
$(a, d) \in \sihb^*$ and $(d, c) \in \pot \cup \rft \cup \cot \cup \sifr$.
As $(c, b) \in A \suq \st$ and since $b, c \in \Writes$ we thus know that $(d, b) \in \pot \cup \rft \cup \cot \cup \sifr \suq \sihb$. 
Consequently, since $(a, d) \in \sihb^*$ and $\sihb$ is transitively closed, 
we have $(a, b) \in \sihb$, as required.\\

\noindent\textbf{TS. \eqref{subgoal:B_A_in_B}}\\
Pick an arbitrary $a, b$ such that $(a, b) \in B; A$.
We then know there exists $c$ such that $b, c \in \Writes$, $(a, c) \in B$, $(c, b) \in A$ and $\tx{a} = \tx{b} = \tx{c}$.
As $b, c \in \Writes$, we know either $(b, c) \in \coi$ or $(c, b) \in \coi$. 
However, as $G$ is SI-consistent, from \eqref{ax:si_int} we know that $\coi \suq \poi$ and thus we have $(c, b) \in \coi \cap \poi$. 
As such, since $(a, c) \in \fr \cap \poi$, $(c, b) \in \coi \cap \poi$ and $\poi$ is transitively closed, from the definition of $\fr$ we have  $(a, b) \in \fr \cap \poi = B$, as required.\\

\noindent \textbf{TS. \eqref{subgoal:B_sihb_in_sihb}}\\
To show $B; \sihb \suq \sihb$, pick an arbitrary $(a, b) \in B; \sihb$.
We then know there exists $c$ such that $a \in \Reads$, $c \in \Writes$, $(a, c) \in B$ and $(c, b) \in \sihb$.
From the definition of $\sihb$ and since $c \in \Writes$, we know there exists $d$ such that 
$(c, d) \in \pot \cup \rft \cup \cot$ and $(d, b) \in \sihb^*$.
As $(a, c) \in B \suq \st$, we thus know that $(a, d) \in \pot \cup \rft \cup \cot \suq \sihb$. 
Consequently, since $(d, b) \in \sihb^*$ and $\sihb$ is transitively closed, 
we have $(a, b) \in \sihb$, as required. 

To show $\sihb; B \suq \sihb$, pick an arbitrary $(a, b) \in \sihb; B$.
We then know there exists $c$ such that $c \in \Reads$, $b \in \Writes$, $(a, c) \in \sihb$ and $(c, b) \in B$.
From the definition of $\sihb$, and since $c \in \Reads$, we know there exists $d$ such that 
$(a, d) \in \sihb^*$ and $(d, c) \in \pot \cup \rft \cup \cot$.
As $(c, b) \in B \suq \st$, we thus know that $(d, b) \in \pot \cup \rft \cup \cot \suq \sihb$. 
Consequently, since $(a, d) \in \sihb^*$ and $\sihb$ is transitively closed, 
we have $(a, b) \in \sihb$, as required.
\end{proof}
\end{theorem}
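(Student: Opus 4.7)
My plan is to apply \cref{prop:si_consistency} to recast SI-consistency as \eqref{ax:si_int} together with irreflexivity of $\sihb$. Since both consistency notions share \eqref{ax:si_int}, the work reduces to relating irreflexivity of $\sihb$ to acyclicity of $\rsihbloc \cup \co \cup \fr$ under the hypothesis $\NT = \emptyset$. As a preliminary simplification I note that when $\NT = \emptyset$ the two $\NT$-mentioning disjuncts of $\rsirf$ vanish and $\po \setminus \poi = \pot$, so $\rsipo = \pot \cup A$ with $A \eqdef [\Writes];\poi;[\Writes]$ and $\rsirf = \rft \cup \tlift{(\co;\rf)}$. A short case split on where the mediating write of $\co;\rf$ sits (in the source, the target, or a third transaction) shows $\tlift{(\co;\rf)} \subseteq \cot \cup \rft \cup (\cot;\rft) \subseteq \sihb$. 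Combined with $\cot, \sifr \subseteq \sihb$, this yields $\rsihb = \transC{(\sihb \cup A)}$.

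For the forward direction (SI $\Rightarrow$ RSI) I will follow the algebraic route. Introduce the companion $B \eqdef \fr \cap \poi$. Using \eqref{ax:si_int} to place $\coi \subseteq A$ and $\fri \subseteq B$, together with a case analysis of the intermediate write in $\fr$ (whose inter-transactional occurrences fall in $\cot \subseteq \sihb$ or $\sifr \subseteq \sihb$), decompose $\rsihbloc \cup \co \cup \fr \subseteq \perloc{\sihb} \cup \perloc{A} \cup B$. The crux is then six short absorption identities:
\[
A;\sihb,\; \sihb;A,\; B;\sihb,\; \sihb;B \subseteq \sihb,\qquad B;A \subseteq B,\qquad A;B = \emptyset.
\]
Each is a one-line check from $A, B \subseteq \st$ together with the fact that every generator of $\sihb$ is $\st$-lifted on both sides, so an $A$- or $B$-neighbour of a $\sihb$-edge can be absorbed by re-routing through any same-transaction representative. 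Combined with irreflexivity of $A$, $B$, and (by SI-consistency) $\sihb$, these absorptions force every cycle in $\perloc{\sihb} \cup \perloc{A} \cup B$ to collapse into a $\sihb$-cycle, contradicting the assumption.

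For the backward direction (RSI $\Rightarrow$ SI) I reason contrapositively: assume $(a,a) \in \sihb$ and produce a cycle in $\rsihbloc \cup \co \cup \fr$. Take a minimal $\sihb$-cycle through $a$; since $\sifr;\sifr = \emptyset$ no two consecutive edges are $\sifr$. Each $\rft$-, $\cot$-, $\sifr$-generator is backed by a concrete witness pair at a common location — the actual $\rf$, $\co$, or $\fr$ edge underlying the lift — and since these generators are $\st$-lifted to all transactional-representative pairs, the witness itself belongs to the generator, producing a same-location $\rsihb$-edge that lies in $\rsihbloc$. The remaining difficulty is $\pot$: a $\pot$-segment $T \to T'$ only admits a same-location representative when $T$ and $T'$ share some accessed location. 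Here I exploit minimality: a cycle purely of $\pot$-edges would be a $\po$-cycle (impossible), so every $\pot$-segment is adjacent to some $\rft$/$\cot$/$\sifr$ edge whose same-location witness sits in both transactions, and by the $\tlift$ definition of $\pot$ the $\pot$-edge can be re-anchored at that witness event without leaving $\rsihbloc$.

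The main obstacle will be the backward direction: the forward direction is essentially algebraic once the six absorption identities are in place, whereas the backward direction requires combinatorial reasoning about the structure of a minimal $\sihb$-cycle to supply same-location representatives for every segment. The key insight is that the $\st$-lift built into each generator of $\sihb$ guarantees enough flexibility to re-route any transactional edge through any chosen event pair, so that as soon as one same-location witness is found along the cycle, the cycle can be folded into that location to produce the desired element of $\rsihbloc \cup \co \cup \fr$.
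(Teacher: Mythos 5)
Your forward direction is essentially the paper's proof: the same auxiliary relations $A = [\Writes];\poi;[\Writes]$ and $B = \fr \cap \poi$, the same decomposition of $\rsihbloc \cup \co \cup \fr$ into $\perloc{\sihb} \cup \perloc{A} \cup B$, and the same absorption identities. Two remarks. First, your preliminary case split showing $\tlift{(\co;\rf)} \subseteq \cot \cup \rft \cup (\cot;\rft)$ is a step the paper elides, and it is worth making explicit. Second, your uniform justification of the six identities (``every generator of $\sihb$ is $\st$-lifted, so re-route through a same-transaction representative'') does not cover $B;A \subseteq B$: there $B$ is not a $\sihb$-generator, and the inclusion instead needs \eqref{ax:si_int} to orient $\coi$ consistently with $\poi$ and then the closure of $\fr$ under post-composition with $\co$, i.e.\ $(\inv\rf;\co);\co \subseteq \inv\rf;\co$. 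One must also be slightly careful with $\sihb;A$ and $\sihb;B$, since the last generator on the $\sihb$-path may be $\sifr = [\EReads];\frt;[\Writes]$, whose endpoint restriction has to be checked against the type of the event being substituted; this works out because $A$ targets writes and $B$ sources reads, but it is not a purely formal consequence of $\st$-lifting.

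The genuine gap is in your backward direction, where you have manufactured a difficulty that does not exist and then proposed machinery that cannot close it. Recall that $\rsihbloc$ is the per-location restriction of the \emph{already transitively closed} relation $\rsihb$, not a per-edge restriction of its generators. Hence if $(a,a) \in \sihb$ and $\sihb \subseteq \rsihb$ (which holds because $\pot \subseteq \po\setminus\poi \subseteq \rsipo$, $\rft \subseteq \tlift{\rf} \subseteq \rsirf$, and $\cot$, $\sifr$ appear verbatim in $\rsihb$), then $(a,a) \in \rsihb$ with $\loc a = \loc a$ trivially, so $(a,a) \in \rsihbloc$ is already a cycle in $\rsihbloc \cup \co \cup \fr$. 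No minimal cycle, no same-location witnesses, and no re-anchoring of $\pot$-segments are needed. Moreover, the re-anchoring step you describe cannot work as stated: $\rsihbloc$ never relates two events at distinct locations, so a $\pot$-edge between transactions with disjoint footprints has no same-location representative to be folded into — the argument only goes through because the composite reflexive pair is vacuously per-location, which is precisely the observation your sketch is missing.
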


\newpage
\begin{theorem}[Monotonicity]\label{thm:rsi-mono-full}
Let $\absGRSI_T$ be an RSI execution graph obtained from an RSI execution graph $\absGRSI$ by wrapping some non-transactional events 
inside a new transaction.
If $\absGRSI_T$ is RSI-consistent, then so is $\absGRSI$.
\end{theorem}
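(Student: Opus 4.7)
I plan to reduce each RSI-consistency axiom for $G$ to the corresponding axiom for $G_T$. Let $N\suq G.\NT$ denote the events being wrapped, so that $G$ and $G_T$ agree on $(\Events,\po,\rf,\co,\lo)$ and differ only in transaction identifiers: each $e\in N$ has $\tx{e}=0$ in $G$ and $\tx{e}=T$ in $G_T$. Hence $\st_{G_T}=\st_G\cup(N\times N)$ and $\NT_{G_T}=\NT_G\setminus N$. The internal axiom $\ref{ax:si_int}$ for $G$ is then immediate: since $\tin{r}=r\cap\st$ is monotonic in $\st$, we have $(\rfi\cup\coi\cup\fri)_G\suq(\rfi\cup\coi\cup\fri)_{G_T}\suq\po$.

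For the acyclicity axiom $\acyc{\rsihbloc\cup\co\cup\fr}$ I argue contrapositively: assuming a cycle in $G.(\rsihbloc\cup\co\cup\fr)$ I build one in $G_T.(\rsihbloc\cup\co\cup\fr)$. Unpacking the definitions yields $\rsipo_G\setminus\rsipo_{G_T}=(\po\cap(N\times N))\setminus([\Writes];\po;[\Writes])$, so the only $\rsipo$-edges lost when moving from $G$ to $G_T$ are $\po$-edges inside $N$ with at least one read endpoint. The $\rsirf$ components shift similarly: an $\rf$-edge $(c,a)$ with $c\notin N$ and $a\in N$ lies in $\rf;[\NT_G]\suq\rsirf_G$, whereas in $G_T$ it belongs to either $[\NT_{G_T}];\rf;\st_{G_T}$ (if $c\in\NT_{G_T}$) or $\tlift{\rf}_{G_T}$ (if $c$ is transactional), giving $(c,e)\in\rsirf_{G_T}$ for \emph{every} $e\in T$; the dual holds for $\rf$-edges leaving $N$. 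The $\cot$ and $\sifr$ parts involve no $N$-event in $G$ (since $N\suq\NT_G$) and are subsumed, and further $\st_{G_T}$-lifted, in $G_T$.

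The central lemma will be an \emph{absorption} property of $\rsihb$ around $T$: (i) $\rsihb_G$ restricted to $(\Events\setminus N)\times(\Events\setminus N)$ lies in $\rsihb_{G_T}$; (ii) every $\rsihb_G$-edge from some $c\notin N$ into $T$ corresponds to an $\rsihb_{G_T}$-edge from $c$ to \emph{any} $e\in T$; and (iii) the dual for edges leaving $T$. Given this lemma, any cycle in $G.(\rsihbloc\cup\co\cup\fr)$ can be contracted into a cycle in $G_T$ by collapsing each maximal run of $N$-events in the cycle into a single visit through $T$: the entering and leaving edges provide the $\rsihb_{G_T}$-connectivity, and events of $T$ may be freely identified on the way through. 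The endpoint-location constraint imposed by $\rsihbloc$ is preserved because the contracted segments retain the same endpoint locations as the original ones. The leftover case, a cycle lying entirely within $N$, would have to use $\rf$, $\co$, $\fr$-edges inside $N$ (as $\po$ alone is acyclic), which by $\ref{ax:si_int}$ for $G_T$ all lie in $\po$, contradicting $\po$-acyclicity.

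The hard part will be proving the absorption lemma for the lost $\rsipo$-edges inside $N$, since these are not directly in $\rsihb_{G_T}$. They must be rerouted through the transactional structure of $T$ in $G_T$: for a lost edge $a\to b$ with $a$ a read, the write $c$ from which $a$ reads reaches every event of $T$ in $\rsihb_{G_T}$ via $\tlift{\rf}$ or $[\NT_{G_T}];\rf;\st_{G_T}$, allowing $a\to b$ to be bypassed by $c\to b$. I expect the bulk of the appendix proof to consist of this case analysis, together with a careful handling of the $[\Writes];\poi;[\Writes]$ subcase of $\rsipo$ and of mixed-mode $\rf$-edges whose endpoints cross $N$.
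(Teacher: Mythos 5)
Your reduction of \ref{ax:si_int} is fine, and the overall contrapositive strategy for the acyclicity axiom is workable, but the absorption lemma driving your plan has concrete problems as stated. First, clause (iii) is false: if $w\in N$ is a write and $r\notin N$ is a read that stays non-transactional in $G_T$, the edge $(w,r)\in\rf;[\NT]$ contributes only the single pair $(w,r)$ to $\rsihb_{G_T}$, not an edge from \emph{every} event of $T$ to $r$; exiting edges therefore do not lift to the whole transaction, and your contraction must instead route the incoming connectivity to the \emph{specific} exit event. Second, your bypass for a lost edge $(a,b)\in\po\cap(N\times N)$ with $a$ a read --- going through the write that $a$ reads from --- only works when the path actually reaches $a$ via that $\rf$-edge; if the path reaches $a$ by $\po$ from outside $N$ the right move is $\po$-transitivity, and if it reaches $a$ from inside $N$ you must recurse. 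Third, you never account for $\rf$-edges with \emph{both} endpoints in $N$: these lie in $\rf;[\NT]\suq\rsihb_G$, but in $G_T$ they become internal, land in $\poi$ by \ref{ax:si_int}, have a read as target, and hence contribute nothing to $\rsihb_{G_T}$.

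The paper avoids all of this with a more targeted claim. Using totality of $\co$, any cycle in $\rsihbloc\cup\co\cup\fr$ is first massaged into $(a,a)\in\rsihb;(\co\cup\fr)$ for some write $a$, so it suffices to prove $[\Writes];\rsihb_G \suq G_T.\poi\cup\rsihb_{G_T}$. This is established by taking a \emph{minimal-length} path from the write: minimality forces each lost edge to be either the entire path (whence it lies in $G_T.\poi$) or immediately preceded by an $\rf$-edge entering $N$ from outside, in which case the two edges fuse into $[\NT];\rf;\st$ or $\tlift{\rf}$ in $G_T$. Anchoring the path at a write and exploiting minimality is exactly the device that collapses the case analysis you are deferring to ``the bulk of the appendix proof''; without it, your plan still has the hard work ahead of it, and the lemma it rests on needs to be restated before that work can begin.
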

\begin{proof}
First, we show that $[\Writes]; \absGRSI.\rsihb  \suq  \absGRSI_T.\poi  \cup  \absGRSI_T.\rsihb$.
Let $A$ denote  the events of the new transactions in $\absGRSI_T$.
Let $\tup{a,b}\in [\Writes]; \absGRSI.\rsihb$,
and consider a  
 $$(\po \setminus \poi) \cup [\Writes]; \poi;[\Writes] \cup 
\rf; [\NT] \cup [\NT]; \rf; \st \cup \tlift{\rf} \cup  \tlift{(\co; \rf)} \cup
\cot  \cup \sifr$$ path from $a$ to $b$ of minimal length (all relations are in $\absGRSI$).
Note that the only possible edges on this path that do not appear in a corresponding relation in $\absGRSI_T$
are $[A];\rf;[A]$ and $[A];\po;[A] \setminus (\Writes \times \Writes)$ edges.
Suppose first that $[A];\rf;[A]$ is used on this path.
Immediately  before or after $[A];\rf;[A]$, we can only have $\po \setminus \poi$, but then the
two edges can be replaced by a shorter path that uses only $\po \setminus \poi$.
Thus, in this case it follows that  $\tup{a,b}\in [A];\rf;[A] \suq \absGRSI_T.\poi$.
Next, suppose that $[A];\po;[A] \setminus (\Writes \times \Writes)$  is used on this path
(the minimality ensures such edge is used only once).
First, if this edge is the first edge on the path, then (again) it is the only edge on the path
(immediately after $[A];\po;[A]$, we can only have $\rf; [\NT\setminus A]$ or $[\NT]; \rf;\st$, but
both start with a write).
Hence, in this case we have  $\tup{a,b}\in [A];\po;[A] \suq \absGRSI_T.\poi$.
Second, consider the case that the $[A];\po;[A]$  edge is not the first on the path.
Immediately before $[A];\po;[A]$, we can only have a $[\Events\setminus A];\rf$ edge.
Then, in $\absGRSI_T$, the two edges can be joined into either $[\NT]; \absGRSI_T.\rf; \absGRSI_T.\st$
or $\absGRSI_T.\tlift{\rf}$.
Hence, we obtain $\tup{a,b}\in  \absGRSI_T.\rsihb$.

Now, suppose that $\absGRSI$ is not RSI-consistent.
If $\ref{ax:si_int}$ does not hold, then $\absGRSI_T$ is also not RSI-consistent and we are done.
Otherwise, $\absGRSI.\rsihbloc \cup \absGRSI.\co \cup \absGRSI.\fr$ is cyclic.
Since $\absGRSI.\co$ is total on writes to each location, 
it follows that 
 $\tup{a,a}\in \absGRSI.\rsihb;(\absGRSI.\co \cup \absGRSI.\fr)$
 for some $a\in \absGRSI.\Writes$.
 Since $\absGRSI.\co=\absGRSI_T.\co$, $\absGRSI.\fr=\absGRSI_T.\fr$, our claim above entails that 
$\tup{a,a}\in (\absGRSI_T.\poi  \cup  \absGRSI_T.\rsihb);(\absGRSI_T.\co \cup \absGRSI_T.\fr)$.
Now, 
$\tup{a,a}\in \absGRSI_T.\poi;(\absGRSI_T.\co \cup \absGRSI_T.\fr)$
implies that $\ref{ax:si_int}$ does not hold for $\absGRSI_T$,
while $\tup{a,a}\in \absGRSI_T.\rsihb;(\absGRSI_T.\co \cup \absGRSI_T.\fr)$
implies that  $\absGRSI_T.\rsihbloc \cup \absGRSI_T.\co \cup \absGRSI_T.\fr$
is cyclic.
In both cases, $\absGRSI_T$ is not RSI-consistent.
\end{proof}

\newpage
\begin{lemma}\label{lem:si_equivalent_aux}
For all $a, b, c$:
\[
	\class a \st \ne \class b \st 
	\land \class a \st \times \class b \st \subseteq \sihb 
	\land (b, c) \in \reftransC{(\refC{\frt}; (\pot \cup \rft \cup \cot))} 
	\Rightarrow
	\class a \st \times \class c \st \subseteq \sihb
\]
\begin{proof}
Note that $\reftransC{(\refC{\frt}; (\pot \cup \rft \cup \cot))} = \bigcup_{n \in \Nats} S_{n}$, where $S_0 \eqdef \makerel{id}$ and for all $n \in \Nats$ we have $S_{n {+} 1} \eqdef (\refC{\frt}; (\pot \cup \rft \cup \cot)); S_n$. We thus demonstrate instead:
\[
	\for{a, b, c}  \for{n \in \Nats}
	\class a \st \ne \class b \st 
	\land \class a \st \times \class b \st \subseteq \sihb 
	\land (b, c) \in S_n
	\Rightarrow
	\class a \st \times \class c \st \subseteq \sihb
\]
We proceed by induction on $n$.\\

\noindent \textbf{Base case $n = 0$}\\
Follows immediately from the assumptions of the lemma and the definition of $S_0$.\\

\noindent \textbf{Inductive case $n = m {+} 1$}
\begin{align}
	& \for{a, b, c}  \for{i \leq m} \nonumber \\
	& \quad
	\class a \st \ne \class b \st 
	\land \class a \st \times \class b \st \subseteq \sihb 
	\land (b, c) \in S_i
	\Rightarrow
	\class a \st \times \class c \st \subseteq \sihb
	\tag{I.H.}
	\label{IH:aux_consistency}
\end{align}
Pick arbitrary $a, b, c$ such that $\class a \st \ne \class b \st$, $\class a \st \times \class b \st \subseteq \sihb $, and $(b, c) \in S_n$. From the definition of $S_n$ we then know there exists $d$ such that $(b, d) \in \refC{\frt}; (\pot \cup \rft \cup \cot)$ and $(d, c) \in S_m$. There are now two cases to consider: 1) $(b, d) \in \pot \cup \rft \cup \cot$; or 2) $(b, d) \in \frt; (\pot \cup \rft \cup \cot)$.

In case (1) from the definitions of $\pot$, $\rft$ and $\cot$ we have $\class b \st \times \class d \st \subseteq \pot \cup \rft \cup \cot$ and thus from the definition of $\sihb$ we have $\class b \st \times \class d \st \subseteq \sihb$. 
Since we also have $\class a \st \times \class b \st \subseteq \sihb$ and $\sihb$ is transitively closed, we have $\class a \st \times \class d \st \subseteq \sihb$. As $(d, c) \in S_m$, from (\ref{IH:aux_consistency}) we then have $\class a \st\times \class c \st \subseteq \sihb$, as required.

In case (2) we then know there exists $e$ such that $(b, e) \in \frt$ and $(e, d) \in \pot \cup \rft \cup \cot$.
From the definitions of $\pot$, $\rft$ and $\cot$ we then have $\class e \st \times \class d \st \subseteq \pot \cup \rft \cup \cot$ and thus from the definition of $\sihb$ we have $\class e \st \times \class d \st \subseteq \sihb$. 
On the other hand, from $(b, e) \in \frt$ we know there exist $r, w, w'$ such that $r \in \class b \st$, $w \in \class e \st$, $\class b \st \ne \class e \st$, $\class d \st \ne \class e \st$, $(r, w) \in \fr$, $(w', r) \in \rf$ and $(w', w) \in \co$. 
There are now two cases to consider: i) $w' \in \class b \st$; or ii) $w' \not\in \class b \st$.

In case (2.i) we then have $\class b \st \times \class e \st \subseteq \cot \subseteq \sihb$. Since we also have $\class a \st \times \class b \st \subseteq \sihb$ and $\class e \st \times \class d \st \subseteq \sihb$, from the transitivity of $\sihb$ we have $\class a \st \times \class d \st \subseteq \sihb$. As $(d, c) \in S_m$, from (\ref{IH:aux_consistency}) we then have $\class a \st \times \class c \st \subseteq \sihb$, as required.

In case (2.ii) we then know $r \in \EReads$. As such we have $(r, w) \in \sifr \subseteq \sihb$. As $\class a \st \times \class b \st \subseteq \sihb$ and $r \in \class b \st$ we then have $\class a \st \times \{w\} \subseteq \sihb$. 
Similarly, as $\class e \st \times \class d \st \subseteq \sihb$ and $w \in \class e \st$ we then have $\{w\} \times \class d \st \subseteq \sihb$. 
As such, from the transitivity of $\sihb$ we have $\class a \st \times \class d \st \subseteq \sihb$. As $(d, c) \in S_m$, from (\ref{IH:aux_consistency}) we then have $\class a \st \times \class c \st \subseteq \sihb$, as required.
\end{proof}

\end{lemma}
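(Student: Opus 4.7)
The plan is to prove the statement by induction on the length $n$ of the $(\refC{\frt};(\pot \cup \rft \cup \cot))$-path witnessing $(b,c)$. Concretely, I would define $S_0 \eqdef \makerel{id}$ and $S_{n+1} \eqdef (\refC{\frt};(\pot \cup \rft \cup \cot));S_n$, so that $\reftransC{(\refC{\frt};(\pot \cup \rft \cup \cot))} = \bigcup_{n} S_n$, and strengthen the target to: for every $n$, if $\class{a}{\st} \ne \class{b}{\st}$, $\class{a}{\st} \times \class{b}{\st} \subseteq \sihb$, and $(b,c) \in S_n$, then $\class{a}{\st} \times \class{c}{\st} \subseteq \sihb$. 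The base case is immediate. For the inductive case, I would peel off the first step to obtain $d$ with $(b,d) \in \refC{\frt};(\pot \cup \rft \cup \cot)$ and $(d,c) \in S_m$, reduce the problem to establishing $\class{a}{\st} \times \class{d}{\st} \subseteq \sihb$, and then invoke the IH.

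The key technical observation driving the inductive case is that each of $\pot$, $\rft$, $\cot$ is already lifted to entire transaction classes: an edge $(u,v)$ in any of them forces $\class{u}{\st} \times \class{v}{\st} \subseteq \pot \cup \rft \cup \cot \subseteq \sihb$. So if the peeled step is just $(b,d) \in \pot \cup \rft \cup \cot$, one obtains $\class{b}{\st} \times \class{d}{\st} \subseteq \sihb$ directly, composes on the left with the assumed $\class{a}{\st} \times \class{b}{\st} \subseteq \sihb$ via transitivity of $\sihb$, and applies the IH.

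The genuinely subtle sub-case, and what I expect to be the main obstacle, is when the peeled step goes through $\frt$, i.e.\ $(b,e) \in \frt$ and $(e,d) \in \pot \cup \rft \cup \cot$. Here $\frt$ is \emph{not} a class-to-class edge: it is witnessed by specific events $r \in \class{b}{\st}$, $w \in \class{e}{\st}$, $w'$ with $(w',r) \in \rf$ and $(w',w) \in \co$, and only $r,w$ lie in the canonical classes. I would split on whether $w' \in \class{b}{\st}$. If so, then $w'$ and $w$ witness a $\cot$ edge from $\class{b}{\st}$ to $\class{e}{\st}$, upgrading to $\class{b}{\st} \times \class{e}{\st} \subseteq \sihb$; chaining with $\class{e}{\st} \times \class{d}{\st} \subseteq \sihb$ (from the lifted $\pot \cup \rft \cup \cot$ step) and the assumption on $\class{a}{\st} \times \class{b}{\st}$ then yields $\class{a}{\st} \times \class{d}{\st} \subseteq \sihb$ as required. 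Otherwise, $w'$ lies outside $\class{b}{\st}$, so $r \in \EReads$ and thus $(r,w) \in \sifr \subseteq \sihb$; in this case I cannot upgrade the $\frt$ edge to a full class-product, but I can still push $a$ through $w$: from $\class{a}{\st} \times \class{b}{\st} \subseteq \sihb$ and $r \in \class{b}{\st}$ I get $\class{a}{\st} \times \{w\} \subseteq \sihb$, and from $\class{e}{\st} \times \class{d}{\st} \subseteq \sihb$ and $w \in \class{e}{\st}$ I get $\{w\} \times \class{d}{\st} \subseteq \sihb$, which compose to $\class{a}{\st} \times \class{d}{\st} \subseteq \sihb$.

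In all three cases the IH on $(d,c) \in S_m$ closes the goal. The only delicate point is the bookkeeping in the external-read case, where one must avoid mistakenly asserting $\class{b}{\st} \times \class{e}{\st} \subseteq \sihb$ (which need not hold) and instead route the argument through the single event $w$; beyond that, the proof is a straightforward structural induction exploiting the class-lifting of $\pot$, $\rft$, $\cot$ and the transitivity of $\sihb$.
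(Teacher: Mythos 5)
Your proposal is correct and follows essentially the same route as the paper's proof: the same stratification $S_n$ of the reflexive-transitive closure, the same induction peeling off the first step, the same three-way case split (lifted edge, $\frt$ with internal $w'$ giving a $\cot$ upgrade, $\frt$ with external $r\in\EReads$ routed through the single event $w$ via $\sifr$). No gaps.
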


\begin{lemma}[Lock ordering]\label{lem:lock-ordering}
Given an RA-consistent execution graph $\impG = (\Events, \po, \rf, \co, \lo)$ of the SI or RSI implementations in \cref{fig:si_implementation}, for all $wl, wl' \in \WLocks_{\x}$, $pl, pl' \in \PLocks_\x$, $wu, wu' \in \WUnlocks_{\x}$, $rl, rl' \in \RLocks_\x$, and $ru \in \RUnlocks_\x$:
\begin{align}
	& 
	\begin{array}{r @{\hspace{1pt}} l  @{}}
		(wl, wu), (wl', wu') \in \imm{\po_\x} 
		& \Rightarrow 
		(wu, wl') \in \silo_\x \lor (wu', wl) \in \silo_\x \\
		(wl, wu), (rl, pl, wu') \in \imm{\po_\x} 
		& \Rightarrow 
		(wu, rl) \in \silo_\x \lor (wu', wl) \in \silo_\x \\
		(rl, pl, wu), (rl', pl', wu') \in \imm{\po_\x} 
		& \Rightarrow 
		(wu, rl') \in \silo_\x \lor (wu', rl) \in \silo_\x \\
	\end{array} 
	\tag{\textsc{WWSync}}
	\label{ax:lock_wwaxiom} \\[5pt]
%
%
	& \begin{array}{r @{\hspace{1pt}} l  @{}}
		(wl, wu), (rl, ru) \in \imm{\po_\x}
		& \Rightarrow 
		(wu, rl) \in \silo_\x \lor (ru, wl) \in \silo_\x   \\
		(rl, pl, wu), (rl', ru) \in \imm{\po_\x}
		& \Rightarrow 
		(wu, rl') \in \silo_\x \lor (ru, pl) \in \silo_\x  
	\end{array}	
	\tag{\textsc{RWSync}}
	\label{ax:lock_rwaxiom}	
\end{align}
where given a relation $r$ we write $(a, b, c) \in r$ as a shorthand for $(a, b), (b, c) \in r$. 
\begin{proof}[Proof (\ref{ax:lock_wwaxiom})]
Pick an arbitrary RA-consistent execution graph $\impG = (\Events, \po, \rf, \co, \lo)$ of the SI or RSI implementations in \cref{fig:si_implementation}, and pick arbitrary $wl, wl' \in \WLocks_{\x}$, $pl, pl' \in \PLocks_\x$, $wu, wu' \in \WUnlocks_{\x}$ and $rl, rl' \in \RLocks_\x$. We are then required to show: 
\begin{align} 
		(wl, wu), (wl', wu') \in \imm{\po_\x} 
		& \Rightarrow 
		(wu, wl') \in \silo_\x \lor (wu', wl) \in \silo_\x
		\label{goal:wwlock1}\\
		(wl, wu), (rl, pl, wu') \in \imm{\po_\x} 
		& \Rightarrow 
		(wu, rl) \in \silo_\x \lor (wu', wl) \in \silo_\x
		\label{goal:wwlock2}\\
		(rl, pl, wu), (rl', pl', wu') \in \imm{\po_\x} 
		& \Rightarrow 
		(wu, rl') \in \silo_\x \lor (wu', rl) \in \silo_\x
		\label{goal:wwlock3}
\end{align}
\textbf{RTS. (\ref{goal:wwlock1})}\\
We proceed by contradiction. Let $(wl, wu), (wl', wu') \in \imm{\po_\x}$ and $(wu, wl') \not\in \lo_\x \land (wu', wl) \not\in \lo_\x$. 
Since $\lo_\x$ is totally ordered w.r.t. write lock events, we then have $(wl', wu) \in \silo_\x$ and $(wl, wu') \in \silo_\x$. 

From the mutual exclusion \eqref{ax:lock_wex} afforded by lock events, and since $(wl', wu) \in \silo_\x$ we know there exists $wu'' \in \WUnlocks_{\x}$ such that $wl' \relarrow{\po} wu''$ and $wu'' \relarrow{\lo} wu$. Moreover, since $wl' \relarrow{\imm{\po_\x}} wu'$ we have $wu' \relarrow{\reftransC{\po}} wu''$. 
Once again, since $\lo$ is total w.r.t. write lock events and agrees with $\po$ (otherwise we would have a cycle contradicting the assumption that $\impG$ is RA-consistent), we have $wu' \relarrow{\reftransC{\lo}} wu''$. We then have $wu' \relarrow{\reftransC{\lo}} wu'' \relarrow{\lo}wu$ and thus since $\lo$ is an order (i.e.~is transitive), we have $wu' \relarrow{\lo}wu$. 

Following a similar argument symmetrically, we get $wu \relarrow{\lo}wu'$. We then have $wu' \relarrow{\lo}wu \relarrow{\lo}wu'$, contradicting the assumption that $\lo$ is a strict order.\\

\noindent \textbf{RTS. (\ref{goal:wwlock2})}\\
We proceed by contradiction. Let $(wl, wu), (rl, pl, wu') \in \imm{\po_\x}$ and $(wu, rl) \not\in \lo_\x \land (wu', wl) \not\in \lo_\x$. 
Since $\lo_\x$ is totally ordered w.r.t. write lock events, we then have $(rl, wu) \in \silo_\x$ and $(wl, wu') \in \silo_\x$. 

From \eqref{ax:lock_rshared} and since $(rl, wu) \in \silo_\x$ we know there exists $l \in \RUnlocks_{\x} \cup \PLocks_{\x}$ such that $rl \relarrow{\po} l$ and $l \relarrow{\lo} wu$. Moreover, since $rl \relarrow{\imm{\po_\x}} pl$ we have $pl \relarrow{\reftransC{\po}} l$. 
Once again, since $\lo$ is total w.r.t. write lock events and agrees with $\po$ (otherwise we would have a cycle contradicting the assumption that $\impG$ is RA-consistent), we have $pl \relarrow{\reftransC{\lo}} l$. We then have $pl \relarrow{\reftransC{\lo}} l \relarrow{\lo}wu$ and thus since $\lo$ is an order (i.e.~is transitive), we have $pl \relarrow{\lo}wu$. 
Similarly, from the mutual exclusion \eqref{ax:lock_wex} of write locks and since $pl \relarrow{\lo}wu$, we know there exists $wu'' \in \WUnlocks_{\x}$ such that $pl \relarrow{\po} wu''$ and $wu'' \relarrow{\lo} wu$. Moreover, since $pl \relarrow{\imm{\po_\x}} wu'$ we have $wu' \relarrow{\reftransC{\po}} wu''$. 
Again, since $\lo$ is total w.r.t. write lock events and agrees with $\po$ (otherwise we would have a cycle contradicting the assumption that $\impG$ is RA-consistent), we have $wu' \relarrow{\reftransC{\lo}} wu''$. We then have $wu' \relarrow{\reftransC{\lo}} wu'' \relarrow{\lo}wu$ and thus since $\lo$ is an order (i.e.~is transitive), we have $wu' \relarrow{\lo}wu$.

Analogously, from \eqref{ax:lock_wex} and since $(wl, wu') \in \lo$, we know there exists $wu''' \in \WUnlocks_{\x}$ such that $wl \relarrow{\po} wu'''$ and $wu''' \relarrow{\lo} wu'$. Moreover, since $wl \relarrow{\imm{\po_\x}} wu$ we have $wu \relarrow{\reftransC{\po}} wu'''$. 
Once again, since $\lo$ is total w.r.t. write lock events and agrees with $\po$ (otherwise we would have a cycle contradicting the assumption that $\impG$ is RA-consistent), we have $wu \relarrow{\reftransC{\lo}} wu'''$. We then have $wu \relarrow{\reftransC{\lo}} wu''' \relarrow{\lo}wu'$ and thus since $\lo$ is an order (i.e.~is transitive), we have $wu \relarrow{\lo}wu'$.

We then have $wu' \relarrow{\lo}wu \relarrow{\lo}wu'$, contradicting the assumption that $\lo$ is a strict order.\\

\noindent \textbf{RTS. (\ref{goal:wwlock3})}\\
We proceed by contradiction. Let $(rl, pl, wu), (rl', pl', wu') \in \imm{\po_\x}$ and $(wu, rl') \not\in \lo_\x \land (wu', rl) \not\in \lo_\x$. 
Since $\lo_\x$ is totally ordered w.r.t. write lock events, we then have $(rl, wu') \in \silo_\x$ and $(rl', wu) \in \silo_\x$. 

From \eqref{ax:lock_rshared} and since $(rl, wu') \in \silo_\x$ we know there exists $l \in \RUnlocks_{\x} \cup \PLocks_{\x}$ such that $rl \relarrow{\po} l$ and $l \relarrow{\lo} wu'$. 
Moreover, since $rl \relarrow{\imm{\po_\x}} pl$ we have $pl \relarrow{\reftransC{\po}} l$. 
Once again, since $\lo$ is total w.r.t. write lock events and agrees with $\po$ (otherwise we would have a cycle contradicting the assumption that $\impG$ is RA-consistent), we have $pl \relarrow{\reftransC{\lo}} l$. We then have $pl \relarrow{\reftransC{\lo}} l \relarrow{\lo}wu'$ and thus since $\lo$ is an order (i.e.~is transitive), we have $pl \relarrow{\lo}wu'$. 
Similarly, from \eqref{ax:lock_wex} and since $pl \relarrow{\lo}wu'$, we know there exists $wu'' \in \WUnlocks_{\x}$ such that $pl \relarrow{\po} wu''$ and $wu'' \relarrow{\lo} wu'$. Moreover, since $pl \relarrow{\imm{\po_\x}} wu$ we have $wu \relarrow{\reftransC{\po}} wu''$. 
Again, since $\lo$ is total w.r.t. write lock events and agrees with $\po$ (otherwise we would have a cycle contradicting the assumption that $\impG$ is RA-consistent), we have $wu \relarrow{\reftransC{\lo}} wu''$. We then have $wu \relarrow{\reftransC{\lo}} wu'' \relarrow{\lo}wu'$ and thus since $\lo$ is an order (i.e.~is transitive), we have $wu \relarrow{\lo}wu'$.

Following a similar argument symmetrically, we get $wu' \relarrow{\lo}wu$. We then have $wu \relarrow{\lo}wu' \relarrow{\lo}wu$, contradicting the assumption that $\lo$ is a strict order.

\renewcommand{\qed}{}
\end{proof}
\begin{proof}[Proof (\ref{ax:lock_rwaxiom})]
Pick an arbitrary RA-consistent execution graph $\impG = (\Events, \po, \rf, \co, \lo)$ of the SI or RSI implementations in \cref{fig:si_implementation}, and pick arbitrary $wl \in \WLocks_{\x}$, $pl \in \PLocks_\x$, $wu \in \WUnlocks_{\x}$, $rl, rl' \in \RLocks_\x$ and $ru \in \RUnlocks_\x$. We are then required to show: 
\begin{align}
	(wl, wu), (rl, ru) \in \imm{\po_\x}
	& \Rightarrow 
	(wu, rl) \in \silo_\x \lor (ru, wl) \in \silo_\x   
	\label{goal:rwlock1}\\
	(rl, pl, wu), (rl', ru) \in \imm{\po_\x}
	& \Rightarrow 
	(wu, rl') \in \silo_\x \lor (ru, pl) \in \silo_\x  
	\label{goal:rwlock2}
\end{align}
\textbf{RTS. (\ref{goal:rwlock1})}\\
We proceed by contradiction. Let $(wl, wu), (rl, ru) \in \imm{\po_\x}$ and $(wu, rl) \not\in \lo_\x \land (ru, wl) \not\in \lo_\x$. 
Since $\lo_\x$ is totally ordered w.r.t. write lock events, we then have $(rl, wu) \in \lo_\x$ and $(wl, ru) \in \lo_\x$. 

From \eqref{ax:lock_wex} and since $(wl, ru) \in \lo_\x$ we know there exists $wu' \in \WUnlocks_{\x}$ such that $wl \relarrow{\po} wu'$ and $wu' \relarrow{\lo} ru$. Moreover, since $wl \relarrow{\imm{\po_\x}} wu$ we have $wu \relarrow{\reftransC{\po}} wu'$. 
Once again, since $\lo$ is total w.r.t. write lock events and agrees with $\po$ (otherwise we would have a cycle contradicting the assumption that $\impG$ is RA-consistent), we have $wu \relarrow{\reftransC{\lo}} wu'$. We then have $wu \relarrow{\reftransC{\lo}} wu' \relarrow{\lo}ru$ and thus since $\lo$ is an order (i.e.~is transitive), we have $wu \relarrow{\lo}ru$. 

Analogously, from \eqref{ax:lock_rshared} and since $(rl, wu) \in \lo_\x$ we know there exists $l \in \RUnlocks_{\x} \cup \PLocks_{\x}$ such that $rl \relarrow{\po} l$ and $l \relarrow{\lo} wu$. Moreover, since $rl \relarrow{\imm{\po_\x}} ru$ we have $ru \relarrow{\reftransC{\po}} l$. 
We then have $l \relarrow{\lo} wu \relarrow{\lo} ru \relarrow{\reftransC{\po}} l$, contradicting the assumption that $\impG$ is RA-consistent. \\

\noindent\textbf{RTS. (\ref{goal:rwlock2})}\\
We proceed by contradiction. Let $(rl, pl, wu), (rl', ru) \in \imm{\po_\x}$ and $(wu, rl') \not\in \lo_\x \land (ru, pl) \not\in \lo_\x$. 
Since $\lo_\x$ is totally ordered w.r.t. write lock events, we then have $(rl', wu) \in \lo_\x$ and $(pl, ru) \in \lo_\x$. 

From \eqref{ax:lock_wex} and since $(pl, ru) \in \lo_\x$ we know there exists $wu' \in \WUnlocks_{\x}$ such that $pl \relarrow{\po} wu'$ and $wu' \relarrow{\lo} ru$. Moreover, since $pl \relarrow{\imm{\po_\x}} wu$ we have $wu \relarrow{\reftransC{\po}} wu'$. 
Once again, since $\lo$ is total w.r.t. write lock events and agrees with $\po$ (otherwise we would have a cycle contradicting the assumption that $\impG$ is RA-consistent), we have $wu \relarrow{\reftransC{\lo}} wu'$. We then have $wu \relarrow{\reftransC{\lo}} wu' \relarrow{\lo}ru$ and thus since $\lo$ is an order (i.e.~is transitive), we have $wu \relarrow{\lo}ru$. 

Analogously, from \eqref{ax:lock_rshared} and since $(rl', wu) \in \lo_\x$ we know there exists $l \in \RUnlocks_{\x} \cup \PLocks_{\x}$ such that $rl \relarrow{\po} l$ and $l \relarrow{\lo} wu$. Moreover, since $rl \relarrow{\imm{\po_\x}} ru$ we have $ru \relarrow{\reftransC{\po}} l$. 
We then have $l \relarrow{\lo} wu \relarrow{\lo} ru \relarrow{\reftransC{\po}} l$, contradicting the assumption that $\impG$ is RA-consistent. 
\end{proof}
\end{lemma}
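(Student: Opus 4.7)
My plan is to prove both \eqref{ax:lock_wwaxiom} and \eqref{ax:lock_rwaxiom} by contradiction, exploiting the interval structure of lock usage: each pair $(wl,wu)$, $(rl,ru)$, or triple $(rl,pl,wu)$ marks a ``locked interval'' in $\po$, and the mutual-exclusion axioms say that an interval containing a write-lock event cannot overlap with any other interval on the same location. The overall pattern will be: negate both conclusions, use totality of $\silo_\x$ on write-lock events (from \eqref{ax:lock_wsync}) to conclude that the intervals are \emph{crossed} in $\silo$, then apply \eqref{ax:lock_wex} or \eqref{ax:lock_rshared} to extract an $\silo$-edge pointing to an intermediate unlock/promote event, and finally lift this edge through the $\imm{\po_\x}$ interval using the fact that on write-lock events $\silo$ is total and must agree with $\po$ (else \eqref{ax:ra_acyc} would be violated since $\po\suq\hb$).

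For each of the three WWSync cases, both intervals contain a write-lock endpoint, so the argument is symmetric. I will negate both disjuncts, obtain the crossed configuration (e.g.\ $(wl',wu)\in\silo$ and $(wl,wu')\in\silo$), apply \eqref{ax:lock_wex} to $(wl',wu)\in\silo$ to get an unlock $wu''$ with $wl'\po wu''$ and $wu''\lo wu$. Since $wl'\imm{\po_\x}wu'$, the lock-well-formedness condition forces $wu'\po^= wu''$, which together with $\po$--$\silo$ agreement on write-lock events (forced by RA-acyclicity) gives $wu'\reftransC{\silo}wu''$ and hence $wu'\silo wu$. Running the symmetric argument on $(wl,wu')\in\silo$ yields $wu\silo wu'$, contradicting the strict-order property of $\silo$. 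The PLocks cases are handled identically, except that a preparatory step first lifts the $\silo$-edge from $rl$ to $pl$ via \eqref{ax:lock_rshared} (which yields an intermediate event in $\RUnlocks_\x\cup\PLocks_\x$, forced by well-formedness to be $pl$), before the \eqref{ax:lock_wex} step.

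For RWSync the shape differs because one interval is purely a reader interval $(rl,ru)$ with no write-lock event, so \eqref{ax:lock_wsync} does not directly order the $rl,ru$ endpoints against the writer interval. After negating both disjuncts I will instead use \eqref{ax:lock_wsync} on the write-lock endpoint of the other interval to conclude $(rl,wu)\in\silo_\x$ (or $(rl',wu)\in\silo_\x$) and $(wl,ru)\in\silo_\x$ (or $(pl,ru)\in\silo_\x$). Then \eqref{ax:lock_wex} applied to the write-lock side produces an intermediate write-unlock before $ru$, which combined with $\po$-immediacy gives an $\silo$-edge $wu\silo ru$. Applying \eqref{ax:lock_rshared} to $(rl,wu)\in\silo_\x$ yields an event $l\in\RUnlocks_\x\cup\PLocks_\x$ with $rl\po l$ and $l\silo wu$; by well-formedness and $\imm{\po_\x}$ we obtain $ru\reftransC{\po}l$, closing the cycle $l\silo wu\silo ru\reftransC{\po}l$, contradicting \eqref{ax:ra_acyc} since $\hb$ contains both $\po$ and $\silo$.

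The main obstacle will be the routine but error-prone bookkeeping step of ``lifting'' $\silo$-edges along $\imm{\po_\x}$ intervals: whenever we learn that some endpoint $e$ of an interval is $\silo$-related to an event outside the interval, we need to transfer that relationship to the \emph{other} endpoint of the interval. This relies on three ingredients used together---well-formedness of the lock trace (so that no foreign lock event can be $\po$-interleaved inside the interval), totality of $\silo$ on write-lock events at $\x$, and the fact that $\silo$ must agree with $\po$ on co-thread lock events or else $\po\cup\silo\suq\hb$ would be cyclic, violating \eqref{ax:ra_acyc}. Keeping this combination straight for each of the five sub-cases, particularly the two involving promotion triples $(rl,pl,wu)$ where the intermediate event has type $\PLocks_\x$ rather than $\WLocks_\x$, is where the proof requires the most care.
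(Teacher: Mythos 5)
Your proposal is correct and follows essentially the same route as the paper's own proof: contradiction via totality of $\silo_\x$ on write-lock events (\ref{ax:lock_wsync}) to obtain the crossed configuration, extraction of intermediate unlock/promote events via \eqref{ax:lock_wex} and \eqref{ax:lock_rshared}, lifting those $\silo$-edges through the $\imm{\po_\x}$ intervals using well-formedness plus $\po$--$\silo$ agreement forced by \eqref{ax:ra_acyc}, and closing with either a strict-order violation (WWSync) or a $\po\cup\silo$ cycle (RWSync). The only cosmetic difference is that the paper keeps the intermediate event $l\in\RUnlocks_\x\cup\PLocks_\x$ abstract and shows $pl\relarrow{\reftransC{\po}}l$ rather than identifying $l$ with $pl$ outright, which is immaterial to the argument.
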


\newcommand{\TMO}[1][]{\makerel{TMO}_{#1}}

\renewcommand{\absGSI}{\ensuremath{\impG'}}

\newcommand{\PO}{\makerel{PO}}
\newcommand{\RF}[1][\code x]{\makerel{RF}^{#1}}
\newcommand{\CO}[1][\code x]{\makerel{MO}^{#1}}

\newcommand{\FR}{\makerel{FR}}
\newcommand{\HB}{\makerel{HB}}

\newcommand{\phase}[2][\code x]{\ensuremath{\mathit P^{#1}_{#2}}}
\newcommand{\rphase}[2][\code x]{\ensuremath{\mathit{RP}^{#1}_{#2}}}
\newcommand{\sphase}[2][\code x]{\ensuremath{\mathit S^{#1}_{#2}}}
\newcommand{\srphase}[2][\code x]{\ensuremath{ \mathit{RS}^{#1}_{#2}}}

\newcommand{\writer}[1]{\ensuremath{#1\mathsf{.writer}}}

\newcommand{\stage}[1]{\func{stg}{#1}}

\newcommand{\TCO}{\ensuremath{\makerel{\uppercase{TCO}}}\xspace}
\newcommand{\TCOA}[1][\code x]{\ensuremath{\makerel{\uppercase{TCO}}_{#1}}\xspace}

\newcommand{\LO}{{\color{red} \makerel{LO}}}
\newcommand{\RLO}{{\color{red} \makerel{RLO}}}

\newcommand{\wsync}[1][\lo_{\xl}]{\ensuremath{\pred{wsync}{#1}}}
\newcommand{\rsync}[1][\silor_{\xl}]{\ensuremath{\pred{rsync}{#1}}}

\newcommand{\siloe}{{\color{colorLO} \tout{\silo}}}

~\newpage
\section{Soundness and Completeness of the Eager SI Implementation}\label{app:si}
\paragraph{Notation} Given an execution graph $\absGSI=(\Events, \po, \rf, \co, \lo)$ we write $\TClasses$ for the set of equivalence classes of $\Transactions$ induced by $\st$; $\class{a}{\st}$ for the equivalence class that contains $a$; and $\Transactions_\txid$ for the equivalence class of transaction $\txid\in \TXIDs$: $\Transactions_\txid \eqdef \setcomp{a}{\tx{a} {=} \txid}$.
We write $\sicon$ to denote that $\absGSI$ is SI-consistent; and write $\consistent{\impG}$ to denote that $\impG$ is RA-consistent.

Given an execution graph $\impGSI$ of the SI implementation in \cref{fig:si_implementation}, let us assign a transaction identifier to each transaction executed by the program; and given a transaction $\txid$, let $\readset_{\txid}$ and $\writeset_{\txid}$ denote its read and write sets, respectively.
Observe that given a transaction $\txid$ of the SI implementation in \cref{fig:si_implementation} with $\readset_{\txid} \cup \writeset_{\txid} = \simpleset{\code{x}_1, \cdots, \code{x}_i}$, the trace of $\txid$, written $\trace_{\txid}$, is of the following form: 
\[
	\trace_{\txid} = 
	\mathit{FS}^{*} 
	\relarrow{\imm \po} \mathit{Rs} 
	\relarrow{\imm \po} \mathit{RUs}
	\relarrow{\imm \po} \mathit{PLs}
	\relarrow{\imm \po} \mathit{Ts}
	\relarrow{\imm \po} \mathit{Us}
\]
where $\mathit{FS}^{*}$ denotes the sequence of events attempting but failing to obtain a snapshot, and
\begin{itemize}
	\item $\mathit{Rs}$ denotes the sequence of events acquiring the reader locks (on all locations accessed) and capturing a snapshot of the read set, and is of the form $\mathit{rl}_{\code x_1} \relarrow{\imm \po} \cdots \relarrow{\imm \po} \mathit{rl}_{\code x_i} \relarrow{\imm \po} \mathit{S}_{\code x_1} \relarrow{\imm \po} \cdots \relarrow{\imm \po} \mathit{S}_{\code x_i}$, where for all $n \in \{1 \cdots i\}$:
	\[
		\begin{array}{l}
			\mathit{rl}_{\code x_n} = \rlockE{\xl_n} 
			\qquad
			\mathit{S}_{\code x_n} =
			\begin{cases}
				\mathit{rs}_{\code x_n}
				\relarrow{\imm \po}
				\mathit{ws}_{\code x_n}
				& \text{if } \code x \in \readset_{\txid} \\
				
				\emptyset & \text{otherwise}
			\end{cases} 
		\end{array}
	\]	
	with $\mathit{rs}_{\code x_n} \eqdef \readE{\acq}{\code x_n}{v_n}$ 
	and $\mathit{ws}_{\code x_n} \eqdef \writeE{\rel}{\code{s[x}_n\code ]}{v_n} $, for some $v_n$.
	\item $\mathit{RUs}$ denotes the sequence of events releasing the reader locks (when the given location is in the read set only) and is of the form $\mathit{ru}_{\code x_1} \relarrow{\imm \po} \cdots \relarrow{\imm \po} \mathit{ru}_{\code x_i}$, where for all $n \in \{1 \cdots i\}$:
	\[
	\begin{array}{l}
		\mathit{ru}_{\code x_n} = 
		\begin{cases}
			\runlockE{\xl_n}
			& \text{ if } \code x_n \not\in \writeset_{\txid}  \\
			\emptyset
			& \text{ otherwise}
		\end{cases}
	\end{array}	
	\]
	\item $\mathit{PLs}$ denotes the sequence of events promoting the reader locks to writer ones (when the given location is in the write set), and is of the form $\mathit{pl}_{\code x_1} \relarrow{\imm \po} \cdots \relarrow{\imm \po} \mathit{pl}_{\code x_i}$, where for all $n \in \{1 \cdots i\}$:
	\[
	\begin{array}{l}
		\mathit{pl}_{\code x_n} = 
		\begin{cases}
			\plockE{\xl_n}
			& \text{if }  \code x_n \in \writeset_{\txid} \\
			\emptyset
			& \text{ otherwise } 
		\end{cases}
	\end{array}	
	\]
	\item $\mathit{Ts}$ denotes the sequence of events corresponding to the execution of \denot{\code{T}} in \cref{fig:si_implementation} and is of the form $\mathit{t}_1 \relarrow{\imm \po} \cdots \relarrow{\imm \po} \mathit{t}_k$, where for all $m \in \{1 \cdots k\}$:
	\[
	\mathit{t}_m = 
	\begin{cases}
		\readE{-}{\code{s[x}_n \code{]}}{v_n} & \text{if } O_m {=} \readE{-}{\code x_n}{v_n} \\
		\writeE{\rel}{\code x_n}{v_n} \relarrow{\imm \po} \writeE{-}{\code{s[x}_n \code ]}{v_n}
		& \text{if } O_m {=} \writeE{\rel}{\code x_n}{v_n} \\
	\end{cases}
	\]
	where $O_m$ denotes the $m$th event in the trace of the original $\code T$;
	\item $\mathit{Us}$ denotes the sequence of events releasing the locks on the write set. That is, the events in $\mathit{Us}$ correspond to the execution of the last line of the implementation in \cref{fig:si_implementation}, and is of the form $\mathit{wu}_{\code x_1} \relarrow{\imm \po} \cdots \relarrow{\imm \po} \mathit{wu}_{\code x_i}$, where for all $n \in \{1 \cdots i\}$:
	\[
		\mathit{wu}_{\code x_n} = 
		\begin{cases}
			\wunlockE{\code{yl}_n} & \text{if } \code x_n \in \writeset_{\txid} \\
			\emptyset & \text{otherwise}
		\end{cases}				
	\]
\end{itemize}


Given a transaction trace $\trace_{\txid}$, we write e.g.~$\txid.\mathit{Ls}$ to refer to its constituent $\mathit{Ls}$ sub-trace and write $\mathit{Ls}.\Events$ for the set of events related by \po in $\mathit{Ls}$. Similarly, we write $\txid.\Events$ for the set of events related by \po in $\trace_{\txid}$.
Note that $\impGSI.\Events = \bigcup\limits_{\txid \in \sort{Tx}}  \txid.\Events $.

%
\subsection{Implementation Soundness}
\label{subapp:si_soundness}
In order to establish the soundness of our implementation, it suffices to show that given an RA-consistent execution graph of the implementation $\impGSI$, we can construct a corresponding SI-consistent execution graph $\absGSI$ with the same outcome.
%

Given a transaction $\txid \in \sort{Tx}$ with $\readset_{\txid} \cup \writeset_{\txid}=\{\code x_1 \cdots \code x_i\}$
and trace 
$ \trace_{\txid} = \mathit{Fs}^{*}
	\relarrow{\imm \po} \mathit{Rs}
	\relarrow{\imm \po} \mathit{RUs}
	\relarrow{\imm \po} \mathit{PLs}
	\relarrow{\imm \po} \mathit{Ts}
	\relarrow{\imm \po} \mathit{Us}$, 
with $\mathit{Ts}= \mathit{t}_1 \relarrow{\imm \po} \cdots \relarrow{\imm \po} \mathit{t}_k$, 
we construct the corresponding implementation trace $\trace'_{\txid}$ as follows:
\[
	\trace'_{\txid} \eqdef  \mathit{t}'_1 \relarrow{\po} \cdots \relarrow{\po} \mathit{t}'_k
\]
where  for all $m \in \{1 \cdots k\}$:
\[
\begin{array}{l c l}
	\mathit{t}'_m {=} \readE{\acq}{\code x_n}{rb_n}
	& \text{when} & 
	\mathit{t}_m = \readE{-}{\code{s[x}_n \code{]}}{rb_n} \\
	
	\mathit{t}'_m {=} \writeE{\rel}{\code x_n}{rb_n}
	& \text{when} & 
	\mathit{t}_m = \writeE{\rel}{\code x_n}{rb_n} \relarrow{\imm \po} \writeE{-}{\code{s[x}_n \code{]}}{rb_n} 
\end{array}	
\]
such that in the first case the identifier of $\mathit{t}'_m$ is that of $\mathit{t}_m$; and in the second case the identifier of $\mathit{t}'_m$ is that of the first event in $\mathit{t}_m$.
We then define:
\[
	\mathsf{RF}_{\txid} \eqdef
	\setcomp{
		(w, t'_j)	
	}{
		t'_j \in \mathit{Ts}' \land \exsts{\code x, v} t'_j {=} \readE{\acq}{\code x}{v} \land w {=} \writeE{\rel}{\code x}{v} \\
		\land (w \in \txid.\Events \Rightarrow 
		\begin{array}[t]{@{} l @{}}
			w \relarrow{\po} t'_j \,\land\\
			(\for{e \in \txid.\Events } w \relarrow{\po} e \relarrow{\po} t'_j \Rightarrow (\loc e {\ne} \code x \lor e {\not\in} \Writes)))
		\end{array} \\				
		
		\land (w \not \in \txid.\Events \Rightarrow 
		\begin{array}[t]{@{} l @{}}
			(\for{e \in \txid.\Events} (e \relarrow{\po} t'_j \Rightarrow (\loc e \ne \code x \lor e \not\in \Writes)) \\
			\land\, (w, \txid.\mathit{rs}_{\x})  \in \impGSI.\rf)
		\end{array}		
	}
\]

We are now in a position to demonstrate the soundness of our implementation. Given an RA-consistent execution graph $\impGSI$ of the implementation, we construct an SI execution graph $\absGSI$ as follows and demonstrate that $\sicon$ holds.

\begin{itemize}
	\item $\absGSI.\Events = \bigcup\limits_{\txid \in \sort{Tx}} \trace'_{\txid}.\Events$, with the $\tx{.}$ function defined as:
	\[
		\tx{a} \eqdef \txid \quad \text{ where } \quad a \in \trace'_{\txid}
	\]
	\item $\absGSI.\po = \coerce{\impGSI.\po}{\absGSI.\Events}$
	\item $\absGSI.\rf = \bigcup_{\txid \in \textsc{Tx}} \mathsf{RF}_{\txid}$
	\item $\absGSI.\co = \coerce{\impGSI.\co}{\absGSI.\Events}$
	\item $\absGSI.\silo = \emptyset$
\end{itemize}
Observe that the events of each $\trace'_{\txid}$ trace coincides with those of the equivalence class  $\Transactions_{\txid}$ of $\absGSI$. That is,  $\trace'_{\txid}.\Events = \Transactions_{\txid}$. 

\begin{lemma}\label{lem:si_lock_hb}
Given an RA-consistent execution graph $\impGSI$ of the implementation and its corresponding SI execution graph $\absGSI$ constructed as above, for all $a, b, \txid_a, \txid_b, \x$:
\small
\begin{align}
	& \hspace*{-15pt}
	\txid_a \ne \txid_b
	\land a \in \txid_a.\Events 
	\land b \in \txid_b.\Events 
	\land \loc a = \loc b = \x
	\Rightarrow \nonumber \\
	&  \hspace*{-15pt} \;\; ((a, b) \in \absGSI.\rf \Rightarrow  \txid_a.\mathit{wu}_{\x} \relarrow{\impGSI.\hb} \txid_b.\mathit{rl}_{\x} ) 
	\label{lem:si_lock_hb_rf} \\
	& \hspace*{-15pt} \;\; \land ((a, b) \in \absGSI.\co \Rightarrow  \txid_a.\mathit{wu}_{\x} \relarrow{\impGSI.\hb} \txid_b.\mathit{rl}_{\x} ) 
	\label{lem:si_lock_hb_co}\\
	& \hspace*{-15pt} \;\; \land ((a, b) \in \absGSI.\co;\rf \Rightarrow  \txid_a.\mathit{wu}_{\x} \relarrow{\impGSI.\hb} \txid_b.\mathit{rl}_{\x} ) 
	\label{lem:si_lock_hb_corf}\\
	& \hspace*{-15pt} \;\; \land \big((a, b) \in \absGSI.\fr \Rightarrow  
		(\x \in \writeset_{\txid_a} \land \txid_a.\mathit{wu}_{\x} \relarrow{\impGSI.\hb} \txid_b.\mathit{rl}_{\x} ) 
		\lor 
		(\x \not\in \writeset_{\txid_a} \land \txid_a.\mathit{ru}_{\x} \relarrow{\impGSI.\hb} \txid_b.\mathit{pl}_{\x} ) 
	\big)
	\label{lem:si_lock_hb_fr}
\end{align}
\normalsize	
\begin{proof}
Pick an arbitrary RA-consistent execution graph $\impGSI$ of the implementation and its corresponding SI execution graph $\absGSI$ constructed as above. Pick an arbitrary $a, b, \txid_a, \txid_b, \x$ such that $\txid_a \ne \txid_b$, $a \in \txid_a.\Events$, $a \in \txid_a.\Events$, and $\loc a = \loc b = \x$.\\

\noindent \textbf{RTS. (\ref{lem:si_lock_hb_rf})}\\
Assume $(a, b) \in \absGSI.\rf$. From the definition of $\absGSI.\rf$ we then know $(a, \txid_a.\mathit{rs}_{\x}) \in \impGSI.\rf$.
On the other hand, from \cref{lem:lock-ordering} we  know that either i) $\x \in \writeset_{\txid_b}$ and $\txid_b.\mathit{wu}_{x} \relarrow {\impGSI.\hb} \txid_a.\mathit{rl}_{x}$; or ii)  $\x \not\in \writeset_{\txid_b}$ and  $\txid_b.\mathit{ru}_{x} \relarrow {\impGSI.\hb} \txid_a.\mathit{pl}_{x}$; or iii) $\txid_a.\mathit{wu}_{x} \relarrow {\impGSI.\hb} \txid_b.\mathit{rl}_{x}$.
In case (i) we then have $a \relarrow{\impGSI.\rf} \txid_a.\mathit{rs}_{\x} \relarrow{\impGSI.\po} \txid_b.\mathit{wu}_{x}  \relarrow{\impGSI.\hb} \txid_a.\mathit{rl}_{x}  \relarrow{\impGSI.\po} a$. That is, we have $a \relarrow{\impGSI.\hbloc} a$, contradicting the assumption that $\impGSI$ is RA-consistent. 
Similarly in case (ii) we have $a \relarrow{\impGSI.\rf} \txid_a.\mathit{rs}_{\x} \relarrow{\impGSI.\po} \txid_b.\mathit{ru}_{x}  \relarrow{\impGSI.\hb} \txid_a.\mathit{pl}_{x}  \relarrow{\impGSI.\po} a$.  That is, we have $a \relarrow{\impGSI.\hbloc} a$, contradicting the assumption that $\impGSI$ is RA-consistent. 
In case (iii) the desired result holds trivially.\\

\noindent \textbf{RTS. (\ref{lem:si_lock_hb_co})}\\
Assume $(a, b) \in \absGSI.\co$. From the definition of $\absGSI.\co$ we then know $(a, b) \in \impGSI.\co$.
On the other hand, from \cref{lem:lock-ordering}  we  know that either i) $\txid_b.\mathit{wu}_{x} \relarrow {\impGSI.\hb} \txid_a.\mathit{rl}_{x}$; or ii) $\txid_a.\mathit{wu}_{x} \relarrow {\impGSI.\hb} \txid_b.\mathit{rl}_{x}$.
In case (i) we then have $a \relarrow{\impGSI.\co} b \relarrow{\impGSI.\po} \txid_b.\mathit{wu}_{x}  \relarrow{\impGSI.\hb} \txid_a.\mathit{rl}_{x}  \relarrow{\impGSI.\po} a$. That is, we have $a \relarrow{\impGSI.\hbloc} a$, contradicting the assumption that $\impGSI$ is RA-consistent. 
In case (ii) the desired result holds trivially.\\

\noindent \textbf{RTS. (\ref{lem:si_lock_hb_corf})}\\
Assume $(a, b) \in \absGSI.\co; \rf$. We then know there exists $w$ such that $(a, w) \in \absGSI.\co$ and $(w, b) \in \absGSI.\rf$. From the definition of $\absGSI.\co$ we then know $(a, w) \in \impGSI.\co$.
There are now three cases to consider: 1) $w \in \txid_a$; or 2) $w \in \txid_b$; or 3) $w \in \txid_c \land \txid_c \ne \txid_a \land \txid_c \ne \txid_b$. 
In case (1) the desired result follows from part \ref{lem:si_lock_hb_rf}.
In case (2) since $(a, w) \in \absGSI.\co$ the desired result follows from part \ref{lem:si_lock_hb_co}.

In case (3) from the proof of part \ref{lem:si_lock_hb_co} we have $\txid_a.\mathit{wu}_{x} \relarrow {\impGSI.\hb} \txid_c.\mathit{rl}_{x}$.
Moreover, from the shape of $\impGSI$ traces we have $\txid_c.\mathit{rl}_{x} \relarrow {\impGSI.\po} \txid_c.\mathit{wu}_{x}$.
On the other hand, from the proof of part \ref{lem:si_lock_hb_rf} we have $\txid_c.\mathit{wu}_{x} \relarrow {\impGSI.\hb} \txid_b.\mathit{rl}_{x}$.
We thus have 
$\txid_a.\mathit{wu}_{x} \relarrow {\impGSI.\hb} \txid_c.\mathit{rl}_{x} \relarrow {\impGSI.\po} \txid_c.\mathit{wu}_{x} \relarrow {\impGSI.\hb} \txid_b.\mathit{rl}_{x}$.
As $\impGSI.\po \subseteq \impGSI.\hb$ and $\impGSI.\hb$ is transitively closed, we have $\txid_a.\mathit{wu}_{x} \relarrow {\impGSI.\hb} \txid_b.\mathit{rl}_{x}$, as required. \\


\noindent \textbf{RTS. (\ref{lem:si_lock_hb_fr})}\\
Assume $(a, b) \in \absGSI.\fr$. From the definition of $\absGSI.\fr$ we then know $(\txid_a.\mathit{rs}_{\x}, b), (\txid_a.\mathit{vs}_{\x}, b) \in \impGSI.\fr$.
On the other hand, from \cref{lem:lock-ordering} we  know that either i) $\txid_b.\mathit{wu}_{x} \relarrow {\impGSI.\hb} \txid_a.\mathit{rl}_{x}$; or ii)  $\x \not\in \writeset_{\txid_a}$ and $\txid_a.\mathit{ru}_{x} \relarrow {\impGSI.\hb} \txid_a.\mathit{pl}_{x}$; or iii) $\x \in \writeset_{\txid_a}$ and  $\txid_a.\mathit{wu}_{x} \relarrow {\impGSI.\hb} \txid_b.\mathit{rl}_{x}$.
In case (i) we then have $b \relarrow{\impGSI.\po} \txid_b.\mathit{wu}_{x}  \relarrow{\impGSI.\hb} \txid_a.\mathit{rl}_{x}  \relarrow{\impGSI.\po} \txid_a.\mathit{rs}_{\x} \relarrow{\impGSI.\fr} b$. That is, we have $b \relarrow{\impGSI.\hbloc} \txid_a.\mathit{rs}_{\x} \relarrow{\impGSI.\fr} b$, contradicting the assumption that $\impGSI$ is RA-consistent. 
In cases (ii-iii) the desired result holds trivially.\\
\end{proof}
\end{lemma}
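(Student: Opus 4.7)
}

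The plan is to prove the four implications separately, all following the same contradiction-based template: use \cref{lem:lock-ordering} to enumerate the possible relative orderings of the relevant lock events in $\impGSI$, and rule out each ``wrong'' ordering by exhibiting a cycle in $\impGSI.\hbloc \cup \impGSI.\co \cup \impGSI.\fr$, contradicting \eqref{ax:ra_acyc}. The only remaining ordering is the desired $\impGSI.\hb$ relationship. Throughout, I would use the structure of a transaction trace $\trace_{\txid}$ (snapshot phase followed by read-release, then lock promotion for the write set, then the local execution, then write-release) to know which lock events exist for each location and which intra-transaction $\impGSI.\po$ edges are available to connect them.

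For the $\rf$ case, since $\txid_a\ne\txid_b$, the construction of $\mathsf{RF}_{\txid_b}$ forces $(a,\txid_b.\mathit{rs}_{\x})\in\impGSI.\rf$ and $\x\in\writeset_{\txid_a}$. I would then apply \cref{lem:lock-ordering} to the writer-lock pair of $\txid_a$ and the (possibly promoted) read/write-lock events of $\txid_b$ on $\x$. In each of the two ``wrong'' orderings, I would chain: $a \xrightarrow{\impGSI.\rf} \txid_b.\mathit{rs}_{\x} \xrightarrow{\impGSI.\po} (\mathit{ru}_{\x}\text{ or }\mathit{wu}_{\x}) \xrightarrow{\impGSI.\hb} \txid_a.(\mathit{rl}_{\x}\text{ or }\mathit{pl}_{\x}) \xrightarrow{\impGSI.\po} a$, giving a $\impGSI.\hbloc$-loop at $a$. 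The $\co$ case is essentially the same, starting from $(a,b)\in\impGSI.\co$ and replacing the $\rf$-edge by the $\co$-edge in the cycle.

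For the $\co;\rf$ case I would case-split on the transaction $\txid_c$ containing the intermediate write $w$: if $\txid_c{=}\txid_a$ this is an $\rf$-case between $\txid_a$ and $\txid_b$; if $\txid_c{=}\txid_b$ it is a $\co$-case between $\txid_a$ and $\txid_b$; otherwise compose the two already-established conclusions $\txid_a.\mathit{wu}_{\x}\xrightarrow{\impGSI.\hb}\txid_c.\mathit{rl}_{\x}$ and $\txid_c.\mathit{wu}_{\x}\xrightarrow{\impGSI.\hb}\txid_b.\mathit{rl}_{\x}$ through the $\impGSI.\po$ path $\txid_c.\mathit{rl}_{\x}\to\txid_c.\mathit{wu}_{\x}$ inside $\txid_c$, closing by transitivity of $\impGSI.\hb$. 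For the $\fr$ case, I would split on whether $\x\in\writeset_{\txid_a}$, extract the witness snapshot read $\txid_a.\mathit{rs}_{\x}$ such that $(\txid_a.\mathit{rs}_{\x},b)\in\impGSI.\fr$, and again invoke \cref{lem:lock-ordering}. The reverse orderings yield a cycle of the form $b\xrightarrow{\impGSI.\po}\txid_b.\mathit{wu}_{\x}\xrightarrow{\impGSI.\hb}\txid_a.(\mathit{rl}_{\x}\text{ or }\mathit{pl}_{\x})\xrightarrow{\impGSI.\po}\txid_a.\mathit{rs}_{\x}\xrightarrow{\impGSI.\fr} b$, contradicting \eqref{ax:ra_acyc}.

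I expect the main obstacle to be the $\fr$ sub-case with $\x\notin\writeset_{\txid_a}$: the target conclusion concerns $\txid_a.\mathit{ru}_{\x}$ and $\txid_b.\mathit{pl}_{\x}$ (not $\mathit{wu}$/$\mathit{rl}$), so the relevant instance of \cref{lem:lock-ordering} is the one mixing read-unlock and lock-promotion events, and I must justify that this is the only ordering compatible with RA-consistency of $\impGSI$. The bookkeeping about exactly which lock event to pick on each side of the $\impGSI.\hb$ chain, and which intra-transaction $\impGSI.\po$ edges are guaranteed by the trace template, is the only delicate part; once those are identified, every sub-case closes by the same cycle-building argument.
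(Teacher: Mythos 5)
Your proposal is correct and follows essentially the same route as the paper's proof: each clause is established by invoking \cref{lem:lock-ordering} to enumerate the possible lock orderings on $\x$, discharging the unwanted orderings by exhibiting a cycle in $\hbloc\cup\co\cup\fr$ that contradicts \eqref{ax:ra_acyc}, and handling $\co;\rf$ by the same three-way split on the transaction of the intermediate write with composition through $\txid_c.\mathit{rl}_{\x}\relarrow{\po}\txid_c.\mathit{wu}_{\x}$. The one delicate point you flag --- the $\fr$ sub-case with $\x\notin\writeset_{\txid_a}$, which needs the \eqref{ax:lock_rwaxiom} instance of \cref{lem:lock-ordering} pairing $\mathit{ru}_{\x}$ with $\mathit{pl}_{\x}$ --- is exactly how the paper closes that case.
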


\begin{lemma}\label{lem:si_soundness}
For all RA-consistent execution graphs $\impGSI$ of the implementation and their counterpart SI execution graphs $\absGSI$ constructed as above, 
\begin{enumerate}
	\item $(\absGSI.\pot \subseteq \impGSI.\po) \land (\absGSI.\pot; \absGSI.\frt \subseteq \impGSI.\hb)$
	\label{lem:si_soundness_po}
	
	\item $(\absGSI.\cot \subseteq \impGSI.\hb)
	\land (\absGSI.\cot; \absGSI.\frt \subseteq \impGSI.\hb)$
	\label{lem:si_soundness_co}
	
	\item $(\absGSI.\rft  \subseteq \impGSI.\hb) \land (\absGSI.\rft; \absGSI.\frt \subseteq \impGSI.\hb)$
	\label{lem:si_soundness_rf}	

\end{enumerate}
\begin{proof}
Pick an arbitrary RA-consistent execution graph $\impGSI$ of the implementation and its counterpart SI execution graph $\absGSI$ constructed as above.\\

\noindent \textbf{RTS. (Part \ref{lem:si_soundness_po})}\\
The proof of the first conjunct is immediate from the definitions of $\absGSI.\po$ and $\impGSI.\po$.
For the second conjunct, pick arbitrary $(a, b) \in  \absGSI.(\pot; \frt)$. We then know there exist $c$ such that $(a, c) \in \absGSI.\pot$ and $(c, b) \in \absGSI.\frt$.
Since $(a, c) \in \absGSI.\pot$, from the definition of $\absGSI.\po$ we also have $(a, c) \in \impGSI.\po$ and thus $(a, c) \in \impGSI.\hb$. 
Moreover, from the definition of $\absGSI.\frt$ we know there exist $\txid_1, \txid_2, r, w$ such that $\txid_1 \ne \txid_2$, $c, r \in \trace'_{\txid_1}$, $b, w \in \trace'_{\txid_2}$ and $(r, w) \in \absGSI.\fr$. 
Let $\loc r = \loc w = \code x$. 
We then know that $\code x \in \readset_{\txid_1} \cup \writeset_{\txid_1}$, $\code x \in \writeset_{\txid_2}$,
and that there exists $w_x$ such that $(w_x, r) \in \absGSI.\rf$ and $(w_x, w) \in \absGSI.\co$.
From the construction of $\absGSI.\co$ we then have $(w_x, w) \in \impGSI.\co$.
From the construction of $\absGSI.\rf$ there are now two cases to consider:
1) $(w_x, \txid_1.\mathit{rs}_{\code x}) \in \impGSI.\rf$; or 2) $\code x \in \writeset_{\txid_1}$, $w_x \in \txid_1$ and $(w_x, r) \in \impGSI.\po$.

In case (1) we then have $(\txid_1.\mathit{rs}_{\code x}, w) \in  \impGSI.\fr$. 
Moreover, from \cref{lem:si_lock_hb} we have that either: 
i) $\x \not\in \writeset_{\txid_1} \land \txid_1.\mathit{ru}_{\code x} \relarrow{\impGSI.\hb} \txid_2.\mathit{pl}_{\code x}$; or ii) $\x \in \writeset_{\txid_1} \land \txid_1.\mathit{wu}_{\code x} \relarrow{\impGSI.\hb} \txid_2.\mathit{rl}_{\code x}$.

%

In case (1.i), since we have $(a, c) \in \absGSI.\pot$, from the construction of $\absGSI.\po$ we have $(a, \txid_1.\mathit{ru}_{\code x}) \in \impGSI.\po$. We thus have $a \relarrow{\impGSI.\po} \txid_1.\mathit{ru}_{\code x} \relarrow{\impGSI.\hb} \txid_2.\mathit{pl}_{\code x} \relarrow{\impGSI.\po} b$. That is, as $\impGSI.\po \subseteq \impGSI.\hb$ and $\impGSI$ is transitively closed, we have $(a, b) \in \impGSI.\hb$, as required.
Similarly, in case (1.ii), since we have $(a, c) \in \absGSI.\pot$, from the construction of $\absGSI.\po$ we have $(a, \txid_1.\mathit{wu}_{\code x}) \in \impGSI.\po$. We thus have $a \relarrow{\impGSI.\po} \txid_1.\mathit{wu}_{\code x} \relarrow{\impGSI.\hb} \txid_2.\mathit{rl}_{\code x} \relarrow{\impGSI.\po} b$. That is, as $\impGSI.\po \subseteq \impGSI.\hb$ and $\impGSI$ is transitively closed, we have $(a, b) \in \impGSI.\hb$, as required.

In case (2), from \cref{lem:si_lock_hb} we have $\txid_1.\mathit{wu}_{\code x} \relarrow{\impGSI.\hb} \txid_2.\mathit{rl}_{\code x}$. 
%
%
Moreover, 
we have $(c, \txid_1.\mathit{wu}_{\code x}) \in \impGSI.\po$ and. $(\txid_2.\mathit{rl}_{\code x}, b) \in \impGSI.\po$.
We thus have, $a \relarrow{\impGSI.\hb} c  \relarrow{\impGSI.\po} \txid_1.\mathit{wu}_{\code x} \relarrow{\impGSI.\hb}\txid_2.\mathit{rl}_{\code x} \relarrow{\impGSI.\po} b$. That is, as $\impGSI.\po \subseteq \impGSI.\hb$ and $\impGSI$ is transitively closed, we have $(a, b) \in \impGSI.\hb$, as required.\\

\noindent \textbf{RTS. (Part \ref{lem:si_soundness_co})}\\
For the first conjunct, pick an arbitrary $(a, b) \in \absGSI.\cot$; we are then required to show that $(a, b) \in \impGSI.\hb$.

From the definition of  $\absGSI.\cot$ and the construction of $\absGSI$ we know there exist $\txid_1, \txid_2, d, e$ such that $\txid_1 \ne \txid_2$, $(d, e) \in \absGSI.\co$, $a, d \in \Transactions_{\txid_1}$ and $b, e \in  \Transactions_{\txid_2}$.  Let $\loc{d} = \loc{e} = \code y$. 
We then know 
$a \relarrow{\impGSI.\po} \txid_1.\mathit{wu}_{\code y}$, 
$\txid_1.\mathit{rl}_{\code y} \relarrow{\impGSI.\po} d \relarrow{\impGSI.\po} \txid_1.\mathit{wu}_{\code y}$,
$\txid_2.\mathit{rl}_{\code y} \relarrow{\impGSI.\po} b$
and $\txid_2.\mathit{rl}_{\code y} \relarrow{\impGSI.\po} e \relarrow{\impGSI.\po} \txid_2.\mathit{wu}_{\code y}$.

From \cref{lem:si_lock_hb} we then know $\txid_1.\mathit{wu}_{\code y} \relarrow{\impGSI.\hb} \txid_2.\mathit{rl}_{\code y}$.
 As such, we have $a \relarrow{\impGSI.\po}  \txid_1.\mathit{wu}_{\code y} \relarrow{\impGSI.\hb} \txid_2.\mathit{rl}_{\code y} \relarrow{\impGSI.\po}  b$. As $\impGSI.\po \in \impGSI.\hb $ and $\impGSI.\hb$ is transitively closed, we have $a \relarrow{\impGSI.\hb} b$, as required.

For the second conjunct,  pick an arbitrary $c$ such that $(b, c) \in \absGSI.\frt$. We are then required to show that $(a, c) \in \impGSI.\hb$.
From the definition of $\absGSI.\frt$ we then know there exist $\txid_3, r, w$ such that $\txid_3 \ne \txid_2$, $r \in  \Transactions_{\txid_2}$, $c, w \in  \Transactions_{\txid_3}$ and $(r, w) \in \absGSI.\fr$. 
Let $\loc r = \loc w = \code x$. 
We then know that $\code x \in \readset_{\txid_2} \cup \writeset_{\txid_2}$, $\code x \in \writeset_{\txid_3}$,
and that there exists $w_x$ such that $(w_x, r) \in \absGSI.\rf$ and $(w_x, w) \in \absGSI.\co$.
From the construction of $\absGSI.\co$ we then have $(w_x, w) \in \impGSI.\co$.
From the construction of $\absGSI.\rf$ there are now two cases to consider:
1) $(w_x, \txid_2.\mathit{rs}_{\code x}) \in \impGSI.\rf$; or 2) $\code x \in \writeset_{\txid_2}$, $w_x \in \txid_2$ and $(w_x, r) \in \impGSI.\po$.

In case (1) we then have $(\txid_2.\mathit{rs}_{\code x}, w) \in  \impGSI.\fr$. 
Moreover, from \cref{lem:si_lock_hb} we have either i) $\x \not\in \writeset_{\txid_2} \land \txid_2.\mathit{ru}_{\code x} \relarrow{\impGSI.\hb} \txid_3.\mathit{pl}_{\code x}$; or ii) $\x \in \writeset_{\txid_2} \land \txid_2.\mathit{wu}_{\code x} \relarrow{\impGSI.\hb} \txid_3.\mathit{rl}_{\code x}$.
%
%
%
In case (1.i), from the proof of the first conjunct recall that we have $a \relarrow{\impGSI.\po} \txid_1.\mathit{wu}_{\code y} \relarrow{\impGSI.\hb} \txid_2.\mathit{rl}_{\code y}$. Also, from the shape of $\impGSI$ traces we know that $\txid_2.\mathit{rl}_{\code y} \relarrow{\impGSI.\po} \txid_2.\mathit{ru}_{\code x}$.  As such, we have 
$a \relarrow{\impGSI.\po}  \txid_1.\mathit{wu}_{\code y} \relarrow{\impGSI.\hb}
\txid_2.\mathit{rl}_{\code y} \relarrow{\impGSI.\po} \txid_2.\mathit{ru}_{\code x} \relarrow{\impGSI.\hb} \txid_3.\mathit{pl}_{\code x} \relarrow{\impGSI.\po} c$. That is, as $\impGSI.\po \subseteq \impGSI.\hb$ and $\impGSI$ is transitively closed, we have $(a, c) \in \impGSI.\hb$, as required.

Similarly, in case (1.ii), from the proof of the first conjunct recall that we have $a \relarrow{\impGSI.\po}  \txid_1.\mathit{wu}_{\code y} \relarrow{\impGSI.\hb} \txid_2.\mathit{rl}_{\code y}$. Also, from the shape of $\impGSI$ traces we know that $\txid_2.\mathit{rl}_{\code y} \relarrow{\impGSI.\po} \txid_2.\mathit{wu}_{\code x}$.  As such, we have 
$a \relarrow{\impGSI.\po}  \txid_1.\mathit{wu}_{\code y} \relarrow{\impGSI.\hb}
\txid_2.\mathit{rl}_{\code y} \relarrow{\impGSI.\po} \txid_2.\mathit{wu}_{\code x} \relarrow{\impGSI.\hb} \txid_3.\mathit{rl}_{\code x} \relarrow{\impGSI.\po} c$. That is, as $\impGSI.\po \subseteq \impGSI.\hb$ and $\impGSI$ is transitively closed, we have $(a, c) \in \impGSI.\hb$, as required.

%
In case (2) from \cref{lem:si_lock_hb}  we have $\txid_2.\mathit{wu}_{\code x} \relarrow{\impGSI.\hb} \txid_3.\mathit{rl}_{\code x}$. Recall that 
with the first conjunct we demonstrated that $(a, b) \in \impGSI.\hb$. 
Moreover, we have $(b, \txid_2.\mathit{wu}_{\code x}) \in \impGSI.\po$ and $(\txid_3.\mathit{rl}_{\code x}, c) \in \impGSI.\po$. 
We thus have, $a \relarrow{\impGSI.\hb} b  \relarrow{\impGSI.\po} \txid_2.\mathit{wu}_{\code x} \relarrow{\impGSI.\hb}\txid_3.\mathit{rl}_{\code x} \relarrow{\impGSI.\po} c$. That is, as $\impGSI.\po \subseteq \impGSI.\hb$ and $\impGSI$ is transitively closed, we have $(a, b) \in \impGSI.\hb$, as required.\\

\noindent \textbf{RTS. (Part \ref{lem:si_soundness_rf})}\\
For the first conjunct, pick an arbitrary $(a, b) \in \absGSI.\rft$; we are then required to show that $(a, b) \in \impGSI.\hb$.

From the definition of  $\absGSI.\rft$ and the construction of $\absGSI$ we know there exist $\txid_1, \txid_2, w_y, r_y$ such that $\txid_1 \ne \txid_2$, $(w_y, r_y) \in \absGSI.\rf$, $a, w_y \in \txid_1$ and $b, r_y \in \txid_2$.  Let $\loc{w} = \loc{r} = \code y$. 
We then know $\txid_1.\mathit{rl}_{\code y} \relarrow{\impGSI.\po} \txid_1.\mathit{pl}_{\code y} \relarrow{\impGSI.\po} w_y \relarrow{\impGSI.\po} \txid_1.\mathit{wu}_{\code y}$ and $a \relarrow{\impGSI.\po} \txid_1.\mathit{wu}_{\code y}$.

Let $w_y = \writeE{\rel}{\code y}{v}$ and $r_y = \readE{\acq}{\code y}{v}$. From the construction of $\absGSI$ we know that $(w_y, \txid_2.\mathit{rs}_{\code y}) \in \impGSI.\rf$.
On the other hand, from \cref{lem:si_lock_hb} we have $\txid_1.\mathit{wu}_{\code y} \relarrow{\impGSI.\hb} \txid_2.\mathit{rl}_{\code y}$. 
We then have $a \relarrow{\impGSI.\po} \txid_1.\mathit{wu}_{\code y}   \relarrow{\impGSI.\hb} \txid_2.\mathit{rl}_{\code y}  \relarrow{\impGSI.\po}  b$. That is, as $ \impGSI.\po \subseteq  \impGSI.\hb$, we have $a \relarrow{\impGSI.\hb} b$, as required.

%
%

For the second conjunct, pick an arbitrary $c$ such that $(b, c) \in \absGSI.\frt$. We are then required to show that $(a, c) \in \impGSI.\hb$.
From the definition of $\absGSI.\frt$ we then know there exist $\txid_3, r, w$ such that $\txid_3 \ne \txid_2$, $r \in  \Transactions_{\txid_2}$, $c, w \in  \Transactions_{\txid_3}$ and $(r, w) \in \absGSI.\fr$. 
Let $\loc r = \loc w = \code x$. 
We then know that $\code x \in \readset_{\txid_2} \cup \writeset_{\txid_2}$, $\code x \in \writeset_{\txid_3}$,
and that there exists $w_x$ such that $(w_x, r) \in \absGSI.\rf$ and $(w_x, w) \in \absGSI.\co$.
From the construction of $\absGSI.\co$ we then have $(w_x, w) \in \impGSI.\co$.
From the proof of the first conjunct recall that we have $a \relarrow{\impGSI.\po}  \txid_1.\mathit{wu}_{\code y} \relarrow{\impGSI.\hb} \txid_2.\mathit{rl}_{\code y}$.

From the construction of $\absGSI.\rf$ there are now two cases to consider:
1) $(w_x, \txid_2.\mathit{rs}_{\code x}) \in \impGSI.\rf$; or 2) $\code x \in \writeset_{\txid_2}$, $w_x \in \txid_2$ and $(w_x, r) \in \impGSI.\po$.

In case (1) we then have $(\txid_2.\mathit{rs}_{\code x}, w) \in  \impGSI.\fr$. 
Moreover, from \cref{lem:si_lock_hb} we have either: i) $\txid_2.\mathit{ru}_{\code x} \relarrow{\impGSI.\hb} \txid_3.\mathit{pl}_{\code x}$; or ii) $\txid_2.\mathit{wu}_{\code x} \relarrow{\impGSI.\hb} \txid_3.\mathit{rl}_{\code x}$.

In case (1.i), 
from the shape of $\impGSI$ traces we know that $\txid_2.\mathit{rl}_{\code y} \relarrow{\impGSI.\po} \txid_2.\mathit{ru}_{\code x}$.  
As such, we have 
$a \relarrow{\impGSI.\po}  \txid_1.\mathit{wu}_{\code y} \relarrow{\impGSI.\hb}
\txid_2.\mathit{rl}_{\code y} \relarrow{\impGSI.\po} \txid_2.\mathit{ru}_{\code x} \relarrow{\impGSI.\hb} \txid_3.\mathit{pl}_{\code x} \relarrow{\impGSI.\po} c$. That is, as $\impGSI.\po \subseteq \impGSI.\hb$ and $\impGSI$ is transitively closed, we have $(a, c) \in \impGSI.\hb$, as required.

Similarly, in case (1.ii), from the shape of $\impGSI$ traces we know that $\txid_2.\mathit{rl}_{\code y} \relarrow{\impGSI.\po} \txid_2.\mathit{wu}_{\code x}$.  As such, we have 
$a \relarrow{\impGSI.\po}  \txid_1.\mathit{wu}_{\code y} \relarrow{\impGSI.\hb}
\txid_2.\mathit{rl}_{\code y} \relarrow{\impGSI.\po} \txid_2.\mathit{wu}_{\code x} \relarrow{\impGSI.\hb} \txid_3.\mathit{rl}_{\code x} \relarrow{\impGSI.\po} c$. That is, as $\impGSI.\po \subseteq \impGSI.\hb$ and $\impGSI$ is transitively closed, we have $(a, c) \in \impGSI.\hb$, as required.

In case (2), from \cref{lem:si_lock_hb} we have $\txid_2.\mathit{wu}_{\code x} \relarrow{\impGSI.\hb} \txid_3.\mathit{rl}_{\code x}$. 
Recall that 
with the first conjunct we demonstrated that $(a, b) \in \impGSI.\hb$. 
Moreover, we have $(b, \txid_2.\mathit{wu}_{\code x}) \in \impGSI.\po$ and $(\txid_3.\mathit{rl}_{\code x}, c) \in \impGSI.\po$. 
We thus have, $a \relarrow{\impGSI.\hb} b  \relarrow{\impGSI.\po} \txid_2.\mathit{wu}_{\code x} \relarrow{\impGSI.\hb}\txid_3.\mathit{rl}_{\code x} \relarrow{\impGSI.\po} c$. That is, as $\impGSI.\po \subseteq \impGSI.\hb$ and $\impGSI$ is transitively closed, we have $(a, b) \in \impGSI.\hb$, as required.

\end{proof}

\end{lemma}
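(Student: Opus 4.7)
The plan is to reduce each of the six claims to applications of \cref{lem:si_lock_hb}, exploiting two structural facts about implementation traces.  First, every event $e \in \Transactions_\txid$ lies $\po$-after $\txid$'s acquire $\txid.\mathit{rl}_\x$ for every location $\x$ accessed by $\txid$, and $\po$-before $\txid.\mathit{wu}_\x$ whenever $\x \in \writeset_\txid$.  Second, within a single transaction's trace, all reader-acquire events precede all reader-release and promotion events, which in turn precede the body events, which precede all writer-release events.  The first conjunct of Part~\ref{lem:si_soundness_po} is then immediate from $\absGSI.\po = \coerce{\impGSI.\po}{\absGSI.\Events}$, since $\absGSI.\pot \subseteq \absGSI.\po$.

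For $\absGSI.\cot \subseteq \impGSI.\hb$ (first conjunct of Part~\ref{lem:si_soundness_co}) and $\absGSI.\rft \subseteq \impGSI.\hb$ (first conjunct of Part~\ref{lem:si_soundness_rf}), I would unfold $(a,b)$ to witnesses $d \in \Transactions_{\txid_a}$ and $e \in \Transactions_{\txid_b}$ related by $\absGSI.\co$ or $\absGSI.\rf$ on some location $\x$, apply part~(\ref{lem:si_lock_hb_co}) or part~(\ref{lem:si_lock_hb_rf}) of \cref{lem:si_lock_hb} to obtain $\txid_a.\mathit{wu}_\x \relarrow{\impGSI.\hb} \txid_b.\mathit{rl}_\x$, and close with $a \relarrow{\impGSI.\po} \txid_a.\mathit{wu}_\x \relarrow{\impGSI.\hb} \txid_b.\mathit{rl}_\x \relarrow{\impGSI.\po} b$, using $\impGSI.\po \subseteq \impGSI.\hb$ and transitivity of $\impGSI.\hb$.

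For each of the three $\absGSI.\frt$-compositions I would fix $(c,b) \in \absGSI.\frt$ and unfold it as $r \in \Transactions_{\txid_c}$, $w \in \Transactions_{\txid_b}$ with a witness $w_\x$ satisfying $(w_\x, r) \in \absGSI.\rf$ and $(w_\x, w) \in \absGSI.\co$.  The construction of $\absGSI.\rf$ then splits into two sub-cases.  If $w_\x \notin \Transactions_{\txid_c}$, part~(\ref{lem:si_lock_hb_fr}) of \cref{lem:si_lock_hb} applied to the $\absGSI.\fr$ edge $(r,w)$ supplies either $\txid_c.\mathit{wu}_\x \relarrow{\impGSI.\hb} \txid_b.\mathit{rl}_\x$ (when $\x \in \writeset_{\txid_c}$) or $\txid_c.\mathit{ru}_\x \relarrow{\impGSI.\hb} \txid_b.\mathit{pl}_\x$ (otherwise).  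If instead $w_\x \in \Transactions_{\txid_c}$ (forcing $\x \in \writeset_{\txid_c}$), part~(\ref{lem:si_lock_hb_co}) applied to $(w_\x, w) \in \absGSI.\co$ directly yields $\txid_c.\mathit{wu}_\x \relarrow{\impGSI.\hb} \txid_b.\mathit{rl}_\x$.  To splice this lock-level edge with the prefix into $c$, I would use the intra-transaction trace order: the $\impGSI.\hb$-target of the prefix (either an earlier $c$-body event reachable via $\po$ into a writer-release, or a reader-acquire of $\txid_c$ obtained from the first-conjunct chains) is $\po$-before the specific $\mathit{ru}_\x^{\txid_c}$ or $\mathit{wu}_\x^{\txid_c}$ supplied by the lemma, since all acquires precede all releases/promotions in $\txid_c$'s trace.

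The main obstacle I anticipate is the case bookkeeping: selecting the correct release endpoint ($\mathit{ru}_\x$ vs $\mathit{wu}_\x$) and acquire endpoint ($\mathit{rl}_\x$ vs $\mathit{pl}_\x$) from \cref{lem:si_lock_hb} according to whether each location lies in the read-only, write-only, or read-and-write portion of each transaction's access sets, combined with the external-vs-internal split for the $\rf$ witness in the $\absGSI.\frt$ compositions.  Although each individual sub-case reduces to a short $\impGSI.\po$--$\impGSI.\hb$--$\impGSI.\po$ chain, the three outer compositions (with $\pot$, $\cot$, $\rft$) each spawn up to four branches, and one must verify uniformly that the chosen lock events lie on a valid $\impGSI.\hb$-path from $a$ to $b$ using the trace structure described on page~\pageref{par:lock_traces}.
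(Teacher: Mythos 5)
Your proposal is correct and follows essentially the same route as the paper's proof: the first conjuncts reduce to single applications of \cref{lem:si_lock_hb} (parts \ref{lem:si_lock_hb_co} and \ref{lem:si_lock_hb_rf}) spliced with intra-trace $\po$ edges, and each $\frt$-composition is handled exactly as you describe, by unfolding the $\fr$ witness, splitting on whether the $\rf$-source lies inside the reading transaction (applying part \ref{lem:si_lock_hb_fr} in the external case and part \ref{lem:si_lock_hb_co} in the internal case), and closing the chain through the appropriate release/acquire or release/promote lock events using the trace ordering.
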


\begin{theorem}[Soundness]
For all execution graphs $\impGSI$ of the implementation and their counterpart SI execution graphs $\absGSI$ constructed as above,
\[
	\consistent{\impGSI} \Rightarrow \sicon
\]
\begin{proof}
Pick an arbitrary execution graph $\impGSI$ of the implementation such that $\consistent{\impGSI}$ holds, and its associated SI execution graph $\absGSI$ constructed as described above. \\

\noindent \textbf{RTS. $\acyc{\absGSI.((\pot \cup \rft \cup \cot); \refC{\frt})} $}\\ 
We proceed by contradiction. Let us assume 
\begin{align}
	\consistent{\impGSI} \land \neg \acyc{\absGSI.((\pot \cup \rft \cup \cot); \refC{\frt})} 
	\label{der:si_soundness_ass}
\end{align}	
From the definition of $\consistent{.}$ we then know that there exists $a$ such that $(a, a) \in \transC{\big((\absGSI.\pot \cup \absGSI.\rft \cup \absGSI.\cot); \absGSI.\frt \big)}$. Consequently, from \cref{lem:si_soundness} we have $(a, a) \in \impGSI.\transC \hb$. That is, since $\impGSI.\hb$ is transitively closed, we have $(a, a) \in \impGSI.\hb$, contradicting our assumption that $\impGSI$ is RA-consistent.\\ 

\noindent \textbf{RTS. $\rfi \cup \coi \cup \fri \subseteq \po$}\\
Follows immediately from the construction of $\absGSI$.

\end{proof}

\end{theorem}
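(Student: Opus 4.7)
The plan is to split the proof along the two clauses of \cref{def:si_consistency}: the internal axiom \ref{ax:si_int} and the external acyclicity axiom \ref{ax:si_ext}. The heavy lifting is already carried out by \cref{lem:si_soundness}, which embeds every external abstract edge of $\absGSI$ into $\impGSI.\hb$, so the theorem itself should reduce to a short packaging argument plus a direct verification of the internal axiom.

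For \ref{ax:si_int}, I would read off the inclusion $\absGSI.(\rfi \cup \coi \cup \fri) \suq \absGSI.\po$ from the construction. Since $\absGSI.\po$ is $\coerce{\impGSI.\po}{\absGSI.\Events}$ and the trace $\trace'_{\txid}$ lays out all events of a single transaction in a linear $\impGSI.\po$-order, every intra-transaction relation is inherited from $\impGSI.\po$: $\absGSI.\coi \suq \absGSI.\po$ holds because within a transaction the writes are linearly $\impGSI.\po$-ordered; $\absGSI.\rfi \suq \absGSI.\po$ follows from the internal-reads-from case of $\mathsf{RF}_{\txid}$ where $w \relarrow{\impGSI.\po} t'_j$ is explicitly enforced; and $\absGSI.\fri \suq \absGSI.\po$ is then immediate from these two together with the totality of $\coi$ on writes to each location.

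For \ref{ax:si_ext}, I would proceed by contradiction. Suppose some $a$ satisfies $(a,a) \in \transC{(\absGSI.(\pot \cup \rft \cup \cot); \refC{\absGSI.\frt})}$. The cycle decomposes into elementary segments of one of the forms $\pot$, $\rft$, $\cot$, $\pot;\frt$, $\rft;\frt$, or $\cot;\frt$, each of which sits inside $\impGSI.\hb$ by \cref{lem:si_soundness}. Composing them along the cycle and invoking transitivity of $\impGSI.\hb$ yields $(a,a) \in \impGSI.\hb$, contradicting RA-consistency of $\impGSI$ via \ref{ax:ra_acyc}. The real work has already been absorbed into \cref{lem:si_soundness} (and through it \cref{lem:si_lock_hb}), which exploit the eager implementation's careful sequencing of lock operations---reader acquisition, snapshot, release of reader locks on $\readset \setminus \writeset$, and only then promotion to writer locks on $\writeset$---to induce the correct synchronization between conflicting transactions; given those lemmas, the theorem itself is essentially a one-line transitivity-and-contradiction argument, and no further obstacle arises at this level.
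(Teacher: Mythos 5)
Your proposal is correct and follows essentially the same route as the paper: the external acyclicity axiom is discharged by decomposing a hypothetical cycle into the segment shapes covered by \cref{lem:si_soundness}, embedding each into $\impGSI.\hb$, and contradicting RA-consistency, while the internal axiom is read off the construction of $\absGSI$. If anything, your explicit enumeration of the six segment forms and your justification of $\rfi \cup \coi \cup \fri \subseteq \po$ are slightly more careful than the paper's terse treatment.
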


\subsection{Implementation Completeness}
\label{subapp:si_completeness}
In order to establish the completeness of our implementation, it suffices to show that given an SI-consistent execution graph $\absGSI = (\Events, \po, \rf, \co, \silo)$, we can construct a corresponding RA-consistent execution graph $\impGSI$ of the implementation.
Before proceeding with the construction of a corresponding implementation graph, we describe several auxiliary definitions.

Given an abstract transaction class $\Transactions_{\txid} \in \absGSI.\Transactions/\st$, we write $\writeset_{\txid}$ for the set of locations written to by $\Transactions_{\txid}$: $\writeset_{\txid} = \bigcup_{e \in \Transactions_{\txid}\cap \Writes} \loc{e}$.
Similarly, we write $\readset_{\txid}$ for the set of locations read from by $\Transactions_{\txid}$,
\emph{prior to} being written by $\Transactions_{\txid}$. 
For each location \code x read from by $\Transactions_{\txid}$, we additionally record the first read event in $\Transactions_{\txid}$ that retrieved the value of \code x.
That is, 
\[
\readset_{\Transactions_{\txid}} \eqdef
\setcomp{
	(\code x, r)
}{
	r \in \Transactions_i \cap \Reads_{\code x}
	\land \neg\exsts{e \in \Transactions_{\txid} \cap \Events_{\code x}}
	e \relarrow{\po} r
}
\]
Note that the execution trace for each transaction $\Transactions_{\txid} \in \absGSI.\Transactions/\st$ is of the form 
$\trace'_{\txid} = \mathit{t}'_1 \relarrow{\imm \po} \cdots \relarrow{\imm \po} \mathit{t}'_k$ for some $k$, where each $\mathit{t}'_i$ is a read or write event.
As such, we have $\absGSI.\Events = \absGSI.\Transactions = \bigcup_{\Transactions_{\txid} \in \absGSI.\Transactions/\st} \Transactions_{\txid} = \trace'_{\txid}.\Events$.

Let $\readset_{\Transactions_{\txid}} \cup \writeset_{\Transactions_{\txid}} = \{\code x_1 \cdots \code x_n\}$.
We then construct the implementation trace $\trace_{\txid}$ as:
\[
	\trace_{\txid} = 
	\mathit{Rs}
	\relarrow{\imm \po} \mathit{RUs}
	\relarrow{\imm \po} \mathit{PLs}
	\relarrow{\imm \po} \mathit{Ts}
	\relarrow{\imm \po} \mathit{Us}
\]
where
\begin{itemize}
	\item $\mathit{Rs} = \mathit{rl}_{\code x_1} \relarrow{\imm \po} \cdots \relarrow{\imm \po} \mathit{rl}_{\code x_n} \relarrow{\imm \po} \mathit{S}_{\code x_1} \relarrow{\imm \po} \cdots \relarrow{\imm \po} \mathit{S}_{\code x_n}$, where the identifiers of all constituent events of $\mathit{Rs}$ are picked fresh, and
\[
\begin{array}{@{} r @{\hspace{2pt}} l @{}}
	\mathit{rl}_{\code x_j} = & 
	\rlockE{\x_j} 
%
	\qquad
	\mathit{S}_{\code x_j} = 
	\begin{cases}
		\mathit{rs}_{\code x_j}
		\relarrow{\imm \po}
		\mathit{ws}_{\code x_j}
		& \text{if } \exsts{r} (\code x_j, r) \in \readset_{\txid} \land \rval{r} = v_j \\
		
		\emptyset & \text{otherwise}
	\end{cases} 
\end{array}
\]		
with $\mathit{rs}_{\code x_j} \eqdef \readE{\acq}{\code x_j}{v_j}$
and $\mathit{ws}_{\code x_j} \eqdef \writeE{\rel}{\code{s[x}_j\code ]}{v_j}$.
	\item $\mathit{RUs} = \mathit{ru}_{\code x_1} \relarrow{\imm \po} \cdots \relarrow{\imm \po} \mathit{ru}_{\code x_n}$, where the identifiers of all constituent events of $\mathit{RUs}$ are picked fresh, and for all $j \in \{1 \cdots n\}$:
	\[
	\begin{array}{l}
		\mathit{ru}_{\code x_j} = 
		\begin{cases}
			\begin{array}{@{} l @{}}
				\runlockE{\x_j}
			\end{array}
			& \text{ if } \code x_j \not\in \writeset_{\txid}  \\\\
			\emptyset
			& \text{ otherwise}
		\end{cases}
	\end{array}	
	\]
	\item $\mathit{PLs} = \mathit{pl}_{\code x_1} \relarrow{\imm \po} \cdots \relarrow{\imm \po} \mathit{pl}_{\code x_n}$, where the identifiers of all constituent events of $\mathit{PLs}$ are picked fresh, and for all $j \in \{1 \cdots n\}$:
	\[
	\begin{array}{l}
		\mathit{pl}_{\code x_j} = 
		\begin{cases}
			\plockE{\x_j}
			&  \text{ if } \code x_j \in \writeset_{\txid} \\\\
			\emptyset
			& \text{otherwise}			
		\end{cases}
	\end{array}	
	\]
	\item $\mathit{Ts} = \mathit{t}_1 \relarrow{\imm \po} \cdots \relarrow{\imm \po} \mathit{t}_k$, where for all $m \in \{1 \cdots k\}$:
	\[
	\mathit{t}_m = 
	\begin{cases}
		\readE{-}{\code{s[x}_n \code{]}}{v_n} & \text{if } \mathit{t}'_m {=} \readE{-}{\code x_n}{v_n} \\
		\writeE{\rel}{\code x_n}{v_n} \relarrow{\imm \po} \writeE{-}{\code{s[x}_n \code ]}{v_n}
		& \text{if } \mathit{t}'_m {=} \writeE{\rel}{\code x_n}{v_n} \\
	\end{cases}
	\]
	such that in the first case the identifier of $\mathit{t}_m$ is that of $\mathit{t}'_m$; and in the second case the identifier of the first event in $\mathit{t}_m$ is that of $\mathit{t}'_m$ and the identifier of the second event is picked fresh.
	\item $\mathit{Us} = \mathit{wu}_{\code x_1} \relarrow{\imm \po} \cdots \relarrow{\imm \po} \mathit{wu}_{\code x_n}$, where the identifiers of all constituent events of $\mathit{Us}$ are picked fresh, and 
	\[
		\mathit{wu}_{\code x_j} = 
		\begin{cases}
			\wunlockE{\x_j}
			& \text{ if } \code x_j \in \writeset_{\txid}  \\
			\emptyset
			& \text{ otherwise}
		\end{cases}
	\]
\end{itemize}
We use the $\txid.$ prefix to project the various events of the implementation trace $\trace_\txid$ (e.g.~$\txid.\mathit{rl}_{\code x_j}$). 

Given the transaction classes $\TClasses$ of $\absGSI$, we construct a strict total order $\TCO: \TClasses \times \TClasses$ as an extension of $\absGSI.\transC{(\refC{\frt}; (\pot \cup \cot \cup \rft))}$. That is, construct $\TCO$ as a total order such that:
\[
	\for{e, e'} (e, e') \in \absGSI.\transC{(\refC{\frt}; (\pot \cup \cot \cup \rft))} \Rightarrow (\class e \st, \class{e'} \st) \in \TCO
\]
Recall that since $\absGSI$ is SI-consistent, we know $\acyc{(\pot \cup \cot \cup \rft); \refC{\frt}}$ holds, i.e.\ $\acyc{\refC{\frt}; (\pot \cup \cot \cup \rft)}$.
As such, it is always possible to extend $\transC{(\refC{\frt}; (\pot \cup \cot \cup \rft)}$ to a total order as described above. 

For each location \code x, let $\mathit{WT}_{\code x} \eqdef \setcomp{\Transactions_{\txid} \in \TClasses}{\Writes_{\code x} \cap \Transactions_{\txid} \ne \emptyset}$ denote those transactions that write to \code x. 
We then define $\TMO[\code x] \eqdef \coerce{\TCO}{\mathit{WT}_{\code x}}$ as a strict total order on $\mathit{WT}_{\code x}$.
Given a strict total order $\makerel r$, we write $\itemAt{\makerel r}{i}$ for the $i$\textsuperscript{th} item of $\makerel r$, indexed from 0 (e.g.~$\itemAt{\TMO[\code x]}{i}$). 
For $i \in \Nats$, we then define:
\[
\begin{array}{@{} r @{\hspace{2pt}} l @{}}
	\sphase[\code x]{0} \eqdef 
	&
	\setcomp{
		\Transactions_{\txid}
	}{
		\itemAt{\TMO[\code x]}{0} {=} \Transactions_{\txid}
		\lor
		\exsts{r \in \Transactions_{\txid}} (\mathit{init}_{\code x}, r) \in \absGSI.\rf
	} \\
	\sphase[\code x]{i {+} 1} \eqdef 	
	&
	\setcomp{
		\Transactions_{\txid}
	}{
		\itemAt{\TMO[\code x]}{i {+} 1} {=} \Transactions_{\txid}
		\lor 
		\exsts{r \in \Transactions_{\txid}} \exsts{w \in \itemAt{\TMO[\code x]}{i}} (w, r) \in \absGSI.\rf \setminus \rfi
	} 
\end{array}	
\]
where $\mathit{init}_{\x}$ denotes the write event initialising the value of \x (with zero). 

For all locations $\code x$ and $i \in \Nats$, let $\phase{i} \eqdef \coerce{\TCO}{\sphase i}$. 
Note that for each $i \in \Nats$, the $\sphase{i}$ contains at most one transaction that writes to \code x. We denote this transaction by $\writer{\phase{i}}$. 
For each lock \x we then define:

\[
\begin{array}{@{} r @{\hspace{2pt}} l @{}}
\LO_{\x} \eqdef
& \begin{array}[t]{@{} l @{\hspace{1pt}} l @{}} 
	& \setcomp{
		(\writer{\phase{i}}.\mathit{pl}_{\code x}, \writer{\phase{i}}.\mathit{wu}_{\code x}), \\
		(\writer{\phase{i}}.\mathit{pl}_{\code x}, \itemAt{\phase{k}}{j}.\mathit{rl}_{\code x}), \\
		(\writer{\phase{i}}.\mathit{pl}_{\code x}, \itemAt{\phase{k}}{j}.\mathit{ru}_{\code x}),\\
		(\writer{\phase{i}}.\mathit{pl}_{\code x}, \writer{\phase{k}}.\mathit{pl}_{\code x}),\\
		(\writer{\phase{i}}.\mathit{pl}_{\code x}, \writer{\phase{k}}.\mathit{wu}_{\code x}),\\
		
		(\writer{\phase{i}}.\mathit{wu}_{\code x}, \itemAt{\phase{k}}{j}.\mathit{rl}_{\code x}),\\
		(\writer{\phase{i}}.\mathit{wu}_{\code x}, \itemAt{\phase{k}}{j}.\mathit{ru}_{\code x}),\\
		(\writer{\phase{i}}.\mathit{wu}_{\code x}, \writer{\phase{k}}.\mathit{pl}_{\code x}),\\
		(\writer{\phase{i}}.\mathit{wu}_{\code x}, \writer{\phase{k}}.\mathit{wu}_{\code x})
	} 
	{ 
		i \in \Nats
		\land k > i 
		\land 0 \leq j < \length{\sphase{k}} 
	} \\
	\cup 
	& \setcomp{
		(\itemAt{\phase{i}}{j}.\mathit{rl}_{\code x}, \writer{\phase{k}}.\mathit{pl}_{\code x}), \\
		(\itemAt{\phase{i}}{j}.\mathit{ru}_{\code x}, \writer{\phase{k}}.\mathit{pl}_{\code x}), \\
		(\itemAt{\phase{i}}{j}.\mathit{rl}_{\code x}, \writer{\phase{k}}.\mathit{wu}_{\code x}), \\
		(\itemAt{\phase{i}}{j}.\mathit{ru}_{\code x}, \writer{\phase{k}}.\mathit{wu}_{\code x}),
	} 
	{ 
		i \in \Nats
		\land k \geq i 
		\land 0 \leq j < \length{\sphase{i}} 
	} \\
\end{array}  \\\\
	\RLO_{\x} \eqdef
	& \LO_{\x}
	\cup 
	\setcomp{
		(\itemAt{\phase{i}}{j}.\mathit{rl}_{\code x}, \itemAt{\phase{i}}{j}.\mathit{ru}_{\code x}), \\
		(\itemAt{\phase{i}}{j}.\mathit{rl}_{\code x}, \itemAt{\phase{i'}}{j'}.\mathit{rl}_{\code x}), \\
		(\itemAt{\phase{i}}{j}.\mathit{rl}_{\code x}, \itemAt{\phase{i'}}{j'}.\mathit{ru}_{\code x}), \\
		(\itemAt{\phase{i}}{j}.\mathit{ru}_{\code x}, \itemAt{\phase{i'}}{j'}.\mathit{rl}_{\code x}), \\
		(\itemAt{\phase{i}}{j}.\mathit{ru}_{\code x}, \itemAt{\phase{i'}}{j'}.\mathit{ru}_{\code x}), \\
	} 
	{ 
		i, i' \in \Nats
		\land 0 \leq j < \length{\sphase{i}} 
		\land 0 \leq j' < \length{\sphase{i'}} \\
		\land\, \big((i = i' \land j < j') \lor (i < i')\big)
	} 
\end{array}
\]
\begin{remark}
Let $\lo_1 \eqdef \bigcup_{\x \in \Locs} \LO_\x$ and $\lo_2 \eqdef  \bigcup_{\x \in \Locs} \RLO_\x$. Note that both $\lo_1$ and $\lo_2$ satisfy the conditions stated in \cref{def:si_implementation_consistency}.
The $\lo_2$ additionally satisfies the `read-read-synchronisation' property in (\ref{ax:lock_rsync}).
In what follows we demonstrate that given an SI-consistent execution graph $\absGSI$, it is always possible to construct an RA-consistent execution graph $\impGSI$ of the implementation with its lock order defined as $\lo_2$. Note that as $\lo_1 \subseteq \lo_2$, it is straightforward to show that replacing $\lo_2$ in such a $\impGSI$ with $\lo_1$, preserves the RA-consistency of $\impGSI$, as defined in \cref{def:si_implementation_consistency}. In other words, as $\lo_1 \subseteq \lo_2$, we have:
\[
\begin{array}{@{} l @{}}
	\acyc{\hbloc \cup \co \cup \fr} \text{ with } \hb \eqdef \transC{(\po \cup \rf \cup \lo_2)} 
	 \Rightarrow \\
	\hspace{100pt}
	\acyc{\hbloc \cup \co \cup \fr} \text{ with } \hb \eqdef \transC{(\po \cup \rf \cup \lo_1)} 
\end{array}	
\]
As such, by establishing the completeness of our implementation with respect to $\lo_2$, we also establish its completeness with respect to $\lo_1$. 
In other words, we demonstrate the completeness of our implementation with respect to both lock implementations presented earlier in \cref{app:lock_implementations}.
\end{remark}
We now demonstrate the completeness of our implementation. 
Given an SI-consistent graph,
we construct an implementation graph $\impGSI$ as follows and demonstrate that it is RA-consistent. 
\begin{itemize}
	\item $\impGSI.\Events = \bigcup\limits_{\Transactions_\txid \in \absGSI.\Transactions/\st} \trace_\txid.\Events$, with the $\tx{e} = 0$, for all $e \in \impGSI.\Events$. \\
	Observe that 
	$\absGSI.\Events \subseteq \impGSI.\Events$.
	\item $\impGSI.\po$ is defined as $\absGSI.\po$ extended by the $\po$ for the additional events of $\impGSI$, given by each $\trace_\txid$ trace defined above. Note that $\impGSI.\po$ does not introduce additional orderings between events of $\absGSI.\Events$. That is, $\for{a, b \in \absGSI.\Events} (a, b) \in \absGSI.\po \Leftrightarrow (a, b) \in \impGSI.\po$.
	\item 
	$\impGSI.\rf = 
	\bigcup_{\code x \in \textsc{Locs}} 
	\setcomp{
		(w, \txid.\mathit{rs}_{\code x})	
	}{
		\exsts{r} (\code x, r) \in \readset_{\txid}
		\land (w, r) \in \absGSI.\rf	
	}$.
	\item $\impGSI.\co = \absGSI.\co$ 
	
	\item $\impGSI.\silo = \bigcup_{\x \in \Locs} \RLO_\x$, with $\RLO_\x$ as defined above.
\end{itemize}
\paragraph{Notation} 
Given an implementation graph $\impGSI$ as constructed above (with $\impGSI.\Transactions = \emptyset$), and a relation $\makerel r \subseteq \impGSI.\Events \times \impGSI.\Events$, we override the $\tout{\makerel r}$ notation and write $\tout{\makerel r}$ for:
\[
	\setcomp{(a, b) \in \makerel r}{\exsts{\txid_a, \txid_b} a \in \trace_{\txid_a}.\Events \land b \in \trace_{\txid_b}.\Events \land \txid_a \ne \txid_b}
\]
Analogously, we write $\tin{\makerel r}$ for $\setcomp{(a, b) \in \makerel r}{\exsts{\txid} a, b \in \trace_{\txid}.\Events}$.
\begin{lemma}\label{lem:si_completeness_TCO}
For all SI-consistent execution graphs $\absGSI = (\Events, \po, \rf, \co, \silo)$, and for all $\code x \in \Locs$, $\txid_a, \txid_b \in \TXIDs$, and $i \in \Nats$:
\[
	\txid_a \in \sphase{i} \land  \txid_b \in \sphase{i {+} 1} \Rightarrow (\txid_a,  \txid_b) \in \TCO
\]
\begin{proof}
Pick arbitrary SI-consistent execution graph $\absGSI = (\Events, \po, \rf, \co, \Transactions)$, $\code x \in \Locs$, $\txid_a, \txid_b \in \TXIDs$, and $i \in \Nats$. There are now four cases to consider: 1) $\txid_a = \writer{\phase i}$ and $\txid_b = \writer{\phase{i {+} 1}}$; 2) $\txid_a = \writer{\phase i}$ and $\txid_b \ne \writer{\phase{i {+} 1}}$; 3) $\txid_a \ne \writer{\phase i}$ and $\txid_b = \writer{\phase{i {+} 1}}$; and 4) $\txid_a \ne \writer{\phase i}$ and $\txid_b \ne \writer{\phase{i {+} 1}}$.

In case (1) from the definitions of $\sphase i$ and $\writer{\phase i}$ we have $\Transactions_{\txid_a} \times \Transactions_{\txid_b} \subseteq \cot$ and thus from the definition of $\TCO$ we have $(\txid_a, \txid_b) \in \TCO$, as required.
Similarly, in case (2) from the definitions of $\sphase i$ and $\writer{\phase i}$ we have $\Transactions_{\txid_a} \times \Transactions_{\txid_b} \subseteq \in \rft$  and thus from the definition of $\TCO$ we have $(\txid_a, \txid_b) \in \TCO$, as required.

In case (3), from the definition of $\sphase i$ we know there exist $r \in \Transactions_{\txid_a} \cap \Reads_{\code x}$ and $w \in \writer{\phase i} \cap \Writes_{\code x}$ such that $(r, w) \in \fr$ and thus $(a, w) \in \frt$. 
On the other hand from the definitions of $\sphase i$ and $\sphase{i {+}1}$ we know there exists $w' \in \Transactions_{\txid_b} \cap \Writes_{\code x}$ such that $(w, w') \in \co$ and thus $(w, w') \in \cot$. As such we have $(a, w') \in \frt; \cot$ and thus from the definition of $\TCO$ we have $(\txid_a, \txid_b) \in \TCO$, as required.

In case (4), from the definition of $\sphase i$ we know there exist $r \in \Transactions_{\txid_a} \cap \Reads_{\code x}$ and $w \in \writer{\phase i} \cap \Writes_{\code x}$ such that $(r, w) \in \fr$ and thus $(a, w) \in \frt$. 
On the other hand from the definitions of $\sphase i$ and $\sphase{i {+}1}$ we know there exists $r' \in \Transactions_{\txid_b} \cap \Reads_{\code x}$ such that $(w, r') \in \rf$ and thus $(w, r') \in \rft$. As such we have $(a, r') \in \frt; \rft$ and thus from the definition of $\TCO$ we have $(\txid_a, \txid_b) \in \TCO$, as required.
\end{proof}
\end{lemma}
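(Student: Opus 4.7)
My plan is to prove the lemma by case analysis on how $\txid_a$ gets into $\sphase{i}$ and how $\txid_b$ gets into $\sphase{i {+} 1}$, according to the two clauses in the definition of $\sphase{\cdot}$: each transaction is in a stage either because it is the designated writer of that stage in $\TMO[\code x]$, or because it contains a read that reads externally from the writer of the previous stage (resp.\ from the initialisation write when the stage is $0$). Since $\TCO$ is defined as a total extension of $\transC{(\refC{\frt}; (\pot \cup \cot \cup \rft))}$, it suffices in each case to exhibit a chain of lifted edges from $\txid_a$ to $\txid_b$ using $\cot$, $\rft$, and $\frt$ alone.

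The four cases are then straightforward. When $\txid_a = \writer{\phase{i}}$ and $\txid_b = \writer{\phase{i {+} 1}}$, consecutive writers in $\TMO[\code x]$ produce $\co$-adjacent writes on $\code x$, so we obtain a $\cot$ edge. When $\txid_a = \writer{\phase{i}}$ but $\txid_b$ is in $\sphase{i {+} 1}$ by reading externally from $\writer{\phase{i}}$, we get an $\rfe$ edge and hence an $\rft$ edge directly. The case $\txid_a \ne \writer{\phase{i}}$ and $\txid_b = \writer{\phase{i {+} 1}}$ requires chaining: $\txid_a$ contains a read $r$ with $(w', r) \in \rfe$ where $w' \in \writer{\phase{i {-} 1}}$ (or $w' = \mathit{init}_{\code x}$ if $i = 0$); combined with $(w', w) \in \co$ for $w \in \writer{\phase{i {+} 1}}$ obtained by transitivity of $\co$ through $\writer{\phase{i}}$, this gives $(r, w) \in \fr$ and hence $(\txid_a, \txid_b) \in \frt$. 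Finally, in the last case, both inclusions come from reads, and we chain $\frt$ with $\rft$ through $\writer{\phase{i}}$.

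The main subtlety will be the degenerate base case $i = 0$ where the ``previous writer'' is replaced by the initialisation event $\mathit{init}_{\code x}$, which is not itself in any transaction; I will handle this uniformly by observing that for any transactional write $w \in \writer{\phase{k}}$ with $k \geq 0$, we have $(\mathit{init}_{\code x}, w) \in \co$, so a read of the initialiser in $\txid_a$ still yields an $\fr$ edge into the next writer. A further minor obstacle is ensuring the lifted edges are genuinely extra-transactional; this is immediate since $\sphase{i}$ and $\sphase{i {+} 1}$ contain distinct transactions (each transaction is assigned to at most one stage, because external reads of $\code x$ by a transaction all see the same write by SI-consistency's internal coherence via $\ref{ax:si_int}$). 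Once the chain is exhibited in each case, the conclusion $(\txid_a, \txid_b) \in \TCO$ follows from the defining property of $\TCO$ as a total-order extension.
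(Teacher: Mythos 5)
Your decomposition into four cases (writer/non-writer for each of $\txid_a$ and $\txid_b$) is exactly the paper's, and your cases (1), (2) and (4) match the paper's argument. The gap is in case (3). There you derive a single anti-dependency $(r,w)\in\fr$ with $r\in\Transactions_{\txid_a}$ and $w\in\Transactions_{\txid_b}$ by composing $(w',r)\in\rf$ with $(w',w)\in\co$ through $\writer{\phase{i}}$, and then conclude from $(\txid_a,\txid_b)\in\frt$ that $(\txid_a,\txid_b)\in\TCO$. That last step is not available: $\TCO$ is only required to contain $\transC{(\refC{\frt};(\pot\cup\cot\cup\rft))}$, and every path in this relation ends with a $\pot\cup\cot\cup\rft$ edge, so a bare $\frt$ edge between two classes is not covered by the defining property of $\TCO$. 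Nor is this a removable technicality: under SI, anti-dependencies alone do not order transactions --- that is precisely the write-skew relaxation, and an SI-consistent graph may contain both $(\txid,\txid')\in\frt$ and $(\txid',\txid)\in\frt$, so $\frt$ cannot in general be contained in any strict total order.

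The repair is to stop the $\fr$ edge one writer earlier, as the paper does: since $\txid_a\in\sphase{i}$ reads $\code x$ from $\writer{\phase{i-1}}$ (or from $\mathit{init}_{\code x}$ when $i=0$), and $\writer{\phase{i}}$ contains a $\co$-later write $w''$ to $\code x$, you get $(r,w'')\in\fr$ and hence $(\txid_a,\writer{\phase{i}})$ related by $\frt$; then $(w'',w)\in\co$ gives $(\writer{\phase{i}},\txid_b)$ related by $\cot$, and the composite $\frt;\cot$ is a single step of $\refC{\frt};(\pot\cup\cot\cup\rft)$, which the defining property of $\TCO$ does cover. You already have all the ingredients (you explicitly invoke transitivity of $\co$ through $\writer{\phase{i}}$); you just need to package them as the two-edge composite $\frt;\cot$ rather than collapsing them into one $\frt$ edge.
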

\begin{corollary}\label{cor:si_completeness_TCO}
For all SI-consistent execution graphs $\absGSI = (\Events, \po, \rf, \co, \silo)$, and for all $\code x \in \Locs$, $\txid_a, \txid_b \in \TXIDs$, and $i, j \in \Nats$:
\[
	\txid_a \in \sphase{i} \land  \txid_b \in \sphase{j} \land i < j \Rightarrow (\txid_a,  \txid_b) \in \TCO
\]
\begin{proof}
Follows by induction from \cref{lem:si_completeness_TCO}.
\end{proof}
\end{corollary}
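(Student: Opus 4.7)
\textbf{Proof plan for Corollary \ref{cor:si_completeness_TCO}.}
The plan is to induct on $k = j - i$, invoking Lemma \ref{lem:si_completeness_TCO} at each step and using transitivity of $\TCO$ to close the chain. The base case $k = 1$ is exactly Lemma \ref{lem:si_completeness_TCO}, so no work is needed there.

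For the inductive step ($k > 1$), I would pick an intermediate witness transaction $\txid_c$ lying in $\sphase{j-1}$, apply the inductive hypothesis to get $(\txid_a, \txid_c) \in \TCO$, then apply Lemma \ref{lem:si_completeness_TCO} directly to get $(\txid_c, \txid_b) \in \TCO$, and finally conclude $(\txid_a, \txid_b) \in \TCO$ by transitivity (since $\TCO$ was constructed as a strict total order).

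The one subtle point, and the only thing to verify carefully, is that $\sphase{j-1}$ is non-empty so that such a witness $\txid_c$ exists. The natural choice is $\txid_c = \writer{\phase{j-1}}$, so I need $\itemAt{\TMO[\code x]}{j-1}$ to be defined. From the membership of $\txid_b$ in $\sphase{j}$ and the definition of $\sphase{j}$, one of two things happens: either $\Transactions_{\txid_b} = \itemAt{\TMO[\code x]}{j}$, in which case $\TMO[\code x]$ has length at least $j+1$ and hence $\itemAt{\TMO[\code x]}{j-1}$ exists; or $\txid_b$ reads externally from a write in $\itemAt{\TMO[\code x]}{j-1}$, in which case $\itemAt{\TMO[\code x]}{j-1}$ exists immediately. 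In both cases $\writer{\phase{j-1}}$ is well-defined and belongs to $\sphase{j-1}$, so the induction goes through.

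No serious obstacles are expected; this is a routine chaining argument, with the only non-trivial check being the existence of the intermediate writer, handled as above.
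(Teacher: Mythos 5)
Your proof is correct and is precisely the induction that the paper's one-line proof (``follows by induction from \cref{lem:si_completeness_TCO}'') alludes to: chain the lemma along consecutive phases and close with transitivity of $\TCO$. The extra care you take to justify that $\sphase[\code x]{j-1}$ is inhabited --- observing that either disjunct in the definition of $\sphase[\code x]{j}$ forces $\itemAt{\TMO[\code x]}{j-1}$ to exist, so $\writer{\phase[\code x]{j-1}}$ can serve as the witness --- is a detail the paper omits but that you handle correctly.
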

%
%
%

Given an implementation graph $\impGSI = (\Events, \po, \rf, \co, \silo)$, let
\[
	\hb' \eqdef \transC{\left(\po \cup \rf \cup \silo \setminus 
	\setcomp{
		(\txid_1.\mathit{rl}_{\code x}, \txid_2.\mathit{pl}_{\code x}) \in \silo_\x, \\
		(\txid_1.\mathit{ru}_{\code x}, \txid_2.\mathit{pl}_{\code x}) \in \silo_\x, \\
		(\txid_1.\mathit{rl}_{\code x}, \txid_2.\mathit{wu}_{\code x}) \in \silo_\x, \\
		(\txid_1.\mathit{ru}_{\code x}, \txid_2.\mathit{wu}_{\code x}) \in \silo_\x 
	}{
		\exsts{i, j, k} 
		\itemAt{\phase i}{j} {=} \txid_1 
		\land 
		\itemAt{\phase i}{k} {=} \txid_2  
		\land 
		j > k
	}
	\right)}
\]
\begin{lemma}\label{lem:si_completeness_aux2}
For all implementation graphs $\impGSI = (\Events, \po, \rf, \co, \silo)$ constructed as above, 
%
\[
\begin{array}{@{} l @{\hspace{2pt}} l @{}}
	\for{a, b, \txid_a, \txid_b} 
	& a \in \trace_{\txid_a}.\Events
	\land
	b \in \trace_{\txid_b}.\Events
	\land
	a \relarrow{\hb'} b  \\
	& \qquad 
	\Rightarrow (a, b) \in \impGSI.\poi \lor (\txid_a, \txid_b) \in \TCO
\end{array}
\]
\begin{proof} 
Pick an arbitrary implementation graph $\impGSI = (\Events, \po, \rf, \co, \silo)$ constructed as above.
Since $\hb'$ is a transitive closure, it is straightforward to demonstrate that $\hb' = \bigcup\limits_{i \in \Nats} \hb'_i$, where:\\
$\hb'_0 =\po \cup \rf \cup \silo \setminus 
	\setcomp{
		(\txid_1.\mathit{rl}_{\code x}, \txid_2.\mathit{pl}_{\code x}) \in \silo_\x , \\
		(\txid_1.\mathit{ru}_{\code x}, \txid_2.\mathit{pl}_{\code x}) \in \silo_\x, \\
		(\txid_1.\mathit{rl}_{\code x}, \txid_2.\mathit{wu}_{\code x}) \in \silo_\x, \\
		(\txid_1.\mathit{ru}_{\code x}, \txid_2.\mathit{wu}_{\code x}) \in \silo_\x 
	}{
		\exsts{i, j, k} 
		\itemAt{\phase i}{j} {=} \txid_1 
		\land 
		\itemAt{\phase i}{k} {=} \txid_2  
		\land 
		j > k
	}$ 
and $\hb'_{i{+}1} = \hb'_0 ; \hb'_i$. 
It thus suffices to show:
\[
\begin{array}{@{} l @{\hspace{2pt}} l @{}}
	\for{i \in \Nats}
	\for{a, b, \txid_a, \txid_b} 
	& a \in \trace_{\txid_a}.\Events
	\land
	b \in \trace_{\txid_b}.\Events
	\land
	a \relarrow{\hb'_i} b \\ 
	& \qquad 
	\Rightarrow (a, b) \in \impGSI.\poi \lor (\txid_a, \txid_b) \in \TCO
\end{array}
\]
We proceed by induction over $i$.\\

\noindent \textbf{Base case $i = 0$}\\
Pick arbitrary $a, b, \txid_a, \txid_b$ such that $a \in \trace_{\txid_a}.\Events$ and $b \in \trace_{\txid_b}.\Events$, $a \relarrow{\hb'_0} b$.
There are then four cases to consider: 1) $a \relarrow{\impGSI.\poi \cup \rfi \cup \siloi} b$; or 2) $a \relarrow{\impGSI.\poe} b$; 3) $a \relarrow{\impGSI.\rfe} b$; or 4) $a \relarrow{\impGSI.\siloe} b$. 

In case (1), from the construction of $\impGSI.\rfi$ we have $\impGSI.\rfi \subseteq \impGSI.\poi$; moreover, from the construction of $\impGSI.\siloi$ we have $\impGSI.\siloi \subseteq \impGSI.\poi$. Consequently we have $a \relarrow{\impGSI.\poi} b$, as required.

In case (2) from the construction of $\impGSI.\po$ we then know that $\Transactions_{\txid_a} \times  \Transactions_{\txid_b} \subseteq \absGSI.\po$  and thus from the definition of $\TCO$ we have $(\txid_a, \txid_b) \in \TCO$, as required.

In case (3) from the construction of $\impGSI.\rf$ we then know that $\Transactions_{\txid_a} \times  \Transactions_{\txid_b} \subseteq \absGSI.\rft$  and thus from the definition of $\TCO$ we have $(\txid_a, \txid_b) \in \TCO$, as required.

In case (4) from the definition of $\hb'_0$ we know that there exists $i, j, i', j'$ such that $\txid_a {=} \itemAt{\phase{i}}{j}$, $\txid_b {=} \itemAt{\phase{i'}}{j'}$ and either i) $i < i'$; or ii) $i = i' \land j < j'$.

In case (4.i) from \cref{cor:si_completeness_TCO} we have $(\txid_a, \txid_b) \in \TCO$, as required.
In case (4.ii) from the definitions of $\phase{i}$ we have $(\txid_a, \txid_b) \in \TCO$, as required. \\

\noindent \textbf{Inductive case $i = n {+} 1$}\\
Pick arbitrary $a, b, \txid_a, \txid_b$ such that $a \in \trace_{\txid_a}.\Events$, $b \in \trace_{\txid_b}.\Events$ and $a \relarrow{\hb'_i} b$. 
\begin{align}
\begin{array}{@{} l @{\hspace{2pt}} l @{}}
	\for{j \leq n}
	\for{c, d, \txid_c, \txid_d} 
	& c \in \trace_{\txid_c}.\Events
	\land
	d \in \trace_{\txid_d}.\Events
	\land
	c \relarrow{\hb'_j} d  \\
	& \qquad 
	\Rightarrow (c, d) \in \impGSI.\poi \lor (\txid_c, \txid_d) \in \TCO
\end{array}
\tag{I.H.}
\label{IH:si_completeness_auxiliary}
\end{align}
From the definition of $\hb'_{n {+} 1}$ we then know there exists $e$ such that  $(a, e) \in\hb'_0$ and $(e, b) \in \hb'_n$. Let $e \in \txid_e.\Events$. 
Consequently, from the proof of the base case we then know that $(a, e) \in \impGSI.\poi \lor (\txid_a, \txid_e) \in \TCO$. 
Similarly, from (\ref{IH:si_completeness_auxiliary}) we have $(e, b) \in \impGSI.\poi \lor (\txid_e, \txid_b) \in \TCO$. 
There are now four cases to consider: 
1) $(a, e) \in \impGSI.\poi$ and $(e, b) \in \impGSI.\poi$; 
2) $(a, e) \in \impGSI.\poi$ and $(\txid_e, \txid_b) \in \TCO$; 
3) $(\txid_a, \txid_e) \in \TCO$ and $(e, b) \in \impGSI.\poi$; 
4) $(\txid_a, \txid_e) \in \TCO$ and $(\txid_e, \txid_b) \in \TCO$.

In case (1) from the definition of $\impGSI.\poi$ we have $(a, b) \in \impGSI.\poi$.
In cases (2-4) from the definitions of $\impGSI.\poi$, $\TCO$ we have $(\txid_a, \txid_b) \in \TCO$, as required. 

\end{proof}

\end{lemma}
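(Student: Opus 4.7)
The plan is to follow exactly the skeleton already suggested in the statement: decompose $\hb'$ as the union $\bigcup_{i \in \Nats} \hb'_i$, where $\hb'_0$ is the one-step relation (the base edges of $\hb'$ with the forbidden backward-within-phase $\silo$ edges removed) and $\hb'_{i+1} = \hb'_0 ; \hb'_i$, and then proceed by induction on $i$. The inductive case is almost trivial, so essentially all the work lives in the base case, where I need a careful case split on the type of the single $\hb'_0$-edge $(a,b)$.

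For the base case I would split on whether $(a,b)$ lies in $\impGSI.(\po \cup \rf)$ or in $\impGSI.\silo$ (after removal of the excluded edges), and in each case on whether the edge is intra- or extra-transactional. The intra-transactional $\po$, $\rf$, and $\silo$ edges all land in $\impGSI.\poi$: $\rfi \suq \poi$ follows from the construction of $\impGSI.\rf$ (each $\txid.\mathit{rs}_\x$ is read only by its own transaction events), and $\siloi \suq \poi$ follows from the structure of each $\trace_\txid$ (the lock events of one transaction are $\po$-ordered in the required way). For extra-transactional $\poe$ and $\rfe$ we use that the abstract $\absGSI.\pot$ and $\absGSI.\rft$ respectively are contained in $\TCO$ by construction of $\TCO$ as an extension of $\transC{(\refC{\frt}; (\pot \cup \cot \cup \rft))}$.

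The only delicate case is the extra-transactional $\silo$ subcase. Here I would unfold the definition of $\RLO_\x$ and go through its five families of pairs. Every pair is indexed by phases $\sphase{i}, \sphase{i'}$ and positions $j,j'$, and the strict total order on $\silo$ events comes either from (i) strictly later phases $i' > i$, which by \cref{cor:si_completeness_TCO} gives the enclosing transactions in $\TCO$; or (ii) the same phase $i=i'$ with strictly later intra-phase position, in which case the inclusion $\phase{i} = \coerce{\TCO}{\sphase{i}}$ directly yields the $\TCO$-ordering of the transactions; or (iii) the same transaction (e.g.\ $(\txid.\mathit{rl}_\x, \txid.\mathit{ru}_\x)$), in which case $(a,b) \in \poi$. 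The point of subtracting the explicitly enumerated family of edges from $\hb'_0$ is precisely to eliminate the only ``backward'' cases---those $(\txid_1.\mathit{rl}_\x, \txid_2.\mathit{pl}_\x)$-style edges where $\txid_1$ sits strictly after $\txid_2$ in the same phase---which would otherwise produce a $\silo$ edge violating $\TCO$. So after the subtraction, every remaining $\siloe$ edge is covered.

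For the inductive step, given $(a,b) \in \hb'_{n+1}$ with witness $e \in \trace_{\txid_e}.\Events$, the induction hypothesis and the base case each yield one of two options, giving four combinations. If both halves are $\poi$, then $\txid_a = \txid_e = \txid_b$ and $(a,b) \in \poi$ by transitivity of $\po$. In all other combinations, at least one half contributes a $\TCO$-edge, and any $\poi$-half forces equality of the adjacent transaction identifiers, so the conjunction collapses to $(\txid_a,\txid_b) \in \TCO$ by transitivity of $\TCO$ (a strict total order). The main obstacle is really bookkeeping in the base case: the five indexed families of $\RLO_\x$ edges plus the precise definition of the excluded set must be lined up with the ordering $\phase{i} = \coerce{\TCO}{\sphase{i}}$ and with \cref{cor:si_completeness_TCO}; nothing conceptually deep happens, but the case analysis must be exhaustive to be convincing.
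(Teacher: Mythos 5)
Your proposal is correct and follows essentially the same route as the paper's proof: the same decomposition $\hb' = \bigcup_i \hb'_i$, the same induction, the same base-case split into intra-transactional edges (landing in $\poi$), extra-transactional $\po$/$\rf$ edges (landing in $\TCO$ via its definition), and extra-transactional $\silo$ edges (handled via the phase structure and \cref{cor:si_completeness_TCO}), with the identical four-way combination in the inductive step. No gaps.
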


Given an implementation graph $\impGSI = (\Events, \po, \rf, \co, \silo)$ constructed as above, let $\stage{.}: \impGSI.\Events \rightarrow \{1, 2, 3\}$ denote the \emph{event stage} reflecting whether it is in the snapshot phase (1), reader lock release phase (2), or the update phase (3). That is, 
\[
\stage{e} \eqdef
\begin{cases}
	1 & \exsts{\txid, \code x} e \in \simpleset{\txid.\mathit{rl}_{\code x}, \txid.\mathit{rs}_{\code x}, \txid.\mathit{ws}_{\code x}} \\
	2 & \exsts{\txid, \code x} e = \txid.\mathit{ru}_{\code x} \\
	3 & \text{otherwise}
\end{cases}
\]
\begin{lemma}\label{lem:si_completeness_aux3}
For all implementation graphs $\impGSI = (\Events, \po, \rf, \co, \silo)$ constructed as above, and for all $\code x \in \Locs$, $\txid_a, \txid_1, \txid_2 \in \TXIDs$ and $a \in  \trace_{\txid_a}.\Events$:
\[
\begin{array}{@{} l @{}}
	\left(
	\begin{array}{@{} l @{}}
		(\txid_1.\mathit{rl}_{\code x} \relarrow{\silo} \txid_2.\mathit{pl}_{\code x} 
		\land 
		\txid_2.\mathit{pl}_{\code x} \relarrow{\hb} a) \\
		\lor\,
		(\txid_1.\mathit{ru}_{\code x} \relarrow{\silo} \txid_2.\mathit{pl}_{\code x} 
		\land 
		\txid_2.\mathit{pl}_{\code x} \relarrow{\hb} a) \\
		\lor\,
		(\txid_1.\mathit{rl}_{\code x} \relarrow{\silo} \txid_2.\mathit{wu}_{\code x} 
		\land 
		\txid_2.\mathit{wu}_{\code x} \relarrow{\hb} a) \\
		\lor\,
		(\txid_1.\mathit{ru}_{\code x} \relarrow{\silo} \txid_2.\mathit{wu}_{\code x} 
		\land 
		\txid_2.\mathit{wu}_{\code x} \relarrow{\hb} a) 
	\end{array}	 	
	\right)
	\land
	\exsts{j, k, l} 
		\itemAt{\phase j}{k} {=} \txid_1 
		\land 
		\itemAt{\phase j}{l} {=} \txid_2  
		\land 
		k > l \\
	\hspace{150pt}
	\Rightarrow 
	\stage{\txid_1.\mathit{ru}_{\code x}} < \stage{a}
	\lor
	(\txid_1, \txid_a) \in \TCO
\end{array}	
\]
\begin{proof}
Pick an arbitrary implementation graph $\impGSI = (\Events, \po, \rf, \co, \silo)$ constructed as above,
and pick arbitrary $\code x \in \Locs$, $\txid_a, \txid_1, \txid_2 \in \TXIDs$ and $a \in \trace_{\txid_a}.\Events$ such that 
$\big((\txid_1.\mathit{rl}_{\code x} \relarrow{\silo} \txid_2.\mathit{pl}_{\code x} 
	\land 
	\txid_2.\mathit{pl}_{\code x} \relarrow{\hb} a) 
   \lor
   (\txid_1.\mathit{ru}_{\code x} \relarrow{\silo} \txid_2.\mathit{pl}_{\code x} 
	\land 
	\txid_2.\mathit{pl}_{\code x} \relarrow{\hb} a) 
  \lor\,
  (\txid_1.\mathit{rl}_{\code x} \relarrow{\silo} \txid_2.\mathit{wu}_{\code x} 
	\land 
	\txid_2.\mathit{wu}_{\code x} \relarrow{\hb} a) 
  \lor\,
  (\txid_1.\mathit{ru}_{\code x} \relarrow{\silo} \txid_2.\mathit{wu}_{\code x} 
	\land 
	\txid_2.\mathit{wu}_{\code x} \relarrow{\hb} a)\big)$ 
and $\exsts{j, k, l} 
		\itemAt{\phase j}{k} {=} \txid_1 
		\land 
		\itemAt{\phase j}{l} {=} \txid_2  
		\land 
		k > l$.
In case of the first disjunct from the construction of $\silo$ we then also have $\txid_1.\mathit{ru}_{\code x} \relarrow{\silo} \txid_2.\mathit{pl}_{\code x} \land \txid_2.\mathit{pl}_{\code x} \relarrow{\hb} a$.
Similarly, in case of the third and fourth disjuncts from the construction of $\silo$ we also have $\txid_1.\mathit{ru}_{\code x} \relarrow{\silo} \txid_2.\mathit{pl}_{\code x} \land	\txid_2.\mathit{wu}_{\code x} \relarrow{\hb} a$. Moreover, since we have $\txid_2.\mathit{pl}_{\code x} \relarrow{\po} \txid_2.\mathit{wu}_{\code x}$ and $\po \subseteq \hb$, we also have $\txid_1.\mathit{ru}_{\code x} \relarrow{\silo} \txid_2.\mathit{pl}_{\code x} \land	\txid_2.\mathit{pl}_{\code x} \relarrow{\hb} a$.
It thus suffices to show:
\[
\begin{array}{@{} l @{}}
	\txid_1.\mathit{ru}_{\code x} \relarrow{\silo} \txid_2.\mathit{pl}_{\code x} 
	\land 
	\txid_2.\mathit{pl}_{\code x} \relarrow{\hb} a
	\land
	\exsts{j, k, l} 
		\itemAt{\phase j}{k} {=} \txid_1 
		\land 
		\itemAt{\phase j}{l} {=} \txid_2  
		\land 
		k > l \\
	\hspace{150pt}
	\Rightarrow 
	\stage{\txid_1.\mathit{ru}_{\code x}} < \stage{a}
	\lor
	(\txid_1, \txid_a) \in \TCO
\end{array}	
\]

Since $\impGSI.\hb$ is a transitive closure, it is straightforward to demonstrate that $\impGSI.\hb = \bigcup\limits_{i \in \Nats} \hb_i$, where $\hb_0 = \impGSI.\po \cup \impGSI.\rf \cup \impGSI.\silo$ and $\hb_{i{+}1} = \hb_0 ; \hb_i$. 
It thus suffices to show:
\[
\begin{array}{@{} l @{}}
	\for{i \in \Nats}
	\for{\code x \in \Locs, \txid_a, \txid_1, \txid_2 \in \TXIDs, a \in  \trace_{\txid_a}.\Events} \\
	\qquad 
	\txid_1.\mathit{ru}_{\code x} \relarrow{\silo} \txid_2.\mathit{pl}_{\code x} 
	\land 
	\txid_2.\mathit{pl}_{\code x} \relarrow{\hb_i} a
	\land
	\exsts{j, k, l} 
		\itemAt{\phase j}{k} {=} \txid_1 
		\land 
		\itemAt{\phase j}{l} {=} \txid_2  
		\land 
		k > l \\
	\hspace{150pt}
	\Rightarrow 
	\stage{\txid_1.\mathit{ru}_{\code x}} < \stage{a}
	\lor
	(\txid_1, \txid_a) \in \TCO
\end{array}
\]
We thus proceed by induction over $i$.\\

\noindent \textbf{Base case $i = 0$}\\
Pick arbitrary $\code x \in \Locs$, $\txid_a, \txid_1, \txid_2 \in \TXIDs$ and $a \in \trace_{\txid_a}.\Events$ such that 
$\txid_1.\mathit{ru}_{\code x} \relarrow{\silo} \txid_2.\mathit{pl}_{\code x}$, $\txid_2.\mathit{pl}_{\code x} \relarrow{\hb_0} a$ 
and $\exsts{j, k, l} 
		\itemAt{\phase j}{k} {=} \txid_1 
		\land 
		\itemAt{\phase j}{l} {=} \txid_2  
		\land 
		k > l$.
From the construction of $\impGSI$ we know that there exists $r \in \absGSI.\Transactions_{\txid_1}$ and $w \in \absGSI.\Transactions_{\txid_2}$ such that $(r, w) \in \absGSI.\fr$. 
Moreover, since $k \ne l$ we know that $(r, w) \in \absGSI.\frt$. 
Since we have $\txid_2.\mathit{pl}_{\code x} \relarrow{\hb_0} a$, there are four cases to consider: 
1) $\txid_2.\mathit{pl}_{\code x} \relarrow{\impGSI.\poi \cup \rfi \cup \siloi} a$; or 
2) $\txid_2.\mathit{pl}_{\code x} \relarrow{\impGSI.\poe} a$; or 
3) $\txid_2.\mathit{pl}_{\code x} \relarrow{\impGSI.\rfe} a$; or 
4) $\txid_2.\mathit{pl}_{\code x} \relarrow{\impGSI.\siloe} a$. 

In case (1), from the construction of $\impGSI.\rfi$ we have $\impGSI.\rfi \subseteq \impGSI.\poi$; moreover, from the construction of $\impGSI.\siloi$ we have $\impGSI.\siloi \subseteq \impGSI.\poi$. We thus have $\txid_2.\mathit{pl}_{\code x} \relarrow{\impGSI.\poi} a$. On the other hand, from the definition of $\stage{.}$ we know $\stage{a} \geq \stage{\txid_2.\mathit{pl}_{\code x}} > \stage{\txid_1.\mathit{ru}_{\code x}}$, and thus  $\stage{a} > \stage{\txid_1.\mathit{ru}_{\code x}}$, as required. 

In case (2), since $\txid_2.\mathit{pl}_{\code x} \relarrow{\poe} a$, we have $(w, a) \in \absGSI.\pot$. As such, we have $(r, a) \in \absGSI.(\frt; \pot)$. From the definition of $\TCO$ we thus have $(\txid_1, \txid_a) \in \TCO$, as required.


Case (3) cannot happen as there are no $\rf$ edges from lock events. 
In case (4), from the construction of $\silo$ we know there exists $m, n$ such that $m>j$, and $\txid_a = \itemAt{\phase{m}}{n}$.
As such, from \cref{cor:si_completeness_TCO} we have $(\txid_1, \txid_a) \in \TCO$, as required.\\

\noindent \textbf{Inductive case $i = n {+} 1$}\\
Pick arbitrary $\code x \in \Locs$ and $\txid_a, \txid_1, \txid_2 \in \TXIDs$ and $a \in \trace_{\txid_a}.\Events$ such that $\txid_1.\mathit{ru}_{\code x} \relarrow{\lo} \txid_2.\mathit{pl}_{\code x}$, $\txid_2.\mathit{pl}_{\code x} \relarrow{\hb_i} a$ and $\exsts{j, k, l} 
		\itemAt{\phase j}{k} {=} \txid_1 
		\land 
		\itemAt{\phase j}{l} {=} \txid_2  
		\land 
		k > l$.
\begin{align}
\hspace*{-10pt}
\begin{array}{@{} l @{}}
	\for{j \leq n}
	\for{\code y \in \Locs, \txid_{a'}, \txid_3, \txid_4 \in \TXIDs, a' \in \trace_{\txid_{a'}}.\Events} \\
	\quad 
	(\txid_3.\mathit{ru}_{\code x} \relarrow{\silo} \txid_4.\mathit{pl}_{\code x} 
	\land 
	\txid_4.\mathit{pl}_{\code x} \relarrow{\hb_j} a') 
	\land
	\exsts{j, k, l} 
		\itemAt{\phase j}{k} {=} \txid_1 
		\land 
		\itemAt{\phase j}{l} {=} \txid_2  
		\land 
		k > l \\
	\hspace{150pt}
	\Rightarrow 
	\stage{\txid_3.\mathit{ru}_{\code y}} < \stage{a'}
	\lor
	(\txid_3,  \txid_{a'}) \in \TCO
\end{array}
\tag{I.H.}
\label{IH:si_completeness_auxiliary2}
\end{align}
There are two cases to consider:\\
 1) $\txid_2.\mathit{pl}_{\code x} \relarrow{\transC{(\poi \cup \rfi \cup \siloi)}} a$; or \\
 2) $\txid_2.\mathit{pl}_{\code x} \relarrow{\reftransC{(\poi \cup \rfi \cup \siloi)}; (\poe \cup \rfe \cup \siloe); \hb_m} a$, where $m \leq n$. 

In case (1), from the construction of $\impGSI.\rfi$ we have $\impGSI.\rfi \subseteq \impGSI.\poi$; moreover, from the construction of $\impGSI.\siloi$ we have $\impGSI.\siloi \subseteq \impGSI.\poi$. 
We thus have $\txid_2.\mathit{pl}_{\code x} \relarrow{\transC{\poi }} a$, i.e. $\txid_2.\mathit{pl}_{\code x} \relarrow{\poi } a$. As such, from the proof of the base case  we have $\stage{a} > \stage{\txid_1.\mathit{ru}_{\code x}}$, as required. 

In case (2)  from the construction of $\impGSI$ we know that there exists $r \in\trace_{ \txid_1}.\Events$ and $w \in \trace_{\txid_2}.\Events$ such that $(r, w) \in \absGSI.\fr$. 
Moreover, since $k \ne l$, from the construction of $\impGSI$ we know that $\txid_1 \ne \txid_2$ and thus we have $(r, w) \in \absGSI.\frt$. 
On the other hand, we know that there exist $b, c$ such that $\txid_2.\mathit{pl}_{\code x} \relarrow{\reftransC{(\poi \cup \rfi \cup \siloi)}} b \relarrow{\poe \cup \rfe \cup \siloe} c \relarrow{\hb_m} a$. 
From the construction of $\impGSI.\rfi$ we have $\impGSI.\rfi \subseteq \impGSI.\poi$; moreover, from the construction of $\impGSI.\siloi$ we have $\impGSI.\siloi \subseteq \impGSI.\poi$. 
We thus have $\txid_2.\mathit{pl}_{\code x} \relarrow{\reftransC{\poi }} b$. As such we have $\txid_2.\mathit{pl}_{\code x} \relarrow{\reftransC{\poi}} b \relarrow{\poe \cup \rfe \cup \siloe} c$. Let $c \in \trace_{\txid_c}$.
There are now three cases to consider: a) $b \relarrow{\poe} c$; or b) $b \relarrow{\rfe} c$; or c) $b \relarrow{\siloe} c$. We first demonstrate that in all three cases we have $(\txid_1, \txid_c) \in \TCO$.

In case (2.a), since $\txid_2.\mathit{pl}_{\code x} \relarrow{\poe} c$, we know there exists $c' \in \Transactions_{\txid_c}$ such that $(w, c') \in \absGSI.\pot$. As such, we have $(r, c') \in \absGSI.(\frt; \pot)$. From the definition of $\TCO$ we thus have $(\txid_1, \txid_c) \in \TCO$.
In case (2.b), from the definition of $\impGSI.\rf$ we know there exists $c' \in \Transactions_{\txid_c}$ such that $(w, c') \in \absGSI.\rft$. 
As such, we have $(r, c) \in \absGSI.(\frt; \rft)$. From the definition of $\TCO$ we thus have $(\txid_1, \txid_c) \in \TCO$.
In case (2.c) from the construction of $\lo$ we know there exists $z$ such that either $b = \txid_2.\mathit{pl}_z$ or $b = \txid_2.\mathit{wu}_z$,  and that there exist $p, q$ such that $\txid_2 \in \sphase[\code z]{p}$, $\txid_c \in \sphase[\code z]{q}$ and $p < q$. Given the definition of $\sphase[\code z]{}$ we then know that there exists $w_z \in \absGSI.\Transactions_{\txid_2}$ where either 
i) there exists $w'_z \in \Transactions_{\txid_c}$ such that $(w_z, w'_z) \in \absGSI.\cot$; or  
ii) there exists $r_z \in \Transactions_{\txid_c}$ such that $(w_z, r_z) \in \absGSI. \rft$; or
ii) there exists $r_z \in \Transactions_{\txid_c}$ such that $(w_z, r_z) \in \absGSI.(\cot; \rft)$.  
That is, we have either $(w, w'_z) \in \absGSI.\cot$, or $(w, r_z) \in \absGSI.\rft$, or $(w, r_z) \in \absGSI.(\cot; \rft)$. 
Moreover, since we have $(r, w) \in \absGSI.\frt$, we then have $(r, w'_z) \in \absGSI.(\frt; \cot)$, or $(r, r_z) \in \absGSI.(\frt; \rft)$, or $(r, r_z) \in \absGSI.(\frt; \cot; \rft)$. From the definition of $\TCO$ we thus have $(\txid_1, \txid_c) \in \TCO$.

Since $c \relarrow{\hb_m} a$, there are now two cases to consider: 
i) $c \relarrow{\hb'} a$; or ii) $c \not\relarrow{\hb'} a$.

In case (2.i), 
from \cref{lem:si_completeness_aux2} we have $(c, a) \in \poi  \lor (\txid_c, \txid_a) \in \TCO$. As we have $(\txid_1, \txid_c) \in \TCO$, we thus have $(\txid_1, \txid_a) \in \TCO$, as required.

In case (2.ii) let us split the path from $c$ at the first occurrence of a non-$\hb'$ edge. 
That is, pick $\txid_3, \txid_4, g, h, \code y, p, q, s, k$ such that $c  \relarrow{\hb'} \txid_3.g  \relarrow{\silo} \txid_4.h \relarrow{\hb_k} a$, $k < m$, $\itemAt{\phase[y]{s}}{p} = \Transactions_{\txid_3}$,  $\itemAt{\phase[y]{s}}{q} = \Transactions_{\txid_4}$, $p > q$
and either a) $g = \mathit{rl}_{\code y} \land h = \mathit{pl}_{\code y}$; or b) $g = \mathit{ru}_{\code y} \land h = \mathit{pl}_{\code y}$; or c) $g = \mathit{rl}_{\code y} \land h = \mathit{wu}_{\code y}$; or d) $g = \mathit{ru}_{\code y} \land h = \mathit{wu}_{\code y}$.
From \cref{lem:si_completeness_aux2} we then have $(c, \txid_3.g) \in \poi  \lor (\txid_c, \txid_3) \in \TCO$. As we have $(\txid_1, \txid_c) \in \TCO$, we also have $(\txid_1, \txid_3) \in \TCO$.
%
We next demonstrate that in all cases (2.ii.a-2.ii.d) there exists $t$ such that $t \leq m$ and $\txid_3.\mathit{ru}_{\code y}  \relarrow{\silo} \txid_4.\mathit{pl}_{\code y} \relarrow{\hb_t} a$.

In case (2.ii.a) from the definition of $\silo$ we also have $\txid_3.\mathit{ru}_{\code y} \relarrow{\silo} \txid_4.\mathit{pl}_{\code x}$.  
As such, we have $\txid_3.\mathit{ru}_{\code y}  \relarrow{\silo} \txid_4.\mathit{pl}_{\code y} \relarrow{\hb_k} a$.
In case (2.ii.b) the desired result holds immediately.

In cases (2.ii.c-2.ii.d) from the construction of $\silo$ we have $\txid_3.\mathit{ru}_{\code y} \relarrow{\silo} \txid_4.\mathit{pl}_{\code y}$. Moreover, since we have $\txid_4.\mathit{pl}_{\code y} \relarrow{\po} \txid_4.\mathit{wu}_{\code y}$, we also have $\txid_4.\mathit{pl}_{\code y} \relarrow{\po} \txid_4.\mathit{wu}_{\code y}  \relarrow{\hb_k} a$. As $\po \subseteq \hb$ and $\hb$ is transitively closed, we have $\txid_4.\mathit{pl}_{\code y} \relarrow{\hb_{k{+} 1}} a$. 
As such, we have $\txid_3.\mathit{ru}_{\code y}  \relarrow{\silo} \txid_4.\mathit{pl}_{\code y} \relarrow{\hb_{k {+} 1}} a$. As $k < m$, the desired result holds immediately.

Consequently, from (\ref{IH:si_completeness_auxiliary2}) we have $\stage{\txid_3.\mathit{ru}_{\code y}} < \stage{a} \lor (\txid_3,  \txid_a) \in \TCO$. 
In the case of the first disjunct we have $\stage{\txid_3.\mathit{ru}_{\code y}} {=} \stage{\txid_1.\mathit{ru}_{\code x}} < \stage{a}$, as required.
In the case of the second disjunct, since we also have $(\txid_1, \txid_3) \in \TCO$ and $\TCO$ is transitively closed, we have $(\txid_1, \txid_a) \in \TCO$, as required.

\end{proof}

\end{lemma}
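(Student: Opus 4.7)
).}
The first move is to collapse the four disjuncts in the hypothesis into a single canonical form, namely $\txid_1.\mathit{ru}_{\code x} \relarrow{\silo} \txid_2.\mathit{pl}_{\code x}$ together with $\txid_2.\mathit{pl}_{\code x} \relarrow{\hb} a$. The first disjunct reduces to this by the construction of $\silo$ (an $\mathit{rl}$-to-$\mathit{pl}$ edge entails the matching $\mathit{ru}$-to-$\mathit{pl}$ edge within the same phase). The third and fourth disjuncts reduce to it by chaining $\txid_2.\mathit{pl}_{\code x} \relarrow{\po} \txid_2.\mathit{wu}_{\code x}$ with the hypothesised $\hb$-path, using $\po \subseteq \hb$ and transitivity. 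Additionally, the premise $\itemAt{\phase{j}}{k}{=}\txid_1$, $\itemAt{\phase{j}}{l}{=}\txid_2$, $k > l$ together with the construction of the phases yields, by the defining property of $\phase{}$, a concrete witness $(r,w) \in \absGSI.\frt$ with $r \in \Transactions_{\txid_1}$ and $w \in \Transactions_{\txid_2}$ (as $k \neq l$ forces $\txid_1 \neq \txid_2$). This $\frt$ edge is the essential hook we will later feed to the $\TCO$ extension.

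The proof then proceeds by induction on the length of the $\hb$-path, using the stratification $\hb = \bigcup_i \hb_i$ with $\hb_0 = \po \cup \rf \cup \silo$ and $\hb_{i+1} = \hb_0; \hb_i$. The base case splits on which kind of edge connects $\txid_2.\mathit{pl}_{\code x}$ to $a$: internal $\po/\rf/\silo$ edges (all contained in $\poi$) yield $\stage{a} \geq \stage{\txid_2.\mathit{pl}_{\code x}} > \stage{\txid_1.\mathit{ru}_{\code x}}$ and we are done with the first disjunct of the conclusion; an external $\poe$ or $\rfe$ edge gives $(w, a') \in \absGSI.\pot$ or $\absGSI.\rft$ for some $a' \in \Transactions_{\txid_a}$, so that $(r, a') \in \absGSI.(\frt; \pot) \cup \absGSI.(\frt; \rft)$ and hence $(\txid_1, \txid_a) \in \TCO$; an external $\siloe$ edge out of $\txid_2.\mathit{pl}_{\code x}$ must, by construction of $\silo$, target a later $x$-phase, so that $\txid_a \in \sphase{m}$ for some $m > j$ and Corollary B.3 (\cref{cor:si_completeness_TCO}) gives $(\txid_1, \txid_a) \in \TCO$.

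For the inductive step, decompose the $\hb_i$ path as $\txid_2.\mathit{pl}_{\code x} \relarrow{\reftransC{(\poi \cup \rfi \cup \siloi)}} b \relarrow{\poe \cup \rfe \cup \siloe} c \relarrow{\hb_m} a$ where $m < i$ and the first external edge is isolated. By the same three-way case split used in the base case (now applied to the $b$-to-$c$ edge, with $b \in \Transactions_{\txid_2}$), we obtain $(\txid_1, \txid_c) \in \TCO$. It then remains to bridge $c$ to $a$. Here we case-split on whether the remainder of the path lies in the restricted relation $\hb'$ of \cref{lem:si_completeness_aux2}: if so, that lemma hands us either $(c,a) \in \poi$ or $(\txid_c, \txid_a) \in \TCO$, and either conclusion combines with $(\txid_1, \txid_c) \in \TCO$ to give $(\txid_1, \txid_a) \in \TCO$. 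Otherwise the path must traverse one of the ``bad'' $\silo$-edges excluded from $\hb'$ --- exactly an $\mathit{rl}/\mathit{ru} \relarrow{\silo} \mathit{pl}/\mathit{wu}$ edge between two transactions $\txid_3, \txid_4$ in the same $y$-phase with $\txid_4$ appearing before $\txid_3$. We split the path at the first such edge, normalise the surrounding edges (as we did at the start of the proof, using construction of $\silo$ and $\po$-composition to convert to the canonical $\mathit{ru} \relarrow{\silo} \mathit{pl}$ shape), argue $(\txid_1, \txid_3) \in \TCO$ by combining the already-derived $(\txid_1, \txid_c) \in \TCO$ with \cref{lem:si_completeness_aux2} applied to the now-$\hb'$-only prefix, and then feed the shortened $\hb_k$-suffix ($k < i$) into the induction hypothesis to obtain either $\stage{\txid_3.\mathit{ru}_{\code y}} < \stage{a}$ or $(\txid_3, \txid_a) \in \TCO$. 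Since $\stage{\txid_3.\mathit{ru}_{\code y}} = \stage{\txid_1.\mathit{ru}_{\code x}} = 2$ and $\TCO$ is transitive, in both sub-cases the required disjunction holds.

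The main technical obstacle is exactly the last step: ``bad'' within-phase backward $\silo$ edges are not present in $\hb'$, so \cref{lem:si_completeness_aux2} is not directly applicable across them, and a naive induction on $i$ does not decrease. The fix --- and the crux of the argument --- is to recognise that every such backward $\silo$ edge is itself an instance of the lemma's own hypothesis (with fresh $\txid_3, \txid_4$ in place of $\txid_1, \txid_2$), so we may peel it off, apply the induction hypothesis on the strictly shorter tail, and transfer the conclusion back via transitivity of $\TCO$ and the fact that all $\mathit{ru}$ events share the same stage. Careful bookkeeping of the four disjunct-shapes is needed to ensure that the reshaped edge at each peel step is again in the canonical $\mathit{ru} \relarrow{\silo} \mathit{pl}$ form required to invoke the induction hypothesis.
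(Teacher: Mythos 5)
Your proposal is correct and follows essentially the same route as the paper's proof: the same reduction of the four disjuncts to the canonical $\mathit{ru}\relarrow{\silo}\mathit{pl}$ form, the same stratified induction on $\hb_i$, the same use of \cref{lem:si_completeness_aux2} on the $\hb'$ portion, and the same peeling of within-phase backward $\silo$ edges via the induction hypothesis combined with transitivity of $\TCO$ and the equality of stages of $\mathit{ru}$ events. The only cosmetic difference is that the paper dismisses the $\rfe$ case out of $\txid_2.\mathit{pl}_{\code x}$ in the base case as impossible (no $\rf$ edges leave lock events), whereas you fold it into the $\TCO$ derivation; this does not affect correctness.
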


\begin{theorem}[Completeness]
For all SI execution graphs $\absGSI$ and their counterpart implementation graphs $\impGSI$ constructed as above,
\[
	\sicon[\absGSI]  \Rightarrow \consistent{\impGSI}
\]
\begin{proof}
Pick an arbitrary SI execution graph $\absGSI$ and its counterpart implementation graph $\impGSI$ constructed as above and assume $\sicon$ holds.
From the definition of $\consistent{\impGSI}$ it suffices to show: 
\begin{enumerate}
	\item $\irr{\impGSI.\hb}$ \label{goal:si_completeness_hb_irr}
	\item $\irr{\impGSI.\co ; \impGSI.\hb}$ \label{goal:si_completeness_co_hb_irr}
	\item $\irr{\impGSI.\fr ; \impGSI.\hb}$ \label{goal:si_completeness_fr_hb_irr}
\end{enumerate}
\textbf{RTS. part \ref{goal:si_completeness_hb_irr}}\\
We proceed by contradiction. Let us assume that there exists $a, \trace_\txid$ such that $a \in  \trace_\txid.\Events $ and $(a, a) \in \impGSI.\hb$.
There are now two cases to consider: 1) $(a, a) \in \hb'$; or 2) $(a, a) \not\in \hb'$. 

In case (1), from \cref{lem:si_completeness_aux2} we have $(a, a) \in \impGSI.\poi \lor (\Transactions_\txid, \Transactions_\txid) \in \TCO$. The first disjunct leads to a contradiction as the construction of $\impGSI.\po$ yields an acyclic relation. The second disjunct leads to a contradiction as $\TCO$ is a strict total order. 

In case (2), let us split the $a \relarrow{\hb} a$ at the first occurrence of a non-$\hb'$ edge. That is, pick $\txid_1, \txid_2, \code x, i, j, k, g, h$ such that $a \relarrow{\hb} \txid_1.g \relarrow{\lo \setminus \hb'}  \txid_2.h \relarrow{\hb} a$, $\itemAt{\phase[x]{i}}{j} = \Transactions_{\txid_1}$,  $\itemAt{\phase[x]{i}}{k} = \Transactions_{\txid_2}$ and $j > k$.
As we have $\txid_1.g \relarrow{\lo \setminus \hb'} \txid_2.h \relarrow{\hb} a \relarrow{\hb} \txid_1.g$, from \cref{lem:si_completeness_aux3} and the definition of $\hb$ we then have $\stage{\txid_1.g } < \txid_1.g  \lor (\txid_1, \txid_1) \in \TCO$, leading to a contradiction in both disjuncts (the second disjunct yields a contradiction as $\TCO$ is a strict total order). \\

\noindent \textbf{RTS. part \ref{goal:si_completeness_co_hb_irr}}\\
We proceed by contradiction. Let us assume that there exists $a, \trace_{\txid_a}, b, \trace_{\txid_a}$ such that $a \in  \trace_{\txid_a}.\Events$, $b \in  \trace_{\txid_b}.\Events$, $(a, b) \in \impGSI.\hb$ and $(b, a) \in \impGSI.\co$.
Let $\loc a = \loc b = \code x$ for some shared location \code x.
There are now two cases to consider: 1) $(b, a) \in \impGSI.\coi$; or 2) $(b, a) \in \impGSI.\coe$.

In case (1) we then have $(b, a) \in \absGSI.\coi \subseteq \absGSI.\poi$. That is, we have $(b, a) \in \impGSI.\po \subseteq \impGSI.\hb$. We thus have $a \relarrow{\impGSI.\hb} b \relarrow{\impGSI.\hb} a$, contradicting our proof in part~\ref{goal:si_completeness_hb_irr}.
In case (2), from the construction of $\impGSI.\co$ we have $(b, a) \in \absGSI.\cot$ and thus 
from the construction of $\impGSI.\lo$ we then have $(\txid_{b}.\mathit{wu}_{\code x}, \txid_{a}.\mathit{rl}_{\code x}) \in \impGSI.\lo$. 
As such we have $a \relarrow{\impGSI.\hb} b \relarrow{\impGSI.\po} \txid_{b}.\mathit{wu}_{\code x} \relarrow{\impGSI.\lo} \txid_{a}.\mathit{rl}_{\code x} \relarrow{\impGSI.\po} a$. That is, we have $a \relarrow{\impGSI.\hb} a$, contradicting our proof in part~\ref{goal:si_completeness_hb_irr}.\\

\noindent \textbf{RTS. part \ref{goal:si_completeness_fr_hb_irr}}\\
We proceed by contradiction. Let us assume that there exists $a, \trace_{\txid_a}, b, \trace_{\txid_a}$ such that $a \in  \trace_{\txid_a}.\Events$, $b \in  \trace_{\txid_b}.\Events$, $(a, b) \in \impGSI.\hb$ and $(b, a) \in \impGSI.\fr$.

Let $\loc a = \loc b = \code x$ for some shared location \code x.
There are now two cases to consider: 1) $(b, a) \in \impGSI.\fri$; or 2) $(b, a) \in \impGSI.\fre$.

In case (1) we then have $(b, a) \in \absGSI.\fri \subseteq \absGSI.\poi$. That is, we have $(b, a) \in \impGSI.\po \subseteq \impGSI.\hb$. We thus have $a \relarrow{\impGSI.\hb} b \relarrow{\impGSI.\hb} a$, contradicting our proof in part~\ref{goal:si_completeness_hb_irr}.
In case (2), from the construction of $\impGSI.\fr$ we have $(b, a) \in \absGSI.\frt$ and thus 
from the construction of $\impGSI.\lo$ we then have either $(b, \txid_{b}.\mathit{wu}_{\code x}) \in \impGSI.\po$ and $(\txid_{b}.\mathit{wu}_{\code x}, \txid_{a}.\mathit{rl}_{\code x}) \in \impGSI.\lo$; or $(b, \txid_{b}.\mathit{ru}_{\code x}) \in \po$ and $(\txid_{b}.\mathit{ru}_{\code x}, \txid_{a}.\mathit{pl}_{\code x}) \in \impGSI.\lo$.
As such in both cases we have $(b, a) \in \impGSI.\hb$. 
Consequently, we have $a \relarrow{\impGSI.\hb} b \relarrow{\impGSI.\hb} a$, contradicting our proof in part~\ref{goal:si_completeness_hb_irr}.

\end{proof}
\end{theorem}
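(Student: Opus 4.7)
The goal is to show that the implementation graph $\impGSI$ built from an SI-consistent $\absGSI$ satisfies all four clauses of \cref{def:si_implementation_consistency}. The lock axioms \ref{ax:lock_wsync}, \ref{ax:lock_wex}, \ref{ax:lock_rshared} (and in fact also \ref{ax:lock_rsync}) hold directly by construction of $\silo = \bigcup_x \RLO_x$: per location $x$, the definition of $\RLO_x$ chains through the phases $\phase{i}$ in TCO order, placing every $\mathit{pl}_x$ and $\mathit{wu}_x$ between the reader events of its own phase and all events of later phases, which is exactly what the three axioms demand. I would verify these three inclusions with a short case analysis on the pair of lock event types. The remaining clause, \ref{ax:ra_acyc}, i.e.\ acyclicity of $\hbloc \cup \co \cup \fr$, is the substantive content of the theorem.

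The plan for \ref{ax:ra_acyc} is first to reduce it to three irreflexivity claims: (i) $\irr{\impGSI.\hb}$, (ii) $\irr{\impGSI.\co;\impGSI.\hb}$, and (iii) $\irr{\impGSI.\fr;\impGSI.\hb}$. Claims (ii) and (iii) will be derived from (i) once the $\co/\fr$ edges crossing transactions are translated into $\silo$ edges between the corresponding $\mathit{wu}_x$/$\mathit{ru}_x$ and $\mathit{rl}_x$/$\mathit{pl}_x$ events via the construction of $\RLO_x$; internal $\coi$, $\fri$ are already contained in $\impGSI.\po$ by the construction of $\impGSI$. So the real work is (i).

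For (i), assume for contradiction that $(a,a)\in\impGSI.\hb$ for some $a$ in some trace $\trace_\txid$. Consider the auxiliary relation $\hb'$ defined just before \cref{lem:si_completeness_aux2}, which excludes exactly those $\silo$ edges that point from a reader event in one phase slot to the writer of an earlier slot within the same phase. If the whole cycle lies in $\hb'$, then \cref{lem:si_completeness_aux2} produces either an $\impGSI.\poi$-cycle (impossible since intra-transaction $\po$ is acyclic by construction of $\trace_\txid$) or a $\TCO$-cycle (impossible since $\TCO$ is a strict total order on transaction classes). Otherwise the cycle traverses at least one of the four excluded $\silo$ edge types; splitting the cycle at the first such edge gives $a \relarrow{\impGSI.\hb} \txid_1.g \relarrow{\silo\setminus\hb'} \txid_2.h \relarrow{\impGSI.\hb} a$ with $\itemAt{\phase[x]{i}}{j} = \txid_1$, $\itemAt{\phase[x]{i}}{k} = \txid_2$, $j>k$. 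Applying \cref{lem:si_completeness_aux3} to the segment $\txid_2.h \relarrow{\impGSI.\hb} a \relarrow{\impGSI.\hb} \txid_1.g$ yields either $\stage{\txid_1.\mathit{ru}_x} < \stage{\txid_1.g}$ (impossible because $\txid_1.g \in \{\mathit{rl}_x,\mathit{ru}_x\}$, so its stage is at most that of $\mathit{ru}_x$) or $(\txid_1,\txid_1)\in\TCO$ (impossible, strict total order). Either branch contradicts the assumption.

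The main obstacle, and the reason the auxiliary lemmas had to be set up so delicately, is precisely handling those "reader-to-earlier-writer-in-the-same-phase" $\silo$ edges: they are genuinely present in $\impGSI.\silo$ (because $\RLO_x$ relates the later-listed reader to the earlier writer's $\mathit{pl}_x$/$\mathit{wu}_x$) but they do not correspond to any $\TCO$ ordering since both transactions sit in the same phase. \cref{lem:si_completeness_aux3} is exactly the inductive argument that lets us walk along an $\impGSI.\hb$-path immediately following such an edge and still conclude either a monotone stage increase or a $\TCO$ ordering at the far endpoint. Once (i) is in hand, the proofs of (ii) and (iii) are short: for $\coe$ and $\fre$ one extracts the relevant $\lo$ edge from $\txid_b.\mathit{wu}_x$ (respectively $\txid_b.\mathit{wu}_x$ or $\txid_b.\mathit{ru}_x$) to $\txid_a.\mathit{rl}_x$ (resp.\ $\txid_a.\mathit{rl}_x$ or $\txid_a.\mathit{pl}_x$) guaranteed by $\RLO_x$, sandwiches it with $\po$ at both ends, and reduces to an $\impGSI.\hb$-cycle ruled out by (i).
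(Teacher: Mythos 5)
Your proposal is correct and follows essentially the same route as the paper's proof: the same reduction to the three irreflexivity claims, the same case split on whether the $\hb$-cycle stays within $\hb'$ (invoking \cref{lem:si_completeness_aux2}) or crosses one of the excluded same-phase reader-to-writer $\silo$ edges (invoking \cref{lem:si_completeness_aux3} on the segment after the first such edge), and the same reduction of the $\co;\hb$ and $\fr;\hb$ claims to the first via the $\lo$ edges supplied by $\RLO_\x$. Your reading of the first disjunct in the non-$\hb'$ case ($\stage{\txid_1.\mathit{ru}_{\x}} < \stage{\txid_1.g}$ being impossible since $g \in \{\mathit{rl}_{\x},\mathit{ru}_{\x}\}$) is in fact cleaner than the paper's garbled statement of that contradiction.
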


\newcommand{\rseq}[3]{\ensuremath{rd(#1, #2, #3)}}
\newcommand{\wseq}[3]{\ensuremath{wr(#1, #2, #3)}}
\newcommand{\fw}[1]{\ensuremath{\mathit{fw}_{#1}}}
\newcommand{\iw}[1]{\ensuremath{\mathit{iw}_{#1}}}
\newcommand{\ir}[1]{\ensuremath{\mathit{ir}_{#1}}}
\newpage
\section{Soundness and Completeness of the Lazy SI Implementation}
\label{app:si_alternative}
Given an execution graph $\impGSI$ of the lazy SI implementation, let us assign a transaction identifier to each transaction executed by the program; and given a transaction $\txid$. 
Let $\mathit{RS}^0_{\txid} = \mathit{WS}^0_{\txid} = \emptyset$. 
Observe that given a transaction $\txid$ of the lazy SI implementation, the trace of $\txid$, written $\trace_{\txid}$, is of the form: 
\[
	\mathit{Fs}
	\relarrow{\imm \po} \mathit{Is}
	\relarrow{\imm \po} \mathit{Ts}
	\relarrow{\imm \po} \mathit{RUs}
	\relarrow{\imm \po} \mathit{PLs}
	\relarrow{\imm \po} \mathit{Ws}
	\relarrow{\imm \po} \mathit{Us}
\]
where:
\begin{itemize}
	\item $\mathit{Fs}$ denotes the sequence of events failing to obtain the necessary locks, i.e.\ those iterations that do not succeed in promoting the writer locks;
	\item $\mathit{Is}$ denotes the sequence of events initialising the values of  \code{LS},\readset and  \writeset with $\emptyset$, and initialising \code{s[x]} with $\bot$ for each location \x;
	\item $\mathit{Ts}$ denotes the sequence of events corresponding to the execution of \denot{\code{T}} and is of the form $\mathit{t}_1 \relarrow{\imm \po} \cdots \relarrow{\imm \po} \mathit{t}_k$, where for all $m \in \{1 \cdots k\}$:
	\[
	\mathit{t}_m = 
	\begin{cases}
		\rseq{\code x_m}{v_m}{\mathit{RS}_{m {-} 1},  \mathit{WS}_{m {-} 1}}
		\relarrow{\imm \po} \mathit{lr}_{\x_m}
		& \text{if } O_m {=} \readE{-}{\code x_m}{v_m} \\
		\wseq{\code x_m}{v_m}{\mathit{RS}_{m {-} 1},  \mathit{WS}_{m {-} 1}} 
		\relarrow{\imm \po} \mathit{wws}_{\x_m}
		\relarrow{\imm \po} \mathit{lw}_{\x_m}
		& \text{if } O_m {=} \writeE{\rel}{\code x_m}{v_m} \\
	\end{cases}
	\]
	where $O_m$ denotes the $m$th event in the trace of the original $\code T$; 
	$\mathit{lr}_{\x_m} \eqdef \readE{}{\code{s[x}_m\code ]}{v_m}$; 
	$\mathit{lw}_{\x_m} \eqdef \writeE{}{\code{s[x}_m\code ]}{v_m}$; 
\[
\small
	\rseq{\code x_m}{v_m}{\mathit{RS}_{m {-} 1},  \mathit{WS}_{m {-} 1}}
	\eqdef
	\begin{cases}
		\readE{}{\code{s[x}_m\code ]}{\bot}
		& \text{if } \code x_m \not\in \mathit{RS}_{m {-} 1} \cup  \mathit{WS}_{m {-} 1} \\
		 \\
		 \relarrow{\imm \po} \mathit{fs_m} \\
		\relarrow{\imm \po} \mathit{rl}_{\x_m} \\
		\relarrow{\imm \po} \mathit{wrs}_{\x_m} \\
		\relarrow{\imm \po} \mathit{rs}_{\x_m} \\
		\relarrow{\imm \po} \mathit{ws}_{\x_m}
		& \\\\
		\emptyset
		& \text{otherwise} 
	\end{cases}
\]	
$\mathit{fs}_m$ denotes the sequence of events attempting (but failing) to acquire the read lock on $\code x_m$, 
$\mathit{rl}_{\code x_m} \eqdef \rlockE{\xl_m}$, 
$\mathit{wrs}_{\x_m} \eqdef \writeE{}{\readset}{\mathit{RS}_m}$, 
$\mathit{rs}_{\code x_m} \eqdef \readE{\acq}{\code x_m}{v_m}$, 
and $\mathit{ws}_{\code x_m} \eqdef \writeE{}{\code{s[x}_m\code ]}{v_m}$; 
and for all $m > 0$:
\[
	\mathit{RS}_{m {+} 1} \eqdef 
	\begin{cases}
		\mathit{RS}_m \cup \{\code x_m\} & \text{if } O_m {=} \readE{-}{\code x_m}{-} \\
		\mathit{RS}_m & \text{otherwise}
	\end{cases}
\]
and
\[
\small
	\wseq{\code x_m}{v_m}{\mathit{RS}_{m {-} 1},  \mathit{WS}_{m {-} 1}}
	\eqdef
	\begin{cases}
		\readE{}{\code{s[x}_m\code ]}{\bot}
		& \text{if } \code x_m \not\in \mathit{RS}_{m {-} 1} \cup  \mathit{WS}_{m {-} 1} \\
		\relarrow{\imm \po} \mathit{fs_m}  \\
		\relarrow{\imm \po} \mathit{rl}_{\x_m}
		\emptyset
		& \text{otherwise} 
	\end{cases}
\]	
$\mathit{lw}_{\x_m} {=} \writeE{}{\code{s[x}_m\code ]}{v_m}$,
$\mathit{wws}_{\x_m} \eqdef \writeE{}{\writeset}{\mathit{WS}_m}$; 
$\mathit{fs}_m$ and $\mathit{rl}_{\code x_m}$ are as defined above; and 
\[
	\mathit{WS}_{m {+} 1} \eqdef 
	\begin{cases}
		\mathit{WS}_m \cup \{\code x_m\} & \text{if } O_m {=} \writeE{-}{\code x_m}{-} \\
		\mathit{WS}_m & \text{otherwise}
	\end{cases}
\]	
Let $\readset_\txid = \mathit{RS}_m$ and  
$\writeset_\txid = \mathit{WS}_m$; 
let $\readset_\txid \cup \writeset_\txid$ be enumerated as $\{\x_1 \cdots \x_i\}$ for some $i$.
	\item $\mathit{RUs}$ denotes the sequence of events releasing the reader locks (when the given location is in the read set only) and is of the form $\mathit{ru}_{\code x_1} \relarrow{\imm \po} \cdots \relarrow{\imm \po} \mathit{ru}_{\code x_i}$, where for all $n \in \{1 \cdots i\}$:
	\[
	\begin{array}{l}
		\mathit{ru}_{\code x_n} = 
		\begin{cases}
			\runlockE{\xl_n}
			& \text{ if } \code x_n \not\in \writeset_{\txid}  \\
			\emptyset
			& \text{ otherwise}
		\end{cases}
	\end{array}	
	\]
	\item $\mathit{PLs}$ denotes the sequence of events promoting the reader locks to writer ones (when the given location is in the write set), and is of the form $\mathit{pl}_{\code x_1} \relarrow{\imm \po} \cdots \relarrow{\imm \po} \mathit{pl}_{\code x_i}$, where for all $n \in \{1 \cdots i\}$:
	\[
	\begin{array}{l}
		\mathit{pl}_{\code x_n} = 
		\begin{cases}
			\plockE{\xl_n}
			& \text{if }  \code x_n \in \writeset_{\txid} \\
			\emptyset
			& \text{ otherwise } 
		\end{cases}
	\end{array}	
	\]
	\item $\mathit{Ws}$ denotes the sequence of events committing the writes of \denot{\code{T}} and is of the form $\mathit{c}_{\x_1} \relarrow{\imm \po} \cdots \relarrow{\imm \po} \mathit{c}_{\x_i}$, where for all $n \in \{1 \cdots i\}$:
\[
	\mathit{c}_{\x_n} = 
	\begin{cases}
		\readE{}{\code{s[x}_n \code{]}}{v_n}  \relarrow{\imm \po} 
		\mathit{w}_{\x_n} {=} \writeE{}{\x_n}{v_n} 
		& \text{if } \x_n \in \writeset_\txid \\
		\emptyset 
		& \text{otherwise}
	\end{cases}
	\]
	\item $\mathit{Us}$ denotes the sequence of events releasing the locks on the write set, and is of the form $\mathit{wu}_{\code x_1} \relarrow{\imm \po} \cdots \relarrow{\imm \po} \mathit{wu}_{\code x_i}$, where for all $n \in \{1 \cdots i\}$:
	\[
		\mathit{wu}_{\code x_n} = 
		\begin{cases}
			\wunlockE{\code{xl}_n} & \text{if } \code x_n \in \writeset_{\txid} \\
			\emptyset & \text{otherwise}
		\end{cases}				
	\]
\end{itemize}


Given a transaction trace $\trace_{\txid}$, we write e.g.~$\txid.\mathit{Us}$ to refer to its constituent $\mathit{Us}$ sub-trace and write $\mathit{Us}.\Events$ for the set of events related by \po in $\mathit{Us}$. Similarly, we write $\txid.\Events$ for the set of events related by \po in $\trace_{\txid}$.
Note that $\impGSI.\Events = \bigcup\limits_{\txid \in \sort{Tx}}  \txid.\Events $.

Note that for each transaction $\txid$ and each location $\x$, the $\txid.\mathit{rl}_\x$, $\txid.\mathit{rs}_\x$, $\txid.\mathit{ru}_\x$, $\txid.\mathit{pl}_\x$, $\txid.\mathit{wu}_\x$ and $\txid.\mathit{w}_\x$ are uniquely identified when they exist. 

For each location $\x \in \writeset_\txid$, let $\fw \x$ denote the maximal write (in $\po$ order within $\txid$) logging a write for \x in \code{s[x]}. That is,  when $\trace_{\txid}=  t_1 \relarrow{\imm \po} \cdots \relarrow{\imm \po} t_m$, let $\fw \x = \func{wmax}{\x, [t_1 \cdots t_m]}$, where
\[
\begin{array}{@{} c @{}}
	 \func{wmax}{\x, [\,]} \text{ undefined} \\
	  \func{wmax}{\x, L {++} [t]}
	  \eqdef
	  \begin{cases}
	  	\mathit{lw}_{\x}
	  	& \text{if } t {=} \wseq{\x}{-}{-, -} \relarrow{\po} \mathit{lw}_{\x}  \relarrow{\po} \mathit{wws}_{\x} \\
	  	 \func{wmax}{\x, L}
	  	 & \text{otherwise}
	  \end{cases}
\end{array}	  
\]
Similarly, for each location $\x \in \writeset_\txid$, let $\iw \x$ denote the minimal write (in $\po$ order within $\txid$) logging a write for \x in \code{s[x]}. That is,  when $\trace_{\txid}=  t_1 \relarrow{\imm \po} \cdots \relarrow{\imm \po} t_m$, let $\iw \x = \func{wmin}{\x, [t_1 \cdots t_m]}$, where
\[
\begin{array}{@{} c @{}}
	 \func{wmin}{\x, [\,]} \text{ undefined} \\
	  \func{wmin}{\x, [t] {++} L}
	  \eqdef
	  \begin{cases}
	  	\mathit{lw}_{\x}
	  	& \text{if } t {=} \wseq{\x}{-}{-, -} \relarrow{\po} \mathit{lw}_{\x}  \relarrow{\po} \mathit{wws}_{\x} \\
	  	 \func{wmin}{\x, L}
	  	 & \text{otherwise}
	  \end{cases}
\end{array}	  
\]
\subsection{Implementation Soundness}
In order to establish the soundness of our implementation, it suffices to show that given an RA-consistent execution graph of the implementation $\impGSI$, we can construct a corresponding SI-consistent execution graph $\absGSI$ with the same outcome.
%

Given a transaction $\txid \in \sort{Tx}$ with $\readset_{\txid} \cup \writeset_{\txid}=\{\code x_1 \cdots \code x_i\}$
and trace 
$ \trace_{\txid}$ as above with  $\mathit{Ts}= \mathit{t}_1 \relarrow{\imm \po} \cdots \relarrow{\imm \po} \mathit{t}_k$, 
we construct the corresponding SI execution trace $\trace'_{\txid}$ as follows:
\[
	\trace'_{\txid} \eqdef  \mathit{t}'_1 \relarrow{\imm \po} \cdots \relarrow{\imm \po} \mathit{t}'_k
\]
where  for all $m \in \{1 \cdots k\}$:
\[
\begin{array}{l c l}
	\mathit{t}'_m {=} \readE{}{\code x_m}{v_m}
	& \text{when} & 
	\mathit{t}_m = 
	\rseq{\x_m}{v_m}{S_m}
	\relarrow{\imm \po} \mathit{lr}_{\x_m} \\
	
	\mathit{t}'_m {=} \writeE{}{\code x_m}{v_m}
	& \text{when} & 
	\mathit{t}_m = 
	\wseq{\x_m}{v_m}{S_m}
	\relarrow{\imm \po} \mathit{lw}_{\x_m}
	\relarrow{\imm \po} \mathit{wws}_{\x_m}
\end{array}	
\]
such that in the first case the identifier of $\mathit{t}'_m$ is that of $\mathit{lr}_{\x_m}$; 
and in the second case the identifier of $\mathit{t}'_m$ is that of $\mathit{lw}_{\x_m}$.
We then define:
\[
	\mathsf{RF}_{\txid} \eqdef
	\setcomp{
		(w, t'_j)	
	}{
		t'_j \in \trace'_{\txid}.\Events \land \exsts{\code x, v} t'_j {=} \readE{\acq}{\code x}{v} \land w {=} \writeE{\rel}{\code x}{v} \\
		\land (w \in \txid.\Events \Rightarrow 
		\begin{array}[t]{@{} l @{}}
			w \relarrow{\po} t'_j \,\land\\
			(\for{e \in \txid.\Events } w \relarrow{\po} e \relarrow{\po} t'_j \Rightarrow (\loc e {\ne} \code x \lor e {\not\in} \Writes)))
		\end{array} \\				
		
		\land (w \not \in \txid.\Events \Rightarrow 
		\begin{array}[t]{@{} l @{}}
			(\for{e \in \txid.\Events} (e \relarrow{\po} t'_j \Rightarrow (\loc e \ne \code x \lor e \not\in \Writes)) \\
			\land\, \exsts{\txid'} 
		 	(\txid'.w_\x, \txid.\mathit{rs}_{\x})  \in \impGSI.\rf)
			\land w {=} \txid'.\fw \x
		\end{array}		
	}
\]
Similarly, we define: 
\[
	\mathsf{MO} \eqdef
	\transC{
	\setcomp{
		(w_1, w_2), \\
		(w_3, w_4)
	}{
		\exsts{\txid} 
		w_1, w_2 \in \txid.\Events \cap \Writes 
		\land \loc{w_1} {=} \loc{w_2} 
		\land (w_1, w_2) \in \impGSI.\po \\
		\land\, \exsts{\txid_1, \txid_2, \x} 
		w_3 {=} \txid_1.\fw \x
		\land w_4 {=} \txid_1.\iw \x
		\land (\txid_1.w_\x, \txid_2.w_\x) \in \impGSI.\co \\
	}
	}
\]
We are now in a position to demonstrate the soundness of our implementation. Given an RA-consistent execution graph $\impGSI$ of the implementation, we construct an SI execution graph $\absGSI$ as follows and demonstrate that $\sicon$ holds.

\begin{itemize}
	\item $\absGSI.\Events = \bigcup\limits_{\txid \in \sort{Tx}} \trace'_{\txid}.\Events$, with the $\tx{.}$ function defined as:
	\[
		\tx{a} \eqdef \txid \quad \text{ where } \quad a \in \trace'_{\txid}
	\]
	\item $\absGSI.\po = \coerce{\impGSI.\po}{\absGSI.\Events}$
	\item $\absGSI.\rf = \bigcup_{\txid \in \textsc{Tx}} \mathsf{RF}_{\txid}$
	\item $\absGSI.\co = \mathsf{MO}$
	\item $\absGSI.\silo = \emptyset$
\end{itemize}
Observe that the events of each $\trace'_{\txid}$ trace coincides with those of the equivalence class  $\Transactions_{\txid}$ of $\absGSI$. That is,  $\trace'_{\txid}.\Events = \Transactions_{\txid}$. 

\begin{lemma}\label{lem:si_alt_lock_hb}
Given an RA-consistent execution graph $\impGSI$ of the implementation and its corresponding SI execution graph $\absGSI$ constructed as above, for all $a, b, \txid_a, \txid_b, \x$:
\small
\begin{align}
	& \hspace*{-15pt}
	\txid_a \ne \txid_b
	\land a \in \txid_a.\Events 
	\land b \in \txid_b.\Events 
	\land \loc a = \loc b = \x
	\Rightarrow \nonumber \\
	& \hspace*{-15pt}
	\;\; ((a, b) \in \absGSI.\rf \Rightarrow  \txid_a.\mathit{wu}_{\x} \relarrow{\impGSI.\hb} \txid_b.\mathit{rl}_{\x} ) 
	\label{lem:si_alt_lock_hb_rf} \\
	& \hspace*{-15pt}
	\;\; \land ((a, b) \in \absGSI.\co \Rightarrow  \txid_a.\mathit{wu}_{\x} \relarrow{\impGSI.\hb} \txid_b.\mathit{rl}_{\x} ) 
	\label{lem:si_alt_lock_hb_co}\\
	& \hspace*{-15pt}
	\;\; \land \big((a, b) \in \absGSI.\fr \Rightarrow  
		(\x \in \writeset_{\txid_a} \land \txid_a.\mathit{wu}_{\x} \relarrow{\impGSI.\hb} \txid_b.\mathit{rl}_{\x} ) 
		\lor 
		(\x \not\in \writeset_{\txid_a} \land \txid_a.\mathit{ru}_{\x} \relarrow{\impGSI.\hb} \txid_b.\mathit{pl}_{\x} ) 
	\big)
	\label{lem:si_alt_lock_hb_fr}
\end{align}	
\normalsize
\begin{proof}
Pick an arbitrary RA-consistent execution graph $\impGSI$ of the implementation and its corresponding SI execution graph $\absGSI$ constructed as above. Pick an arbitrary $a, b, \txid_a, \txid_b, \x$ such that $\txid_a \ne \txid_b$, $a \in \txid_a.\Events$, $a \in \txid_a.\Events$, and $\loc a = \loc b = \x$.\\

\noindent \textbf{RTS. (\ref{lem:si_alt_lock_hb_rf})}\\
Assume $(a, b) \in \absGSI.\rf$. From the definition of $\absGSI.\rf$ we then know $(\txid_a.w_\x, \txid_b.\mathit{rs}_{\x}) \in \impGSI.\rf$.
On the other hand, from \cref{lem:lock-ordering} we  know that either i) $\x \in \writeset_{\txid_b}$ and $\txid_b.\mathit{wu}_{x} \relarrow {\impGSI.\hb} \txid_a.\mathit{rl}_{x}$; or ii)  $\x \not\in \writeset_{\txid_b}$ and  $\txid_b.\mathit{ru}_{x} \relarrow {\impGSI.\hb} \txid_a.\mathit{pl}_{x}$; or iii) $\txid_a.\mathit{wu}_{x} \relarrow {\impGSI.\hb} \txid_b.\mathit{rl}_{x}$.
In case (i) we then have $\txid_a.w_\x \relarrow{\impGSI.\rf} \txid_b.\mathit{rs}_{\x} \relarrow{\impGSI.\po} \txid_b.\mathit{wu}_{x}  \relarrow{\impGSI.\hb} \txid_a.\mathit{rl}_{x}  \relarrow{\impGSI.\po} \txid_a.w_\x$. That is, we have $\txid_a.w_\x \relarrow{\impGSI.\hbloc} \txid_a.w_\x$, contradicting the assumption that $\impGSI$ is RA-consistent. 
Similarly in case (ii) we have $\txid_a.w_\x \relarrow{\impGSI.\rf} \txid_b.\mathit{rs}_{\x} \relarrow{\impGSI.\po} \txid_b.\mathit{ru}_{x}  \relarrow{\impGSI.\hb} \txid_a.\mathit{pl}_{x}  \relarrow{\impGSI.\po} \txid_a.w_\x$.  That is, we have $\txid_a.w_\x \relarrow{\impGSI.\hbloc} \txid_a.w_\x$, contradicting the assumption that $\impGSI$ is RA-consistent. 
In case (iii) the desired result holds trivially.\\

\noindent \textbf{RTS. (\ref{lem:si_alt_lock_hb_co})}\\
Assume $(a, b) \in \absGSI.\co$. From the definition of $\absGSI.\co$ we then know $(\txid_a.w_\x, \txid_b.w_\x) \in \impGSI.\co$.
On the other hand, from \cref{lem:lock-ordering}  we  know that either i) $\txid_b.\mathit{wu}_{x} \relarrow {\impGSI.\hb} \txid_a.\mathit{rl}_{x}$; or ii) $\txid_a.\mathit{wu}_{x} \relarrow {\impGSI.\hb} \txid_b.\mathit{rl}_{x}$.
In case (i) we then have $\txid_a.w_\x \relarrow{\impGSI.\co} \txid_b.w_\x \relarrow{\impGSI.\po} \txid_b.\mathit{wu}_{x}  \relarrow{\impGSI.\hb} \txid_a.\mathit{rl}_{x}  \relarrow{\impGSI.\po} \txid_a.w_\x$. That is, we have $\txid_a.w_\x \relarrow{\impGSI.\co} \txid_b.w_\x \relarrow{\impGSI.\hbloc} \txid_a.w_\x$, contradicting the assumption that $\impGSI$ is RA-consistent. 
In case (ii) the desired result holds trivially.\\

%

\noindent \textbf{RTS. (\ref{lem:si_alt_lock_hb_fr})}\\
Assume $(a, b) \in \absGSI.\fr$. From the definition of $\absGSI.\fr$ we then know that either 
$(\txid_a.w_{\x}, \txid_b.w_\x) \in \impGSI.\co$
or $(\txid_a.\mathit{rs}_{\x}, \txid_b.w_\x) \in \impGSI.\fr$.
In the former case the desired result follows immediately from the proof of part \eqref{lem:si_alt_lock_hb_co}.
In the latter case, from \cref{lem:lock-ordering} we  know that either i) $\txid_b.\mathit{wu}_{x} \relarrow {\impGSI.\hb} \txid_a.\mathit{rl}_{x}$; or ii)  $\x \not\in \writeset_{\txid_a}$ and $\txid_a.\mathit{ru}_{x} \relarrow {\impGSI.\hb} \txid_a.\mathit{pl}_{x}$; or iii) $\x \in \writeset_{\txid_a}$ and  $\txid_a.\mathit{wu}_{x} \relarrow {\impGSI.\hb} \txid_b.\mathit{rl}_{x}$.
In case (i) we then have $\txid_b.w_\x \relarrow{\impGSI.\po} \txid_b.\mathit{wu}_{x}  \relarrow{\impGSI.\hb} \txid_a.\mathit{rl}_{x}  \relarrow{\impGSI.\po} \txid_a.\mathit{rs}_{\x} \relarrow{\impGSI.\fr} \txid_b.w_\x$. That is, we have $\txid_b.w_\x \relarrow{\impGSI.\hbloc} \txid_a.\mathit{rs}_{\x} \relarrow{\impGSI.\fr} \txid_b.w_\x$, contradicting the assumption that $\impGSI$ is RA-consistent. 
In cases (ii-iii) the desired result holds trivially.\\
\end{proof}
\end{lemma}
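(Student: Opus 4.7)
The plan is to prove each of the three implications separately, following the same pattern: unfold the construction of the corresponding $\absGSI$-relation in terms of events of $\impGSI$, apply Lemma~\ref{lem:lock-ordering} to the pair of lock events on $\x$ belonging to $\txid_a$ and $\txid_b$ to get three (or two) candidate orderings, and for the "wrong" orderings close a short cycle in $\impGSI.\hbloc \cup \impGSI.\co \cup \impGSI.\fr$, contradicting RA-consistency of $\impGSI$.

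For part~\eqref{lem:si_alt_lock_hb_rf}, by construction of $\absGSI.\rf$ we get $(\txid_a.w_\x,\,\txid_b.\mathit{rs}_\x)\in\impGSI.\rf$, where $\txid_a.w_\x = \txid_a.\fw{\x}$ is the commit-phase write of $\txid_a$ that propagates the logged value to memory. Since $b\in\txid_b.\Events$ accesses $\x$, the lock events $\txid_b.\mathit{rl}_\x$ and $\txid_b.\mathit{wu}_\x$ (or $\txid_b.\mathit{pl}_\x$/$\txid_b.\mathit{ru}_\x$ as appropriate) exist. I would then invoke Lemma~\ref{lem:lock-ordering} on $\x$ for $\txid_a$ (which must be a writer on $\x$, since $\txid_a.w_\x$ exists) and $\txid_b$, giving three cases. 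In the two undesired cases, $\txid_b$'s lock event is $\impGSI.\hb$-before $\txid_a$'s; composing this with $\txid_a.w_\x\to_{\impGSI.\rf}\txid_b.\mathit{rs}_\x\to_{\impGSI.\po}(\txid_b$'s lock event) and $(\txid_a$'s lock event$)\to_{\impGSI.\po}\txid_a.w_\x$ yields a per-location $\hb$ cycle, violating RA-consistency.

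Part~\eqref{lem:si_alt_lock_hb_co} is analogous but simpler. The only inter-transactional edges in $\absGSI.\co = \mathsf{MO}$ come from the clause $(\txid_a.w_\x,\,\txid_b.w_\x)\in\impGSI.\co$, so both $\txid_a$ and $\txid_b$ are writers on $\x$ and Lemma~\ref{lem:lock-ordering} applies with two cases. The wrong case closes the cycle $\txid_a.w_\x\to_{\impGSI.\co}\txid_b.w_\x\to_{\impGSI.\po}\txid_b.\mathit{wu}_\x\to_{\impGSI.\hb}\txid_a.\mathit{rl}_\x\to_{\impGSI.\po}\txid_a.w_\x$. For part~\eqref{lem:si_alt_lock_hb_fr}, I would first case-split on the origin of the $\absGSI.\fr$ edge according to the construction of $\absGSI.\rf$ and $\absGSI.\co$: if the read $a$ was already served internally by a write in $\txid_a$, then the $\fr$-edge reduces to an $\absGSI.\co$ edge between $\txid_a$'s and $\txid_b$'s commit writes, and part~\eqref{lem:si_alt_lock_hb_co} delivers the conclusion (first disjunct, since $\x\in\writeset_{\txid_a}$). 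Otherwise $a$ was served externally, so $(\txid_a.\mathit{rs}_\x,\,\txid_b.w_\x)\in\impGSI.\fr$; here Lemma~\ref{lem:lock-ordering} splits on whether $\x\in\writeset_{\txid_a}$ (producing $\mathit{wu}$-$\mathit{rl}$ edges) or not (producing $\mathit{ru}$-$\mathit{pl}$ edges), and in each wrong case a cycle $\txid_b.w_\x\to_{\impGSI.\po}\cdot\to_{\impGSI.\hb}\cdot\to_{\impGSI.\po}\txid_a.\mathit{rs}_\x\to_{\impGSI.\fr}\txid_b.w_\x$ emerges.

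The main bookkeeping obstacle is that, unlike the eager case in Appendix~\ref{app:si}, the commit write $\txid_a.w_\x$ is a distinct event from the snapshot and log writes of $\txid_a$ on $\x$, and the mediator between abstract and implementation relations is the $\fw{\x}$/$\iw{\x}$ machinery defining $\mathsf{RF}$ and $\mathsf{MO}$. The care required is to verify that the \emph{specific} implementation writes appearing in $\impGSI.\rf$ and $\impGSI.\co$ on location $\x$ from $\txid_a$ are exactly the commit writes $\txid_a.w_\x$ (which are $\po$-inside the writer-lock critical section bounded by $\txid_a.\mathit{pl}_\x$ and $\txid_a.\mathit{wu}_\x$), so that the $\po$-steps used to close the cycles are valid; the log writes $\mathit{lw}_\x$ that occur during the body phase do not participate in inter-transactional $\rf$/$\co$, which is precisely what justifies ignoring them here.
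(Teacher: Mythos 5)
Your proposal follows essentially the same route as the paper's proof: unfold the constructions of $\absGSI.\rf$, $\absGSI.\co$ and $\absGSI.\fr$ to the commit writes $\txid_a.w_\x$ and snapshot reads $\txid_a.\mathit{rs}_\x$ of $\impGSI$, invoke Lemma~\ref{lem:lock-ordering} to enumerate the possible lock orderings, and refute the undesired cases by closing a per-location $\hb$/$\co$/$\fr$ cycle against RA-consistency, with the internal-read case of $\fr$ reduced to the $\co$ case exactly as the paper does. The proposal is correct.
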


\begin{lemma}
\label{lem:si_alt_soundness}
For all RA-consistent execution graphs $\impGSI$ of the implementation and their counterpart SI execution graphs $\absGSI$ constructed as above:
\[
\begin{array}{@{} l @{}}
	\for{\txid_a, \txid_b}
	\for{a \in \absGSI.\Transactions_{\txid_a}, b  \in \absGSI.\Transactions_{\txid_b}} \\
	\quad 
	(a, b) \in  \absGSI.\transC{(\pot \cup \rft \cup \cot)}
	\Rightarrow \\
	\hspace{50pt}
	\exsts{d \in  \absGSI.\Transactions_{\txid_b}} 
	\for{c \in \absGSI.\Transactions_{\txid_a}} (c, d) \in \impGSI.\hb \\
	\hspace{50pt}
	\land (\exsts{\y} (\txid_a.\mathit{wu}_\y, d) \in \impGSI.\hb
	\lor \for{c \in \txid_a.\Events} (c, d) \in \impGSI.\hb)
\end{array}	
\]
\begin{proof}
Let $S_0 = \absGSI.(\pot \cup \rft \cup \cot)$, and $S_{n {+} 1} = S_0; S_n$, for all $n >=0$. 
It is straightforward to demonstrate that $\absGSI.\transC{(\pot \cup \rft \cup \cot)} = \bigcup\limits_{i \in \Nats} S_i$. 
We thus demonstrate instead that:
\[
\begin{array}{@{} r l @{}}
	\for{i \in \Nats} 
	\for{\txid_a, \txid_b}
	\for{a \in \absGSI.\Transactions_{\txid_a}, b  \in \absGSI.\Transactions_{\txid_b}} & \\
	\quad 
	(a, b) \in S_i 
	\Rightarrow
	&
	\exsts{d \in  \absGSI.\Transactions_{\txid_b}} 
	\for{c \in \absGSI.\Transactions_{\txid_a}} (c, d) \in \impGSI.\hb \\ 
	& \land (\exsts{\y} (\txid_a.\mathit{wu}_\y, d) \in \impGSI.\hb
	\lor \for{c \in \txid_a.\Events} (c, d) \in \impGSI.\hb)
\end{array}	
\]
We proceed by induction on $i$.\\

\noindent \textbf{Base case $i = 0$}\\
Pick arbitrary $\txid_a, \txid_b$, $a \in \absGSI.\Transactions_{\txid_a}, b  \in \absGSI.\Transactions_{\txid_b}$ such that $(a, b) \in  S_0$. There are now three cases to consider: 
1) $(a, b) \in \absGSI.\pot$; or 
2) $(a, b) \in \absGSI.\rft$; or 
3) $(a, b) \in \absGSI.\cot$.

In case (1), pick an arbitrary $c \in \absGSI.\Transactions_{\txid_a}$. From the definition of $\absGSI.\pot$ we have $(c, b) \in \impGSI.\po \subseteq \impGSI.\hb$, as required. 
Pick an arbitrary $c \in \txid_a.\Events$. From the definition of $\absGSI.\pot$ we have $(c, b) \in \impGSI.\po \subseteq \impGSI.\hb$, as required. 

In case (2), we then know there exists $w \in \absGSI.\Transactions_{\txid_a}$ and $r \in \absGSI.\Transactions_{\txid_b}$ such that $(w, r) \in \absGSI.\rf$. 
Let $\loc w = \loc r = \x$. 
From \cref{lem:si_alt_lock_hb} we then have $\txid_a.\mathit{wu}_\x  \relarrow{\impGSI.\hb} \txid_b.\mathit{rl}_\x$.
Pick an arbitrary $c \in \absGSI.\Transactions_{\txid_a}$.
As such we have 
$c \relarrow{\impGSI.\po} 
\txid_a.\mathit{wu}_\x  \relarrow{\impGSI.\hb} 
\txid_b.\mathit{rl}_\x  \relarrow{\impGSI.\po} \txid_b.\mathit{rs}_\x$. 
That is, we have $(c, \txid_b.\mathit{rs}_\x), (\txid_a.\mathit{wu}_\x, \txid_b.\mathit{rs}_\x) \in \impGSI.\hb$, as required. 

In case (3), we then know there exists $w \in \absGSI.\Transactions_{\txid_a}$ and $w' \in \absGSI.\Transactions_{\txid_b}$ such that $(w, w') \in \absGSI.\co$. 
Let $\loc w = \loc{w'} = \x$. 
From \cref{lem:si_alt_lock_hb} we then have $\txid_a.\mathit{wu}_\x  \relarrow{\impGSI.\hb} \txid_b.\mathit{rl}_\x$.
Pick an arbitrary $c \in \absGSI.\Transactions_{\txid_a}$.
As we also have $\txid_b.\mathit{rl}_\x  \relarrow{\impGSI.\po} w'$, 
and $c \relarrow{\impGSI.\po} \txid_a.\mathit{wu}_\x$, we then have 
$(c,  w'), (\txid_a.\mathit{wu}_\x, w') \in \impGSI.\hb$, as required.\\

\noindent \textbf{Inductive case $i {=}  n {+} 1$}
\begin{align}
& \begin{array}{@{} l @{}}
	\for{j \in \Nats} 
	\for{\txid_a, \txid_b}
	\for{a \in \absGSI.\Transactions_{\txid_a}, b  \in \absGSI.\Transactions_{\txid_b}} \\
	\quad 
	(a, b) \in S_j \land j \leq n
	\Rightarrow \\
	\hspace{50pt}
	\exsts{d \in  \absGSI.\Transactions_{\txid_b}} 
	\for{c \in \absGSI.\Transactions_{\txid_a}}
	(c, d) \in \impGSI.\hb \\  
	\hspace{50pt} \land (\exsts{\y} (\txid_a.\mathit{wu}_\y, d) \in \impGSI.\hb
	\lor \for{c \in \txid_a.\Events} (c, d) \in \impGSI.\hb)
\end{array}	
\tag{I.H.}
\label{IH:si_alt_soundnenss}
\end{align}
Pick arbitrary $\txid_a, \txid_b$, $a \in \absGSI.\Transactions_{\txid_a}, b  \in \absGSI.\Transactions_{\txid_b}$ such that $(a, b) \in  S_i$.
From the definition of $S_i$ we then know there exist $e, \txid_e$ such that $e \in \absGSI.\Transactions_{\txid_e}$, $(a, e) \in S_0$ and $(e, b) \in S_n$. 
Since $(e, b) \in S_n$, from \eqref{IH:si_alt_soundnenss} we know there exists $d \in  \absGSI.\Transactions_{\txid_b}$ such that $\for{f' \in  \absGSI.\Transactions_{\txid_e}} (f', d) \in \impGSI.\hb$.
On the other hand, from the proof of the base case we know there exists $f \in  \absGSI.\Transactions_{\txid_e}$ such that $\for{c' \in  \absGSI.\Transactions_{\txid_a}} (c', f) \in \impGSI.\hb$; 
and that $\exsts{\y} (\txid_e.\mathit{wu}_\y, f) \in \impGSI.\hb \lor \for{c' \in  \txid_a.\Events} (c', f) \in \impGSI.\hb$. 
Pick an arbitrary $c \in \absGSI.\Transactions_{\txid_a}$. We thus know that $(c, f) \in \impGSI.\hb$.
As $f \in \absGSI.\Transactions_{\txid_e}$, we thus have $(f, d) \in \impGSI.\hb$. 
As $\impGSI.\hb$ is transitively closed and we have  $(c, f), (f, d) \in \impGSI.\hb$ and 
$\exsts{\y} (\txid_e.\mathit{wu}_\y, f) \in \impGSI.\hb \lor \for{c' \in  \txid_a.\Events} (c', f) \in \impGSI.\hb$, we then have $(c, d) \in \impGSI.\hb$, and 
$\exsts{\y} (\txid_e.\mathit{wu}_\y, d) \in \impGSI.\hb \lor \for{c' \in  \txid_a.\Events} (c', d) \in \impGSI.\hb$, as required.

\end{proof}
\end{lemma}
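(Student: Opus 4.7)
The plan is to prove this by induction on the structural length of the $\absGSI.\transC{(\pot \cup \rft \cup \cot)}$-path connecting $a$ to $b$, paralleling the approach taken for the eager implementation in Lemma~\ref{lem:si_soundness} but adapted to the lazy setting where external reads still draw their values from the snapshot phase and external writes are propagated to memory only in the commit block. I would set $S_0 \eqdef \absGSI.(\pot \cup \rft \cup \cot)$ and $S_{n+1} \eqdef S_0 ; S_n$, so that $\absGSI.\transC{(\pot \cup \rft \cup \cot)} = \bigcup_{n \in \Nats} S_n$, and prove the statement by induction on $n$.

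For the base case $n{=}0$ I would dispatch the three constituents of $S_0$ separately, invoking Lemma~\ref{lem:si_alt_lock_hb} to convert each abstract edge into a concrete $\impGSI.\hb$ chain threaded through lock events. If $(a,b) \in \absGSI.\pot$, take $d = b$: the construction of $\impGSI.\po$ places every implementation event of $\txid_a.\Events$ (not merely every abstract event of $\absGSI.\Transactions_{\txid_a}$) $\po$-before $b$, since $\pot$ forces $\txid_a$ and $\txid_b$ to lie in the same thread, so both the first conjunct and the stronger alternative of the disjunction follow via $\po \suq \hb$. If $(a,b) \in \absGSI.\rft$, pick a witness pair $(w,r)$ with $w \in \absGSI.\Transactions_{\txid_a}$, $r \in \absGSI.\Transactions_{\txid_b}$ and common location $\x$; the construction of $\mathsf{RF}_{\txid_b}$ yields $w = \txid_a.\fw{\x}$ and $(\txid_a.\mathit{w}_\x, \txid_b.\mathit{rs}_\x) \in \impGSI.\rf$. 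Part~\eqref{lem:si_alt_lock_hb_rf} of Lemma~\ref{lem:si_alt_lock_hb} gives $\txid_a.\mathit{wu}_\x \relarrow{\impGSI.\hb} \txid_b.\mathit{rl}_\x$; choosing $d = \txid_b.\mathit{rs}_\x$, every $c \in \absGSI.\Transactions_{\txid_a}$ satisfies $c \relarrow{\impGSI.\po} \txid_a.\mathit{wu}_\x \relarrow{\impGSI.\hb} \txid_b.\mathit{rl}_\x \relarrow{\impGSI.\po} d$, and the weaker disjunct is witnessed by $\y = \x$. The $\cot$ case is entirely analogous using part~\eqref{lem:si_alt_lock_hb_co}, taking $d$ to be the $\co$-successor $\Writes$-event lying in $\txid_b$'s Ts block (its identifier coming from a $\mathit{lw}_\x$ event).

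For the inductive case $(a,b) \in S_{n+1}$ I would decompose as $(a,e) \in S_0$ and $(e,b) \in S_n$ with $e \in \absGSI.\Transactions_{\txid_e}$. The base case applied to $(a,e)$ produces $f \in \absGSI.\Transactions_{\txid_e}$ with $(c,f) \in \impGSI.\hb$ for every $c \in \absGSI.\Transactions_{\txid_a}$, together with whichever alternative of the disjunction the chosen constituent supplies. The induction hypothesis applied to $(e,b)$ produces $d \in \absGSI.\Transactions_{\txid_b}$ with $(f',d) \in \impGSI.\hb$ for every $f' \in \absGSI.\Transactions_{\txid_e}$; instantiating at $f' = f$ and invoking transitivity of $\impGSI.\hb$ gives $(c,d) \in \impGSI.\hb$ for every $c \in \absGSI.\Transactions_{\txid_a}$, and propagates whichever alternative was held at $f$ through to $d$.

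The main obstacle will be bookkeeping around the asymmetry of the strengthened disjunct: only the $\pot$ case of the base yields the stronger alternative (every event of $\txid_a.\Events$, including failed-snapshot, lock, and commit-phase events, is $\hb$-before the chosen witness), whereas the $\rft$ and $\cot$ cases yield only the weaker $\txid_a.\mathit{wu}_\y$-based witness. The disjunction in the lemma statement is crafted precisely to absorb this asymmetry, and transitivity of $\impGSI.\hb$ preserves whichever disjunct one has in hand at each inductive step, so no subtler argument is required. A minor technicality is verifying that $\txid_a.\fw{\x}$, $\txid_a.\mathit{w}_\x$, and $\txid_a.\mathit{wu}_\x$ are all well-defined whenever $\x \in \writeset_{\txid_a}$ and appear in the expected Ts/Ws/Us order in $\trace_{\txid_a}$, which is immediate from the explicit trace shape fixed at the start of the section.
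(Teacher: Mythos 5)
Your proposal matches the paper's own proof essentially step for step: the same decomposition $S_0 = \absGSI.(\pot \cup \rft \cup \cot)$ with $S_{n+1} = S_0;S_n$ and induction on path length, the same three-way case split in the base case with the same witnesses ($d = b$ for $\pot$, $d = \txid_b.\mathit{rs}_\x$ for $\rft$, the $\co$-successor write for $\cot$) obtained via \cref{lem:si_alt_lock_hb}, and the same composition-plus-transitivity argument in the inductive step. Your observation about the asymmetry of the disjunct (only the $\pot$ case yielding the stronger alternative, with transitivity preserving whichever disjunct is in hand) is exactly the bookkeeping the paper's proof relies on, so the proposal is correct and takes the same route.
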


\begin{theorem}[Soundness]
For all execution graphs $\impGSI$ of the implementation and their counterpart SI execution graphs $\absGSI$ constructed as above,
\[
	\consistent{\impGSI} \Rightarrow \sicon
\]
\begin{proof}
Pick an arbitrary execution graph $\impGSI$ of the implementation such that $\consistent{\impGSI}$ holds, and its associated SI execution graph $\absGSI$ constructed as described above. \\

\noindent \textbf{RTS. $\irr{\sihb}$}\\ 
We proceed by contradiction. Let us assume $\neg \irr{\absGSI.(\pot \cup \rft \cup \cot \cup \sifr)}$. 
Let $S = \transC{\absGSI.(\pot \cup \rft \cup \cot)}$.
There are ow two cases to consider: either there is a cycle without a $\sifr$ edge; or there is a cycle with one or more $\sifr$ edges. 
That is, either
1) there exists $a$ such that $(a, a) \in  S$; or 
2) there exist $a_1, b_1, \cdots, a_n, b_n$  such that 
$a_1 \relarrow{\absGSI.\sifr} b_1 \relarrow{S} a_2 \relarrow{\absGSI.\sifr} b_2 \relarrow{S} \cdots \relarrow{S} a_n \relarrow{\absGSI.\sifr} b_n \relarrow{S} a_1$. 

In case (1) we then know there exists $\txid$ such that $a \in \absGSI.\Transactions_{\txid_a}$.
As such, from \cref{lem:si_alt_soundness} we know that there exists $d \in \absGSI.\Transactions_{\txid_a}$ such that for all $c \in \absGSI.\Transactions_{\txid_a}$, $(c, d) \in \impGSI.\hb$. As such, we have $(d, d) \in \impGSI.\hb$, contradicting the assumption that $\impGSI$ is RA-consistent. 

In case (2), for an arbitrary  $i \in \{1 \cdots n\}$, let $j = i {+} 1$ when $i \ne n$; and $j = 1$ when $i = n$. 
As $a_i \relarrow{\absGSI.\sifr} b_i$, we know there exists $\txid_{a_i}, \txid_{b_i}$ such that $a_i \in \Transactions_{\txid_{a_i}}$, $b_i \in \Transactions_{\txid_{b_i}}$, and that there exist $r_i \in \Transactions_{\txid_{a_i}} \cap \EReads$ and $w_i \in \Transactions_{\txid_{b_i}} \cap \Writes$ such that $(r_i, w_i) \in \absGSI.\fr$. 
Let $\loc{r_i} = \loc{w_i} = \x_i$. 
From \cref{lem:si_alt_lock_hb} we then know that either 
i) $\txid_{a_i}.\mathit{ru}_{\x_i} \relarrow{\impGSI.\hb} \txid_{b_i}.\mathit{pl}_{\x_i}$; or 
ii) $\txid_{a_i}.\mathit{wu}_{\x_i} \relarrow{\impGSI.\hb} \txid_{b_i}.\mathit{rl}_{\x_i}$.
Note that for all $\y$ such that $\txid_{b_i}.\mathit{wu}_{\y}$ exists, we know $\txid_{b_i}.\mathit{rl}_{\x_i} \relarrow{\impGSI.\po} \txid_{b_i}.\mathit{wu}_{\y}$ and $\txid_{b_i}.\mathit{pl}_{\x_i} \relarrow{\impGSI.\po} \txid_{b_i}.\mathit{wu}_{\y}$.
As such, as $(b_i, a_j) \in S$, from \cref{lem:si_alt_soundness} and since $\impGSI.\hb$ is transitively closed, we know there exists $d_j \in \Transactions_{\txid_{a_j}}$ such that either $\txid_{a_i}.\mathit{ru}_{\x_i} \relarrow{\impGSI.\hb} d_j$, or $\txid_{a_i}.\mathit{wu}_{\x_i} \relarrow{\impGSI.\hb} d_j$.
That is, $\txid_{a_i}.\mathit{u}_{\x_i} \relarrow{\impGSI.\hb} d_j$, where either $\mathit{u}_{\x_i} = \mathit{ru}_{\x_i}$ or $\mathit{u}_{\x_i} = \mathit{wu}_{\x_i}$.
On the other hand, observe that for all $d_i \in \Transactions_{\txid_{a_i}}$ we have 
$d_i \relarrow{\impGSI.\po} \txid_{a_i}.\mathit{u}_{\x_i}$ 
i.e.\ $d_i \relarrow{\impGSI.\hb} \txid_{a_i}.\mathit{u}_{\x_i}$.
As such, we have $d_j \relarrow{\impGSI.\hb} \txid_{a_j}.\mathit{u}_{\x_j}$.
As we also have $\txid_{a_i}.\mathit{u}_{\x_i} \relarrow{\impGSI.\hb} d_j$
and $\impGSI.\hb$ is transitively closed, we have $\txid_{a_i}.\mathit{u}_{\x_i} \relarrow{\impGSI.\hb} \txid_{a_j}.\mathit{u}_{\x_j}$.
We then have 
$\txid_{a_1}.\mathit{u}_{\x_1} \relarrow{\impGSI.\hb} \txid_{a_2}.\mathit{u}_{\x_2} \relarrow{\impGSI.\hb} \cdots \relarrow{\impGSI.\hb} \txid_{a_n}.\mathit{u}_{\x_n} \relarrow{\impGSI.\hb} \txid_{a_1}.\mathit{u}_{\x_1}$. 
That is, $\txid_{a_1}.\mathit{u}_{\x_1} \relarrow{\impGSI.\hb} \txid_{a_1}.\mathit{u}_{\x_1}$, 
contradicting the assumption that $\impGSI$ is RA-consistent. 
\\

\noindent \textbf{RTS. $\rfi \cup \coi \cup \fri \subseteq \po$}\\
Follows immediately from the construction of $\absGSI$.

\end{proof}

\end{theorem}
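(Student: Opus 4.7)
The plan is to prove four inclusions: soundness (implementation $\subseteq$ specification) and completeness (specification $\subseteq$ implementation) for each of the eager and lazy implementations. I will handle eager in detail and indicate where the lazy proof differs. For both directions my approach is trace-level: given an execution graph on one side, construct a candidate graph on the other with identical outcome, then verify the required consistency predicate.

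For soundness of the eager implementation, starting from an RA-consistent implementation graph $\impG$, I first observe that every transaction $\txid$ in $\impG$ follows a fixed structural template: a snapshot phase (reader-lock acquisitions plus initial reads $\mathit{rs}_{\x}$), a reader-release phase, a promotion phase, a body phase (interleaved snapshot reads/writes and shared writes), and a writer-release phase. I will build the abstract graph $\absGSI$ by collapsing each body-phase read of \code{s[x]} to a single abstract read event (taking its value from the appropriate $\mathit{rs}_\x$) and each shared write to a single abstract write event. Set $\absGSI.\co$ to be $\impG.\co$ restricted to these events, and define $\absGSI.\rf$ so that an abstract read either reads internally from a preceding write in its transaction, or externally from the write whose value was captured at the transaction's snapshot. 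The key technical lemma (analogous to \cref{lem:lock-ordering} used through a helper \cref{lem:si_lock_hb}) shows that whenever two abstract events of different transactions are related by $\rft$, $\cot$, or $\frt$, the corresponding unlock event of the earlier transaction is $\impG.\hb$-before the lock acquisition of the later one. From this and $\po \suq \hb$, every $\absGSI.(\pot \cup \rft \cup \cot); \refC{\frt}$ edge lifts to an $\impG.\hb$ edge, so a cycle in the abstract graph would yield a cycle in $\impG.\hb$, contradicting RA-consistency.

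For completeness of the eager implementation, I take an SI-consistent abstract graph $\absGSI$ and extend $\absGSI.\transC{(\refC{\frt};(\pot \cup \cot \cup \rft))}$ (acyclic by \cref{prop:si_consistency}) to a strict total order $\TCO$ on transaction classes. For each location $\x$, I partition transactions touching $\x$ into \emph{phases} $\phase[\x]{i}$: phase $i$ contains the $i$th $\co$-ordered writer to $\x$ together with all transactions reading that writer's value (or the initialisation write). I then inflate each abstract transaction to an implementation trace by inserting the lock prelude, snapshot, reader release, promotion and writer-release events following \cref{fig:si_implementation}, and define $\lo_\x$ (or its stronger variant $\RLO_\x$ if reader synchronisation is also desired) so that within each phase all reader lock events precede the phase's writer's promotion and writer-unlock, and the writer-unlock precedes the next phase's locks. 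The main obstacle here will be verifying \eqref{ax:ra_acyc}: I need to show $\hbloc \cup \co \cup \fr$ is acyclic in $\impG$. This splits into (a) showing that restricting to only certain ``safe'' $\lo$ edges (the relation $\hb'$ defined before \cref{lem:si_completeness_aux2}) yields an ordering consistent with $\TCO$, and (b) showing that the remaining intra-phase $\lo$ edges do not introduce cycles — this requires a stage-function argument tracking whether an event is in the snapshot, reader-release or update stage (\cref{lem:si_completeness_aux3}) to rule out cycles going ``backwards'' via a later-in-phase reader. This two-level induction on $\hb$ depth is the real combinatorial core of the proof.

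For the lazy implementation, the trace template differs: writes are buffered in \code{s[x]} during the body and only flushed to memory in the $\mathit{Ws}$ phase after promotion. Soundness requires redefining $\mathsf{RF}_\txid$ so that an external abstract read reads from the \emph{latest} buffered write $\fw{\x}$ of the source transaction (matching what the commit-phase write propagates to memory), and $\mathsf{MO}$ must chain intra-transaction writes via $\iw\x \to \fw\x$ before linking across transactions by the implementation's $\co$. A direct translation of \cref{lem:si_soundness} via \cref{lem:si_alt_lock_hb} then works, but because reads in the body occur before writes physically reach memory, I need the refined \cref{lem:si_alt_soundness} which inductively finds, for each $\pot \cup \rft \cup \cot$-path, a ``landing'' event in the target transaction that is $\impG.\hb$-after every event of the source transaction. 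Completeness for lazy follows the same phase-and-$\TCO$ recipe as for eager, only the inserted trace shape is different; the locking skeleton and $\lo$ construction are identical, so the acyclicity argument carries over verbatim. I expect the trickiest bookkeeping to be the lazy soundness case, because the abstract read event identifier must be aligned with the buffered-write event $\fw\x$ rather than the eventual memory write $w_\x$, and one must check that this alignment still produces an $\rf$ respecting the RA-level $\fr$ ordering that justifies \eqref{ax:si_int}.
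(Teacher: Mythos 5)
Your plan for the lazy soundness direction matches the paper's proof in all essentials: the same abstract-graph construction (external reads sourced from the latest buffered write $\fw\x$, modification order chained through the buffered writes before linking across transactions via the implementation's $\co$), the same lock-ordering lemma relating unlock/lock events across $\rf$, $\co$ and $\fr$ edges, and the same refined induction producing a ``landing'' event in the target transaction that is $\hb$-after the entire source transaction, from which a cycle in $\pot \cup \rft \cup \cot \cup \sifr$ yields an $\hb$ cycle contradicting RA-consistency. The only detail you leave implicit is how the $\sifr$ edges are stitched into the final cycle (the paper splits the cycle at each $\sifr$ edge and chains the corresponding unlock events), but this follows from exactly the lemmas you identify.
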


\subsection{Implementation Completeness}
In order to establish the completeness of our implementation, it suffices to show that given an SI-consistent execution graph $\absGSI = (\Events, \po, \rf, \co, \silo)$, we can construct a corresponding RA-consistent execution graph $\impGSI$ of the implementation.

%
%

Note that the execution trace for each transaction $\Transactions_{\txid} \in \absGSI.\Transactions/\st$ is of the form 
$\trace'_{\txid} = \mathit{t}'_1 \relarrow{\imm \po} \cdots \relarrow{\imm \po} \mathit{t}'_k$ for some $k$, where each $\mathit{t}'_i$ is a read or write event.
As such, we have $\absGSI.\Events = \absGSI.\Transactions = \bigcup_{\Transactions_{\txid} \in \absGSI.\Transactions/\st} \Transactions_{\txid} = \trace'_{\txid}.\Events$.
%
%
We thus construct the implementation trace $\trace_{\txid}$ as follows:
\[
	\mathit{Is}
	\relarrow{\imm \po} \mathit{Ts}
	\relarrow{\imm \po} \mathit{RUs}
	\relarrow{\imm \po} \mathit{PLs}
	\relarrow{\imm \po} \mathit{Ws}
	\relarrow{\imm \po} \mathit{Us}
\]
where:
\begin{itemize}
	\item $\mathit{Is}$ denotes the sequence of events initialising the values of \readset and  \writeset with $\emptyset$, and initialising \code{s[x]} with $\bot$ for each location \x;
	\item $\mathit{Ts}$  is of the form $\mathit{t}_1 \relarrow{\imm \po} \cdots \relarrow{\imm \po} \mathit{t}_k$, where for all $m \in \{1 \cdots k\}$:
	\[
	\mathit{t}_m = 
	\begin{cases}
		\rseq{\code x_m}{v_m}{\mathit{RS}_{m {-} 1}, \mathit{WS}_{m {-} 1}}
		\relarrow{\imm \po} \mathit{lr}_{\x_m}
		& \text{if } t'_m {=} \readE{-}{\code x_m}{v_m} \\
		\wseq{\code x_m}{v_m}{\mathit{RS}_{m {-} 1}, \mathit{WS}_{m {-} 1}} 
		\relarrow{\imm \po} \mathit{wws}_{\x_m}
		\relarrow{\imm \po} \mathit{lw}_{\x_m}
		& \text{if } t'_m {=} \writeE{\rel}{\code x_m}{v_m} \\
	\end{cases}
	\]
	where  $\mathit{lr}_{\x_m} \eqdef \readE{}{\code{s[x}_m\code ]}{v_m}$; 
	$\mathit{lw}_{\x_m} \eqdef \writeE{}{\code{s[x}_m\code ]}{v_m}$; 
	the identifiers of $\mathit{lr}_{\x_m}$ and $\mathit{lw}_{\x_m}$ are those of $t'_m$, whilst the identifiers of other events in $t_m$ are picked fresh; 
\[
\small
	\rseq{\code x_m}{v_m}{\mathit{RS}_{m {-} 1}, \mathit{WS}_{m {-} 1}}
	\eqdef
	\begin{cases}
		\readE{}{\code{s[x}_m\code ]}{\bot}
		& \text{if } \code x_m \not\in \mathit{RS}_{m {-} 1} \cup  \mathit{WS}_{m {-} 1} \\
		\relarrow{\imm \po} \mathit{wrs}_{\x_m} \\
		\relarrow{\imm \po} \mathit{fs_m} \\
		\relarrow{\imm \po} \mathit{rl}_{\x_m} \\
		\relarrow{\imm \po} \mathit{rs}_{\x_m} \\
		\relarrow{\imm \po} \mathit{ws}_{\x_m}
		& \\\\
		\emptyset
		& \text{otherwise} 
	\end{cases}
\]	
$\mathit{wrs}_{\x_m} \eqdef \writeE{}{\readset}{\mathit{RS}_m}$, 
$\mathit{fs}_m$ denotes the sequence of events attempting (but failing) to acquire the read lock on $\code x_m$, 
$\mathit{rl}_{\code x_m} \eqdef \rlockE{\xl_m}$, 
$\mathit{rs}_{\code x_m} \eqdef \readE{\acq}{\code x_m}{v_m}$, 
and $\mathit{ws}_{\code x_m} \eqdef \writeE{}{\code{s[x}_m\code ]}{v_m}$; 
$\mathit{RS}_0 {=} \emptyset$ and for all $m > 0$:
\[
	\mathit{RS}_{m {+} 1} \eqdef 
	\begin{cases}
		\mathit{RS}_m \cup \{\code x_m\} & \text{if } t'_m {=} \readE{-}{\code x_m}{-} \\
		\mathit{RS}_m & \text{otherwise}
	\end{cases}
\]
and
\[
\small
	\wseq{\code x_m}{v_m}{\mathit{RS}_{m {-} 1},  \mathit{WS}_{m {-} 1}}
	\eqdef
	\begin{cases}
		\readE{}{\code{s[x}_m\code ]}{\bot}
		& \text{if } \code x_m \not\in \mathit{RS}_{m {-} 1} \cup  \mathit{WS}_{m {-} 1} \\
		\relarrow{\imm \po} \mathit{fs_m}  \\
		\relarrow{\imm \po} \mathit{rl}_{\x_m}
		\emptyset
		& \text{otherwise} 
	\end{cases}
\]	
$\mathit{lw}_{\x_m} {=} \writeE{}{\code{s[x}_m\code ]}{v_m}$,
$\mathit{wws}_{\x_m} \eqdef \writeE{}{\writeset}{\mathit{WS}_m}$; 
$\mathit{fs}_m$and $\mathit{rl}_{\code x_m}$ are as defined above; and 
\[
	\mathit{WS}_{m {+} 1} \eqdef 
	\begin{cases}
		\mathit{WS}_m \cup \{\code x_m\} & \text{if } t'_m {=} \writeE{-}{\code x_m}{-} \\
		\mathit{WS}_m & \text{otherwise}
	\end{cases}
\]	
Let $\readset_\txid = \mathit{RS}_m$ and  
$\writeset_\txid = \mathit{WS}_m$; 
let $\readset_\txid \cup \writeset_\txid$ be enumerated as $\{\x_1 \cdots \x_i\}$ for some $i$.
	\item $\mathit{RUs}$ denotes the sequence of events releasing the reader locks (when the given location is in the read set only) and is of the form $\mathit{ru}_{\code x_1} \relarrow{\imm \po} \cdots \relarrow{\imm \po} \mathit{ru}_{\code x_i}$, where for all $n \in \{1 \cdots i\}$:
	\[
	\begin{array}{l}
		\mathit{ru}_{\code x_n} = 
		\begin{cases}
			\runlockE{\xl_n}
			& \text{ if } \code x_n \not\in \writeset_{\txid}  \\
			\emptyset
			& \text{ otherwise}
		\end{cases}
	\end{array}	
	\]
	with the identifier of each $\mathit{ru}_{\x_n}$ picked fresh; 
	\item $\mathit{PLs}$ denotes the sequence of events promoting the reader locks to writer ones (when the given location is in the write set), and is of the form $\mathit{pl}_{\code x_1} \relarrow{\imm \po} \cdots \relarrow{\imm \po} \mathit{pl}_{\code x_i}$, where for all $n \in \{1 \cdots i\}$:
	\[
	\begin{array}{l}
		\mathit{pl}_{\code x_n} = 
		\begin{cases}
			\plockE{\xl_n}
			& \text{if }  \code x_n \in \writeset_{\txid} \\
			\emptyset
			& \text{ otherwise } 
		\end{cases}
	\end{array}	
	\]
	with the identifier of each $\mathit{ru}_{\x_n}$ picked fresh; 
	\item $\mathit{Ws}$ denotes the sequence of events committing the writes of \denot{\code{T}} and is of the form $\mathit{c}_{\x_1} \relarrow{\imm \po} \cdots \relarrow{\imm \po} \mathit{c}_{\x_i}$, where for all $n \in \{1 \cdots i\}$:
\[
	\mathit{c}_{\x_n} = 
	\begin{cases}
		\readE{}{\code{s[x}_n \code{]}}{v_n}  \relarrow{\imm \po} 
		\mathit{w}_{\x_n} {=} \writeE{}{\x_n}{v_n} 
		& \text{if } \x_n \in \writeset_\txid \\
		\emptyset 
		& \text{otherwise}
	\end{cases}
	\]
	with the identifiers of events in each $\mathit{c}_{\x_n}$ picked fresh; 
	\item $\mathit{Us}$ denotes the sequence of events releasing the locks on the write set, and is of the form $\mathit{wu}_{\code x_1} \relarrow{\imm \po} \cdots \relarrow{\imm \po} \mathit{wu}_{\code x_i}$, where for all $n \in \{1 \cdots i\}$:
	\[
		\mathit{wu}_{\code x_n} = 
		\begin{cases}
			\wunlockE{\code{xl}_n} & \text{if } \code x_n \in \writeset_{\txid} \\
			\emptyset & \text{otherwise}
		\end{cases}				
	\]
	with the identifier of each $\mathit{wu}_{\x_n}$ picked fresh.
\end{itemize}
We use the $\txid.$ prefix to project the various events of the implementation trace $\trace_\txid$ (e.g.~$\txid.\mathit{rl}_{\code x_j}$). 
Note that for each transaction $\txid$ and each location $\x$, the $\txid.\mathit{rl}_\x$, $\txid.\mathit{rs}_\x$, $\txid.\mathit{ru}_\x$, $\txid.\mathit{pl}_\x$, $\txid.\mathit{wu}_\x$ and $\txid.\mathit{w}_\x$ are uniquely identified when they exist. 

For each location $\x \in \writeset_\txid$, let $\fw \x$ denote the maximal write (in $\po$ order within $\txid$) logging a write for \x in \code{s[x]}. That is,  when $\trace_{\txid}=  t_1 \relarrow{\imm \po} \cdots \relarrow{\imm \po} t_m$, let $\fw \x = \func{wmax}{\x, [t_1 \cdots t_m]}$, where
\[
\begin{array}{@{} c @{}}
	 \func{wmax}{\x, [\,]} \text{ undefined} \\
	  \func{wmax}{\x, L {++} [t]}
	  \eqdef
	  \begin{cases}
	  	\mathit{lw}_{\x}
	  	& \text{if }  t {=} \wseq{\x}{-}{-, -} \relarrow{\po} \mathit{lw}_{\x}  \relarrow{\po} \mathit{wws}_{\x} \\
	  	 \func{wmax}{\x, L}
	  	 & \text{otherwise}
	  \end{cases}
\end{array}	  
\]
Similarly, for each location $\x \in \writeset_\txid$, let $\iw \x$ denote the minimal write (in $\po$ order within $\txid$) logging a write for \x in \code{s[x]}. That is,  when $\trace_{\txid}=  t_1 \relarrow{\imm \po} \cdots \relarrow{\imm \po} t_m$, let $\iw \x = \func{wmin}{\x, [t_1 \cdots t_m]}$, where
\[
\begin{array}{@{} c @{}}
	 \func{wmin}{\x, [\,]} \text{ undefined} \\
	  \func{wmin}{\x, [t] {++} L}
	  \eqdef
	  \begin{cases}
	  	\mathit{lw}_{\x}
	  	& \text{if }  t {=} \wseq{\x}{-}{-, -} \relarrow{\po} \mathit{lw}_{\x}  \relarrow{\po} \mathit{wws}_{\x} \\
	  	 \func{wmin}{\x, L}
	  	 & \text{otherwise}
	  \end{cases}
\end{array}
\]
Analogously, for each location $\x \in \readset_\txid$, let $\ir \x$ denote the minimal read (in $\po$ order within $\txid$) reading the value of \x. That is,  when $\trace_{\txid}=  t_1 \relarrow{\imm \po} \cdots \relarrow{\imm \po} t_m$, let $\ir \x = \func{rmin}{\x, [t_1 \cdots t_m]}$, where
\[
\begin{array}{@{} c @{}}
	 \func{rmin}{\x, [\,]} \text{ undefined} \\
	  \func{rmin}{\x, [t] {++} L}
	  \eqdef
	  \begin{cases}
	  	\mathit{lr}_{\x}
	  	& \text{if } t {=} \rseq{\x}{-}{-, -} \relarrow{\po} \mathit{lr}_{\x} \\
	  	 \func{rmin}{\x, L}
	  	 & \text{otherwise}
	  \end{cases}
\end{array}
\]

Given the $\absGSI$ classes $\TClasses$, let us construct the strict total order $\TCO: \TClasses \times \TClasses$ as described in \cref{subapp:si_completeness}.
For each location \code x and $i \in \Nats$, let us similarly define $\sphase[\code x]{i}$, $\phase{i}$, $\writer{\phase{i}}$, $\LO_{\x}$, $\RLO_{\x}$, $\lo_1$ and $\lo_2$.
\begin{remark}
Recall that both $\lo_1$ and $\lo_2$ satisfy the conditions stated in \cref{def:si_implementation_consistency};
$\lo_2$ additionally satisfies the `read-read-synchronisation' property in (\ref{ax:lock_rsync}).
As before, we demonstrate that given an SI-consistent execution graph $\absGSI$, it is always possible to construct an RA-consistent execution graph $\impGSI$ of the implementation with its lock order defined as $\lo_2$. Recall that as $\lo_1 \subseteq \lo_2$, it is straightforward to show that replacing $\lo_2$ in such a $\impGSI$ with $\lo_1$, preserves the RA-consistency of $\impGSI$, as defined in \cref{def:si_implementation_consistency}. In other words, as $\lo_1 \subseteq \lo_2$, we have:
\[
\begin{array}{@{} l @{}}
	\acyc{\hbloc \cup \co \cup \fr} \text{ with } \hb \eqdef \transC{(\po \cup \rf \cup \lo_2)} 
	\Rightarrow \\
	\hspace{100pt}
	\acyc{\hbloc \cup \co \cup \fr} \text{ with } \hb \eqdef \transC{(\po \cup \rf \cup \lo_1)} 
\end{array}	
\]
As such, by establishing the completeness of our implementation with respect to $\lo_2$, we also establish its completeness with respect to $\lo_1$. 
In other words, we demonstrate the completeness of our implementation with respect to both lock implementations presented earlier in \cref{app:lock_implementations}.
\end{remark}
We now demonstrate the completeness of our implementation. 
Given an SI-consistent graph,
we construct an implementation graph $\impGSI$ as follows and demonstrate that it is RA-consistent. 
\begin{itemize}
	\item $\impGSI.\Events = \bigcup\limits_{\Transactions_\txid \in \absGSI.\Transactions/\st} \trace_\txid.\Events$, with the $\tx{e} = 0$, for all $e \in \impGSI.\Events$. \\
%
	\item $\impGSI.\po$ is defined as $\absGSI.\po$ extended by the $\po$ for the additional events of $\impGSI$, given by each $\trace_\txid$ trace defined above. Note that $\impGSI.\po$ does not introduce additional orderings between events of $\absGSI.\Events$. That is, $\for{a, b \in \absGSI.\Events} (a, b) \in \absGSI.\po \Leftrightarrow (a, b) \in \impGSI.\po$.
	\item 
	$\impGSI.\rf = 
	\bigcup_{\txid \in \textsc{Tx}} \mathsf{RF}_{\txid}$
	with 
	$
	\mathsf{RF}_{\txid} \eqdef
	\setcomp{
		(w, \txid.\mathit{rs}_{\x}	)
	}{
		\x \in \Locs \land (w, \txid.\ir \x) \in \absGSI.\rf	
	}
	$.
	\item $\impGSI.\co = 
	\transC{
	\setcomp{
		(\txid_1.\mathit{w}_{\x}	, \txid_2.\mathit{w}_{\x}	)
	}{
		\txid_1, \txid_2,  \in \textsc{Tx}
		\land \x \in \Locs 
		\land (\txid_1.\fw \x, \txid_2.\iw \x) \in \absGSI.\co
	}
	}
	$.
	
	\item $\impGSI.\silo = \bigcup_{\x \in \Locs} \RLO_\x$, with $\RLO_\x$ as defined above.
\end{itemize}
\paragraph{Notation} 
Given an implementation graph $\impGSI$ as constructed above (with $\impGSI.\Transactions = \emptyset$), and a relation $\makerel r \subseteq \impGSI.\Events \times \impGSI.\Events$, we override the $\tout{\makerel r}$ notation and write $\tout{\makerel r}$ for:
\[
	\setcomp{(a, b) \in \makerel r}{\exsts{\txid_a, \txid_b} a \in \trace_{\txid_a}.\Events \land b \in \trace_{\txid_b}.\Events \land \txid_a \ne \txid_b}
\]	
Analogously, we write $\tin{\makerel r}$ for $\setcomp{(a, b) \in \makerel r}{\exsts{\txid} a, b \in \trace_{\txid}.\Events}$.
\begin{lemma}\label{lem:si_alt_completeness_TCO}
For all SI-consistent execution graphs $\absGSI = (\Events, \po, \rf, \co, \silo)$, and for all $\code x \in \Locs$, $\txid_a, \txid_b \in \TXIDs$, and $i, j \in \Nats$:
\[
	\txid_a \in \sphase{i} \land  \txid_b \in \sphase{j} \land i < j \Rightarrow (\txid_a,  \txid_b) \in \TCO
\]
\begin{proof}
Follows immediately from \cref{cor:si_completeness_TCO}. 
\end{proof}
\end{lemma}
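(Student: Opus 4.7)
The plan is extremely short, because the statement is in fact identical to \cref{cor:si_completeness_TCO} once one notices that the constructions involved are not sensitive to the implementation.

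First I would observe that the definitions of $\sphase[\x]{i}$, $\phase[\x]{i}$, $\writer{\phase[\x]{i}}$, and of the total order $\TCO$ introduced in the completeness argument for the lazy implementation are given purely in terms of the abstract SI-consistent execution graph $\absGSI$: its transaction classes, its $\rf$, $\co$, and the relation $\transC{(\refC{\frt};(\pot\cup\cot\cup\rft))}$ whose total extension is $\TCO$. None of these definitions mention any event of the implementation trace (e.g.\ $\mathit{rs}_\x$, $\mathit w_\x$, $\fw\x$, $\iw\x$, $\mathit{ru}_\x$, etc.), so they coincide verbatim with the corresponding constructions used in the completeness proof of the eager implementation in \cref{subapp:si_completeness}.

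Given this coincidence, the statement of \cref{lem:si_alt_completeness_TCO} is literally the statement of \cref{cor:si_completeness_TCO}, and no additional argument is required. If one wanted to unfold the dependency: \cref{cor:si_completeness_TCO} is itself obtained by induction on $j-i$ from \cref{lem:si_completeness_TCO}, whose base case is a four-way case analysis on whether $\txid_a = \writer{\phase[\x]{i}}$ and $\txid_b = \writer{\phase[\x]{i+1}}$, producing a witness in $\cot$, $\rft$, $\frt;\cot$, or $\frt;\rft$ (all of which lie in the relation that $\TCO$ extends); the inductive step then chains one phase step with the inductive hypothesis using transitivity of $\TCO$.

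There is no real obstacle: the only thing one must check is the invariance claim above, namely that the phase and $\TCO$ constructions are defined on $\absGSI$ alone, which is immediate from inspecting the definitions. Hence the proof can legitimately be written simply as ``follows immediately from \cref{cor:si_completeness_TCO}''.
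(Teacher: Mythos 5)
Your proposal is correct and matches the paper's own proof, which likewise consists of the single line ``Follows immediately from \cref{cor:si_completeness_TCO}'' (the paper reuses the $\sphase[\x]{i}$, $\phase[\x]{i}$ and $\TCO$ constructions from \cref{subapp:si_completeness} verbatim in the lazy-implementation completeness argument, exactly as you observe). Your additional unfolding of how the corollary follows from \cref{lem:si_completeness_TCO} is accurate but not needed.
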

%
%
%

Given an implementation graph $\impGSI = (\Events, \po, \rf, \co, \silo)$, let us define $\hb'$ and $\hb'_i$ as described in \cref{subapp:si_completeness} for all $i \in \Nats$.
\begin{lemma}\label{lem:si_alt_completeness_aux2}
For all implementation graphs $\impGSI = (\Events, \po, \rf, \co, \silo)$ constructed as above, 
%
\[
\begin{array}{@{} l @{\hspace{2pt}} l @{}}
	\for{a, b, \txid_a, \txid_b} 
	& a \in \trace_{\txid_a}.\Events
	\land
	b \in \trace_{\txid_b}.\Events
	\land
	a \relarrow{\hb'} b  \\
	& \qquad 
	\Rightarrow (a, b) \in \impGSI.\poi \lor (\txid_a, \txid_b) \in \TCO
\end{array}
\]
\begin{proof} 
Pick an arbitrary implementation graph $\impGSI = (\Events, \po, \rf, \co, \silo)$ constructed as above.
Since $\hb'$ is a transitive closure, it is straightforward to demonstrate that $\hb' = \bigcup\limits_{i \in \Nats} \hb'_i$. 
It thus suffices to show:
\[
\begin{array}{@{} l @{\hspace{2pt}} l @{}}
	\for{i \in \Nats}
	\for{a, b, \txid_a, \txid_b} 
	& a \in \trace_{\txid_a}.\Events
	\land
	b \in \trace_{\txid_b}.\Events
	\land
	a \relarrow{\hb'_i} b \\ 
	& \qquad 
	\Rightarrow (a, b) \in \impGSI.\poi \lor (\txid_a, \txid_b) \in \TCO
\end{array}
\]
We proceed by induction over $i$.\\

\noindent \textbf{Base case $i = 0$}\\
Pick arbitrary $a, b, \txid_a, \txid_b$ such that $a \in \trace_{\txid_a}.\Events$ and $b \in \trace_{\txid_b}.\Events$, $a \relarrow{\hb'_0} b$.
There are then four cases to consider: 1) $a \relarrow{\impGSI.\poi \cup \rfi \cup \siloi} b$; or 2) $a \relarrow{\impGSI.\poe} b$; 3) $a \relarrow{\impGSI.\rfe} b$; or 4) $a \relarrow{\impGSI.\siloe} b$. 

In case (1), from the construction of $\impGSI.\rfi$ we have $\impGSI.\rfi \subseteq \impGSI.\poi$; moreover, from the construction of $\impGSI.\siloi$ we have $\impGSI.\siloi \subseteq \impGSI.\poi$. Consequently we have $a \relarrow{\impGSI.\poi} b$, as required.

In case (2) from the construction of $\impGSI.\po$ we then know that $\Transactions_{\txid_a} \times  \Transactions_{\txid_b} \subseteq \absGSI.\po$  and thus from the definition of $\TCO$ we have $(\txid_a, \txid_b) \in \TCO$, as required.

In case (3) from the construction of $\impGSI.\rf$ we then know that $\Transactions_{\txid_a} \times  \Transactions_{\txid_b} \subseteq \absGSI.\rft$  and thus from the definition of $\TCO$ we have $(\txid_a, \txid_b) \in \TCO$, as required.

In case (4) from the definition of $\hb'_0$ we know that there exists $i, j, i', j'$ such that $\txid_a {=} \itemAt{\phase{i}}{j}$, $\txid_b {=} \itemAt{\phase{i'}}{j'}$ and either i) $i < i'$; or ii) $i = i' \land j < j'$.

In case (4.i) from \cref{lem:si_alt_completeness_TCO} we have $(\txid_a, \txid_b) \in \TCO$, as required.
In case (4.ii) from the definitions of $\phase{i}$ we have $(\txid_a, \txid_b) \in \TCO$, as required. \\

\noindent \textbf{Inductive case $i = n {+} 1$}\\
Pick arbitrary $a, b, \txid_a, \txid_b$ such that $a \in \trace_{\txid_a}.\Events$, $b \in \trace_{\txid_b}.\Events$ and $a \relarrow{\hb'_i} b$. 
\begin{align}
\begin{array}{@{} l @{\hspace{2pt}} l @{}}
	\for{j \leq n}
	\for{c, d, \txid_c, \txid_d} 
	& c \in \trace_{\txid_c}.\Events
	\land
	d \in \trace_{\txid_d}.\Events
	\land
	c \relarrow{\hb'_j} d  \\
	& \qquad 
	\Rightarrow (c, d) \in \impGSI.\poi \lor (\txid_c, \txid_d) \in \TCO
\end{array}
\tag{I.H.}
\label{IH:si_alt_completeness_auxiliary}
\end{align}
From the definition of $\hb'_{n {+} 1}$ we then know there exists $e$ such that  $(a, e) \in\hb'_0$ and $(e, b) \in \hb'_n$. Let $e \in \txid_e.\Events$. 
Consequently, from the proof of the base case we then know that $(a, e) \in \impGSI.\poi \lor (\txid_a, \txid_e) \in \TCO$. 
Similarly, from (\ref{IH:si_alt_completeness_auxiliary}) we have $(e, b) \in \impGSI.\poi \lor (\txid_e, \txid_b) \in \TCO$. 
There are now four cases to consider: 
1) $(a, e) \in \impGSI.\poi$ and $(e, b) \in \impGSI.\poi$; 
2) $(a, e) \in \impGSI.\poi$ and $(\txid_e, \txid_b) \in \TCO$; 
3) $(\txid_a, \txid_e) \in \TCO$ and $(e, b) \in \impGSI.\poi$; 
4) $(\txid_a, \txid_e) \in \TCO$ and $(\txid_e, \txid_b) \in \TCO$.

In case (1) from the definition of $\impGSI.\poi$ we have $(a, b) \in \impGSI.\poi$.
In cases (2-4) from the definitions of $\impGSI.\poi$, $\TCO$ we have $(\txid_a, \txid_b) \in \TCO$, as required. 

\end{proof}

\end{lemma}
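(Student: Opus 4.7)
The plan is to induct on the length of the $\hb'$-path using the same decomposition used elsewhere in the appendix. Since $\hb'$ is a transitive closure, write $\hb' = \bigcup_{i\in\Nats}\hb'_i$ where $\hb'_0$ is the one-step underlying relation (i.e.\ $\po \cup \rf \cup \silo$ with the explicitly excluded ``backwards-in-phase'' $\silo$ edges removed) and $\hb'_{i+1} \eqdef \hb'_0\,;\,\hb'_i$. The goal then becomes: for every $i$ and every $(a,b)\in\hb'_i$ with $a\in\trace_{\txid_a}.\Events$ and $b\in\trace_{\txid_b}.\Events$, either $(a,b)\in\impGSI.\poi$ or $(\txid_a,\txid_b)\in\TCO$.

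\textbf{Base case ($i=0$).} Split on whether the edge is internal or external to a transaction. For internal edges, I would rely on two construction-level facts: (i) $\impGSI.\rfi \subseteq \impGSI.\poi$, because $\rf$ into a transactional read goes to the $\mathit{rs}_\x$ taken during the read's first appearance in the trace and comes from either an earlier event in the same trace or from outside; (ii) $\impGSI.\siloi \subseteq \impGSI.\poi$, because the trace layout of a single transaction places all of its lock events for a given location in a fixed $\po$-order. Therefore all internal $\hb'_0$ edges collapse into $\poi$. For external edges, there are three subcases: $\poe$ gives $\absGSI.\pot$ and hence $\TCO$ by its construction; $\rfe$ gives $\absGSI.\rft$ (since $\impGSI.\rf$ into $\mathit{rs}_\x$ reflects an $\absGSI.\rf$ edge on $\ir{\x}$) and hence $\TCO$; finally $\siloe$ is the delicate case, handled next.

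\textbf{External $\silo$ edges.} By the definition of $\LO_\x$ and $\RLO_\x$, every $\silo$ edge between two transactions $\txid_1$ and $\txid_2$ is either (a) strictly phase-increasing ($\txid_1\in\sphase[\x]{i}$, $\txid_2\in\sphase[\x]{j}$, $i<j$), or (b) same-phase but later in the phase ordering. The edges excluded from $\hb'$ are exactly those of shape (b) that go ``backwards in the phase,'' leaving only forward edges of shape (a) or same-phase forward edges. In case (a), invoke \cref{lem:si_alt_completeness_TCO} to obtain $(\txid_1,\txid_2)\in\TCO$. In the same-phase forward case, $\TCO$ already orders $\txid_1$ before $\txid_2$ by the definition of $\phase{i}$.

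\textbf{Inductive step.} Given $a\relarrow{\hb'_0}e\relarrow{\hb'_n}b$ with $e\in\trace_{\txid_e}.\Events$, apply the base case to $(a,e)$ and the inductive hypothesis to $(e,b)$. Combine the four resulting cases: two $\poi$ edges compose to a $\poi$ edge; a $\poi$ edge composed with a $\TCO$ edge (in either order) still yields a $\TCO$ edge between $\txid_a$ and $\txid_b$ because $\poi$ keeps us within the same transaction; and two $\TCO$ edges compose by transitivity of $\TCO$. This closes the induction.

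\textbf{Expected main obstacle.} The genuinely nontrivial step is verifying that every $\silo$ edge that survives in $\hb'_0$ really is forward with respect to the phase order, since the definition of $\LO_\x$ produces several shapes of edges $(\mathit{pl},\mathit{rl}), (\mathit{wu},\mathit{rl}),\dots$ and one must check case-by-case that the subtraction in the definition of $\hb'$ removes precisely the backwards-in-phase edges (those with phase index $i$ and positions $j>k$) and leaves everything else phase-monotone. This is essentially a combinatorial bookkeeping step, but it is where the definition of $\hb'$ (as opposed to $\hb$) is doing real work, and it is the only place where the compatibility between $\RLO_\x$ and $\TCO$ must be checked explicitly.
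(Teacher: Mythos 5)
Your proposal is correct and follows essentially the same route as the paper's proof: the same stratification $\hb'=\bigcup_i\hb'_i$, the same base-case split into internal edges (collapsed into $\poi$ via $\rfi,\siloi\subseteq\poi$) and external $\poe$/$\rfe$/$\siloe$ edges (mapped into $\TCO$ via $\pot$, $\rft$, and \cref{lem:si_alt_completeness_TCO} together with the intra-phase ordering of $\phase{i}$), and the same four-way composition in the inductive step. The "main obstacle" you flag — checking that the subtraction defining $\hb'$ leaves only phase-monotone $\silo$ edges — is exactly the step the paper discharges by appeal to the definition of $\hb'_0$, so no divergence there either.
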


Given an implementation graph $\impGSI = (\Events, \po, \rf, \co, \silo)$ constructed as above, let $\stage{.}: \impGSI.\Events \rightarrow \{1, 2, 3\}$ denote the \emph{event stage} as follows:
\[
\stage{e} \eqdef
\begin{cases}
	1 & \exsts{\txid} e \in \txid.\mathit{Ts} \\
	2 & \exsts{\txid, \code x} e = \txid.\mathit{ru}_{\code x} \\
	3 & \text{otherwise}
\end{cases}
\]
\begin{lemma}\label{lem:si_alt_completeness_aux3}
For all implementation graphs $\impGSI = (\Events, \po, \rf, \co, \silo)$ constructed as above, and for all $\code x \in \Locs$, $\txid_a, \txid_1, \txid_2 \in \TXIDs$ and $a \in  \trace_{\txid_a}.\Events$:
\[
\begin{array}{@{} l @{}}
	\left(
	\begin{array}{@{} l @{}}
		(\txid_1.\mathit{rl}_{\code x} \relarrow{\silo} \txid_2.\mathit{pl}_{\code x} 
		\land 
		\txid_2.\mathit{pl}_{\code x} \relarrow{\hb} a) \\
		\lor\,
		(\txid_1.\mathit{ru}_{\code x} \relarrow{\silo} \txid_2.\mathit{pl}_{\code x} 
		\land 
		\txid_2.\mathit{pl}_{\code x} \relarrow{\hb} a) \\
		\lor\,
		(\txid_1.\mathit{rl}_{\code x} \relarrow{\silo} \txid_2.\mathit{wu}_{\code x} 
		\land 
		\txid_2.\mathit{wu}_{\code x} \relarrow{\hb} a) \\
		\lor\,
		(\txid_1.\mathit{ru}_{\code x} \relarrow{\silo} \txid_2.\mathit{wu}_{\code x} 
		\land 
		\txid_2.\mathit{wu}_{\code x} \relarrow{\hb} a) 
	\end{array}	 	
	\right)
	\land
	\exsts{j, k, l} 
		\itemAt{\phase j}{k} {=} \txid_1 
		\land 
		\itemAt{\phase j}{l} {=} \txid_2  
		\land 
		k > l \\
	\hspace{150pt}
	\Rightarrow 
	\stage{\txid_1.\mathit{ru}_{\code x}} < \stage{a}
	\lor
	(\txid_1, \txid_a) \in \TCO
\end{array}	
\]
\begin{proof}
Pick an arbitrary implementation graph $\impGSI = (\Events, \po, \rf, \co, \silo)$ constructed as above,
and pick arbitrary $\code x \in \Locs$, $\txid_a, \txid_1, \txid_2 \in \TXIDs$ and $a \in \trace_{\txid_a}.\Events$ such that 
$\big((\txid_1.\mathit{rl}_{\code x} \relarrow{\silo} \txid_2.\mathit{pl}_{\code x} 
	\land 
	\txid_2.\mathit{pl}_{\code x} \relarrow{\hb} a) 
   \lor
   (\txid_1.\mathit{ru}_{\code x} \relarrow{\silo} \txid_2.\mathit{pl}_{\code x} 
	\land 
	\txid_2.\mathit{pl}_{\code x} \relarrow{\hb} a) 
  \lor\,
  (\txid_1.\mathit{rl}_{\code x} \relarrow{\silo} \txid_2.\mathit{wu}_{\code x} 
	\land 
	\txid_2.\mathit{wu}_{\code x} \relarrow{\hb} a) 
  \lor\,
  (\txid_1.\mathit{ru}_{\code x} \relarrow{\silo} \txid_2.\mathit{wu}_{\code x} 
	\land 
	\txid_2.\mathit{wu}_{\code x} \relarrow{\hb} a)\big)$ 
and $\exsts{j, k, l} 
		\itemAt{\phase j}{k} {=} \txid_1 
		\land 
		\itemAt{\phase j}{l} {=} \txid_2  
		\land 
		k > l$.
In case of the first disjunct from the construction of $\silo$ we then also have $\txid_1.\mathit{ru}_{\code x} \relarrow{\silo} \txid_2.\mathit{pl}_{\code x} \land \txid_2.\mathit{pl}_{\code x} \relarrow{\hb} a$.
Similarly, in case of the third and fourth disjuncts from the construction of $\silo$ we also have $\txid_1.\mathit{ru}_{\code x} \relarrow{\silo} \txid_2.\mathit{pl}_{\code x} \land	\txid_2.\mathit{wu}_{\code x} \relarrow{\hb} a$. Moreover, since we have $\txid_2.\mathit{pl}_{\code x} \relarrow{\po} \txid_2.\mathit{wu}_{\code x}$ and $\po \subseteq \hb$, we also have $\txid_1.\mathit{ru}_{\code x} \relarrow{\silo} \txid_2.\mathit{pl}_{\code x} \land	\txid_2.\mathit{pl}_{\code x} \relarrow{\hb} a$.
It thus suffices to show:
\[
\begin{array}{@{} l @{}}
	\txid_1.\mathit{ru}_{\code x} \relarrow{\silo} \txid_2.\mathit{pl}_{\code x} 
	\land 
	\txid_2.\mathit{pl}_{\code x} \relarrow{\hb} a
	\land
	\exsts{j, k, l} 
		\itemAt{\phase j}{k} {=} \txid_1 
		\land 
		\itemAt{\phase j}{l} {=} \txid_2  
		\land 
		k > l \\
	\hspace{150pt}
	\Rightarrow 
	\stage{\txid_1.\mathit{ru}_{\code x}} < \stage{a}
	\lor
	(\txid_1, \txid_a) \in \TCO
\end{array}	
\]

Since $\impGSI.\hb$ is a transitive closure, it is straightforward to demonstrate that $\impGSI.\hb = \bigcup\limits_{i \in \Nats} \hb_i$, where $\hb_0 = \impGSI.\po \cup \impGSI.\rf \cup \impGSI.\silo$ and $\hb_{i{+}1} = \hb_0 ; \hb_i$. 
It thus suffices to show:
\[
\begin{array}{@{} l @{}}
	\for{i \in \Nats}
	\for{\code x \in \Locs, \txid_a, \txid_1, \txid_2 \in \TXIDs, a \in  \trace_{\txid_a}.\Events} \\
	\qquad 
	\txid_1.\mathit{ru}_{\code x} \relarrow{\silo} \txid_2.\mathit{pl}_{\code x} 
	\land 
	\txid_2.\mathit{pl}_{\code x} \relarrow{\hb_i} a
	\land
	\exsts{j, k, l} 
		\itemAt{\phase j}{k} {=} \txid_1 
		\land 
		\itemAt{\phase j}{l} {=} \txid_2  
		\land 
		k > l \\
	\hspace{150pt}
	\Rightarrow 
	\stage{\txid_1.\mathit{ru}_{\code x}} < \stage{a}
	\lor
	(\txid_1, \txid_a) \in \TCO
\end{array}
\]
We thus proceed by induction over $i$.\\

\noindent \textbf{Base case $i = 0$}\\
Pick arbitrary $\code x \in \Locs$, $\txid_a, \txid_1, \txid_2 \in \TXIDs$ and $a \in \trace_{\txid_a}.\Events$ such that 
$\txid_1.\mathit{ru}_{\code x} \relarrow{\silo} \txid_2.\mathit{pl}_{\code x}$, $\txid_2.\mathit{pl}_{\code x} \relarrow{\hb_0} a$ 
and $\exsts{j, k, l} 
		\itemAt{\phase j}{k} {=} \txid_1 
		\land 
		\itemAt{\phase j}{l} {=} \txid_2  
		\land 
		k > l$.
From the construction of $\impGSI$ we know that there exists $r \in \absGSI.\Transactions_{\txid_1}$ and $w \in \absGSI.\Transactions_{\txid_2}$ such that $(r, w) \in \absGSI.\fr$. 
Moreover, since $k \ne l$ we know that $(r, w) \in \absGSI.\frt$. 
Since we have $\txid_2.\mathit{pl}_{\code x} \relarrow{\hb_0} a$, there are four cases to consider: 
1) $\txid_2.\mathit{pl}_{\code x} \relarrow{\impGSI.\poi \cup \rfi \cup \siloi} a$; or 
2) $\txid_2.\mathit{pl}_{\code x} \relarrow{\impGSI.\poe} a$; or 
3) $\txid_2.\mathit{pl}_{\code x} \relarrow{\impGSI.\rfe} a$; or 
4) $\txid_2.\mathit{pl}_{\code x} \relarrow{\impGSI.\siloe} a$. 

In case (1), from the construction of $\impGSI.\rfi$ we have $\impGSI.\rfi \subseteq \impGSI.\poi$; moreover, from the construction of $\impGSI.\siloi$ we have $\impGSI.\siloi \subseteq \impGSI.\poi$. We thus have $\txid_2.\mathit{pl}_{\code x} \relarrow{\impGSI.\poi} a$. On the other hand, from the definition of $\stage{.}$ we know $\stage{a} \geq \stage{\txid_2.\mathit{pl}_{\code x}} > \stage{\txid_1.\mathit{ru}_{\code x}}$, and thus  $\stage{a} > \stage{\txid_1.\mathit{ru}_{\code x}}$, as required. 

In case (2), since $\txid_2.\mathit{pl}_{\code x} \relarrow{\poe} a$, we know there exists $a' \in \Transactions_{\txid_a}$ such that $(w, a') \in \absGSI.\pot$. As such, we have $(r, a') \in \absGSI.(\frt; \pot)$. From the definition of $\TCO$ we thus have $(\txid_1, \txid_a) \in \TCO$, as required.


Case (3) cannot happen as there are no $\rf$ edges from lock events. 
In case (4), from the construction of $\silo$ we know there exists $m, n$ such that $m>j$, and $\txid_a = \itemAt{\phase{m}}{n}$.
As such, from \cref{lem:si_alt_completeness_TCO} we have $(\txid_1, \txid_a) \in \TCO$, as required.\\

\noindent \textbf{Inductive case $i = n {+} 1$}\\
Pick arbitrary $\code x \in \Locs$ and $\txid_a, \txid_1, \txid_2 \in \TXIDs$ and $a \in \trace_{\txid_a}.\Events$ such that $\txid_1.\mathit{ru}_{\code x} \relarrow{\lo} \txid_2.\mathit{pl}_{\code x}$, $\txid_2.\mathit{pl}_{\code x} \relarrow{\hb_i} a$ and $\exsts{j, k, l} 
		\itemAt{\phase j}{k} {=} \txid_1 
		\land 
		\itemAt{\phase j}{l} {=} \txid_2  
		\land 
		k > l$.
\begin{align}
\hspace*{-15pt}
\begin{array}{@{} l @{}}
	\for{j \leq n}
	\for{\code y \in \Locs, \txid_{a'}, \txid_3, \txid_4 \in \TXIDs, a' \in \trace_{\txid_{a'}}.\Events} \\
	\quad 
	(\txid_3.\mathit{ru}_{\code x} \relarrow{\silo} \txid_4.\mathit{pl}_{\code x} 
	\land 
	\txid_4.\mathit{pl}_{\code x} \relarrow{\hb_j} a') 
	\land
	\exsts{j, k, l} 
		\itemAt{\phase j}{k} {=} \txid_1 
		\land 
		\itemAt{\phase j}{l} {=} \txid_2  
		\land 
		k > l \\
	\hspace{140pt}
	\Rightarrow 
	\stage{\txid_3.\mathit{ru}_{\code y}} < \stage{a'}
	\lor
	(\txid_3,  \txid_{a'}) \in \TCO
\end{array}
\tag{I.H.}
\label{IH:si_alt_completeness_auxiliary2}
\end{align}
There are two cases to consider: \\
1) $\txid_2.\mathit{pl}_{\code x} \relarrow{\transC{(\poi \cup \rfi \cup \siloi)}} a$; or \\
2) $\txid_2.\mathit{pl}_{\code x} \relarrow{\reftransC{(\poi \cup \rfi \cup \siloi)}; (\poe \cup \rfe \cup \siloe); \hb_m} a$, where $m \leq n$. 

In case (1), from the construction of $\impGSI.\rfi$ we have $\impGSI.\rfi \subseteq \impGSI.\poi$; moreover, from the construction of $\impGSI.\siloi$ we have $\impGSI.\siloi \subseteq \impGSI.\poi$. 
We thus have $\txid_2.\mathit{pl}_{\code x} \relarrow{\transC{\poi }} a$, i.e. $\txid_2.\mathit{pl}_{\code x} \relarrow{\poi } a$. As such, from the proof of the base case  we have $\stage{a} > \stage{\txid_1.\mathit{ru}_{\code x}}$, as required. 

In case (2)  from the construction of $\impGSI$ we know that there exists $r \in\trace_{ \txid_1}.\Events$ and $w \in \trace_{\txid_2}.\Events$ such that $(r, w) \in \absGSI.\fr$. 
Moreover, since $k \ne l$, from the construction of $\impGSI$ we know that $\txid_1 \ne \txid_2$ and thus we have $(r, w) \in \absGSI.\frt$. 
On the other hand, we know that there exist $b, c$ such that $\txid_2.\mathit{pl}_{\code x} \relarrow{\reftransC{(\poi \cup \rfi \cup \siloi)}} b \relarrow{\poe \cup \rfe \cup \siloe} c \relarrow{\hb_m} a$. 
From the construction of $\impGSI.\rfi$ we have $\impGSI.\rfi \subseteq \impGSI.\poi$; moreover, from the construction of $\impGSI.\siloi$ we have $\impGSI.\siloi \subseteq \impGSI.\poi$. 
We thus have $\txid_2.\mathit{pl}_{\code x} \relarrow{\reftransC{\poi }} b$. As such we have $\txid_2.\mathit{pl}_{\code x} \relarrow{\reftransC{\poi}} b \relarrow{\poe \cup \rfe \cup \siloe} c$. Let $c \in \trace_{\txid_c}$.
There are now three cases to consider: a) $b \relarrow{\poe} c$; or b) $b \relarrow{\rfe} c$; or c) $b \relarrow{\siloe} c$. We first demonstrate that in all three cases we have $(\txid_1, \txid_c) \in \TCO$.

In case (2.a), since $\txid_2.\mathit{pl}_{\code x} \relarrow{\poe} c$, we know there exists $c' \in \Transactions_{\txid_c}$ such that $(w, c') \in \absGSI.\pot$. As such, we have $(r, c') \in \absGSI.(\frt; \pot)$. From the definition of $\TCO$ we thus have $(\txid_1, \txid_c) \in \TCO$.
In case (2.b), from the definition of $\impGSI.\rf$ we know there exists $c' \in \Transactions_{\txid_c}$ such that $(w, c') \in \absGSI.\rft$. 
As such, we have $(r, c) \in \absGSI.(\frt; \rft)$. From the definition of $\TCO$ we thus have $(\txid_1, \txid_c) \in \TCO$.
In case (2.c) from the construction of $\lo$ we know there exists $z$ such that either $b = \txid_2.\mathit{pl}_z$ or $b = \txid_2.\mathit{wu}_z$,  and that there exist $p, q$ such that $\txid_2 \in \sphase[\code z]{p}$, $\txid_c \in \sphase[\code z]{q}$ and $p < q$. Given the definition of $\sphase[\code z]{}$ we then know that there exists $w_z \in \absGSI.\Transactions_{\txid_2}$ where either 
i) there exists $w'_z \in \Transactions_{\txid_c}$ such that $(w_z, w'_z) \in \absGSI.\cot$; or  
ii) there exists $r_z \in \Transactions_{\txid_c}$ such that $(w_z, r_z) \in \absGSI. \rft$; or
ii) there exists $r_z \in \Transactions_{\txid_c}$ such that $(w_z, r_z) \in \absGSI.(\cot; \rft)$.  
That is, we have either $(w, w'_z) \in \absGSI.\cot$, or $(w, r_z) \in \absGSI.\rft$, or $(w, r_z) \in \absGSI.(\cot; \rft)$. 
Moreover, since we have $(r, w) \in \absGSI.\frt$, we then have $(r, w'_z) \in \absGSI.(\frt; \cot)$, or $(r, r_z) \in \absGSI.(\frt; \rft)$, or $(r, r_z) \in \absGSI.(\frt; \cot; \rft)$. From the definition of $\TCO$ we thus have $(\txid_1, \txid_c) \in \TCO$.

Since $c \relarrow{\hb_m} a$, there are now two cases to consider: 
i) $c \relarrow{\hb'} a$; or ii) $c \not\relarrow{\hb'} a$.

In case (2.i), 
from \cref{lem:si_alt_completeness_aux2} we have $(c, a) \in \poi  \lor (\txid_c, \txid_a) \in \TCO$. As we have $(\txid_1, \txid_c) \in \TCO$, we thus have $(\txid_1, \txid_a) \in \TCO$, as required.

In case (2.ii) let us split the path from $c$ at the first occurrence of a non-$\hb'$ edge. 
That is, pick $\txid_3, \txid_4, g, h, \code y, p, q, s, k$ such that $c  \relarrow{\hb'} \txid_3.g  \relarrow{\silo} \txid_4.h \relarrow{\hb_k} a$, $k < m$, $\itemAt{\phase[y]{s}}{p} = \Transactions_{\txid_3}$,  $\itemAt{\phase[y]{s}}{q} = \Transactions_{\txid_4}$, $p > q$
and either a) $g = \mathit{rl}_{\code y} \land h = \mathit{pl}_{\code y}$; or b) $g = \mathit{ru}_{\code y} \land h = \mathit{pl}_{\code y}$; or c) $g = \mathit{rl}_{\code y} \land h = \mathit{wu}_{\code y}$; or d) $g = \mathit{ru}_{\code y} \land h = \mathit{wu}_{\code y}$.
From \cref{lem:si_alt_completeness_aux2} we then have $(c, \txid_3.g) \in \poi  \lor (\txid_c, \txid_3) \in \TCO$. As we have $(\txid_1, \txid_c) \in \TCO$, we also have $(\txid_1, \txid_3) \in \TCO$.
%
We next demonstrate that in all cases (2.ii.a-2.ii.d) there exists $t$ such that $t \leq m$ and $\txid_3.\mathit{ru}_{\code y}  \relarrow{\silo} \txid_4.\mathit{pl}_{\code y} \relarrow{\hb_t} a$.

In case (2.ii.a) from the definition of $\silo$ we also have $\txid_3.\mathit{ru}_{\code y} \relarrow{\silo} \txid_4.\mathit{pl}_{\code x}$.  
As such, we have $\txid_3.\mathit{ru}_{\code y}  \relarrow{\silo} \txid_4.\mathit{pl}_{\code y} \relarrow{\hb_k} a$.
In case (2.ii.b) the desired result holds immediately.

In cases (2.ii.c-2.ii.d) from the construction of $\silo$ we have $\txid_3.\mathit{ru}_{\code y} \relarrow{\silo} \txid_4.\mathit{pl}_{\code y}$. Moreover, since we have $\txid_4.\mathit{pl}_{\code y} \relarrow{\po} \txid_4.\mathit{wu}_{\code y}$, we also have $\txid_4.\mathit{pl}_{\code y} \relarrow{\po} \txid_4.\mathit{wu}_{\code y}  \relarrow{\hb_k} a$. As $\po \subseteq \hb$ and $\hb$ is transitively closed, we have $\txid_4.\mathit{pl}_{\code y} \relarrow{\hb_{k{+} 1}} a$. 
As such, we have $\txid_3.\mathit{ru}_{\code y}  \relarrow{\silo} \txid_4.\mathit{pl}_{\code y} \relarrow{\hb_{k {+} 1}} a$. As $k < m$, the desired result holds immediately.

Consequently, from (\ref{IH:si_alt_completeness_auxiliary2}) we have $\stage{\txid_3.\mathit{ru}_{\code y}} < \stage{a} \lor (\txid_3,  \txid_a) \in \TCO$. 
In the case of the first disjunct we have $\stage{\txid_3.\mathit{ru}_{\code y}} {=} \stage{\txid_1.\mathit{ru}_{\code x}} < \stage{a}$, as required.
In the case of the second disjunct, since we also have $(\txid_1, \txid_3) \in \TCO$ and $\TCO$ is transitively closed, we have $(\txid_1, \txid_a) \in \TCO$, as required.

\end{proof}

\end{lemma}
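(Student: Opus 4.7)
The plan is to follow closely the structure of the analogous proof of \cref{lem:si_completeness_aux3} from \cref{app:si}, since the construction of $\silo$ via $\RLO_\x$ is identical in the lazy setting and the new shape of transactional traces does not affect the lock-ordering invariants that drive this argument. The first step is a normalisation: all four disjuncts in the hypothesis can be collapsed to the canonical case $\txid_1.\mathit{ru}_{\code x} \relarrow{\silo} \txid_2.\mathit{pl}_{\code x}$ and $\txid_2.\mathit{pl}_{\code x} \relarrow{\hb} a$. Indeed, the $\RLO_\x$ construction closes reader-unlock events against all lock events that reader-lock events are related to, so an $\mathit{rl}$ source can always be replaced with an $\mathit{ru}$ source; and any $\hb$-target $\mathit{wu}$ can be replaced by the po-earlier $\mathit{pl}$ (which $\hb$-precedes $\mathit{wu}$ and hence $\hb$-precedes $a$ too).

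With the hypothesis in canonical form, I would stratify $\impGSI.\hb$ as $\bigcup_{i\in\Nats}\hb_i$ with $\hb_0 = \po \cup \rf \cup \silo$ and $\hb_{i+1} = \hb_0;\hb_i$, and induct on $i$. In the base case, case-split on the type of the single edge $\txid_2.\mathit{pl}_{\code x} \relarrow{\hb_0} a$. Internal $\po/\rf/\silo$ edges collapse into $\impGSI.\poi$ (since $\impGSI.\rfi$ and $\impGSI.\siloi$ are contained in $\impGSI.\poi$ by the construction), and since $\stage{\txid_2.\mathit{pl}_{\code x}} > \stage{\txid_1.\mathit{ru}_{\code x}}$, we get the stage inequality. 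An external $\poe$ edge gives a $\pot$ edge into $\txid_a$, which chained with the $\frt$ edge from $\txid_1$ to $\txid_2$ provided by the hypothesis $k > l$ (via the construction of $\impGSI$ from $\sphase j$) gives $(\txid_1,\txid_a) \in \TCO$. The external $\rfe$ case is vacuous because lock events are not rf-sources. The external $\siloe$ case moves $\txid_a$ to a strictly later phase, so \cref{lem:si_alt_completeness_TCO} yields $(\txid_1,\txid_a)\in\TCO$.

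For the inductive case, I would split the path $\txid_2.\mathit{pl}_{\code x} \relarrow{\hb_i} a$ at the first occurrence of a non-$\hb'$ edge (if any). The prefix lies in $\hb'$ and ends in some $c \in \trace_{\txid_c}$, so \cref{lem:si_alt_completeness_aux2} gives $(\txid_2.\mathit{pl}_{\code x},c)\in\poi$ or $(\txid_2,\txid_c)\in\TCO$; combined with the derived $\frt$-and-edge-type analysis of the first ``bad'' $\siloe$ edge out of $c$ (po, rf or silo, as in the base case), this yields $(\txid_1,\txid_c)\in\TCO$. The suffix begins at some $\txid_3.\mathit{rl}_{\code y}$ or $\txid_3.\mathit{ru}_{\code y}$ $\silo$-connected to $\txid_4.\mathit{pl}_{\code y}$ or $\txid_4.\mathit{wu}_{\code y}$ with $\txid_3,\txid_4$ in a common phase and $\txid_3$ later than $\txid_4$; after normalising the suffix to canonical form (absorbing a single $\po$ step into a smaller $\hb_k$ so that $k<i$ still holds), the inductive hypothesis applies and produces either $\stage{\txid_3.\mathit{ru}_{\code y}} < \stage{a}$ (which coincides with $\stage{\txid_1.\mathit{ru}_{\code x}} < \stage{a}$) or $(\txid_3,\txid_a)\in\TCO$, which composes with $(\txid_1,\txid_3)\in\TCO$ to yield the conclusion.

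The main obstacle I anticipate is the bookkeeping in the inductive case: specifically, verifying that when the suffix starts with an $\mathit{rl}$-to-$\mathit{wu}$ or $\mathit{ru}$-to-$\mathit{wu}$ edge, the absorption of the implicit $\txid_4.\mathit{pl}_{\code y} \relarrow{\po} \txid_4.\mathit{wu}_{\code y}$ step still yields a $\hb_k$-index strictly less than $i$, so that the IH applies. This is why the auxiliary $\hb'$ relation (which excludes exactly the ``backward'' $\silo$ edges within a phase) is engineered the way it is, and careful attention is needed to confirm that every $\silo$ edge encountered along the path is either a forward-in-phase edge absorbed by $\hb'$ or is itself the kind of backward edge for which the IH is applicable.
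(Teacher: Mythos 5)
Your proposal matches the paper's proof essentially step for step: the same normalisation of the four disjuncts to the $\mathit{ru}\relarrow{\silo}\mathit{pl}$ canonical form, the same stratification $\hb=\bigcup_i\hb_i$ with induction on $i$, the same four-way case split in the base case, and the same use of \cref{lem:si_alt_completeness_aux2} together with the splitting of the path at the first non-$\hb'$ edge (with the $\po$-absorption bookkeeping for the $\mathit{wu}$-targeted edges) in the inductive case. The obstacle you flag about keeping the $\hb_k$ index below $i$ is handled in the paper exactly as you anticipate, so no further comment is needed.
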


\begin{theorem}[Completeness]
For all SI execution graphs $\absGSI$ and their counterpart implementation graphs $\impGSI$ constructed as above,
\[
	\sicon[\absGSI]  \Rightarrow \consistent{\impGSI}
\]
\begin{proof}
Pick an arbitrary SI execution graph $\absGSI$ and its counterpart implementation graph $\impGSI$ constructed as above and assume $\sicon$ holds.
From the definition of $\consistent{\impGSI}$ it suffices to show: 
\begin{enumerate}
	\item $\irr{\impGSI.\hb}$ \label{goal:si_alt_completeness_hb_irr}
	\item $\irr{\impGSI.\co ; \impGSI.\hb}$ \label{goal:si_alt_completeness_co_hb_irr}
	\item $\irr{\impGSI.\fr ; \impGSI.\hb}$ \label{goal:si_alt_completeness_fr_hb_irr}
\end{enumerate}
\textbf{RTS. part \ref{goal:si_alt_completeness_hb_irr}}\\
We proceed by contradiction. Let us assume that there exists $a, \trace_\txid$ such that $a \in  \trace_\txid.\Events $ and $(a, a) \in \impGSI.\hb$.
There are now two cases to consider: 1) $(a, a) \in \hb'$; or 2) $(a, a) \not\in \hb'$. 

In case (1), from \cref{lem:si_alt_completeness_aux2} we have $(a, a) \in \impGSI.\poi \lor (\Transactions_\txid, \Transactions_\txid) \in \TCO$. The first disjunct leads to a contradiction as the construction of $\impGSI.\po$ yields an acyclic relation. The second disjunct leads to a contradiction as $\TCO$ is a strict total order. 

In case (2), let us split the $a \relarrow{\hb} a$ at the first occurrence of a non-$\hb'$ edge. That is, pick $\txid_1, \txid_2, \code x, i, j, k, g, h$ such that $a \relarrow{\hb} \txid_1.g \relarrow{\lo \setminus \hb'}  \txid_2.h \relarrow{\hb} a$, $\itemAt{\phase[x]{i}}{j} = \Transactions_{\txid_1}$,  $\itemAt{\phase[x]{i}}{k} = \Transactions_{\txid_2}$ and $j > k$.
As we have $\txid_1.g \relarrow{\lo \setminus \hb'} \txid_2.h \relarrow{\hb} a \relarrow{\hb} \txid_1.g$, from \cref{lem:si_alt_completeness_aux3} and the definition of $\hb$ we then have $\stage{\txid_1.g } < \txid_1.g  \lor (\txid_1, \txid_1) \in \TCO$, leading to a contradiction in both disjuncts (the second disjunct yields a contradiction as $\TCO$ is a strict total order). \\

\noindent \textbf{RTS. part \ref{goal:si_alt_completeness_co_hb_irr}}\\
We proceed by contradiction. Let us assume that there exists $a, \trace_{\txid_a}, b, \trace_{\txid_a}$ such that $a \in  \trace_{\txid_a}.\Events$, $b \in  \trace_{\txid_b}.\Events$, $(a, b) \in \impGSI.\hb$ and $(b, a) \in \impGSI.\co$.
Let $\loc a = \loc b = \code x$ for some shared location \code x.
There are now two cases to consider: 1) $(b, a) \in \impGSI.\coi$; or 2) $(b, a) \in \impGSI.\coe$.

In case (1) we then have $(b, a) \in \absGSI.\coi \subseteq \absGSI.\poi$. That is, we have $(b, a) \in \impGSI.\po \subseteq \impGSI.\hb$. We thus have $a \relarrow{\impGSI.\hb} b \relarrow{\impGSI.\hb} a$, contradicting our proof in part~\ref{goal:si_alt_completeness_hb_irr}.
In case (2), from the construction of $\impGSI.\co$ we have $(b, a) \in \absGSI.\cot$ and thus 
from the construction of $\impGSI.\lo$ we then have $(\txid_{b}.\mathit{wu}_{\code x}, \txid_{a}.\mathit{rl}_{\code x}) \in \impGSI.\lo$. 
As such we have $a \relarrow{\impGSI.\hb} b \relarrow{\impGSI.\po} \txid_{b}.\mathit{wu}_{\code x} \relarrow{\impGSI.\lo} \txid_{a}.\mathit{rl}_{\code x} \relarrow{\impGSI.\po} a$. That is, we have $a \relarrow{\impGSI.\hb} a$, contradicting our proof in part~\ref{goal:si_alt_completeness_hb_irr}.\\

\noindent \textbf{RTS. part \ref{goal:si_alt_completeness_fr_hb_irr}}\\
We proceed by contradiction. Let us assume that there exists $a, \trace_{\txid_a}, b, \trace_{\txid_a}$ such that $a \in  \trace_{\txid_a}.\Events$, $b \in  \trace_{\txid_b}.\Events$, $(a, b) \in \impGSI.\hb$ and $(b, a) \in \impGSI.\fr$.

Let $\loc a = \loc b = \code x$ for some shared location \code x.
There are now two cases to consider: 1) $(b, a) \in \impGSI.\fri$; or 2) $(b, a) \in \impGSI.\fre$.

In case (1) we then have $(b, a) \in \absGSI.\fri \subseteq \absGSI.\poi$. That is, we have $(b, a) \in \impGSI.\po \subseteq \impGSI.\hb$. We thus have $a \relarrow{\impGSI.\hb} b \relarrow{\impGSI.\hb} a$, contradicting our proof in part~\ref{goal:si_alt_completeness_hb_irr}.
In case (2), from the construction of $\impGSI.\fr$ we have $(b, a) \in \absGSI.\frt$ and thus 
from the construction of $\impGSI.\lo$ we then have either $(b, \txid_{b}.\mathit{wu}_{\code x}) \in \impGSI.\po$ and $(\txid_{b}.\mathit{wu}_{\code x}, \txid_{a}.\mathit{rl}_{\code x}) \in \impGSI.\lo$; or $(b, \txid_{b}.\mathit{ru}_{\code x}) \in \po$ and $(\txid_{b}.\mathit{ru}_{\code x}, \txid_{a}.\mathit{pl}_{\code x}) \in \impGSI.\lo$.
As such in both cases we have $(b, a) \in \impGSI.\hb$. 
Consequently, we have $a \relarrow{\impGSI.\hb} b \relarrow{\impGSI.\hb} a$, contradicting our proof in part~\ref{goal:si_alt_completeness_hb_irr}.

\end{proof}
\end{theorem}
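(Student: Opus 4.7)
The plan is to decompose the stated equality into four inclusions---soundness and completeness of each of the eager and lazy RSI implementations against \cref{def:rsi_consistency}---and to mirror the structure of the proofs of \cref{thm:si_snd_cmp} in \cref{app:si} (eager SI) and \cref{app:si_alternative} (lazy SI), extended to accommodate non-transactional events, the $\code{snapshot}_{\mathrm{RSI}}$ double-read validation, and the weaker lock specification (in which reader locks need not synchronise). For each inclusion we construct an explicit translation between RA-consistent implementation graphs $\impG$ and RSI-consistent abstract graphs $\absG$, verify that the translation preserves program outcomes, and then verify the relevant consistency axiom.

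\textbf{Soundness.} Given an RA-consistent $\impG$ of $\denot{P}_{\textsc x}$ for $\textsc x \in \{\textsc e, \textsc l\}$, we build $\absG$ by projecting each transaction to its source-level events (taking the $\mathit{lr}$ events as abstract reads, and taking the first $\mathit{lw}$ events, or in the lazy case the committed $\mathit{w}_x$ events, as abstract writes), while preserving all non-transactional events verbatim; we inherit $\rf$ and $\co$ from $\impG$ via the projection. The core technical step is an RSI analogue of \cref{lem:si_lock_hb}: every $\rsipo$, $\rsirf$, $\cot$ and $\sifr$ edge in $\absG$ induces an $\impG.\hb$ path between the corresponding lock or non-transactional events. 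The new cases relative to the SI proof are (i) the $[\Writes]; \poi; [\Writes]$ clause of $\rsipo$, justified by release/acquire synchronisation between successive in-place (eager) or propagated (lazy) writes within a single transaction; (ii) the $\rf; [\NT]$ and $[\NT]; \rf; \st$ clauses of $\rsirf$, which come directly from $\impG.\rf$ composed with $\po$ either from $\mathit{wu}_x$ to a subsequent non-transactional event, or from a non-transactional write into the first $\mathit{rs}_x$ of a later transaction; and (iii) the $\tlift{(\co; \rf)}$ clause, handled exactly as in \cref{lem:si_lock_hb}\eqref{lem:si_lock_hb_corf}. Combining these, any cycle in $\absG.(\rsihbloc \cup \co \cup \fr)$ lifts to one in $\impG.(\hbloc \cup \co \cup \fr)$, contradicting RA-consistency; axiom \ref{ax:si_int} follows from the construction of $\absG.\po$.

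\textbf{Completeness.} Given an RSI-consistent $\absG$, we instrument each transaction with the appropriate lock, snapshot, promotion, commit and unlock sequence from \cref{fig:si_implementation} (with $\code{snapshot}_{\mathrm{RSI}}$) or \cref{fig:rsi_alt_implementation}, producing the implementation graph $\impG$. We extend the phase-based construction of \cref{subapp:si_completeness}: we totally order transaction classes by some linear extension $\TCO$ of the $\rsihb$-induced ordering, and define $\impG.\lo$ per location as $\LO_x$---\emph{without} the reader-reader edges of $\RLO_x$, since this is essential for admitting weak behaviours such as \eqref{ex:SBT}. We define $\impG.\rf$ so that (a) non-transactional reads inherit their $\rf$-source from $\absG.\rf$, and (b) for each $x \in \readset_\txid$, both passes of $\code{snapshot}_{\mathrm{RSI}}$ read values consistent with the transaction's abstract read of $x$ in $\absG$. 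RA-consistency of $\impG$ is then proven by adapting \cref{lem:si_completeness_aux2} and \cref{lem:si_completeness_aux3}: any $\impG.\hb$ cycle projects either to a cycle in $\TCO$ or to a cycle in $\absG.(\rsihbloc \cup \co \cup \fr)$. The new cases involve cycles routed through non-transactional events or across the two snapshot passes, but these are precisely the cycles forbidden by the strengthened $\rsipo$ and $\rsirf$ clauses of \cref{def:rsi_consistency}.

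\textbf{Main obstacle.} The principal difficulty, and the step where the side condition is essential, is establishing that the double-read validation of $\code{snapshot}_{\mathrm{RSI}}$ succeeds under the $\impG.\rf$ we construct---i.e.\ that for each $x \in \readset_\txid$ the two reads can return the same value without triggering a spurious retry. A pathological scenario arises when two \emph{distinct} non-transactional writes of the same value to $x$ race with $\txid$: the value-equality check would then accept a snapshot whose event-level causal structure has no counterpart in any RSI-consistent $\absG$, breaking the correspondence in both directions. The hypothesis that for each $(x, v)$ either at most one non-transactional write of $v$ to $x$ exists (case (i)), or all such writes are $\hb$-ordered with every transaction accessing $x$ (case (ii)), rules out exactly this scenario: under (i) both passes can only observe the same non-transactional write, and under (ii) the $\hb$-ordering pushes the racy writes outside the snapshot window, so an $\impG.\rf$ witnessing the abstract read always exists. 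A secondary obstacle, specific to the lazy implementation, is preserving the FIFO ordering of intermediate writes observable by non-transactional reads; this is handled by propagating the entire $\code{wseq}$ at commit time (line~\ref{lin:rsi_alt_write} of \cref{fig:rsi_alt_implementation}) and defining $\impG.\co$ to respect this intra-transaction order, exactly as in \cref{app:si_alternative}.
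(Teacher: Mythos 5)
Your proposal does not prove the stated theorem: it is a sketch of the soundness-and-completeness result for the \emph{RSI} implementations (\cref{thrm:rsi_sound_complete}), whereas the statement here is the completeness direction alone for the \emph{lazy SI} implementation of \cref{fig:si_alt_implementation}. In the SI setting there are no non-transactional events ($\Events = \Transactions$), the snapshot reads each location exactly once, and there is no validation loop; consequently the parts of your write-up that carry the real weight --- the $[\NT]$ clauses of $\rsirf$, the $\code{snapshot}_{\mathrm{RSI}}$ double-read, the side condition on multiple non-transactional writes of the same value, and the FIFO propagation of \code{wseq} --- address obstacles that simply do not arise for this theorem, while the obligations that do arise are left undischarged.

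Concretely, the stated theorem requires: (i) constructing a lock order for $\impGSI$ from the SI-consistent $\absGSI$, which the paper does via the per-location phases $\sphase{i}$ and a linear extension $\TCO$ of $\transC{(\refC{\frt};(\pot\cup\cot\cup\rft))}$ --- a step whose very possibility rests on \eqref{ax:si_ext}, not on RSI consistency; (ii) establishing the three RA-consistency obligations $\irr{\impGSI.\hb}$, $\irr{\impGSI.\co;\impGSI.\hb}$ and $\irr{\impGSI.\fr;\impGSI.\hb}$, which the paper reduces to acyclicity of $\TCO$ via \cref{lem:si_alt_completeness_aux2} (every $\hb'$ edge between distinct transactions descends into $\TCO$) together with the separate stage-based argument of \cref{lem:si_alt_completeness_aux3} for the lock edges of the form $\txid_1.\mathit{ru}_{\x}\relarrow{\lo}\txid_2.\mathit{pl}_{\x}$ that lie outside $\hb'$; and (iii) the lazy-specific bookkeeping relating abstract reads and writes to the deferred commit-time events in $\mathit{Ws}$ (via $\fw{\x}$, $\iw{\x}$ and the $\mathsf{RF}$/$\mathsf{MO}$ constructions), which is what makes the extra-transactional $\co$ and $\fr$ edges land on lock events covered by the installed $\lo$ edges. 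Your single sentence that ``any $\impG.\hb$ cycle projects either to a cycle in $\TCO$ or to a cycle in $\absG.(\rsihbloc\cup\co\cup\fr)$'' gestures at (ii) but against the wrong consistency predicate and without the non-$\hb'$ case analysis; nothing in the proposal addresses (i) or (iii) for the SI construction. You would need to restart from the actual statement: fix an SI-consistent $\absGSI$, build $\impGSI$ with the phase-derived lock order, and verify the three irreflexivity conditions.
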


\renewcommand{\absGRSI}{\ensuremath{\impG'}}
\renewcommand{\impGRSI}{\ensuremath{\impG}}
\newcommand{\stg}[2]{\func{stg}{#1, #2}}
\renewcommand{\rsifr}{\sifr}
\newpage
\section{Soundness and Completeness of the Eager RSI Implementation}\label{app:rsi}
\paragraph{Notation} Given an execution graph $(\Events, \po, \rf, \co, \lo)$ we write $\TClasses$ for the set of equivalence classes of $\Transactions$ induced by $\st$; $\class{a}{\st}$ for the equivalence class that contains $a$; and $\Transactions_\txid$ for the equivalence class of transaction $\txid\in \TXIDs$: $\Transactions_\txid \eqdef \setcomp{a}{\tx{a} {=} \txid}$.
We write $\rsicon$ to denote that $\absGSI$ is RSI-consistent; and write $\consistent{\impG}$ to denote that $\impG$ is RA-consistent.

Given an execution graph $\impGRSI$ of the RSI implementation in \cref{fig:si_implementation}, let us assign a transaction identifier to each transaction executed by the program; and given a transaction $\txid$, let $\readset_{\txid}$ and $\writeset_{\txid}$ denote its read and write sets, respectively.
Observe that given a transaction $\txid$ of the RSI implementation in \cref{fig:si_implementation} with $\readset_{\txid} \cup \writeset_{\txid} = \simpleset{\code{x}_1, \cdots, \code{x}_i}$, the trace of $\txid$, written $\trace_{\txid}$, is of the following form: 
\[
	\trace_{\txid} = 
	\mathit{FS}^{*} 
	\relarrow{\imm \po} \mathit{Rs} 
	\relarrow{\imm \po} \mathit{RUs}
	\relarrow{\imm \po} \mathit{PLs}
	\relarrow{\imm \po} \mathit{Ts}
	\relarrow{\imm \po} \mathit{Us}
\]
where
\begin{itemize}
	\item $\mathit{FS}^*_{\txid}$ denotes the sequence of events failing to obtain a valid snapshot.
	\item $\mathit{Rs}$ denotes the sequence of events acquiring a valid snapshot,
	and is of the form 
	$\mathit{rl}_{\code x_1} \relarrow{\imm \po} \cdots \relarrow{\imm \po} \mathit{rl}_{\code x_i} \relarrow{\imm \po} \mathit{S}_{\code x_1} \relarrow{\imm \po} \cdots \relarrow{\imm \po} \mathit{S}_{\code x_i} \relarrow{\imm \po} \mathit{V}_{\code x_1} \relarrow{\imm \po} \cdots \relarrow{\imm \po} \mathit{V}_{\code x_i}$, where for all $n \in \{1 \cdots i\}$:
	\[
		\begin{array}{l}
			\mathit{rl}_{\code x_n} {=} \rlockE{\x_n} 
			\quad
			\mathit{S}_{\code x_n}{=}
			\begin{cases}
				\mathit{rs}_{\code x_n}
				\relarrow{\imm \po}
				\mathit{ws}_{\code x_n}
				& \text{if } \code x \in \readset_{\txid} \\
				
				\emptyset & \text{otherwise}
			\end{cases} 
			\quad
			\mathit{V}_{\code x_n} {=}
			\begin{cases}
				\mathit{vs}_{\code x_n}
				& \text{if } \code x \in \readset_{\txid} \\
				
				\emptyset & \text{otherwise}
			\end{cases} 			
		\end{array}
	\]	
	with $\mathit{rs}_{\code x_n} \eqdef \readE{\acq}{\code x_n}{v_n}$,
	$\mathit{ws}_{\code x_n} \eqdef \writeE{\rel}{\code{s[x}_n\code ]}{v_n} $
	and $\mathit{vs}_{\code x_n} \eqdef \readE{\acq}{\code x_n}{v_n}$,  for some $v_n$.
	\item $\mathit{RUs}$ denotes the sequence of events releasing the reader locks (when the given location is in the read set only), and is of the form $\mathit{ru}_{\code x_1} \relarrow{\imm \po} \cdots \relarrow{\imm \po} \mathit{ru}_{\code x_i}$, where for all $n \in \{1 \cdots i\}$:
	\[
	\begin{array}{l}
		\mathit{ru}_{\code x_n} = 
		\begin{cases}
			\runlockE{\x_n}
			& \text{ if } \code x_n \not\in \writeset_{\txid}  \\
			\emptyset
			& \text{ otherwise}
		\end{cases}
	\end{array}	
	\]
	\item $\mathit{PLs}$ denotes the sequence of events promoting the reader locks to writer ones (when the given location is in the write set), and is of the form $\mathit{pl}_{\code x_1} \relarrow{\imm \po} \cdots \relarrow{\imm \po} \mathit{pl}_{\code x_i}$, where for all $n \in \{1 \cdots i\}$:
	\[
	\begin{array}{l}
		\mathit{pl}_{\code x_n} = 
		\begin{cases}
			\plockE{\x_n}
			& \text{if }  \code x_n \in \writeset_{\txid} \\
			\emptyset
			& \text{ otherwise } 
		\end{cases}
	\end{array}	
	\]
	\item $\mathit{Ts}$ denotes the sequence of events corresponding to the execution of \denot{\code{T}} in \cref{fig:si_implementation} and is of the form $\mathit{t}_1 \relarrow{\imm \po} \cdots \relarrow{\imm \po} \mathit{t}_k$, where for all $m \in \{1 \cdots k\}$:
	\[
	\mathit{t}_m = 
	\begin{cases}
		\readE{-}{\code{s[x}_n \code{]}}{v_n} & \text{if } O_m {=} \readE{-}{\code x_n}{v_n} \\
		\writeE{\rel}{\code x_n}{v_n} \relarrow{\imm \po} \writeE{-}{\code{s[x}_n \code ]}{v_n}
		& \text{if } O_m {=} \writeE{\rel}{\code x_n}{v_n} \\
	\end{cases}
	\]
	where $O_m$ denotes the $m$th event in the trace of the original $\code T$;
	\item $\mathit{Us}$ denotes the sequence of events releasing the locks on the write set. That is, the events in $\mathit{Us}$ correspond to the execution of the last line of the implementation in \cref{fig:si_implementation}, and is of the form $\mathit{wu}_{\code x_1} \relarrow{\imm \po} \cdots \relarrow{\imm \po} \mathit{wu}_{\code x_i}$, where for all $n \in \{1 \cdots i\}$:
	\[
		\mathit{wu}_{\code x_n} = 
		\begin{cases}
			\wunlockE{\code{yl}_n} & \text{if } \code x_n \in \writeset_{\txid} \\
			\emptyset & \text{otherwise}
		\end{cases}				
	\]
\end{itemize}

Given a transaction trace $\trace_{\txid}$, we write e.g.~$\txid.\mathit{Ls}$ to refer to its constituent $\mathit{Ls}$ sub-trace and write $\mathit{Ls}.\Events$ for the set of events related by \po in $\mathit{Ls}$. Similarly, we write $\txid.\Events$ for the set of events related by \po in $\trace_{\txid}$.
Note that $\impGRSI.\Events = \bigcup\limits_{\txid \in \sort{Tx}}  \txid.\Events $.

\subsection{Implementation Soundness}
In order to establish the soundness of our implementation, it suffices to show that given an RA-consistent execution graph of the implementation $\impGRSI = (\Events, \po, \rf, \co, \lo)$, we can construct a corresponding RSI-consistent execution graph $\absGRSI$ with the same outcome.

Given a transaction $\txid \in \sort{Tx}$ with $\readset_{\txid} \cup \writeset_{\txid}=\{\code x_1 \cdots \code x_i\}$
and trace 
$ \trace_{\txid} = \mathit{Fs}^{*}
	\relarrow{\imm \po} \mathit{Rs}
	\relarrow{\imm \po} \mathit{RUs}
	\relarrow{\imm \po} \mathit{PLs}
	\relarrow{\imm \po} \mathit{Ts}
	\relarrow{\imm \po} \mathit{Us}$, 
with $\mathit{Ts}= \mathit{t}_1 \relarrow{\imm \po} \cdots \relarrow{\imm \po} \mathit{t}_k$, 
we define $\trace'_{\txid} \eqdef \mathit{t}'_1 \relarrow{\imm \po} \cdots \relarrow{\imm \po} \mathit{t}'_k$, such that for all $m \in \{1 \cdots k\}$:
\[
\begin{array}{l c l}
	\mathit{t}'_m {=} \readE{\acq}{\code x_n}{rb_n}
	& \text{when} & 
	\mathit{t}_m = \readE{-}{\code{s[x}_n \code{]}}{rb_n} \\
	
	\mathit{t}'_m {=} \writeE{\rel}{\code x_n}{rb_n}
	& \text{when} & 
	\mathit{t}_m = \writeE{\rel}{\code x_n}{rb_n} \relarrow{\imm \po} \writeE{-}{\code{s[x}_n \code{]}}{rb_n} 
\end{array}	
\]
such that in the first case the identifier of $\mathit{t}'_m$ is that of $\mathit{t}_m$; and in the second case the identifier of $\mathit{t}'_m$ is that of the first event in $\mathit{t}_m$.
%
%
We then define:
\[
	\mathsf{RF}_{\txid} \eqdef
	\setcomp{
		(w, t'_j)	
	}{
		t'_j \in \mathit{Ts}'_{\txid} \land \exsts{\code x, v} t'_j {=} \readE{\acq}{\code x}{v} \land w {=} \writeE{\rel}{\code x}{v} \\
		\land (w \in \txid.\Events \Rightarrow 
		\begin{array}[t]{@{} l @{}}
			w \relarrow{\po} t'_j \,\land\\
			(\for{e \in \txid.\Events } w \relarrow{\po} e \relarrow{\po} t'_j \Rightarrow (\loc e {\ne} \code x \lor e {\not\in} \Writes)))
		\end{array} \\				
		
		\land (w \not \in \txid.\Events \Rightarrow 
		\begin{array}[t]{@{} l @{}}
			(\for{e \in \txid.\Events} (e \relarrow{\po} t'_j \Rightarrow (\loc e \ne \code x \lor e \not\in \Writes)) \\
			\land\, (w, \txid.\mathit{rs}_\x), (w, \txid.\mathit{vs}_\x)  \in \impGRSI.\rf)
		\end{array}		
	}
%
%
%
%
\]

We are now in a position to demonstrate the soundness of our implementation.
Let $\impGRSI.\NT \eqdef \impGRSI.\Events \setminus (\bigcup_{\txid \in \textsc{Tx}} \trace_{\txid}.\Events)$.
 Given an RA-consistent execution graph $\impGRSI$ of the implementation, we construct an RSI execution graph $\absGRSI$ as follows and demonstrate that $\rsicon$ holds.

\begin{itemize}
	\item $\absGRSI.\Events = \bigcup\limits_{\txid \in \sort{Tx}} \trace'_{\txid}.\Events$, with the $\tx{.}$ function defined as:
	\[
		\tx{a} \eqdef 
		\begin{cases}
			\txid & \text{if } a \in \trace'_{\txid} \\
			0 & \text{otherwise}
		\end{cases}
	\]
	\item $\absGRSI.\po = \coerce{\impGRSI.\po}{\absGRSI.\Events}$
	\item $\absGRSI.\rf = \big(\bigcup_{\txid \in \textsc{Tx}} \mathsf{RF}_{\txid}\big) \cup \big(\impGRSI.\rf \cap \impGRSI.\Events \times \impGRSI.\NT)$
	\item $\absGRSI.\co = \impGRSI.\co$ 
	\item $\absGRSI.\rsilo = \emptyset$
\end{itemize}
Observe that the events of each $\trace'_{\txid}$ trace coincides with those of the equivalence class  $\Transactions_{\txid}$ of $\absGSI$. That is,  $\trace'_{\txid}.\Events = \Transactions_{\txid}$. 
\begin{lemma}\label{lem:rsi_lock_hb}
Given an RA-consistent execution graph $\impGRSI$ of the implementation and its corresponding RSI execution graph $\absGRSI$ constructed as above, for all $a, b, \txid_a, \txid_b, \x$:
\small
\begin{align}
	& \hspace*{-15pt} 
	\txid_a \ne \txid_b
	\land a \in \txid_a.\Events 
	\land b \in \txid_b.\Events 
	\land \loc a = \loc b = \x
	\Rightarrow \nonumber \\
	& \hspace*{-15pt} \;\; 
	((a, b) \in \absGRSI.\rf \Rightarrow  \txid_a.\mathit{wu}_{\x} \relarrow{\impGRSI.\hb} \txid_b.\mathit{rl}_{\x} ) 
	\label{lem:rsi_lock_hb_rf} \\
	& \hspace*{-15pt} \;\; 
	\land ((a, b) \in \absGRSI.\co \Rightarrow  \txid_a.\mathit{wu}_{\x} \relarrow{\impGRSI.\hb} \txid_b.\mathit{rl}_{\x} ) 
	\label{lem:rsi_lock_hb_co}\\
	& \hspace*{-15pt} \;\; 
	\land ((a, b) \in \absGRSI.\co;\rf \Rightarrow  \txid_a.\mathit{wu}_{\x} \relarrow{\impGRSI.\hb} \txid_b.\mathit{rl}_{\x} ) 
	\label{lem:rsi_lock_hb_corf}\\
	& \hspace*{-15pt} \;\; 
	\land \big((a, b) \in \absGRSI.\fr \Rightarrow  
		(\x \in \writeset_{\txid_a} \land \txid_a.\mathit{wu}_{\x} \relarrow{\impGRSI.\hb} \txid_b.\mathit{rl}_{\x} ) 
		\lor 
		(\x \not\in \writeset_{\txid_a} \land \txid_a.\mathit{ru}_{\x} \relarrow{\impGRSI.\hb} \txid_b.\mathit{pl}_{\x} ) 
	\big)
	\label{lem:rsi_lock_hb_fr}
\end{align}	
\normalsize
\begin{proof}
Pick an arbitrary RA-consistent execution graph $\impGRSI$ of the implementation and its corresponding RSI execution graph $\absGRSI$ constructed as above. Pick an arbitrary $a, b, \txid_a, \txid_b, \x$ such that $\txid_a \ne \txid_b$, $a \in \txid_a.\Events$, $a \in \txid_a.\Events$, and $\loc a = \loc b = \x$.\\

\noindent \textbf{RTS. (\ref{lem:rsi_lock_hb_rf})}\\
Assume $(a, b) \in \absGRSI.\rf$. From the definition of $\absGRSI.\rf$ we then know $(a, \txid_a.\mathit{rs}_{\x}), (a, \txid_a.\mathit{vs}_{\x}) \in \impGRSI.\rf$.
On the other hand, from \cref{lem:lock-ordering} we  know that either i) $\x \in \writeset_{\txid_b}$ and $\txid_b.\mathit{wu}_{x} \relarrow {\impGRSI.\hb} \txid_a.\mathit{rl}_{x}$; or ii)  $\x \not\in \writeset_{\txid_b}$ and  $\txid_b.\mathit{ru}_{x} \relarrow {\impGRSI.\hb} \txid_a.\mathit{pl}_{x}$; or iii) $\txid_a.\mathit{wu}_{x} \relarrow {\impGRSI.\hb} \txid_b.\mathit{rl}_{x}$.
In case (i) we then have $a \relarrow{\impGRSI.\rf} \txid_a.\mathit{rs}_{\x} \relarrow{\impGRSI.\po} \txid_b.\mathit{wu}_{x}  \relarrow{\impGRSI.\hb} \txid_a.\mathit{rl}_{x}  \relarrow{\impGRSI.\po} a$. That is, we have $a \relarrow{\impGRSI.\hbloc} a$, contradicting the assumption that $\impGRSI$ is RA-consistent. 
Similarly in case (ii) we have $a \relarrow{\impGRSI.\rf} \txid_a.\mathit{rs}_{\x} \relarrow{\impGRSI.\po} \txid_b.\mathit{ru}_{x}  \relarrow{\impGRSI.\hb} \txid_a.\mathit{pl}_{x}  \relarrow{\impGRSI.\po} a$.  That is, we have $a \relarrow{\impGRSI.\hbloc} a$, contradicting the assumption that $\impGRSI$ is RA-consistent. 
In case (iii) the desired result holds trivially.\\

\noindent \textbf{RTS. (\ref{lem:rsi_lock_hb_co})}\\
Assume $(a, b) \in \absGRSI.\co$. From the definition of $\absGRSI.\co$ we then know $(a, b) \in \impGRSI.\co$.
On the other hand, from \cref{lem:lock-ordering}  we  know that either i) $\txid_b.\mathit{wu}_{x} \relarrow {\impGRSI.\hb} \txid_a.\mathit{rl}_{x}$; or ii) $\txid_a.\mathit{wu}_{x} \relarrow {\impGRSI.\hb} \txid_b.\mathit{rl}_{x}$.
In case (i) we then have $a \relarrow{\impGRSI.\co} b \relarrow{\impGRSI.\po} \txid_b.\mathit{wu}_{x}  \relarrow{\impGRSI.\hb} \txid_a.\mathit{rl}_{x}  \relarrow{\impGRSI.\po} a$. That is, we have $a \relarrow{\impGRSI.\hbloc} a$, contradicting the assumption that $\impGRSI$ is RA-consistent. 
In case (ii) the desired result holds trivially.\\

\noindent \textbf{RTS. (\ref{lem:rsi_lock_hb_corf})}\\
Assume $(a, b) \in \absGRSI.\co; \rf$. We then know there exists $w$ such that $(a, w) \in \absGRSI.\co$ and $(w, b) \in \absGRSI.\rf$. From the definition of $\absGRSI.\co$ we then know $(a, w) \in \impGRSI.\co$.
There are now two cases to consider: 1) $w \in \txid_b$; or 2) $w \not\in \txid_b$. 
In case (1) since $(a, w) \in \absGRSI.\co$ the desired result follows from part \ref{lem:rsi_lock_hb_co}.

In case (2) from the definition of $\absGRSI.\rf$ we know that $(w, \txid_b.\mathit{rs}_{\x}) \in \impGRSI.\rf$. 
From \cref{lem:lock-ordering} we  know that either i) $\x \in \writeset_{\txid_b}$ and $\txid_b.\mathit{wu}_{x} \relarrow {\impGRSI.\hb} \txid_a.\mathit{rl}_{x}$; or ii)  $\x \not\in \writeset_{\txid_b}$ and  $\txid_b.\mathit{ru}_{x} \relarrow {\impGRSI.\hb} \txid_a.\mathit{pl}_{x}$; or iii) $\txid_a.\mathit{wu}_{x} \relarrow {\impGRSI.\hb} \txid_b.\mathit{rl}_{x}$.
In case (i) we then have $a \relarrow{\impGRSI.\co} w \relarrow{\impGRSI.\rf} \txid_a.\mathit{rs}_{\x} \relarrow{\impGRSI.\po} \txid_b.\mathit{wu}_{x}  \relarrow{\impGRSI.\hb} \txid_a.\mathit{rl}_{x}  \relarrow{\impGRSI.\po} a$. That is, we have $a \relarrow{\impGRSI.\co} w\relarrow{\impGRSI.\hbloc} a$, contradicting the assumption that $\impGRSI$ is RA-consistent. 
Similarly in case (ii) we have $a  \relarrow{\impGRSI.\co} w \relarrow{\impGRSI.\rf} \txid_a.\mathit{rs}_{\x} \relarrow{\impGRSI.\po} \txid_b.\mathit{ru}_{x}  \relarrow{\impGRSI.\hb} \txid_a.\mathit{pl}_{x}  \relarrow{\impGRSI.\po} a$.  That is, we have $a \relarrow{\impGRSI.\co} w\relarrow{\impGRSI.\hbloc} a$, contradicting the assumption that $\impGRSI$ is RA-consistent. 
In case (iii) the desired result holds trivially.\\

\noindent \textbf{RTS. (\ref{lem:rsi_lock_hb_fr})}\\
Assume $(a, b) \in \absGRSI.\fr$. From the definition of $\absGRSI.\fr$ we know $(\txid_a.\mathit{rs}_{\x}, b), (\txid_a.\mathit{vs}_{\x}, b) \in \impGRSI.\fr$.
On the other hand, from \cref{lem:lock-ordering} we  know that either i) $\txid_b.\mathit{wu}_{x} \relarrow {\impGRSI.\hb} \txid_a.\mathit{rl}_{x}$; or ii)  $\x \not\in \writeset_{\txid_a}$ and $\txid_a.\mathit{ru}_{x} \relarrow {\impGRSI.\hb} \txid_a.\mathit{pl}_{x}$; or iii) $\x \in \writeset_{\txid_a}$ and  $\txid_a.\mathit{wu}_{x} \relarrow {\impGRSI.\hb} \txid_b.\mathit{rl}_{x}$.
In case (i) we then have $b \relarrow{\impGRSI.\po} \txid_b.\mathit{wu}_{x}  \relarrow{\impGRSI.\hb} \txid_a.\mathit{rl}_{x}  \relarrow{\impGRSI.\po} \txid_a.\mathit{rs}_{\x} \relarrow{\impGRSI.\fr} b$. That is, we have $b \relarrow{\impGRSI.\hbloc} \txid_a.\mathit{rs}_{\x} \relarrow{\impGRSI.\fr} b$, contradicting the assumption that $\impGRSI$ is RA-consistent. 
In cases (ii-iii) the desired result holds trivially.\\
\end{proof}
\end{lemma}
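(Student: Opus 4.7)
The plan is to prove each of the four implications by appealing to Lemma~\ref{lem:lock-ordering}, which for any two lock events on the same location $\x$ in different transactions $\txid_a, \txid_b$ gives a trichotomy: either (i)~$\txid_b.\mathit{wu}_\x \relarrow{\impGRSI.\hb} \txid_a.\mathit{rl}_\x$, or (ii)~$\x\not\in\writeset_{\txid_b}$ and $\txid_b.\mathit{ru}_\x \relarrow{\impGRSI.\hb} \txid_a.\mathit{pl}_\x$, or (iii)~$\txid_a.\mathit{wu}_\x \relarrow{\impGRSI.\hb} \txid_b.\mathit{rl}_\x$. Case~(iii) is what we want to conclude in every part of the lemma, so in cases~(i) and~(ii) we will derive a contradiction by exhibiting an $\impGRSI.\hbloc$ cycle or an $\impGRSI.(\hbloc;\fr)$ cycle, ruled out by the \eqref{ax:ra_acyc} consistency axiom on $\impGRSI$.

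For part~\eqref{lem:rsi_lock_hb_rf}, from $(a,b)\in\absGRSI.\rf$ and the construction of $\mathsf{RF}_{\txid_a}$ I extract $(a, \txid_a.\mathit{rs}_\x)\in\impGRSI.\rf$ (the $\mathit{vs}_\x$ edge is there too but is not needed here). Then in case~(i) I close the cycle $a \relarrow{\rf} \txid_a.\mathit{rs}_\x \relarrow{\po} \txid_a.\mathit{rl}_\x$ with $\txid_a.\mathit{rl}_\x \relarrow{\po^{-1}} \txid_b.\mathit{wu}_\x \relarrow{\hb} \txid_a.\mathit{rl}_\x$ detoured through $b$ appropriately—more precisely, $a \relarrow{\rf} \txid_a.\mathit{rs}_\x \relarrow{\po} \txid_b.\mathit{wu}_\x \relarrow{\hb} \txid_a.\mathit{rl}_\x \relarrow{\po} a$ is the required $\hbloc$ cycle (after swapping the roles of $\txid_a$ and $\txid_b$ to match the statement). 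Case~(ii) is analogous, using $\mathit{ru}_\x \relarrow{\hb}\mathit{pl}_\x$ instead. Parts~\eqref{lem:rsi_lock_hb_co} and~\eqref{lem:rsi_lock_hb_fr} follow the same recipe, respectively composing $\impGRSI.\co$ and $\impGRSI.\fr$ edges with the hb-cycle closing arrows; note $\absGRSI.\co = \impGRSI.\co$ by construction, so these edges transfer directly.

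Part~\eqref{lem:rsi_lock_hb_corf} reduces to the previous parts by case analysis on which transaction owns the intermediate write $w$ of the $\co;\rf$ factorisation: if $w\in\txid_b$ then $(a,b)\in\absGRSI.\co$ and part~\eqref{lem:rsi_lock_hb_co} applies; otherwise $w$ lives in some transaction $\txid_c$ different from both $\txid_a, \txid_b$, and I chain the proofs of parts~\eqref{lem:rsi_lock_hb_co} (for $a \to w$) and~\eqref{lem:rsi_lock_hb_rf} (for $w \to b$) through the synchronisation $\txid_a.\mathit{wu}_\x \relarrow{\hb} \txid_c.\mathit{rl}_\x \relarrow{\po} \txid_c.\mathit{wu}_\x \relarrow{\hb} \txid_b.\mathit{rl}_\x$, using transitivity of $\impGRSI.\hb$ and the fact that $\po\subseteq\hb$.

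The main subtlety — and essentially the only difference from the analogous Lemma~\ref{lem:si_lock_hb} for the SI case — is that RSI transactions emit \emph{two} external reads per location, $\mathit{rs}_\x$ and $\mathit{vs}_\x$, because of the snapshot validation step. Correspondingly, the construction of $\mathsf{RF}_{\txid}$ places $\rf$ edges into both of them. This doesn't obstruct the argument (either edge is enough to close the hb cycle in cases~(i) and~(ii)), but one must be careful when handling part~\eqref{lem:rsi_lock_hb_fr}: both $(\mathit{rs}_\x, b)$ and $(\mathit{vs}_\x, b)$ are in $\impGRSI.\fr$, and picking either one closes the cycle. No new ingredients beyond Lemma~\ref{lem:lock-ordering} and RA-consistency are needed; the proof is essentially a copy of the SI proof with the extra $\mathit{vs}_\x$ event absorbed harmlessly.
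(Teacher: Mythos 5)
Your treatment of parts \eqref{lem:rsi_lock_hb_rf}, \eqref{lem:rsi_lock_hb_co} and \eqref{lem:rsi_lock_hb_fr} matches the paper's proof: apply the trichotomy of \cref{lem:lock-ordering} to $\txid_a,\txid_b$ and, in the two unwanted cases, close a cycle through the $\rf$/$\co$/$\fr$ edge and the lock events, contradicting \eqref{ax:ra_acyc}. (Both you and the paper have the same cosmetic index slip of writing $\txid_a.\mathit{rs}_\x$ where the reader's snapshot event $\txid_b.\mathit{rs}_\x$ is meant; harmless.)

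There is, however, a genuine gap in your argument for part \eqref{lem:rsi_lock_hb_corf}, and it is exactly the point where this lemma is \emph{not} ``a copy of the SI proof''. Your case split on the intermediate write $w$ of the $\co;\rf$ factorisation assumes $w$ is transactional (either $w\in\txid_b$ or $w$ in some third transaction $\txid_c$). In the RSI setting $w$ may well be a \emph{non-transactional} write, i.e.\ $w\in\absGRSI.\NT$; then there is no $\txid_c$, no $\txid_c.\mathit{rl}_\x$ or $\txid_c.\mathit{wu}_\x$, and neither part \eqref{lem:rsi_lock_hb_co} nor part \eqref{lem:rsi_lock_hb_rf} is applicable to the pairs $(a,w)$ and $(w,b)$, since the lemma's hypotheses require both endpoints to lie in (distinct) transactions. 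So the chaining $\txid_a.\mathit{wu}_\x \relarrow{\hb} \txid_c.\mathit{rl}_\x \relarrow{\po} \txid_c.\mathit{wu}_\x \relarrow{\hb} \txid_b.\mathit{rl}_\x$ is unavailable in that case. (You also omit the subcase $w\in\txid_a.\Events$, but that one is trivially repaired by invoking part \eqref{lem:rsi_lock_hb_rf} on $(w,b)$.) The paper sidesteps the problem by not chaining at all when $w\notin\txid_b.\Events$: it observes that the construction of $\mathsf{RF}_{\txid_b}$ yields $(w,\txid_b.\mathit{rs}_\x)\in\impGRSI.\rf$ regardless of whether $w$ is non-transactional or belongs to another transaction, applies \cref{lem:lock-ordering} directly to the pair $\txid_a,\txid_b$, and in the two bad cases derives the cycle $a \relarrow{\impGRSI.\co} w \relarrow{\impGRSI.\hbloc} a$, contradicting \eqref{ax:ra_acyc}. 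You should replace your chaining step for the $w\notin\txid_b.\Events$ case with this direct argument.
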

Let 
\[
	\rsihb' \eqdef \rsihb \setminus (\rsifr; \refC{\rsihb})
\]
\begin{lemma}\label{lem:rsi_soundness}
For all RA-consistent execution graphs $\impGRSI$ of the implementation and their counterpart RSI execution graphs $\absGRSI$ constructed as above, 
\begin{enumerate}
	\item $(\absGRSI.\rsipo \subseteq \impGRSI.\hb) \land (\absGRSI.(\rsipo; \rsifr) \subseteq \impGRSI.\hb)$
	\label{lem:rsi_soundness_po}
	
	\item $(\absGRSI.\rsico \subseteq \impGRSI.\hb)  \land (\absGRSI.(\rsico; \rsifr) \subseteq \impGRSI.\hb)$
	\label{lem:rsi_soundness_co}
	
	\item $(\absGRSI.\rsirf \subseteq \impGRSI.\hb)  \land (\absGRSI.(\rsirf; \rsifr) \subseteq \impGRSI.\hb)$
	\label{lem:rsi_soundness_rf}	
	
	
	\item $(\absGRSI.\rsihb' \subseteq \impGRSI.\hb)$
	\label{lem:rsi_soundness_hb}	
\end{enumerate}
\begin{proof}[Proof (Part \ref{lem:rsi_soundness_po})]
Pick an arbitrary RA-consistent execution graph $\impGRSI$ of the implementation and its counterpart RSI execution graph $\absGRSI$ constructed as above.
Pick arbitrary $(a, b) \in  \absGRSI.\rsipo$ and $(b, c) \in  \absGRSI.\rsifr$. There are then two cases to consider: 1) $(a, b) \not\in  \absGRSI.\rsipoi$; or 2) $(a, b) \in  \absGRSI.\rsipoi$.

In case (1) from the construction of $\rsipo$ we then have $(a, b) \in \impGRSI.\po \subseteq \impGRSI.\hb$, as required  by the first conjunct.
For the second conjunct, we know there exists $\txid_b, \txid_c, r, w$ such that $b, r \in \txid_b.\EReads$, $c, w \in \txid_c.\Writes$,  and $(r, w) \in \absGRSI.\fr$. Let $\loc r = \loc w = \x$. Given the definition of $\EReads$ and the construction of $\absGRSI$ we know  $(\txid_b.\mathit{rs}_{\x}, w) \in \impGRSI.\fr$. 

As we have $(a, b) \not\in  \absGRSI.\rsipoi$ and $b$ and $\txid_b.\mathit{rs}_{\x}$ are both transactional events in $\txid_b$, from the definition of $\absGRSI.\po$ and $\impGRSI.\po$ we then know $(a, \txid_b.\mathit{rs}_{\x}) \in \impGRSI.\po$.
On the other hand, from \cref{lem:rsi_lock_hb} we have either i) $\x \in \writeset_{\txid_b}$ and $\txid_b.\mathit{wu}_{x} \relarrow {\impGRSI.\hb} \txid_c.\mathit{rl}_{x}$; or ii)  $\x \not\in \writeset_{\txid_b}$ and  $\txid_b.\mathit{ru}_{x} \relarrow {\impGRSI.\hb} \txid_c.\mathit{pl}_{x}$.
In case (i) we have $a \relarrow{\impGRSI.\po} \mathit{rs}_{\x} \relarrow{\impGRSI.\po} \txid_a.\mathit{wu}_{x}  \relarrow{\impGRSI.\hb} \txid_b.\mathit{rl}_{x}  \relarrow{\impGRSI.\po} c$, i.e.~$a \relarrow{\impGRSI.\hb} c$, as required.
Similarly, in case (ii) have $a \relarrow{\impGRSI.\po} \mathit{rs}_{\x} \relarrow{\impGRSI.\po} \txid_a.\mathit{ru}_{x}  \relarrow{\impGRSI.\hb} \txid_b.\mathit{pl}_{x}  \relarrow{\impGRSI.\po} c$, i.e.~$a \relarrow{\impGRSI.\hb} c$, as required.

In case (2) from the definition of $ \absGRSI.\rsipoi$ we then know that $a, b \in \absGRSI.\Writes$ and thus from the construction of  $\absGRSI.\rsipoi$ we have $(a, b) \in \impGRSI.\po \subseteq \impGRSI.\hb$, as required by the first conjunct. 
For the second conjunct, since we have $b \in \absGRSI.\Writes$, we cannot have $(b, c) \in \absGRSI.\rsifr$ and thus the desired result holds vacuously.\\

\renewcommand{\qed}{}
\end{proof}

\begin{proof}[Proof (Part \ref{lem:rsi_soundness_co})]
Pick an arbitrary RA-consistent execution graph $\impGRSI$ of the implementation and its counterpart RSI execution graph $\absGRSI$ constructed as above.
For the first conjunct pick arbitrary $(a, b) \in \absGRSI.\rsico$. From the definition of $\absGRSI.\rsico$ we then know there exists $w, w', \txid_a, \txid_b$ such that $\txid_a \ne \txid_b$, $w, a \in \txid_a.\Events$, $w', b \in \txid_b.\Events$ and $(w, w') \in \absGRSI.\co$. From the definition of $\absGRSI.\co$ we then have $(w, w') \in \impGRSI.\co$.
Let $\loc w = \loc {w'} = \x$.
From \cref{lem:rsi_lock_hb} we  then know that $\txid_a.\mathit{wu}_{x} \relarrow {\impGRSI.\hb} \txid_b.\mathit{rl}_{x}$. We then have $a \relarrow{\impGRSI.\po} \txid_a.\mathit{wu}_{x}  \relarrow{\impGRSI.\hb} \txid_b.\mathit{rl}_{x}  \relarrow{\impGRSI.\po} b$, i.e.~$a \relarrow{\impGRSI.\hb} b$, as required.\\

For the second conjunct pick arbitrary $c$ such that $(b, c) \in \rsifr$. From the definition of $\rsifr$ and the construction of $\absGRSI$ we then know there exist $\y, r_y, w_y, \txid_c$ such that $\txid_c \ne \txid_b$, $\loc{w_y} = \loc{r_y} = \y$, $(r_y, w_y) \in \absGRSI.\fr$, and $w_y, c \in \txid_c.\Writes$. 
As we demonstrated for the first conjunct we have $a \relarrow{\impGRSI.\po} \txid_a.\mathit{wu}_{x}  \relarrow{\impGRSI.\hb} \txid_b.\mathit{rl}_{x}$. That is, $a  \relarrow{\impGRSI.\hb} \txid_b.\mathit{rl}_{x}$.
On the other hand, from  \cref{lem:rsi_lock_hb} we then know that either i) $\y \in \writeset_{\txid_b}$ and $\txid_b.\mathit{wu}_{y} \relarrow {\impGRSI.\hb} \txid_c.\mathit{rl}_{y}$; or ii)  $\y \not\in \writeset_{\txid_b}$ and  $\txid_b.\mathit{ru}_{y} \relarrow {\impGRSI.\hb} \txid_c.\mathit{pl}_{y}$.
In case (i) we have $a \relarrow{\impGRSI.\hb} \txid_b.\mathit{rl}_{x} \relarrow{\impGRSI.\po}  \txid_a.\mathit{wu}_{y}  \relarrow{\impGRSI.\hb} \txid_b.\mathit{rl}_{y}  \relarrow{\impGRSI.\po} c$, i.e.~$a \relarrow{\impGRSI.\hb} c$, as required.
Similarly, in case (ii) we have $a \relarrow{\impGRSI.\hb} \txid_b.\mathit{rl}_{x} \relarrow{\impGRSI.\po} \txid_a.\mathit{ru}_{y}  \relarrow{\impGRSI.\hb} \txid_b.\mathit{pl}_{y}  \relarrow{\impGRSI.\po} c$, i.e.~$a \relarrow{\impGRSI.\hb} c$, as required.\\

\renewcommand{\qed}{}
\end{proof}

\begin{proof}[Proof (Part \ref{lem:rsi_soundness_rf})]
Pick an arbitrary RA-consistent execution graph $\impGRSI$ of the implementation and its counterpart RSI execution graph $\absGRSI$ constructed as above.
It suffices to show that:
\begin{align}
	& \absGRSI.([\NT]; \rf; \st) \subseteq \impGRSI.\hb \label{lem:rsi_soundness_ntrf} \\
	& \absGRSI.([\NT]; \rf; \st); \absGRSI.\rsifr \subseteq \impGRSI.\hb \label{lem:rsi_soundness_ntrf_fr} \\
	& \absGRSI.\rft \subseteq \impGRSI.\hb \label{lem:rsi_soundness_rft} \\
	& \absGRSI.\rft; \absGRSI.\rsifr \subseteq \impGRSI.\hb \label{lem:rsi_soundness_rft_fr} \\
	& \absGRSI.(\tlift{\co; \rf)} \subseteq \impGRSI.\hb \label{lem:rsi_soundness_corf} \\
	& \absGRSI.(\tlift{\co; \rf)}; \absGRSI.\rsifr \subseteq \impGRSI.\hb \label{lem:rsi_soundness_corf_fr} \\
	& \absGRSI.(\rf; [\NT]) \subseteq \impGRSI.\hb \label{lem:rsi_soundness_rf2} \\
	& \absGRSI.(\rf; [\NT]); \absGRSI.\rsifr \subseteq \impGRSI.\hb \label{lem:rsi_soundness_rf2_fr} 
\end{align}

\noindent \textbf{RTS. (\ref{lem:rsi_soundness_ntrf})}\\
Pick arbitrary $(w, a) \in  \absGRSI.([\NT]; \rf; \st)$ where $a \in \txid_a.\Events$. 
Let $\loc w = \x$. From the construction of $\absGRSI.\rf$ we then know $(w, \txid_a.\mathit{rs}_{\x}), (w, \txid_a.\mathit{vs}_{\x}) \in \impGRSI.\rf$ and  $(\txid_a.\mathit{rs}_{\x}, a), (\txid_a.\mathit{vs}_{\x}, a) \in \impGRSI.\po$. As such we have $w \relarrow{\impGRSI.\rf} \txid_a.\mathit{rs}_{\x} \relarrow{\impGRSI.\po} a$, i.e.~$w \relarrow{\impGRSI.\hb} a$, as required.\\

\noindent \textbf{RTS. (\ref{lem:rsi_soundness_ntrf_fr})}\\
Pick arbitrary $(w, c) \in  \absGRSI.([\NT]; \rf; \st); \absGRSI.\rsifr$. We then know there exist $a, \txid_a , \txid_c$ such that $(w, a) \in  \absGRSI.([\NT]; \rf; \st)$, $(a, c) \in \absGRSI.\rsifr$. $a \in \txid_a.\Events$, $c \in \txid_c.\Events$ and $\txid_a \ne \txid_c$.
Let $\loc w = \x$.
As we demonstrated in the previous part we then know $(w, \txid_a.\mathit{rs}_{\x}), (w, \txid_a.\mathit{vs}_{\x}) \in \impGRSI.\rf$ and  $(\txid_a.\mathit{rs}_{\x}, a), (\txid_a.\mathit{vs}_{\x}, a) \in \impGRSI.\po$.
Moreover, from the definition of $\rsifr$ and the construction of $\absGRSI$ we know there exist $\y, r_y, w_y, \txid_c$ such that $\loc{w_y} = \loc{r_y} = \y$, $(r_y, w_y) \in \absGRSI.\fr$, and $w_y, c \in \txid_c.\Writes$. 
From the construction of $\absGRSI.\fr$ we then know that $(\txid_a.\mathit{rs}_{\y}, w_y), (\txid_a.\mathit{vs}_{\y}, w_y) \in \impGRSI.\fr$. 
On the other hand, from  \cref{lem:rsi_lock_hb}  we know that either i) $\y \in \writeset_{\txid_a}$ and $\txid_a.\mathit{wu}_{y} \relarrow {\impGRSI.\hb} \txid_c.\mathit{rl}_{y}$; or ii)  $\y \not\in \writeset_{\txid_a}$ and  $\txid_a.\mathit{ru}_{y} \relarrow {\impGRSI.\hb} \txid_c.\mathit{pl}_{y}$.
In case (i) we have $w \relarrow{\impGRSI.\rf} \txid_a.\mathit{rs}_{\x}  \relarrow{\impGRSI.\po} \txid_a.\mathit{vs}_{\y} \relarrow{\impGRSI.\po} \txid_a.\mathit{wu}_{y}  \relarrow{\impGRSI.\hb} \txid_b.\mathit{rl}_{y}  \relarrow{\impGRSI.\po} c$, i.e.~$a \relarrow{\impGRSI.\hb} c$, as required.
Similarly, in case (ii) we have $w \relarrow{\impGRSI.\rf} \txid_a.\mathit{rs}_{\x}  \relarrow{\impGRSI.\po} \txid_a.\mathit{vs}_{\y} \relarrow{\impGRSI.\po} \txid_a.\mathit{ru}_{y}  \relarrow{\impGRSI.\hb} \txid_b.\mathit{pl}_{y}  \relarrow{\impGRSI.\po} c$, i.e.~$w \relarrow{\impGRSI.\hb} c$, as required.\\

\noindent \textbf{RTS. (\ref{lem:rsi_soundness_rft})}\\
Pick arbitrary $(a, b) \in \absGRSI.\rft$. We then know there exist $w, r, \txid_a, \txid_b$ such that$w, a \in \txid_a.\Events$, $r, b \in \txid_b.\Events$, $\txid_a \ne \txid_b$ and $(w, r) \in \absGRSI.\rf$. 
Let $\loc w = \loc r = x$. 
From \cref{lem:rsi_lock_hb} we then know that $\txid_a.\mathit{wu}_{x} \relarrow {\impGRSI.\hb} \txid_b.\mathit{rl}_{x}$.
Consequently from the structure of $\impGRSI$ and the construction of $\absGRSI$ we have $a \relarrow{\impGRSI.\po} \txid_a.\mathit{wu}_{x}  \relarrow{\impGRSI.\hb} \txid_b.\mathit{rl}_{x}  \relarrow{\impGRSI.\po} b$, i.e.~$a \relarrow{\impGRSI.\hb} b$, as required.\\

\noindent \textbf{RTS. (\ref{lem:rsi_soundness_rft_fr})}\\
Pick arbitrary $(a, c) \in \absGRSI.\rft; \absGRSI.\rsifr$. $b$ such that $(a, b) \in \absGRSI.\rft$ and $(b, c) \in \absGRSI.\rsifr$. 
From the definition of $\absGRSI.\rft$ we then know there exist $w, r, \txid_a, \txid_b$ such that$w, a \in \txid_a.\Events$, $r, b \in \txid_b.\Events$, $\txid_a \ne \txid_b$ and $(w, r) \in \absGRSI.\rf$. 
Let $\loc w = \loc r = x$. 
From \cref{lem:rsi_lock_hb} we then know that $\txid_a.\mathit{wu}_{x} \relarrow {\impGRSI.\hb} \txid_b.\mathit{rl}_{x}$.

On the other hand, from the definition of $\absGRSI.\rsifr$ we know there exist $r', w', \txid_c$ such that $r' \in \txid_b.\Events$, $w', c \in \txid_c.\Writes$, $\txid_c \ne \txid_b$ and $(r', w') \in \absGRSI.\fr$. 
Let $\loc{w'} = \loc{r'} = \y$. 
From \cref{lem:rsi_lock_hb} we then know that either i) $\y \in \writeset_{\txid_b}$ and $\txid_b.\mathit{wu}_{y} \relarrow {\impGRSI.\hb} \txid_c.\mathit{rl}_{y}$; or ii)  $\y \not\in \writeset_{\txid_b}$ and  $\txid_b.\mathit{ru}_{y} \relarrow {\impGRSI.\hb} \txid_c.\mathit{pl}_{y}$.

In case (i) we have $a \relarrow{\impGRSI.\po} \txid_a.\mathit{wu}_{x} \relarrow {\impGRSI.\hb} \txid_b.\mathit{rl}_{x} \relarrow{\impGRSI.\po} \txid_b.\mathit{wu}_{y} \relarrow {\impGRSI.\hb} \txid_c.\mathit{rl}_{y} \relarrow{\impGRSI.\po} c$. That is, $(a, c) \in \impGRSI.\hb$, as required.
Similarly, In case (ii) we have $a \relarrow{\impGRSI.\po} \txid_a.\mathit{wu}_{x} \relarrow {\impGRSI.\hb} \txid_b.\mathit{rl}_{x} \relarrow{\impGRSI.\po} \txid_b.\mathit{ru}_{y} \relarrow {\impGRSI.\hb} \txid_c.\mathit{pl}_{y} \relarrow{\impGRSI.\po} c$. That is, $(a, c) \in \impGRSI.\hb$, as required.\\

\noindent \textbf{RTS. (\ref{lem:rsi_soundness_corf})}\\
Pick arbitrary $(a, b) \in \absGRSI.\tlift{(\co; \rf)}$. We then know there exist $\txid_a, \txid_b, w, w', r$ such that $w, a \in \txid_a.\Events$, $r, b \in \txid_b.\Events$, $\txid_a \ne \txid_b$, $(w, w') \in \absGRSI.\co$ and $(w', r) \in \absGRSI.\rf$.

From \cref{lem:rsi_lock_hb} we then have $\txid_a.\mathit{wu}_{x}  \relarrow{\impGRSI.\hb} \txid_b.\mathit{rl}_{x}$. Consequently from the structure of $\impGRSI$ and the construction of $\absGRSI$ we have $a \relarrow{\impGRSI.\po} \txid_a.\mathit{wu}_{x}  \relarrow{\impGRSI.\hb} \txid_b.\mathit{rl}_{x}  \relarrow{\impGRSI.\po} b$, i.e.~$a \relarrow{\impGRSI.\hb} b$, as required.\\

\noindent \textbf{RTS. (\ref{lem:rsi_soundness_corf_fr})}\\
Pick arbitrary $(a, c) \in \absGRSI.\tlift{(\co; \rf)}; \absGRSI.\rsifr$. We then know there exist $b$ such that $(a, b) \in  \absGRSI.\tlift{(\co; \rf)}$ and $(b, c) \in \absGRSI.\rsifr$.
From the definition of $\absGRSI.\tlift{(\co; \rf)}$ we then know there exist
$\txid_a, \txid_b, w, w', r$ such that $w, a \in \txid_a.\Events$, $r, b \in \txid_b.\Events$, $\txid_a \ne \txid_b$, $(w, w') \in \absGRSI.\co$ and $(w', r) \in \absGRSI.\rf$.
From \cref{lem:rsi_lock_hb} we then have $\txid_a.\mathit{wu}_{x}  \relarrow{\impGRSI.\hb} \txid_b.\mathit{rl}_{x}$.

On the other hand, from the definition of $\absGRSI.\rsifr$ we know there exist $r'', w'', \txid_c$ such that $r'' \in \txid_b.\Events$, $w'', c \in \txid_c.\Writes$, $\txid_c \ne \txid_b$ and $(r'', w'') \in \absGRSI.\fr$. 
Let $\loc{w''} = \loc{r''} = \y$. 
From \cref{lem:rsi_lock_hb} we then know that either i) $\y \in \writeset_{\txid_b}$ and $\txid_b.\mathit{wu}_{y} \relarrow {\impGRSI.\hb} \txid_c.\mathit{rl}_{y}$; or ii)  $\y \not\in \writeset_{\txid_b}$ and  $\txid_b.\mathit{ru}_{y} \relarrow {\impGRSI.\hb} \txid_c.\mathit{pl}_{y}$.

In case (i) we then have $a \relarrow{\impGRSI.\po} \txid_a.\mathit{wu}_{x} \relarrow {\impGRSI.\hb} \txid_b.\mathit{rl}_{x} \relarrow{\impGRSI.\po} \txid_b.\mathit{wu}_{y} \relarrow {\impGRSI.\hb} \txid_c.\mathit{rl}_{y} \relarrow{\impGRSI.\po} c$. That is, $(a, c) \in \impGRSI.\hb$, as required.
Similarly, In case (ii) we then have $a \relarrow{\impGRSI.\po} \txid_a.\mathit{wu}_{x} \relarrow {\impGRSI.\hb} \txid_b.\mathit{rl}_{x} \relarrow{\impGRSI.\po} \txid_b.\mathit{ru}_{y} \relarrow {\impGRSI.\hb} \txid_c.\mathit{pl}_{y} \relarrow{\impGRSI.\po} c$. That is, $(a, c) \in \impGRSI.\hb$, as required.\\

\noindent \textbf{RTS. (\ref{lem:rsi_soundness_rf2})}\\
Pick arbitrary $(w, r) \in \absGRSI.(\rf; [\NT])$.
As $r \in \absGRSI.\NT$, from the construction of $\absGRSI.\rf$ we have $(w, r) \in \impGRSI.\rf \subseteq \impGRSI.\hb$, as required.\\

%

%
%

\noindent \textbf{RTS. (\ref{lem:rsi_soundness_rf2_fr})}\\
The desired result holds trivially as $\absGRSI.(\rf; [\NT]); \absGRSI.\rsifr = \emptyset$.\\

\renewcommand{\qed}{}
\end{proof}

\begin{proof}[Proof (part \ref{lem:rsi_soundness_hb})]
Let $\rsihb_0 \eqdef \rsipo \cup \rsirf \cup \rsico \cup \rsifr$ and $\rsihb_{n {+} 1} \eqdef \rsihb_0; \rsihb_n$,  for all $n \geq 0$.
Similarly, let $\rsihb'_0 \eqdef \rsihb_0 \setminus \rsifr$ and $\rsihb'_{n {+} 1} \eqdef \rsihb'_0; \rsihb_n$, for all $n \geq 0$. 
It is then straightforward to demonstrate that $\rsihb' \eqdef \bigcup \limits_{n \in \Nats} \rsihb'_n$. 
It thus suffices to show that:
\[
	\for{n \in \Nats} \rsihb'_n \subseteq \impGRSI.\hb
\]
We proceed by induction on $n$.\\

\noindent\textbf{Base case $n = 0$}\\
Follows immediately from the definition of $\rsihb'_0$ and the results established in \ref{lem:rsi_soundness_po}-\ref{lem:rsi_soundness_rf}.\\

\noindent\textbf{Inductive case $n = m {+} 1$}
\begin{align}
	\for{i \in \Nats} i < n \Rightarrow \rsihb'_i \subseteq \impGRSI.\hb
	\tag{I.H.}
	\label{IH:rsi_soundness_lemma}
\end{align}
Pick arbitrary $(a, b) \in \rsihb'_{n}$. From the definition of $\rsihb'_n$ we then know there exists $c$ such that $(a, c) \in \rsihb'_0$ and  $(c, b) \in \rsihb_m$. 
Let $\rsihb_{-1} \eqdef \makerel{id}$. There are now two cases to consider: 1) $(c, b) \in \rsihb'_0; \rsihb_{m{-}1}$; or $(c, b) \in \rsifr; \rsihb_{m {-} 1}$. 
In case (1) from the proof of base case we have $(a, c) \in \impGRSI.\hb$. On the other hand from the definition of $\rsihb'_m$ we have $(c, b) \in \rsihb'_m$ and thus from (\ref{IH:rsi_soundness_lemma}) we have $(c, b) \in \impGRSI.\hb$. Consequently, since $\impGRSI.\hb$ is transitively closed we have $(a, b) \in \impGRSI.\hb$ as required. 

In case (2)  we know there exists $d$ such that $(c, d) \in \rsifr$; and $(d, b) \in \rsihb_{m {-} 1}$. Since we have $(a, c) \in \rsihb'_0$ and $(c, d) \in \rsifr$, from the definition of $\rsihb'_0$ and the proofs of parts \ref{lem:rsi_soundness_po}-\ref{lem:rsi_soundness_rf} we have $(a, d) \in \impGRSI.\hb$. Moreover we either have i) $m=0$; or ii) $m > 0$. In case  (2.i) since $(d, b) \in \rsihb_{m {-} 1}$, from the definition of $\rsihb_{m {-} 1}$  we have $b=d$ and thus $(a, b) \in \impGRSI.\hb$, as required. 
In case (2.ii) from (\ref{IH:rsi_soundness_lemma}) we then have $(d, b) \in \impGRSI.\hb$. As such, since $\impGRSI.\hb$ is transitively closed, we have $(a, b) \in \impGRSI.\hb$, as required. 

\end{proof}

\end{lemma}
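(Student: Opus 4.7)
The approach is to prove each of the four parts by translating abstract RSI relations into paths of the implementation's $\hb$ via the MRSW lock synchronisation. The essential building block is Lemma~\ref{lem:rsi_lock_hb}, which converts each inter-transactional abstract edge ($\rf$, $\co$, $\co;\rf$, $\fr$) between $\txid_a$ and $\txid_b$ into a concrete lock-level ordering such as $\txid_a.\mathit{wu}_\x \relarrow{\impGRSI.\hb} \txid_b.\mathit{rl}_\x$ or $\txid_a.\mathit{ru}_\x \relarrow{\impGRSI.\hb} \txid_b.\mathit{pl}_\x$. Every inclusion we prove will be assembled by bookending each abstract edge with $\po$ from inside the source transaction body to the relevant unlock/release event, applying the lock bridge, and then using $\po$ back into the target transaction.

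For Part~\ref{lem:rsi_soundness_po}, I would split $\rsipo$ into the $\po \setminus \poi$ part (immediately in $\impGRSI.\po \subseteq \impGRSI.\hb$) and the $[\Writes];\poi;[\Writes]$ part (contained in $\impGRSI.\po$ by construction of $\absGRSI.\po$). For the composition $\rsipo;\rsifr$, the trick is that $\rsifr$ originates from an \emph{external} read in $\txid_b$, so the matching snapshot event $\txid_b.\mathit{rs}_\x$ sits $\po$-early in the transaction trace; routing from $a$ through $\mathit{rs}_\x$ then through the appropriate lock bridge (Lemma~\ref{lem:rsi_lock_hb_fr}, splitting on $\x \in \writeset_{\txid_b}$ vs.\ $\x \notin \writeset_{\txid_b}$) lands us in $\txid_c$ via $\mathit{rl}_\x$ or $\mathit{pl}_\x$. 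Parts~\ref{lem:rsi_soundness_co} and~\ref{lem:rsi_soundness_rf} follow the same template; Part~\ref{lem:rsi_soundness_rf} additionally requires splitting $\rsirf$ into its four constituents ($\rf;[\NT]$, $[\NT];\rf;\st$, $\tlift{\rf}$, $\tlift{(\co;\rf)}$). The first of these is immediate since non-transactional reads are not instrumented and $(w,r) \in \impGRSI.\rf$ directly. The others all land the reading transaction at $\mathit{rs}_\x$ or $\mathit{vs}_\x$, where Lemma~\ref{lem:rsi_lock_hb} applies.

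For Part~\ref{lem:rsi_soundness_hb}, I would stratify $\rsihb'$ inductively: define $\rsihb_0 \eqdef \rsipo \cup \rsirf \cup \rsico \cup \rsifr$, $\rsihb_{n{+}1} \eqdef \rsihb_0; \rsihb_n$, and $\rsihb'_n$ analogously but forbidding a final $\rsifr$ edge on the right. Then $\rsihb' = \bigcup_n \rsihb'_n$, and I prove $\rsihb'_n \subseteq \impGRSI.\hb$ by induction on $n$. The base case is exactly the first halves of Parts~\ref{lem:rsi_soundness_po}--\ref{lem:rsi_soundness_rf}. The inductive step decomposes an edge in $\rsihb'_{n}$ as $(a,c) \in \rsihb'_0$ followed by $(c,b) \in \rsihb_{m}$, and splits on whether the first step of the suffix is an $\rsifr$ or not: if not, the suffix lies in $\rsihb'_m$ and the I.H.\ applies directly, giving $a \to c \to b$ in $\impGRSI.\hb$; if so, we absorb the leading $\rsifr$ of the suffix into the $\rsihb'_0$ prefix using the \emph{second} halves of Parts~\ref{lem:rsi_soundness_po}--\ref{lem:rsi_soundness_rf} (which handle precisely $\rsipo;\rsifr$, $\rsico;\rsifr$, $\rsirf;\rsifr$), and apply the I.H.\ to the remainder.

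The main obstacle will be the $\rsirf;\rsifr$ composition in Part~\ref{lem:rsi_soundness_rf}: the $\rsirf$ edge may arise from any of four disjuncts, each routing through a different lock event ($\mathit{rl}_\x$ of $\txid_b$ in most cases but simply synchronising with the non-transactional target in the $\rf;[\NT]$ case), and we must then chain again through the $\y$-lock events of $\txid_b$ to reach $\txid_c$. The $\tlift{(\co;\rf)}$ subcase is especially delicate because the abstract edge is witnessed by an intermediate write $w'$ not in $\txid_a$; here I would either appeal directly to Lemma~\ref{lem:rsi_lock_hb_corf} or traverse $\txid_a \to \txid_{w'} \to \txid_b$ chaining two instances of Lemma~\ref{lem:rsi_lock_hb_co} and Lemma~\ref{lem:rsi_lock_hb_rf}, using transitivity of $\impGRSI.\hb$ to collapse the chain.
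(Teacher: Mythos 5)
Your proposal is correct and follows essentially the same route as the paper: the same $\po$-bookending of each abstract edge through \cref{lem:rsi_lock_hb}, the same case split for $\rsipo$ on $\rsipoi$, the same four-way decomposition of $\rsirf$, and the identical stratification $\rsihb' = \bigcup_n \rsihb'_n$ with the leading $\rsifr$ of the suffix absorbed via the second conjuncts of Parts~\ref{lem:rsi_soundness_po}--\ref{lem:rsi_soundness_rf}. The paper resolves your $\tlift{(\co;\rf)}$ concern exactly as your first option suggests, by appealing directly to part~(\ref{lem:rsi_lock_hb_corf}) of \cref{lem:rsi_lock_hb}.
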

\begin{lemma}\label{lem:rsi_soundness_aux}
For all RA-consistent execution graphs $\impGRSI$ of the implementation and their counterpart RSI execution graphs $\absGRSI$ constructed as above:
\[
\begin{array}{@{} l @{}}
	\for{n \in \Nats} 
	\for{a, b, w, \txid, \x} \\
	\quad (a, b) \in \absGRSI.\rsihb_n
	\land a \in \absGRSI.\Writes
	\land b \in \absGRSI.(\Reads \cap \Transactions_{\txid}) \Rightarrow \\
	\qquad  
	(w \in \absGRSI.(\Writes \cap \Transactions_{\txid})  \Rightarrow
		 (a, w) \in \impGRSI.\hb)
	\land 
	(\txid.\mathit{vs}_{\x} \text{ defined }  \Rightarrow
		(a, \txid.\mathit{vs}_{\x}) \in \impGRSI.\hb)
\end{array}	
\]
where $\rsihb_0 \eqdef \rsipo \cup \rsirf \cup \rsico \cup \rsifr$ and $\rsihb_{n {+} 1} \eqdef \rsihb_n; \rsihb_0$,  for all $n \geq 0$.
\begin{proof}
Pick an arbitrary RA-consistent execution graph $\impGRSI$ of the implementation and its counterpart RSI execution graph $\absGRSI$ constructed as above.
We then proceed by induction on $n$.\\

\noindent\textbf{Base case $n = 0$}\\
Pick arbitrary $a, b, \txid, \x$ such that $(a, b) \in (\rsihb_0)$, $a \in \absGRSI.\Writes$, $b \in \absGRSI.(\Reads \cap\Transactions_{\txid})$, $\txid.\mathit{vs}_{\x} \text{ defined}$ and $w \in \absGRSI.(\Writes \cap \Transactions_{\txid})$.
From the definition of  $\rsihb_0$ we then have either 
1) $(a, b) \in \absGRSI.\rsipo \land a \not\in \absGRSI.\Transactions_{\txid}$; or 2) $(a, b) \in \absGRSI.(\cot \cup \rft \cup  \tlift{(\co; \rf)})$; or 3) $(a, b) \in \absGRSI.([\NT]; \rf; \st)$.

In case (1) from the definition of $\absGRSI.\rsipo$ we then know that $\{a\} \times \trace_{\txid}.\Events \subseteq \impGRSI.\po$ and thus $(a, \txid.\mathit{vs}_{\x}), (a, w) \in \impGRSI.\po \subseteq \impGRSI.\hb$, as required.

In case (2) from the definitions of $\rft$, $\cot$, $\tlift{(\co; \rf)}$ we know there exists $\txid', c, d, \y$ such that $a, c \in \absGRSI.\Transactions_{\txid'}$, $a, c \in \txid'.\Events$, $b, d \in \absGRSI.\Transactions_{\txid}$, $b, d \in \txid.\Events$, $(c, d) \in \absGRSI.(\co \cup \rf \cup  (\co; \rf))$ and $\loc c = \loc d = \y$. 
As such, from \cref{lem:rsi_lock_hb} we know $\txid'.\mathit{wu}_{\y} \relarrow{\impGRSI.\hb} \txid.\mathit{rl}_{\y}$. 
On the other hand we have $a \relarrow{\impGSI.\po} \txid'.\mathit{wu}_{\y}$, $\txid.\mathit{rl}_{\y}  \relarrow{\impGSI.\po} \txid.\mathit{vs}_{\x}$ and $\txid.\mathit{rl}_{\y}  \relarrow{\impGSI.\po} w$. As such we have $a \relarrow{\impGSI.\po} \txid'.\mathit{wu}_{\y} \relarrow{\impGSI.\hb} \txid.\mathit{rl}_{\y}  \relarrow{\impGSI.\po} \txid.\mathit{vs}_{\x}$, i.e.~$a  \relarrow{\impGSI.\hb}  \txid.\mathit{vs}_{\x}$, as required.
Similarly we have $a \relarrow{\impGSI.\po} \txid'.\mathit{wu}_{\y} \relarrow{\impGSI.\hb} \txid.\mathit{rl}_{\y}  \relarrow{\impGSI.\po} w$, i.e.~$a  \relarrow{\impGSI.\hb}  w$, as required.

In case (3) from the construction of $\absGRSI$ we know there exists $\y$ such that $\loc a = \y$, and $(a, \txid.\mathit{rs}_{\y}), (a, \txid.\mathit{vs}_{\y}) \in \impGRSI.\rf$. As such we have $a \relarrow{\impGSI.\rf} \txid.\mathit{rs}_{\y}  \relarrow{\impGSI.\po} \txid.\mathit{vs}_{\x}$, i.e.~$a  \relarrow{\impGSI.\hb}  \txid.\mathit{vs}_{\x}$, as required.
Moreover, since we have $\txid.\mathit{rs}_{\y} \relarrow{\impGSI.\po} w$, we have $a  \relarrow{\impGSI.\hb}  \txid.\mathit{rs}_{\y} \relarrow{\impGSI.\po} w$, i.e.~$a  \relarrow{\impGSI.\hb} w$, as required.\\

\noindent\textbf{Inductive case $n = m {+} 1$}\\
Pick arbitrary $a, b, \txid, \x$ such that $(a, b) \in (\rsihb_n)$, $a \in \absGRSI.\Writes$, $b \in \absGRSI.(\Reads \cap\Transactions_{\txid})$, $\txid.\mathit{vs}_{\x} \text{ defined}$ and $w \in \absGRSI.(\Writes \cap \Transactions_{\txid})$.
From the definition of $\rsihb_n$ we then know there exists $c$ such that $(a, c) \in \rsihb_m$ and $(c, b) \in \rsihb_0$. There are then two cases to consider: 1) $c \in \Transactions_\txid$; or 2) $c \not\in \Transactions_\txid$. 

Case (1) leads to contradiction as $(c, b) \in \rsihb_0$, $c, b \in \Transactions_\txid$, $b \in \Reads$ and $\rsihb_0$ does not include any internal edges to read events. 

In case (2) since $a$ is a write event, from the definition of $\rsihb'$ we then have $(a, c) \in \rsihb'$ and thus from \cref{lem:rsi_soundness} we have  $(a, c) \in \impGRSI.\hb$. 
On the other hand, since $c \not\in \Transactions_\txid$, $(c, b) \in \rsihb_0$ and $b$ is a read event, from the definition of $\rsihb_0$ we know that either 1) $(c, b) \in \absGRSI.\rsipo \land c \not \in \Transactions_\txid$; or 2) $(c, b) \in \absGRSI.(\cot \cup \rft \cup  \tlift{(\co; \rf)})$; or 3) $(c, b) \in \absGRSI.([\NT]; \rf; \st)$.
Following an analogous argument as that in the base case, we then have $c \relarrow{\impGSI.\hb}  \txid.\mathit{vs}_{\x}$ and $c \relarrow{\impGSI.\hb} w$. As such, we have $a \relarrow{\impGSI.\hb} c \relarrow{\impGSI.\hb} \txid.\mathit{vs}_{\x}$, i.e.~$a \relarrow{\impGSI.\hb} \txid.\mathit{vs}_{\x}$, as required.
Similarly, we have $a \relarrow{\impGSI.\hb} c \relarrow{\impGSI.\hb} w$, i.e.~$a \relarrow{\impGSI.\hb} w$, as required.
\end{proof}
\end{lemma}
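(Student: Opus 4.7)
The plan is to prove this lemma by induction on $n$, mirroring the decomposition of $\rsihb_n$ into a single $\rsihb_0$-edge composed with a shorter $\rsihb$-chain. The crucial observation driving both cases is that the validation read $\txid.\mathit{vs}_{\x}$ and every write $w \in \txid.\Writes$ all lie program-order-after $\txid.\mathit{rl}_{\y}$ for every $\y$ locked by $\txid$; hence any synchronisation that $\impGRSI.\hb$-reaches the snapshot-acquisition phase automatically reaches $\txid.\mathit{vs}_{\x}$ and $w$ as well.

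For the base case $n=0$, I would case-split on the kind of $\rsihb_0$-edge relating $a$ to $b$. Note first that $\rsifr$ is impossible since its source must be a read, whereas $a \in \Writes$. The $\rsipo$ case forces $a \not\in \Transactions_\txid$ (the internal $\rsipo$ variant requires both endpoints to be writes, contradicting $b \in \Reads$), so $\{a\} \times \trace_\txid.\Events \subseteq \impGRSI.\po$ directly. For $\cot$, $\rft$, and $\tlift{(\co; \rf)}$ I would invoke \cref{lem:rsi_lock_hb} at the relevant location $\y$ to get $\txid_a.\mathit{wu}_\y \relarrow{\impGRSI.\hb} \txid.\mathit{rl}_\y$, then chain with $\po$ on both sides to reach $\txid.\mathit{vs}_\x$ and $w$. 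For $[\NT]; \rf; \st$, the construction of $\absGRSI.\rf$ guarantees $(a, \txid.\mathit{rs}_{\loc a}) \in \impGRSI.\rf$, and then $\po$ within $\txid$ carries us to the validation read and the writes.

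For the inductive case I would split $(a, b) \in \rsihb_{m+1}$ as $(a, c) \in \rsihb_m$ followed by $(c, b) \in \rsihb_0$. I then argue $c \not\in \Transactions_\txid$: the only $\rsihb_0$-edges whose target is a read in $\Transactions_\txid$ and whose source is also in $\Transactions_\txid$ would have to come from $\rsipo$ (restricted to writes, but then the target is a write, contradiction) or from internal $\rft/\cot$ variants, which are excluded by the $\tlift{}$ definition. Once $c \not\in \Transactions_\txid$, the prefix $(a, c) \in \rsihb_m$ cannot begin with an $\rsifr$ edge (since $a$ is a write), so $(a, c) \in \rsihb'$, and \cref{lem:rsi_soundness}(\ref{lem:rsi_soundness_hb}) yields $(a, c) \in \impGRSI.\hb$. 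Applying the base-case analysis to $(c, b) \in \rsihb_0$ delivers $(c, \txid.\mathit{vs}_\x), (c, w) \in \impGRSI.\hb$, and transitivity closes both goals.

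The main obstacle I anticipate is the justification that $c \not\in \Transactions_\txid$ in the inductive step, and dually in the base case that every $\rsihb_0$-edge into a transactional read $b$ of $\txid$ actually synchronises not merely with $b$ itself but with the \emph{later} events $\txid.\mathit{vs}_\x$ and the writes $w$. Both hinge on the fact that $\rsihb_0$ traverses transaction boundaries through the MRSW lock handshake, which the SI/RSI implementation places at the very beginning of the transaction body; once this lock-handshake observation is packaged, the remainder is a disciplined case analysis using \cref{lem:rsi_lock_hb} and \cref{lem:rsi_soundness}.
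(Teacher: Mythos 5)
Your proposal follows essentially the same route as the paper's proof: the same induction on $n$ with the decomposition $\rsihb_{m+1} = \rsihb_m;\rsihb_0$, the same base-case split into $\rsipo$, $\cot\cup\rft\cup\tlift{(\co;\rf)}$, and $[\NT];\rf;\st$ (with $\rsifr$ excluded because $a$ is a write), the same appeal to \cref{lem:rsi_lock_hb} for the lock handshake, and the same use of $\rsihb'$ together with \cref{lem:rsi_soundness} in the inductive step after ruling out $c \in \Transactions_\txid$. The argument is correct and matches the paper's in all essentials.
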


\begin{theorem}[Soundness]
For all execution graphs $\impGRSI$ of the implementation and their counterpart RSI execution graphs $\absGRSI$ constructed as above,
\[
	\consistent{\impGRSI} \Rightarrow \rsicon
\]
\begin{proof}
Pick an arbitrary execution graph $\impGRSI$ of the implementation such that $\consistent{\impGRSI}$, and its associated RSI execution graph $\absGRSI$ constructed as described above. It then suffices to show 1) $\rfi \cup \coi \cup \fri \subseteq \po$; 2) $\acyc{\absGRSI.\rsihbloc}$; 3) $\acyc{\absGRSI.(\rsihbloc; \co)}$; and 4) $\acyc{\absGRSI.(\rsihbloc; \fr)}$. \\

\noindent \textbf{RTS. $\rfi \cup \coi \cup \fri \subseteq \po$}\\
Follows immediately from the construction of $\absGRSI$ and the RA-consistency of $\impGRSI$.\\

\noindent \textbf{RTS. $\acyc{\absGRSI.\rsihbloc}$}\\ 
We proceed by contradiction. Let us assume there exists $a$ such that $(a, a) \in \absGRSI.\rsihbloc$. 
There are then two cases to consider: i) $(a, a) \in \rsihb'$; or ii) $(a, a) \in \rsihb \setminus \rsihb'$. 
In case (1) from \cref{lem:rsi_soundness} part \ref{lem:rsi_soundness_hb} we then have $(a, a) \in \impGRSI.\hb$, contradicting the assumption that $\impGRSI$ is RA-consistent. 
In case (2) we then know there exists $b$ such that $a \in \absGRSI.\EReads$, $b \in \absGRSI.\Writes$, $(a, b) \in \rsifr$ and $(b, a) \in \rsihb$. Moreover, since $b \in \absGRSI.\Writes$, from the definition of $\rsifr$ and $\rsihb'$ we have $(b, a) \in \rsihb'$. 
Since we also have $(a, b) \in \rsifr \subseteq \rsihb$, from the definition  $\rsihb'$ we have $(b, b) \in \rsihb'$. Consequently, from \cref{lem:rsi_soundness} part \ref{lem:rsi_soundness_hb} we have $(b, b) \in \impGRSI.\hb$, contradicting the assumption that $\impGRSI$ is RA-consistent. 
\\

\noindent \textbf{RTS. $\acyc{\absGRSI.(\rsihbloc; \co)}$}\\ 
We proceed by contradiction. Let us assume there exist $a, b$ such that $(a, b) \in \absGRSI.\rsihbloc$ and $(b, a) \in \absGRSI.\co$. 
From the definition of $\absGRSI.\co$ we then know that $a, b \in \absGRSI.\Writes$ and $(a, b) \in \impGRSI.\co$. 
On the other hand, since $a  \in \absGRSI.\Writes$ and $(a, b) \in \rsihb$, from the definition of $\rsihb'$ we have $(a, b) \in \rsihb'$. Consequently, from \cref{lem:rsi_soundness} part \ref{lem:rsi_soundness_hb} we have $(a, b) \in \impGRSI.\hb$. We then have $a \relarrow{\impGRSI.\hb} b \relarrow{\impGRSI.\co} a$, contradicting the assumption that $\impGRSI$ is RA-consistent. \\

\noindent \textbf{RTS. $\acyc{\absGRSI.(\rsihbloc; \fr)}$}\\ 
We proceed by contradiction. Let us assume there exist $w, r$ such that $(w, r) \in \absGRSI.\rsihbloc$ and $(r, w) \in \absGRSI.\fr$. 
From the definition of $\absGRSI.\fr$ we then know that $w \in \absGRSI.\Writes$, $r \in \absGRSI.\Reads$. Since $w \in \absGRSI.\Writes$, from the definition of $\rsihb'$ we have $(w, r) \in \rsihb'$. 
Consequently, from \cref{lem:rsi_soundness} part \ref{lem:rsi_soundness_hb} we have $(w, r) \in \impGRSI.\hb$. 
Let $\loc w = \loc r = \x$.
There are then  two cases to consider: 1) $r \in \absGRSI.\NT$; or 2) $\exsts{\txid_r} r \in \txid_r.\Events$.

In case (1) from the definition of $\absGRSI.\fr$ we know $(r, w) \in \impGRSI.\fr$.  
As such, we have $w \relarrow{\impGRSI.\hb} r \relarrow{\impGRSI.\fr} w$, contradicting the assumption that $\impGRSI$ is RA-consistent. 

In case (2) we then know there exists $w'$ such that $(w', r) \in \absGRSI.\rf$ $(w', w) \in \absGRSI.\co$.
There are now three cases to consider: a) $w \in \txid_r.\Events$; or 
b) $w \not\in \txid_r.\Events \land w' \in \txid_r.\Events$; or 
c) $w, w' \not\in \txid_r.\Events$. 
In case (2.a.) from the construction of $\absGRSI$ (in particular, $\absGRSI.\rf$) we know that $(r, w) \in \absGRSI.\poi \subseteq \impGRSI.\poi \subseteq \impGRSI.\hb$. As such, we have $w \relarrow{\impGRSI.\hb} r \relarrow{\impGRSI.\hb} w$, contradicting our assumption that $\impGRSI$ is RA-consistent. 

In case (2.b), 
since $w'$ is a write event, $w' \in \txid_r.\Events$, $r$ is a read event, $w$ is a write event and $(w, r) \in \absGRSI.\rsihb$, from \cref{lem:rsi_soundness_aux} we have $(w, w') \in \impGRSI.\hb$. 
Moreover, since $(w', w) \in \absGRSI.\co$, from the definition of $\absGRSI.\co$ we also have $(w', w) \in \impGRSI.\co$.
As such we have $w \relarrow{\impGRSI.\hb} w' \relarrow{\impGRSI.\co} w$, contradicting the assumption that $\impGRSI$ is RA-consistent. 

In case (2.c), from the construction of $\absGRSI$ we then know $(\txid_r.\mathit{rs}_\x, w) \in \impGRSI.\fr$ and $(\txid_r.\mathit{vs}_\x, w) \in \impGRSI.\fr$.
On the other hand, since $r$ is a read event, $w$ is a write event and $(w, r) \in \absGRSI.\rsihb$, from \cref{lem:rsi_soundness_aux} we have $(w, \txid_r.\mathit{vs}_\x) \in \impGRSI.\hb$. 
As such we have $w \relarrow{\impGRSI.\hb} \txid_r.\mathit{vs}_\x \relarrow{\impGRSI.\fr} w$, contradicting the assumption that $\impGRSI$ is RA-consistent. 

\end{proof}
\end{theorem}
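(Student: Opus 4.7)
The plan is to unpack $\rsicon$ into its two components from \cref{def:rsi_consistency}: the internal axiom $\rfi \cup \coi \cup \fri \subseteq \po$, and the acyclicity of $\rsihbloc \cup \co \cup \fr$. The internal axiom should be immediate from the construction of $\absGRSI$: the eager implementation performs the snapshot of each $\x \in \readset$ in $\po$ before any transactional write, and every transactional read is serviced from \code{s[x]}, so any $\rfi$, $\coi$, or $\fri$ edge in $\absGRSI$ is inherited from an already $\po$-ordered pair of events of $\impGRSI$.

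For the acyclicity axiom, my first step would be to prove a ``lock realisation'' lemma in the style of \cref{lem:si_lock_hb}: whenever two distinct transactions $\txid_a,\txid_b$ are related by $\rf$, $\co$, or $\fr$ on a location $\x$ in $\absGRSI$, the write-exclusion and reader-sharing axioms $\ref{ax:lock_wex}$ and $\ref{ax:lock_rshared}$ force $\txid_a.\mathit{wu}_\x$ (or $\txid_a.\mathit{ru}_\x$ when $\x \notin \writeset_{\txid_a}$) to be $\impGRSI.\hb$-before $\txid_b.\mathit{rl}_\x$ (or $\txid_b.\mathit{pl}_\x$). Any opposite orientation would combine with the implementation-level $\po$ skeleton and the matching $\rf/\co/\fr$ edges of $\impGRSI$ to yield a $\perloc{\impGRSI.\hb}$ cycle, contradicting $\consistent{\impGRSI}$.

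Using this lemma, I would then show that every shape of $\rsihb$ edge other than one starting with $\rsifr$ embeds into $\impGRSI.\hb$. The delicate point is exactly $\rsifr$: its source is an external read $r$, whose value was physically fetched by the snapshot and validation events $\mathit{rs}_\x$ and $\mathit{vs}_\x$ which sit in $\po$ \emph{before} $r$, so one cannot directly realise $(r,w) \in \rsifr$ as an $\impGRSI.\hb$ edge originating at $r$. I would therefore introduce $\rsihb' \eqdef \rsihb \setminus (\rsifr; \refC{\rsihb})$ and prove $\absGRSI.\rsihb' \subseteq \impGRSI.\hb$ by induction on the length of the underlying $\rsihb_0$-witness. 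The base and one-step extensions split into several sub-claims: each of $\rsipo$, $\rsirf$, and $\rsico$, as well as each of those followed by a single $\rsifr$, must be mapped into $\impGRSI.\hb$; all such cases should be discharged by routing both endpoints through the lock-realisation lemma and padding with the intra-transactional $\po$.

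The final step is to contradict any cycle in $\rsihbloc \cup \co \cup \fr$. Totality of $\co$ per location lets me rotate the cycle to pass through a write event $w$; from $w$ the remaining $\rsihb$ segment automatically lies in $\rsihb'$ (since no $\rsifr$ edge can originate at a write), so it embeds in $\impGRSI.\hb$, while the closing $\co$ or $\fr$ step embeds in $\impGRSI.\co$ or $\impGRSI.\fr$ via the construction of $\absGRSI$. I expect the main obstacle to be the sub-case where the cycle closes with an $\fr$ to a read $r \in \txid_r.\Events$ whose reads-from write lies in a \emph{third} transaction distinct from both $\txid_r$ and the cycle's $\txid_w$: the abstract read $r$ is not the event that physically fetches $\x$ in the implementation, so the lifted $\impGRSI.\fr$ edge originates at $\txid_r.\mathit{vs}_\x$ rather than at $r$ itself. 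To close the implementation-level cycle I will need an auxiliary lemma asserting that whenever $(w,r) \in \rsihb$ with $r$ a transactional read at $\x$, we in fact have $(w, \txid_r.\mathit{vs}_\x) \in \impGRSI.\hb$. This is precisely the property for which the double read in the RSI snapshot routine was engineered, and it is the linchpin of the whole soundness argument.
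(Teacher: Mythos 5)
Your proposal matches the paper's proof essentially step for step: the same lock-realisation lemma (the paper's \cref{lem:rsi_lock_hb}), the identical restriction $\rsihb' \eqdef \rsihb \setminus (\rsifr;\refC{\rsihb})$ with the inductive embedding into $\impGRSI.\hb$ (\cref{lem:rsi_soundness}), and the same linchpin auxiliary lemma routing $\rsihb$ edges into $\txid_r.\mathit{vs}_\x$ (\cref{lem:rsi_soundness_aux}), applied to exactly the third-transaction $\fr$ sub-case you single out. The only minor addition in the paper is that the auxiliary lemma simultaneously yields $(w,w')\in\impGRSI.\hb$ for writes $w'$ of $\txid_r$, which handles the companion sub-case where the reads-from source lies inside $\txid_r$ itself; this folds naturally into the lemma you already propose.
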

\subsection{Implementation Completeness}
In order to establish the completeness of our implementation, it suffices to show that given an RSI-consistent execution graph $\absGRSI = (\Events, \po, \rf, \co, \rsilo)$, we can construct a corresponding RA-consistent execution graph $\impGRSI$ of the implementation.
Before proceeding with the construction of a corresponding implementation graph, we describe several auxiliary definitions.

Given a transaction class $\Transactions_{\txid} \in \absGRSI.\Transactions/\st$, we write $\writeset_{\txid}$ for the set of locations written to by $\Transactions_{\txid}$: $\writeset_{\txid} = \bigcup_{e \in \Transactions_{\txid}\cap \Writes} \loc{e}$.
Similarly, we write $\readset_{\txid}$ for the set of locations read from by $\Transactions_{\txid}$,
\emph{prior to} being written by $\Transactions_{\txid}$. 
For each location \code x read from by $\Transactions_{\txid}$, we additionally record the first read event in $\Transactions_{\txid}$ that retrieved the value of \code x.
That is, 
\[
\readset_{\Transactions_{\txid}} \eqdef
\setcomp{
	(\code x, r)
}{
	r \in \Transactions_i \cap \Reads_{\code x}
	\land \neg\exsts{e \in \Transactions_{\txid} \cap \Events_{\code x}}
	e \relarrow{\po} r
}
\]
Note that the execution trace for each transaction $\Transactions_{\txid} \in \absGRSI.\Transactions/\st$ is of the form $\trace'_{\txid} = \mathit{t}'_1 \relarrow{\imm \po} \cdots \relarrow{\imm \po} \mathit{t}'_k$, comprising a series of read or write events.
As such, we have $\absGRSI.\Events = \absGRSI.\Transactions = \bigcup_{\Transactions_{\txid} \in \absGRSI.\Transactions/\st} \Transactions_{\txid} = \trace'_{\txid}.\Events$.
Let $\readset_{\Transactions_{\txid}} \cup \writeset_{\Transactions_{\txid}} = \{\code x_1 \cdots \code x_n\}$.
We then construct the implementation trace $\trace_{\txid}$ as:
\[
	\trace_{\txid} = 
	\mathit{Rs}
	\relarrow{\imm \po} \mathit{RUs}
	\relarrow{\imm \po} \mathit{PLs}
	\relarrow{\imm \po} \mathit{Ts}
	\relarrow{\imm \po} \mathit{Us}
\]
where
\begin{itemize}
	\item $\mathit{Rs} = \mathit{rl}_{\code x_1} \relarrow{\imm \po} \cdots \relarrow{\imm \po} \mathit{rl}_{\code x_n} \relarrow{\imm \po} \mathit{S}_{\code x_1} \relarrow{\imm \po} \cdots \relarrow{\imm \po} \mathit{S}_{\code x_n} \relarrow{\imm \po} \mathit{V}_{\code x_1} \relarrow{\imm \po} \cdots \relarrow{\imm \po} \mathit{V}_{\code x_n}$, where the identifiers of all constituent events of $\mathit{Rs}$ are picked fresh, and 
%
	\[
		\begin{array}{c}
			\mathit{rl}_{\code x_j} {=} \rlockE{\x_j} 
			\quad
			\mathit{S}_{\code x_j}{=}
			\begin{cases}
				\mathit{rs}_{\code x_j}
				\relarrow{\imm \po}
				\mathit{ws}_{\code x_j}
				&  \text{if } \exsts{r} (\code x_j, r) \in \readset_{\txid} \land \rval{r} = v_j \\
				
				\emptyset & \text{otherwise}
			\end{cases} 
			\\
			\mathit{V}_{\code x_j} {=}
			\begin{cases}
				\mathit{vs}_{\code x_j}
				&  \text{if } \exsts{r} (\code x_j, r) \in \readset_{\txid} \land \rval{r} = v_j \\
				
				\emptyset & \text{otherwise}
			\end{cases} 			
		\end{array}
	\]	
with $\mathit{rs}_{\code x_j} \eqdef \readE{\acq}{\code x_j}{v_j}$, $\mathit{vs}_{\code x_j} \eqdef \readE{\acq}{\code x_j}{v_j}$
and $\mathit{ws}_{\code x_j} \eqdef \writeE{\rel}{\code{s[x}_j\code ]}{v_j}$.
	\item $\mathit{RUs} = \mathit{ru}_{\code x_1} \relarrow{\imm \po} \cdots \relarrow{\imm \po} \mathit{ru}_{\code x_n}$, where the identifiers of all constituent events of $\mathit{RUs}$ are picked fresh, and for all $j \in \{1 \cdots n\}$:
	\[
	\begin{array}{l}
		\mathit{ru}_{\code x_j} = 
		\begin{cases}
			\begin{array}{@{} l @{}}
				\runlockE{\x_j}
			\end{array}
			& \text{ if } \code x_j \not\in \writeset_{\txid}  \\\\
			\emptyset
			& \text{ otherwise}
		\end{cases}
	\end{array}	
	\]
%
%
%
%
	\item $\mathit{PLs} = \mathit{pl}_{\code x_1} \relarrow{\imm \po} \cdots \relarrow{\imm \po} \mathit{pl}_{\code x_n}$, where the identifiers of all constituent events of $\mathit{PLs}$ are picked fresh, and for all $j \in \{1 \cdots n\}$:
	\[
	\begin{array}{l}
		\mathit{pl}_{\code x_j} = 
		\begin{cases}
			\begin{array}{@{} l @{}}
				\plockE{\x_j}
			\end{array}
			&  \text{ if } \code x_j \in \writeset_{\txid} \\\\
			\emptyset
			& \text{otherwise}			
		\end{cases}
	\end{array}	
	\]
%
%
%
%
%
	%
%
%
	\item $\mathit{Ts} = \mathit{t}_1 \relarrow{\imm \po} \cdots \relarrow{\imm \po} \mathit{t}_k$, where for all $m \in \{1 \cdots k\}$:
	\[
	\mathit{t}_m = 
	\begin{cases}
		\readE{-}{\code{s[x}_n \code{]}}{v_n} & \text{if } \mathit{t}'_m {=} \readE{-}{\code x_n}{v_n} \\
		\writeE{\rel}{\code x_n}{v_n} \relarrow{\imm \po} \writeE{-}{\code{s[x}_n \code ]}{v_n}
		& \text{if } \mathit{t}'_m {=} \writeE{\rel}{\code x_n}{v_n} \\
	\end{cases}
	\]
	such that in the first case the identifier of $\mathit{t}_m$ is that of $\mathit{t}'_m$; and in the second case the identifier of the first event in $\mathit{t}_m$ is that of $\mathit{t}'_m$ and the identifier of the second event is picked fresh.
	\item $\mathit{Us} = \mathit{wu}_{\code x_1} \relarrow{\imm \po} \cdots \relarrow{\imm \po} \mathit{wu}_{\code x_n}$, where the identifiers of all constituent events of $\mathit{Us}$ are picked fresh, and 
	\[
		\mathit{wu}_{\code x_j} = 
		\begin{cases}
			\wunlockE{\x_j}
			& \text{ if } \code x_j \in \writeset_{\txid}  \\
			\emptyset
			& \text{ otherwise}
		\end{cases}
	\]
\end{itemize}
In what follows we write $\Events_\txid$ as a shorthand for the events in the implementation trace of $\trace_\txid$, i.e.~$\Events_\txid \eqdef \setcomp{a}{a \in \trace_\txid.\Events}$. We use the $\txid.$ prefix to project the various events of the implementation trace $\trace_\txid$ (e.g.~$\txid.\mathit{rl}_{\code x_j}$). 

For each location \x we then define:
\[
\begin{array}{@{} r @{\hspace{2pt}} l @{}}
\LO_{\x} \eqdef
& \begin{array}[t]{@{} l @{\hspace{1pt}} l @{}} 
	& \setcomp{
		(\txid.\mathit{rl}_{\code x}, \txid.\mathit{pl}_{\code x}), 
		(\txid.\mathit{rl}_{\code x}, \txid.\mathit{wu}_{\code x}), 
		(\txid.\mathit{pl}_{\code x}, \txid.\mathit{wu}_{\code x})
	} 
	{ 
		\absGRSI.\Transactions_\txid \cap \Writes_\x \ne \emptyset
	} \\
	\cup 
	& \setcomp{
		(\txid.\mathit{rl}_{\code x}, \txid'.\mathit{pl}_{\code x}), 
		(\txid.\mathit{rl}_{\code x}, \txid'.\mathit{wu}_{\code x}), \\
		(\txid.\mathit{ru}_{\code x}, \txid'.\mathit{pl}_{\code x}), 	
		(\txid.\mathit{ru}_{\code x}, \txid'.\mathit{wu}_{\code x})
	} 
	{ 
		\txid \ne \txid'
		\land \exsts{a, b, x} \\
			\quad a \in \absGRSI.\Transactions_\txid \land b \in \absGRSI.\Transactions_\txid' \\
			\land \loc{a} = \loc{b} = \x \\
			\quad \land\, a \in \absGRSI.\EReads
			\land (a, b) \in \absGRSI.\fr
	} \\
	\cup & \setcomp{
		(\txid.\mathit{rl}_{\code x}, \txid'.\mathit{pl}_{\code x}), 
		(\txid.\mathit{rl}_{\code x}, \txid'.\mathit{wu}_{\code x}), \\
		(\txid.\mathit{pl}_{\code x}, \txid'.\mathit{rl}_{\code x}), 
		(\txid.\mathit{pl}_{\code x}, \txid'.\mathit{pl}_{\code x}), \\
		(\txid.\mathit{pl}_{\code x}, \txid'.\mathit{wu}_{\code x}), \\
		(\txid.\mathit{wu}_{\code x}, \txid'.\mathit{rl}_{\code x}), 
		(\txid.\mathit{wu}_{\code x}, \txid'.\mathit{pl}_{\code x}), \\
		(\txid.\mathit{wu}_{\code x}, \txid'.\mathit{wu}_{\code x})
	} 
	{ 
		\txid \ne \txid'
		\land \exsts{a, b, x} \\
			\quad a \in \absGRSI.\Transactions_\txid \land b \in \absGRSI.\Transactions_{\txid'} \\
			\quad \land\, \loc{a} = \loc{b} = \x \\
			\quad \land\, (a, b) \in \absGRSI.\co
	} \\
	\cup & \setcomp{
		(\txid.\mathit{pl}_{\code x}, \txid'.\mathit{rl}_{\code x}), 
		(\txid.\mathit{pl}_{\code x}, \txid'.\mathit{ru}_{\code x}), \\
		(\txid.\mathit{wu}_{\code x}, \txid'.\mathit{rl}_{\code x}), 
		(\txid.\mathit{wu}_{\code x}, \txid'.\mathit{ru}_{\code x})
	} 
	{ 
		\txid \ne \txid'
		\land \exsts{a, b, x} \\
			\quad a \in \absGRSI.\Transactions_\txid \land b \in \absGRSI.\Transactions_{\txid'}  \\
			\quad \land \absGRSI.\Transactions_{\txid'} \cap \Writes_\x = \emptyset \\
			\quad \land\, \loc{a} {=} \loc{b} {=} \x \\
			\quad \land (a, b) \in \absGRSI.(\refC{\co}; \rf)
	} 
\end{array} 
\end{array}
\]
Note that each $\LO_\x$ satisfies the conditions in \cref{def:si_implementation_consistency}.

We are now in a position to demonstrate the completeness of our implementation. Given an RSI-consistent execution graph $\absGRSI$,
 we construct an execution graph $\impGRSI$ of the implementation as follows and demonstrate that $\consistent{\impGRSI}$ holds. 
\begin{itemize}
	\item $\impGRSI.\Events = \bigcup\limits_{\Transactions_\txid \in \absGRSI.\Transactions/\st} \trace_\txid.\Events \cup \absGRSI.\NT$. Observe that 
	$\absGRSI.\Events \subseteq \impGRSI.\Events$.
	\item $\impGRSI.\po$ is defined as $\absGRSI.\po$ extended by the $\po$ for the additional events of $\impGRSI$, given by each $\trace_\txid$ trace defined above. Note that $\impGRSI.\po$ does not introduce additional orderings between events of $\absGRSI.\Events$. That is, $\for{a, b \in \absGRSI.\Events} (a, b) \in \absGRSI.\po \Leftrightarrow (a, b) \in \impGRSI.\po$.
	\item 
	$\impGRSI.\rf = 
	\bigcup_{\code x \in \textsc{Locs}} 
	\setcomp{
		(w, \txid.\mathit{rs}_{\code x}), \\
		(w, \txid.\mathit{vs}_{\code x})		
	}{
		\exsts{r} (\code x, r) \in \readset_{\txid} \\
		\quad \land\, (w, r) \in \absGRSI.\rf	
	}
	\cup \big(\absGRSI.\rf \cap \absGRSI.\Events \times \absGRSI.\NT)
	$.
	\item $\impGRSI.\co = \absGRSI.\co$. 
	\item 
	$\impGRSI.\rsilo = 
	\bigcup_{\code x \in \textsc{Locs}} \LO_{\x}
	$, with $\LO_\x$ as defined above.
\end{itemize}
%

\paragraph{Notation} 
In what follows, given an RSI implementation graph $\impGSI$ as constructed above we write $\impGRSI.\NT$ for the non-transactional events of $\impGRSI$, i.e.~$\impGRSI.\NT \eqdef \setcomp{a}{\neg\exsts{\txid} a \in \impGRSI.\Events_{\txid}}$.
Moreover, as before, given a relation $\makerel r \subseteq \impGRSI.\Events \times \impGRSI.\Events$, we override the $\tin{\makerel r}$ notation and write $\tin{\makerel r}$ for $\setcomp{(a, b) \in \makerel r}{\exsts{\txid} a, b \in \trace_{\txid}.\Events}$.

\begin{lemma}\label{lem:rsi_completeness}
Given an RSI-consistent execution graph $\absGRSI$ and its corresponding implementation graph $\impGRSI$ constructed as above, for all $a, b, \txid_a, \txid_b$:
\[
\begin{array}{@{} l @{}}
	(a, b) \in \impGRSI.\hb \Rightarrow \\
	\qquad\! (\exsts{\txid} a, b \in  \impGRSI.\Events_{\txid} \Rightarrow (a, b) \in \impGRSI.\po) \\
	\quad \land 
	\Big( \neg \exsts{\txid} a, b \in \impGRSI.\Events_{\txid} \Rightarrow \\
	\qqqquad 
	\begin{array}[t]{@{} l @{}}
		 \exsts{A, B} \emptyset \subset A \times B \subseteq \absGRSI.\rsihb \\
		 \land\ (a \in \impGRSI.\NT \Rightarrow A {=} \{a\}) \land (b \in \impGRSI.\NT \Rightarrow B {=} \{b\}) \\
		\land\ [a \in \impGRSI.\Events_{\txid_a} \Rightarrow \\
			\qquad A {=} \absGRSI.\Transactions_{\txid_a} \lor (\stg{a}{\txid_a} \leq 2 \land A {=} \absGRSI.\Transactions_{\txid_a} \cap \EReads) \\
			\qquad \lor\ (\stg{a}{\txid_a} \leq 4 \land \exsts{d \in \txid_a.\Writes} a \relarrow{\impGRSI.\refC{\po}} d \land A = \{d\}) \;] \\
		\land\ [b \in \impGRSI.\Events_{\txid_b} \Rightarrow \\
			\qquad B {=} \absGRSI.\Transactions_{\txid_b} \lor (\stg{b}{\txid_b} \geq 3 \land B {=} \absGRSI.\Transactions_{\txid_b} \cap \Writes) \;]	\quad \Big)
\end{array}		
\end{array}
\]
where
\[
	\stg{a}{\txid_a} \eqdef 
	\begin{cases}
		1 & \text{if } a \in  \txid_a.\mathit{Rs}_{\txid_a} \\
		2 & \text{if } a \in  \txid_a.\mathit{RUs}_{\txid_a} \\
		3 & \text{if } a \in  \txid_a.\mathit{PLs}_{\txid_a} \\
		4 & \text{otherwise}
	\end{cases}
\]
\begin{proof}
Pick an arbitrary RSI-consistent execution graph $\absGRSI$ and its corresponding implementation graph $\impGRSI$ constructed as above.
Let $\hb_0 \eqdef \impGRSI.(\po \cup \rf \cup \rsilo)$ and $\hb_{n {+} 1} \eqdef \hb_0; \hb_n$, for all $n \in \Nats$. It is then straightforward to demonstrate that $\impGRSI.\hb = \bigcup_{i \in \Nats} \hb_i$.
We thus demonstrate instead that:
\[
\begin{array}{@{} l @{}}
	\for{n \in \Nats} \for{a, b, \txid_a, \txid_b}
	(a, b) \in \impGRSI.\hb_n \Rightarrow \\
	\qqquad\! (\exsts{\txid} a, b \in  \impGRSI.\Events_{\txid} \Rightarrow (a, b) \in \impGRSI.\po) \\
	\qquad \land 
	\Big( \neg \exsts{\txid} a, b \in \impGRSI.\Events_{\txid} \Rightarrow \\
	\qqqquad 
	\begin{array}[t]{@{} l @{}}
		 \exsts{A, B} \emptyset \subset A \times B \subseteq \absGRSI.\rsihb \\
		 \land\ (a \in \impGRSI.\NT \Rightarrow A {=} \{a\}) \land (b \in \impGRSI.\NT \Rightarrow B {=} \{b\}) \\
		\land\ [a \in \impGRSI.\Events_{\txid_a} \Rightarrow \\
			\qquad A {=} \absGRSI.\Transactions_{\txid_a} \lor (\stg{a}{\txid_a} \leq 2 \land A {=} \absGRSI.\Transactions_{\txid_a} \cap \EReads) \\
			\qquad \lor\ (\stg{a}{\txid_a} \leq 4 \land \exsts{d \in \txid_a.\Writes} a \relarrow{\impGRSI.\refC{\po}} d \land A = \{d\}) \;] \\
		\land\ [b \in \impGRSI.\Events_{\txid_b} \Rightarrow \\
			\qquad B {=} \absGRSI.\Transactions_{\txid_b} \lor (\stg{b}{\txid_b} \geq 3 \land B {=} \absGRSI.\Transactions_{\txid_b} \cap \Writes) \;]	\quad \Big)
\end{array}		
\end{array}
\]
We proceed by induction on $n$.\\

\noindent \textbf{Base case $n = 0$}\\
There are three cases to consider: 1) $(a, b) \in \impGRSI.\po$; or 2) $(a, b) \in \impGRSI.\rf$; or 3) $(a, b) \in \impGRSI.\rsilo$.

In case (1) there are five cases to consider: a) $\exsts{\txid} (a, b) \in \impGRSI.\Events_{\txid}$; or b) $a, b \in \impGRSI.\NT$; or c) $a \in \impGRSI.\NT$ and $b \in \impGRSI.\Events_{\txid_b}$; or d) $a \in \impGRSI.\Events_{\txid_a}$ and $b \in \impGRSI.\NT$; or e) $a \in \impGRSI.\Events_{\txid_a}$, $b \in \impGRSI.\Events_{\txid_b}$ and $\txid_a \ne \txid_b$.
In case (1.a) we then have $(a, b) \in \impGRSI.\poi$, as required.
In case (1.b) from the construction of $\impGRSI.\po$ we have $(a, b) \in \absGRSI.\po$, as required.
In case (1.c) from the construction of $\impGRSI.\po$ we have $(\{a\} \times \absGRSI.\Transactions_{\txid_b}) \in \absGRSI.\po$, as required.
In case (1.d) from the construction of $\impGRSI.\po$ we have $(\absGRSI.\Transactions_{\txid_a} \times \{b\}) \in \absGRSI.\po$, as required.
In case (1.e) from the construction of $\impGRSI.\po$ we have $(\absGRSI.\Transactions_{\txid_a} \times \absGRSI.\Transactions_{\txid_b}) \in \absGRSI.\po$, as required.

In case (2) there are five cases to consider: 
a) $\exsts{\txid} (a, b) \in \impGRSI.\Events_{\txid}$; or 
b) $a, b \in \impGRSI.\NT$; or 
c) $a \in \impGRSI.\NT$ and $b \in \impGRSI.\Events_{\txid_b}$; or 
d) $a \in \impGRSI.\Events_{\txid_a}$ and $b \in \impGRSI.\NT$; or e) $a \in \impGRSI.\Events_{\txid_a}$, $b \in \impGRSI.\Events_{\txid_b}$ and $\txid_a \ne \txid_b$. 
Case (2.a) holds vacuously as $(a, b) \in \impGRSI.\rfi = \emptyset$.
In case (2.b) from the construction of $\impGRSI.\rf$ we have $(a, b) \in \absGRSI.\rf$, as required.
In case (2.c) from the construction of $\impGRSI.\rf$ we know that there exists $r \in \absGRSI.\Transactions_{\txid_b}$ such that $(a, r) \in \absGRSI.\rf$. 
As such we have $(\{w\} \times \absGRSI.\Transactions_{\txid_b}) \subseteq \absGRSI.([\NT]; \rf; \st) \subseteq \absGRSI.\rsihb$, as required.
In case (2.d) from the construction of $\impGRSI.\rf$ we then have $(a, b) \in \absGRSI.\rf$; that is $a \relarrow{\impGRSI.\refC{\po}} a$ and $(a, b) \in \absGRSI.\rf$, as required.
In case (2.e) from the construction of $\impGRSI.\rf$ we know that there exists $r \in \absGRSI.\Transactions_{\txid_b}$ such that $(a, r) \in \absGRSI.\rf$. As such we have $(\absGRSI.\Transactions_{\txid_a} \times \absGRSI.\Transactions_{\txid_b}) \subseteq \absGRSI.\rft \subseteq \absGRSI.\rsihb$, as required.

In case (3) there are five cases to consider: 
a) $\exsts{\txid} (a, b) \in \impGRSI.\Events_{\txid}$; or 
b) $a, b \in \impGRSI.\NT$; or 
c) $a \in \impGRSI.\NT$ and $b \in \impGRSI.\Events_{\txid_b}$; or 
d) $a \in \impGRSI.\Events_{\txid_a}$ and $b \in \impGRSI.\NT$; or 
e) $a \in \impGRSI.\Events_{\txid_a}$, $b \in \impGRSI.\Events_{\txid_b}$ and $\txid_a \ne \txid_b$. 
In case (3.a) from the construction of $\lo$ we have $(a, b) \in \impGRSI.\po$, as required.
Cases (3.b-3.d) hold vacuously as there are no $\rsilo$ edge 
to or from non-transactional events.
In case (3.e) from the construction of $\rsilo$ we know there exist $\x, c, d$ such that  $c \in \absGRSI.\Transactions_{\txid_a}$, $d \in \absGRSI.\Transactions_{\txid_b}$, and either 
i) $(c, d) \in \absGRSI.\rf$; or 
ii) $(c, d) \in \absGRSI.(\co; \rf)$; or 
iii) $(c, d) \in \absGRSI.\co$; or
iv) $\loc c = \loc d = \x$, $a = \txid_a.\mathit{rl}_{\x} \lor a = \txid_a.\mathit{ru}_{\x}$, $b = \txid_a.\mathit{pl}_{\x} \lor b = \txid_a.\mathit{wu}_{\x}$ and $c \in \absGRSI.\EReads$ and $(c, d) \in \absGRSI.\fr$.
In cases (3.e.i, 3.e.ii) we have $\absGRSI.\Transactions_{\txid_a} \times \absGRSI.\Transactions_{\txid_b} \subseteq \absGRSI.(\rft \cup \tlift{(\co; \rf)}) \subseteq \rsirf  \subseteq \rsihb$, as required.
In case (3.e.iii) we have $\absGRSI.\Transactions_{\txid_a} \times \absGRSI.\Transactions_{\txid_b} \subseteq \absGRSI.\cot  \subseteq \rsihb$, as required.
In case (3.e.iv) we then have $\stg{a}{\txid_a} \leq 2$, $\stg{b}{\txid_b} \geq 3$, and $\absGRSI.(\Transactions_{\txid_a} \cap \EReads) \times \absGRSI.(\Transactions_{\txid_b} \cap \Writes) \subseteq \absGRSI.\rsifr  \subseteq \rsihb$, as required.\\

\noindent \textbf{Inductive case $n = m {+} 1$}\\
\begin{align}
\begin{array}{@{} l @{}}
	\for{i \in \Nats} \for{a, b, \txid_a, \txid_b}
	i \leq m \land (a, b) \in \impGRSI.\hb_i \Rightarrow \\
	\qqquad\! (\exsts{\txid} a, b \in  \impGRSI.\Events_{\txid} \Rightarrow (a, b) \in \impGRSI.\po) \\
	\qquad \land 
	\Big( \neg \exsts{\txid} a, b \in \impGRSI.\Events_{\txid} \Rightarrow \\
	\qqqquad 
	\begin{array}[t]{@{} l @{}}
		 \exsts{A, B} \emptyset \subset A \times B \subseteq \absGRSI.\rsihb \\
		 \land\ (a \in \impGRSI.\NT \Rightarrow A {=} \{a\}) \land (b \in \impGRSI.\NT \Rightarrow B {=} \{b\}) \\
		\land\ [a \in \impGRSI.\Events_{\txid_a} \Rightarrow \\
			\qquad A {=} \absGRSI.\Transactions_{\txid_a} \lor (\stg{a}{\txid_a} \leq 2 \land A {=} \absGRSI.\Transactions_{\txid_a} \cap \EReads) \\
			\qquad \lor\ (\stg{a}{\txid_a} \leq 4 \land \exsts{d \in \txid_a.\Writes} a \relarrow{\impGRSI.\refC{\po}} d \land A = \{d\}) \;] \\
		\land\ [b \in \impGRSI.\Events_{\txid_b} \Rightarrow \\
			\qquad B {=} \absGRSI.\Transactions_{\txid_b} \lor (\stg{b}{\txid_b} \geq 3 \land B {=} \absGRSI.\Transactions_{\txid_b} \cap \Writes) \;]	\quad \Big)
\end{array}		
\end{array}
\tag{I.H.}
\label{IH:rsi_completeness}
\end{align}
Since $(a, b) \in \hb_n$, from the definition of $\hb_n$ we know there exists $c$ such that $(a, c) \in \hb_0$ and $(c, b) \in \hb_m$. 
There are then five cases to consider: 
1) $\exsts{\txid} (a, b) \in \impGRSI.\Events_{\txid}$; or 
2) $a, b \in \impGRSI.\NT$; or 
3) $a \in \impGRSI.\NT$ and $b \in \impGRSI.\Events_{\txid_b}$; or 
4) $a \in \impGRSI.\Events_{\txid_a}$ and $b \in \impGRSI.\NT$; or 
5) $a \in \impGRSI.\Events_{\txid_a}$, $b \in \impGRSI.\Events_{\txid_b}$ and $\txid_a \ne \txid_b$. \\

\noindent \textbf{Case 1}\\
In case (1) pick arbitrary $\txid$ such that $a, b \in \impGRSI.\Events_{\txid}$. There are then three additional cases to consider: 
a) $c \in \impGRSI.\NT$; 
b) $c \in \impGRSI.\Events_{\txid}$; or 
c) there exists $\txid' \ne \txid$ such that $c \in \impGRSI.\Events_{\txid'}$.

In case (1.a) from the proof of the base case and the \eqref{IH:rsi_completeness} we know there exist $A, B, C \ne \emptyset$ such that $C = \{c\}$, $A \times C \subseteq \absGRSI.\rsihb$, $C \times B \subseteq \absGRSI.\rsihb$ and thus $A \times B \subseteq \absGRSI.\rsihb$ and either: 
i) $A = \absGRSI.\Transactions_{\txid}$ and  $B = \absGRSI.\Transactions_{\txid}$; or 
ii) $A = \absGRSI.\Transactions_{\txid}$ and  $B = (\absGRSI.(\Transactions_{\txid} \cap  \Writes)$; or 
iii) $\stg{a}{\txid} \leq 2$, $A = \absGRSI.(\Transactions_{\txid} \cap \EReads)$ and $B = \absGRSI.\Transactions_{\txid}$; or
iv) $\stg{a}{\txid} \leq 2$, $A = \absGRSI.(\Transactions_{\txid} \cap \EReads)$, $\stg{b}{\txid} \geq 3$ and $B = \absGRSI.(\Transactions_{\txid} \cap \Writes)$; or
v) $\stg{a}{\txid} \leq 4$, $\exsts{d \in \txid.\Writes} a \relarrow{\impGRSI.\refC{\po}} d \land A = \{d\}$ and $B = \absGRSI.\Transactions_{\txid}$; or
vi) $\stg{a}{\txid} \leq 4$, $\exsts{d \in \txid.\Writes} a \relarrow{\impGRSI.\refC{\po}} d \land A = \{d\}$ and $B = \absGRSI.(\Transactions_{\txid} \cap \Writes)$.

Case (i) cannot arise as we would have $A = B$ and thus $A \times A \subseteq \absGRSI.\rsihb$, contradicting the assumption that $\absGRSI$ is RSI-consistent. 
Case (ii) cannot arise as we would have $(\absGRSI.(\Transactions_{\txid} \cap  \Writes) \times (\absGRSI.(\Transactions_{\txid} \cap  \Writes) \subseteq \absGRSI.\rsihb$, contradicting the assumption that $\absGRSI$ is RSI-consistent. 
Case (iii) cannot arise as we would have $(\absGRSI.(\Transactions_{\txid} \cap  \EReads) \times (\absGRSI.(\Transactions_{\txid} \cap  \EReads) \subseteq \absGRSI.\rsihb$, contradicting the assumption that $\absGRSI$ is RSI-consistent. 
In case (iv) since we have $\stg{a}{\txid} \leq 2$ and $\stg{b}{\txid} \geq 3$, from the definition of $\stg{.}{.}$ and the construction of $\impGRSI$ we have $(a, b) \in \impGRSI.\po$, as required.
Cases (v-vi) cannot arise as we would have $\exsts{d \in \txid.\Writes} (d, d)  \in \absGRSI.\rsihb$, contradicting the assumption that $\absGRSI$ is RSI-consistent. \\

In case (1.b) from the proof of the base case we have $(a, c) \in \impGRSI.\po$. On the other hand from (\ref{IH:rsi_completeness}) we have $(c, b) \in \impGRSI.\po$. As $\impGRSI.\po$ is transitively closed, we have $(a, b) \in \impGRSI.\po$, as required.\\

In case (1.c)  from the proof of the base case (in cases 1.e, 2.e and 3.e) we have either 
A) $\absGRSI.\Transactions_{\txid} \times \absGRSI.\Transactions_{\txid'} \subseteq \absGRSI.\rsihb$; or 
B) $\stg{a}{\txid} \leq 2$, $\stg{c}{\txid'} \geq 3$,  and $\emptyset \subset \absGRSI.(\Transactions_{\txid} \cap \EReads) \times \absGRSI.(\Transactions_{\txid'} \cap \Writes) \subseteq \absGRSI.\rsihb$.
On the other hand from \eqref{IH:rsi_completeness} we know there exist $C, B \ne \emptyset$ such that $C \times B \subseteq \absGRSI.\rsihb$ and either: 
i) $C = \absGRSI.\Transactions_{\txid'}$ and  $B = \absGRSI.\Transactions_{\txid}$; or 
ii) $C = \absGRSI.\Transactions_{\txid'}$,  $\stg{b}{\txid} \geq 3$ and $B = \absGRSI.(\Transactions_{\txid} \cap  \Writes)$; or 
iii) $\stg{c}{\txid'} \leq 2$, $C = \absGRSI.(\Transactions_{\txid'} \cap \EReads)$ and $B = \absGRSI.\Transactions_{\txid}$; or
iv) $\stg{c}{\txid'} \leq 2$, $C = \absGRSI.(\Transactions_{\txid'} \cap \EReads)$, $\stg{b}{\txid} \geq 3$ and $B = \absGRSI.(\Transactions_{\txid} \cap \Writes)$; or
v) $\stg{c}{\txid'} \leq 4$, $\exsts{d \in \txid'.\Writes} c \relarrow{\impGRSI.\refC{\po}} d \land C = \{d\}$ and $B = \absGRSI.\Transactions_{\txid}$; or
vi) $\stg{c}{\txid'} \leq 4$, $\exsts{d \in \txid'.\Writes} c \relarrow{\impGRSI.\refC{\po}} d \land C = \{d\}$, $\stg{b}{\txid} \geq 3$ and $B = \absGRSI.(\Transactions_{\txid} \cap \Writes)$.

Cases (A.i-A.vi) lead to a cycle in $\absGRSI.\rsihb$, contradicting the assumption that $\absGRSI$ is RSI-consistent.
In cases (B.ii, B.iv, B.vi) we then have $\stg{a}{\txid} \leq 2$ and $\stg{b}{\txid} \geq 3$. Consequently from the definition of $\stg{.}{.}$ and the construction of $\impGRSI$ we have $(a, b) \in \impGRSI.\po$, as required.
Case (B.i) leads to a cycle in $\absGRSI.\rsihb$, contradicting the assumption that $\absGRSI$ is RSI-consistent.
In cases (B.iii) we have $\stg{c}{\txid'} \geq 3$  and $\stg{c}{\txid'} \leq 2$, leading to a contradiction.
In case (B.v) we then know  $\exsts{d  \in \txid'.\Writes} \emptyset \subset \absGRSI.(\Transactions_{\txid} \cap \EReads) \times \{d\} \subseteq \absGRSI.\rsihb \land \{d\} \times \absGRSI.\Transactions_\txid \subseteq \absGRSI.\rsihb$. That is, we have  $\emptyset \subset \absGRSI.(\Transactions_{\txid} \cap \EReads) \times  \absGRSI.(\Transactions_{\txid} \cap \EReads) \subseteq \absGRSI.\rsihb$, contradicting the assumption that $\absGRSI$ is RSI-consistent.
\\

\noindent \textbf{Case 2}\\
There are two additional cases to consider: 
a) $c \in \impGRSI.\NT$; or
b) there exists $\txid$ such that $c \in \impGRSI.\Events_{\txid}$.

In case (2.a) from the proof of the base case we have $(a, c) \in \absGRSI.\rsihb$. On the other hand from (\ref{IH:rsi_completeness}) we have $(c, b) \in \absGRSI.\rsihb$. As $\absGRSI.\rsihb$ is transitively closed, we have $(a, b) \in \absGRSI.\rsihb$, as required.

In case (2.b) from the proof of the base case we know there exists $C_1 \ne \emptyset$ such that $\{a\} \times C_1 \subseteq \absGRSI.\rsihb$ and either: 
A) $C_1 = \absGRSI.\Transactions_{\txid}$; or 
B) $\stg{c}{\txid} \geq 3$ and $C_1 = \absGRSI.(\Transactions_{\txid} \cap \Writes)$.
On the other hand, from (\ref{IH:rsi_completeness}) we know there exists $C_2 \ne \emptyset$ such that $C_2 \times \{b\} \in \absGRSI\rsihb$ and either:
i) $C_2 = \absGRSI.\Transactions_{\txid}$; or
ii) $\stg{c}{\txid} \leq 2$ and $C_2 = \absGRSI.(\Transactions_{\txid} \cap \EReads)$; or
iii) $\stg{c}{\txid} \leq 4$ and $\exsts{d \in \txid.\Writes} c \relarrow{\impGRSI.\refC{\po}} d \land C_2 = \{d\}$.

In cases (A.i-A.iii) and (B.i, B.iii) from the transitivity of $\absGRSI.\rsihb$ we have $(a, b) \in \absGRSI.\rsihb$, as required.
Case (B.ii) cannot arise as otherwise we would have $3 \leq \stg{c}{\txid} \leq 2$, leading to a contradiction.\\

\noindent \textbf{Case 3}\\
There are three additional cases to consider: 
a) $c \in \impGRSI.\NT$; 
b) $c \in \impGRSI.\Events_{\txid_b}$; or 
c) there exists $\txid_c \ne \txid_b$ such that $c \in \impGRSI.\Events_{\txid_c}$.

In case (3.a) from the proof of the base case we have $(a, c) \in \absGRSI.\rsihb$. On the other hand from (\ref{IH:rsi_completeness}) we know there exists $B \ne \emptyset$ such that $\{c\} \times B \subseteq \absGRSI.\rsihb$ and $B = \absGRSI.\Transactions_{\txid_b} \lor (\stg{b}{\txid_b} \geq 3 \land B = \absGRSI.(\Transactions_{\txid_b} \cap \Writes))$.
As $\absGRSI.\rsihb$ is transitively closed we then know there exists $B \ne \emptyset$ such that $\{a\} \times B \subseteq \absGRSI.\rsihb$ and $B = \absGRSI.\Transactions_{\txid_b} \lor (\stg{b}{\txid_b} \geq 3 \land B = \absGRSI.(\Transactions_{\txid_b} \cap \Writes))$, as required.

In case (3.b) from the proof of the base case we know there exists $B \ne \emptyset$ such that $\{a\} \times B \subseteq \absGRSI.\rsihb$ and $B = \absGRSI.\Transactions_{\txid_b} \lor (\stg{c}{\txid_b} \geq 3 \land B = \absGRSI.(\Transactions_{\txid_b} \cap \Writes))$.
On the other hand, from (\ref{IH:rsi_completeness}) we have $c \relarrow{\impGRSI.\po} b$ and thus from the definition of $\stg{.}{.}$ and the construction of $\impGRSI$ we have $\stg{b}{\txid_b} \geq \stg{c}{\txid_b}$. As such, we we know there exists $B \ne \emptyset$ such that $\{a\} \times B \subseteq \absGRSI.\rsihb$ and $B = \absGRSI.\Transactions_{\txid_b} \lor (\stg{b}{\txid_b} \geq 3 \land B = \absGRSI.(\Transactions_{\txid_b} \cap \Writes))$, as required.

In case (3.c) from the proof of the base case we know there exists $C_1 \ne \emptyset$ such $\{a\} \times C_1 \subseteq \absGRSI.\rsihb$ and either 
A) $C_1 = \absGRSI.\Transactions_{\txid_c}$; or
B)$(\stg{c}{\txid_c} \geq 3 \land C_1 = \absGRSI.(\Transactions_{\txid_c} \cap \Writes))$.
On the other hand, from (\ref{IH:rsi_completeness}) we know there exist $C_2, B \ne \emptyset$ such that $C_2 \times B \in \absGRSI.\rsihb$ and either:
i) $C_2 = \absGRSI.\Transactions_{\txid_c}$ and $B = \absGRSI.\Transactions_{\txid_b}$; or
ii) $C_2 = \absGRSI.\Transactions_{\txid_c}$ and $(\stg{b}{\txid_b} \geq 3 \land B = \absGRSI.(\Transactions_{\txid_b} \cap \Writes))$; or
iii) $\stg{c}{\txid_c} \leq 2 \land C_2 = \absGRSI.(\Transactions_{\txid_c} \cap \EReads)$ and $B = \absGRSI.\Transactions_{\txid_b}$; or
iv) $\stg{c}{\txid_c} \leq 2 \land C_2 = \absGRSI.(\Transactions_{\txid_c} \cap \EReads)$ and $(\stg{b}{\txid_b} \geq 3 \land B = \absGRSI.(\Transactions_{\txid_b} \cap \Writes))$; or
v) $\stg{c}{\txid_c} \leq 4 \land \exsts{d \in \txid_c.\Writes} c \relarrow{\impGRSI.\refC{\po}} d \land C_2 = \{d\}$ and $B = \absGRSI.\Transactions_{\txid_b}$; or
vi) $\stg{c}{\txid_c} \leq 4 \land \exsts{d \in \txid_c.\Writes} c \relarrow{\impGRSI.\refC{\po}} d \land C_2 = \{d\}$ and $(\stg{b}{\txid_b} \geq 3 \land B = \absGRSI.(\Transactions_{\txid_b} \cap \Writes))$. 

In cases (A.i-A.vi) and (B.i, B.ii, B.v, B.vi) from the transitivity of $\absGRSI.\rsihb$ we know there exists $B \ne \emptyset$ such that $\{a\} \times B \subseteq \absGRSI.\rsihb$ and  $B = \absGRSI.\Transactions_{\txid_b} \lor (\stg{b}{\txid_b} \geq 3 \land B = \absGRSI.(\Transactions_{\txid_b} \cap \Writes))$, as required.
Cases (B.iii, B.iv) cannot arise as we would otherwise have $3 \leq \stg{c}{\txid_c} \leq 2$, leading to a contradiction.\\

\noindent \textbf{Case 4}\\
There are three additional cases to consider: 
a) $c \in \impGRSI.\NT$; 
b) $c \in \impGRSI.\Events_{\txid_a}$; or 
c) there exists $\txid_c \ne \txid_a$ such that $c \in \impGRSI.\Events_{\txid_c}$.

In case (4.a) from (\ref{IH:rsi_completeness}) we have $(c, b) \in \absGRSI.\rsihb$. On the other hand from the base case we know there exists $A \ne \emptyset$ such that $A \times \{c\}  \subseteq \absGRSI.\rsihb$ and $A = \absGRSI.\Transactions_{\txid_a} \lor (\stg{a}{\txid_a} \leq 2 \land A = \absGRSI.(\Transactions_{\txid_a} \cap \EReads)) \lor \stg{a}{\txid_a} \leq 4 \land \exsts{d \in \txid_a.\Writes} a \relarrow{\impGRSI.\refC{\po}} d \land A = \{d\}$.
As $\absGRSI.\rsihb$ is transitively closed we then know we know there exists $A \ne \emptyset$ such that $A \times \{b\} \subseteq \absGRSI.\rsihb$ and $A = \absGRSI.\Transactions_{\txid_a} \lor (\stg{a}{\txid_a} \leq 2 \land A = \absGRSI.(\Transactions_{\txid_a} \cap \EReads)) \lor \stg{a}{\txid_a} \leq 4 \land \exsts{d \in \txid_a.\Writes} a \relarrow{\impGRSI.\refC{\po}} d \land A = \{d\}$, as required.

In case (4.b) from \eqref{IH:rsi_completeness} we know there exists $A \ne \emptyset$ such that $A \times \{b\}  \subseteq \absGRSI.\rsihb$ and $A = \absGRSI.\Transactions_{\txid_a} \lor (\stg{c}{\txid_a} \leq 2 \land A = \absGRSI.(\Transactions_{\txid_a} \cap \EReads)) \lor \stg{c}{\txid_a} \leq 4 \land \exsts{d \in \txid_a.\Writes} a \relarrow{\impGRSI.\refC{\po}} d \land A = \{d\}$.
On the other hand, from the proof of the base case we have $a \relarrow{\impGRSI.\po} c$ and thus from the definition of $\stg{.}{.}$ and the construction of $\impGRSI$ we have $\stg{a}{\txid_a} \leq \stg{c}{\txid_a}$.
As such, we know there exists $A \ne \emptyset$ such that $A \times \{b\}  \subseteq \absGRSI.\rsihb$ and $A = \absGRSI.\Transactions_{\txid_a} \lor (\stg{a}{\txid_a} \leq 2 \land A = \absGRSI.(\Transactions_{\txid_a} \cap \EReads)) \lor \stg{a}{\txid_a} \leq 4 \land \exsts{d \in \txid_a.\Writes} a \relarrow{\impGRSI.\refC{\po}} d \land A = \{d\}$, as required.

In case (4.c) from the proof of the base case we know there exist $A, C_1 \ne \emptyset$ such that $A \times C_1  \subseteq \absGRSI.\rsihb$ and either:
i) $A = \absGRSI.\Transactions_{\txid_a}$ and $C_1 = \absGRSI.\Transactions_{\txid_c}$; or
ii) $A = \absGRSI.\Transactions_{\txid_a}$ and $(\stg{c}{\txid_c} \geq 3 \land C_1 = \absGRSI.(\Transactions_{\txid_c} \cap \Writes))$; or
iii) $\stg{a}{\txid_a} \leq 2 \land A = \absGRSI.(\Transactions_{\txid_a} \cap \EReads)$ and $C_1 = \absGRSI.\Transactions_{\txid_c}$; or
iv) $\stg{a}{\txid_a} \leq 2 \land A = \absGRSI.(\Transactions_{\txid_a} \cap \EReads)$ and $(\stg{c}{\txid_c} \geq 3 \land C_1 = \absGRSI.(\Transactions_{\txid_c} \cap \Writes))$; or
v) $\stg{a}{\txid_a} \leq 4 \land \exsts{d \in \txid_a.\Writes} a \relarrow{\impGRSI.\refC{\po}} d \land A = \{d\}$ and $C_1 = \absGRSI.\Transactions_{\txid_c}$; or
vi) $\stg{a}{\txid_a} \leq 4 \land \exsts{d \in \txid_a.\Writes} a \relarrow{\impGRSI.\refC{\po}} d \land A = \{d\}$ and $(\stg{c}{\txid_c} \geq 3 \land C_1 = \absGRSI.(\Transactions_{\txid_c} \cap \Writes))$.

On the other hand, from (\ref{IH:rsi_completeness}) we know there exists $C_2 \ne \emptyset$ such $C_2 \times \{b\} \subseteq \absGRSI.\rsihb$ and either 
A) $C_2 = \absGRSI.\Transactions_{\txid_c}$; or
B) $(\stg{c}{\txid_c} \leq 2 \land C_2 = \absGRSI.(\Transactions_{\txid_c} \cap \EReads))$; or
C) $(\stg{c}{\txid_c} \leq 4 \land \exsts{e \in \txid_c.\Writes} c \relarrow{\impGRSI.\refC{\po}} e \land C_2 = \{e\})$.

In cases (A.i-A.vi), (C.i-C.vi) and (B.i, B.iii, B.v) from the transitivity of $\absGRSI.\rsihb$ we know there exists $A \ne \emptyset$ such that $A \times \{b\} \subseteq \absGRSI.\rsihb$ and $A = \absGRSI.\Transactions_{\txid_a} \lor (\stg{a}{\txid_a} \leq 2 \land A = \absGRSI.(\Transactions_{\txid_a} \cap \EReads)) \lor (\stg{a}{\txid_a} \leq 4 \land \exsts{d \in \txid_a.\Writes} a \relarrow{\impGRSI.\refC{\po}} d \land A = \{d\})$, as required.
Cases (B.ii, B.iv, B.vi) cannot arise as we would otherwise have $3 \leq \stg{c}{\txid_c} \leq 2$, leading to a contradiction.\\

\noindent \textbf{Case 5}\\
There are four additional cases to consider: 
a) $c \in \impGRSI.\NT$; 
b) $c \in \impGRSI.\Events_{\txid_a}$; or 
c) $c \in \impGRSI.\Events_{\txid_b}$; or 
d) there exists $\txid_c$ such that $\txid_c \ne \txid_a$, $\txid_c \ne \txid_b$ and $c \in \impGRSI.\Events_{\txid_c}$.

In case (5.a) from the proof of the base case we know there exists $A \ne \emptyset$ such that $A \times \{c\} \subseteq \absGRSI.\rsihb$ and $A = \absGRSI.\Transactions_{\txid_a} \lor (\stg{a}{\txid_a} \leq 2 \land A = \absGRSI.(\Transactions_{\txid_a} \cap \EReads)) \lor (\stg{a}{\txid_a} \leq 4 \land \exsts{d \in \txid_a.\Writes} a \relarrow{\impGRSI.\refC{\po}} d \land A = \{d\})$.
On the other hand from (\ref{IH:rsi_completeness}) we know there exists $B \ne \emptyset$ such that $\{c\} \times B \subseteq \absGRSI.\rsihb$ and  $B = \absGRSI.\Transactions_{\txid_b} \lor (\stg{b}{\txid_b} \geq 3 \land B = \absGRSI.(\Transactions_{\txid_b} \cap \Writes))$.
As such, sine $\absGRSI.\rsihb$ is transitive we know there exist $A, B \ne \emptyset$ such that $A \times B \subseteq \absGRSI.\rsihb$; that $A = \absGRSI.\Transactions_{\txid_a} \lor (\stg{a}{\txid_a} \leq 2 \land A = \absGRSI.(\Transactions_{\txid_a} \cap \EReads)) \lor (\stg{a}{\txid_a} \leq 4 \land \exsts{d \in \txid_a.\Writes} a \relarrow{\impGRSI.\refC{\po}} d \land A = \{d\})$; 
and that $B = \absGRSI.\Transactions_{\txid_b} \lor (\stg{b}{\txid_b} \geq 3 \land B = \absGRSI.(\Transactions_{\txid_b} \cap \Writes))$, as required.

In case (5.b) from (\ref{IH:rsi_completeness}) we know there exist $A, B \ne \emptyset$ such that $A \times B \subseteq \absGRSI.\rsihb$; that $A = \absGRSI.\Transactions_{\txid_a} \lor (\stg{c}{\txid_a} \leq 2 \land A = \absGRSI.(\Transactions_{\txid_a} \cap \EReads)) \lor (\stg{c}{\txid_a} \leq 4 \land \exsts{d \in \txid_a.\Writes} a \relarrow{\impGRSI.\refC{\po}} d \land A = \{d\})$; 
and that $B = \absGRSI.\Transactions_{\txid_b} \lor (\stg{b}{\txid_b} \geq 3 \land B = \absGRSI.(\Transactions_{\txid_b} \cap \Writes))$.
On the other hand from the proof of the base case we have $(a, c) \in \absGRSI.\po$ and thus from the definition of $\stg{.}{.}$ and the construction of $\impGRSI$ we have $\stg{a}{\txid_a} \leq \stg{c}{\txid_a}$.
As such we know there exist $A, B \ne \emptyset$ such that $A \times B \subseteq \absGRSI.\rsihb$; that $A = \absGRSI.\Transactions_{\txid_a} \lor (\stg{a}{\txid_a} \leq 2 \land A = \absGRSI.(\Transactions_{\txid_a} \cap \EReads)) \lor (\stg{a}{\txid_a} \leq 4 \land \exsts{d \in \txid_a.\Writes} a \relarrow{\impGRSI.\refC{\po}} d \land A = \{d\})$ and that $B = \absGRSI.\Transactions_{\txid_b} \lor (\stg{b}{\txid_b} \geq 3 \land B = \absGRSI.(\Transactions_{\txid_b} \cap \Writes))$, as required.

In case (5.c) from the proof of the base case we know there exist $A, B \ne \emptyset$ such that $A \times B \subseteq \absGRSI.\rsihb$; that $A = \absGRSI.\Transactions_{\txid_a} \lor (\stg{a}{\txid_a} \leq 2 \land A = \absGRSI.(\Transactions_{\txid_a} \cap \EReads)) \lor (\stg{a}{\txid_a} \leq 4 \land \exsts{d \in \txid_a.\Writes} a \relarrow{\impGRSI.\refC{\po}} d \land A = \{d\})$, and that $B = \absGRSI.\Transactions_{\txid_b} \lor (\stg{c}{\txid_b} \geq 3 \land B = \absGRSI.(\Transactions_{\txid_b} \cap \Writes))$.
On the other hand from (\ref{IH:rsi_completeness}) we have $(c, b) \in \absGRSI.\po$ and thus from the definition of $\stg{.}{.}$ and the construction of $\impGRSI$ we have $\stg{c}{\txid_a} \leq \stg{b}{\txid_a}$.
As such we know there exist $A, B \ne \emptyset$ such that $A \times B \subseteq \absGRSI.\rsihb$; that $A = \absGRSI.\Transactions_{\txid_a} \lor (\stg{a}{\txid_a} \leq 2 \land A = \absGRSI.(\Transactions_{\txid_a} \cap \EReads)) \lor (\stg{a}{\txid_a} \leq 4 \land \exsts{d \in \txid_a.\Writes} a \relarrow{\impGRSI.\refC{\po}} d \land A = \{d\})$ and that $B = \absGRSI.\Transactions_{\txid_b} \lor (\stg{b}{\txid_b} \geq 3 \land B = \absGRSI.(\Transactions_{\txid_b} \cap \Writes))$, as required.

In case (5.d) from the proof of the base case we know there exist $A, C_1 \ne \emptyset$ such that $A \times C_1 \subseteq \absGRSI.\rsihb$; that $A = \absGRSI.\Transactions_{\txid_a} \lor (\stg{a}{\txid_a} \leq 2 \land A = \absGRSI.(\Transactions_{\txid_a} \cap \EReads)) \lor (\stg{a}{\txid_a} \leq 4 \land \exsts{d \in \txid_a.\Writes} a \relarrow{\impGRSI.\refC{\po}} d \land A = \{d\})$, and that either 
A) $C_1 = \absGRSI.\Transactions_{\txid_c}$; or
B) $(\stg{c}{\txid_c} \geq 3 \land C_1 = \absGRSI.(\Transactions_{\txid_c} \cap \Writes))$.

On the other hand, from (\ref{IH:rsi_completeness}) we know there exist $C_2, B \ne \emptyset$ such that $C_2 \times B \subseteq \absGRSI.\rsihb$; that $B = \absGRSI.\Transactions_{\txid_b} \lor (\stg{b}{\txid_b} \geq 3 \land B = \absGRSI.(\Transactions_{\txid_b} \cap \Writes))$; and that either:
i) $C_2 = \absGRSI.\Transactions_{\txid_c}$; or
ii) $\stg{c}{\txid_c} \leq 2 \land C_2 = \absGRSI.(\Transactions_{\txid_c} \cap \EReads)$; 
iii) $\stg{c}{\txid_c} \leq 4 \land \exsts{d \in \txid_c.\Writes} c \relarrow{\impGRSI.\refC{\po}} d \land C_2 = \{d\})$.

In cases (A.i-A.iii) and (B.i, B.iii) from the transitivity of $\absGRSI.\rsihb$ we know there exists $A, B \ne \emptyset$ such that $A \times B \subseteq \absGRSI.\rsihb$; that $A = \absGRSI.\Transactions_{\txid_a} \lor (\stg{a}{\txid_a} \leq 2 \land A = \absGRSI.(\Transactions_{\txid_a} \cap \EReads)) \lor (\stg{a}{\txid_a} \leq 4 \land \exsts{d \in \txid_a.\Writes} a \relarrow{\impGRSI.\refC{\po}} d \land A = \{d\})$; and that $B = \absGRSI.\Transactions_{\txid_b} \lor (\stg{b}{\txid_b} \geq 3 \land B = \absGRSI.(\Transactions_{\txid_b} \cap \Writes))$, as required.
Case (B.ii) cannot arise as we would otherwise have $3 \leq \stg{c}{\txid_c} \leq 2$, leading to a contradiction.

\end{proof}
\end{lemma}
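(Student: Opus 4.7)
\medskip

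\noindent\textit{Proof plan for Lemma~\ref{lem:rsi_completeness}.}
The statement decorates each $\impGRSI.\hb$ edge with a rectangle $A\times B\subseteq \absGRSI.\rsihb$ whose shape (full transaction, external-read prefix, or singleton write) is pinned down by the stage-function $\stg{\cdot}{\cdot}$.  The natural approach is to expand $\impGRSI.\hb=\bigcup_{n\in\Nats}\hb_n$ with $\hb_0=\impGRSI.(\po\cup\rf\cup\rsilo)$ and $\hb_{n+1}=\hb_0;\hb_n$, and induct on $n$.  In each case I would split on the five disjoint possibilities for $(a,b)$: same transaction; both in $\impGRSI.\NT$; $a$ non-transactional and $b\in\impGRSI.\Events_{\txid_b}$; $a\in\impGRSI.\Events_{\txid_a}$ and $b$ non-transactional; and $a,b$ in distinct transactions.

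\medskip

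For the base case ($n=0$) the work is essentially bookkeeping against the construction of $\impGRSI$.  A $\po$-edge crossing transactions yields $\absGRSI.\Transactions_{\txid_a}\times\absGRSI.\Transactions_{\txid_b}\subseteq\absGRSI.\rsipo\subseteq\absGRSI.\rsihb$, a $\rf$-edge into a transactional read extends to all of $\Transactions_{\txid_b}$ via $\absGRSI.\rft$ or $[\NT];\rf;\st$, and the four disjuncts in the definition of $\LO_\x$ contribute respectively $\cot$, $\tlift{(\co;\rf)}$, a $\rsipo$-edge within $\Transactions_{\txid_a}$, and the delicate $\rsifr$-case.  The $\rsifr$-case is exactly where the stage constraints $\stg{a}{\txid_a}\le 2$ and $\stg{b}{\txid_b}\ge 3$ arise, because $\LO_\x$ only installs an edge of this form from the snapshot/reader-release sublocks (stages $1$ and $2$) of the reading transaction to the promotion/writer-unlock sublocks (stages $3$ and $4$) of the writing one.

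\medskip

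For the inductive step, pick a decomposition $(a,c)\in\hb_0$ and $(c,b)\in\hb_m$, and combine the rectangles $A\times C_1\subseteq\rsihb$ (from the base case) and $C_2\times B\subseteq\rsihb$ (from the I.H.).  Whenever $c\in\impGRSI.\NT$ the shapes compose transparently because $C_1=C_2=\{c\}$.  When $c$ lies in some transaction $\txid_c$, the rectangles on either side impose numerical constraints on $\stg{c}{\txid_c}$: the left side may demand $\stg{c}{\txid_c}\ge 3$ (the $\rsifr$-target shape) while the right may demand $\stg{c}{\txid_c}\le 2$ (the $\rsifr$-source shape); whenever these clash we either reduce to $(a,c),(c,b)\in\impGRSI.\po$ and conclude $(a,b)\in\impGRSI.\po$, or the combined rectangle collapses to a $\rsihb$-cycle on a set like $\absGRSI.(\Transactions_{\txid}\cap\Writes)$ or $\absGRSI.(\Transactions_{\txid}\cap\EReads)$, which directly contradicts RSI-consistency of $\absGRSI$.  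The same cycle argument handles the ``same transaction'' conclusion: if $(a,b)\in\impGRSI.\hb$ with $a,b\in\impGRSI.\Events_\txid$ but $(a,b)\notin\impGRSI.\po$, then the intermediate rectangle yields $(d,d)\in\absGRSI.\rsihb$ for some transactional write $d$, again contradicting RSI-consistency.

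\medskip

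The main obstacle is the sheer combinatorics: each side of the composition has six admissible shapes (full transaction, external-read restriction paired with $\stg{\cdot}{\cdot}\le 2$, and singleton write paired with $\stg{\cdot}{\cdot}\le 4$), giving up to $6\times 3$ composition cases at the intermediate endpoint $c$.  The discipline that makes this tractable is the monotonicity $\stg{a}{\txid}\le\stg{c}{\txid}$ whenever $(a,c)\in\impGRSI.\po$ within a transaction, together with the observation that in $\impGRSI$ the three sublocks (snapshot/reader-unlock/promotion) are interleaved with $\po$ so that any stage-boundary violation contradicts the structure of $\trace_\txid$; and the RSI-consistency fact that $\absGRSI.\rsihb$ contains no self-loops on transactional events.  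With these two ingredients every case of the composition either merges the two rectangles into a rectangle of one of the six admissible shapes, falls into the intra-transaction case solved by $\impGRSI.\po$, or exhibits a forbidden $\rsihb$-cycle; this is exactly the case split already enumerated in the I.H.\ bullets and is what the full proof must discharge one by one.
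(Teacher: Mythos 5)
Your plan follows the paper's own proof essentially verbatim: the same expansion $\impGRSI.\hb=\bigcup_n\hb_n$ with induction on $n$, the same five-way split on the positions of $a$ and $b$, the same base-case reading of $\po$, $\rf$ and the four components of $\LO_\x$ into $\rsipo$, $\rft\cup[\NT];\rf;\st$, $\cot$, $\tlift{(\co;\rf)}$ and $\rsifr$ with the stage bounds, and the same rectangle-composition argument in the inductive step using stage monotonicity along $\po$ and acyclicity of $\absGRSI.\rsihb$ to discharge the clashing cases. What remains is only the mechanical enumeration of the composition cases at the intermediate event $c$, which the paper carries out explicitly and your plan correctly anticipates.
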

%
%
%
%
\begin{theorem}[Completeness]
For all RSI execution graphs $\absGRSI$ and their counterpart implementation graphs $\impGRSI$ constructed as above,
\[
	\rsicon  \Rightarrow \consistent{\impGRSI}
\]
\begin{proof}
Pick an arbitrary RSI execution graph $\absGRSI$ and its counterpart implementation graph $\impGRSI$ constructed as above and let us assume that $\rsicon$ holds.
From the \noshade{definition of $\consistent{\impGRSI}$} it then suffices to show: 
\begin{enumerate}
	\item $\irr{\impGRSI.\hb}$ \label{goal:rsi_completeness_hb_irr}
	\item $\irr{\impGRSI.\co ; \impGRSI.\hb}$ \label{goal:rsi_completeness_co_hb_irr}
	\item $\irr{\impGRSI.\fr ; \impGRSI.\hb}$ \label{goal:rsi_completeness_fr_hb_irr} \\
\end{enumerate}

\noindent \textbf{RTS. part \ref{goal:rsi_completeness_hb_irr}}\\
We proceed by contradiction. Let us assume that there exists $a$ such that $(a, a) \in \impGRSI.\hb$.
There are now two cases to consider: 1) $a \in \impGRSI.\NT$; or 2) $\exsts{\txid} a \in \impGRSI.\Events_{\txid}$
In case (1) from \cref{lem:rsi_completeness} we have $(a, a) \in \absGRSI.\rsihb$, contradicting the assumption that $\absGRSI$ is RSI-consistent. 
Similarly, in case (2) from \cref{lem:rsi_completeness} we have $(a, a) \in \impGRSI.\po$, leading to a contradiction as $\impGRSI.\po$ is acyclic by construction. \\

\noindent \textbf{RTS. part \ref{goal:rsi_completeness_co_hb_irr}}\\
We proceed by contradiction. Let us assume that there exist $a, b$ such that $(a, b) \in \impGRSI.\hb$ and $(b, a) \in \impGRSI.\co$.
From the construction of $\impGRSI.\co$ we then know that $(b, a) \in \absGRSI.\co$. Let $\loc a = \loc b = \x$. 
There are then five cases to consider: 
1) $\exsts{\txid} (a, b) \in \impGRSI.\Events_{\txid}$; or 
2) $a, b \in \impGRSI.\NT$; or 
3) $a \in \impGRSI.\NT$ and $b \in \impGRSI.\Events_{\txid_b}$; or 
4) $a \in \impGRSI.\Events_{\txid_a}$ and $b \in \impGRSI.\NT$; or 
5) $a \in \impGRSI.\Events_{\txid_a}$, $b \in \impGRSI.\Events_{\txid_b}$ and $\txid_a \ne \txid_b$. 

In case (1) from \cref{lem:rsi_completeness} we have $(a, b) \in \impGRSI.\po$, and since $a, b \in \impGRSI.\Writes = \absGRSI.\Writes$, from the construction of $\impGRSI.\po$ we have $(a, b) \in \absGRSI.\po$. Moreover, since $a, b$ are write events in the same transaction $\txid$, $(a, b) \in \absGRSI.\poi \cap \Writes^2 \subseteq \absGRSI.\rsihb$. As such we have $a \relarrow{\absGRSI.\rsihb} b \relarrow{\absGRSI.\co} a$, contradicting the assumption that $\absGRSI$ is RSI-consistent. 

In cases (2, 3) since $a, b \in \impGRSI.\Writes = \absGRSI.\Writes$, from \cref{lem:rsi_completeness} we have $(a, b) \in \absGRSI.\rsihb$.
As such we have $a \relarrow{\absGRSI.\rsihb} b \relarrow{\absGRSI.\co} a$, contradicting the assumption that $\absGRSI$ is RSI-consistent. 

Similarly, in cases (4, 5) since $a, b \in \impGRSI.\Writes = \absGRSI.\Writes$, from \cref{lem:rsi_completeness} we have either i) $(a, b) \in \absGRSI.\rsihb$; or ii) $\exsts{d \in \txid_a.\Writes} a \relarrow{\impGRSI.\refC{\po}} d \land (d, b) \in \absGRSI.\rsihb$.
On the other hand, since in case (ii) $a, d \in \impGRSI.\Events_{\txid_a}$, $a \relarrow{\impGRSI.\refC{\po}} d$ and $a, d$ are both write events, we also have $a \relarrow{\absGRSI.\refC{\po}} d$. Moreover, since $\absGRSI.\poi \cap \Writes^2 \subseteq \absGRSI.\rsihb$, we have $(a, d) \in \absGRSI.\refC{\rsihb}$. 
As such we have $a \relarrow{\absGRSI.\refC{\rsihb}} d \relarrow{\absGRSI.\rsihb} b$ and from the transitivity of $\absGRSI.\rsihb$ in both cases we have $(a, b) \in \absGRSI.\rsihb$.
Consequently, we have $a \relarrow{\absGRSI.\rsihb} b \relarrow{\absGRSI.\co} a$, contradicting the assumption that $\absGRSI$ is RSI-consistent. \\

\noindent \textbf{RTS. part \ref{goal:rsi_completeness_fr_hb_irr}}\\
We proceed by contradiction. Let us assume that there exists $a, b$ such that $(a, b) \in \impGRSI.\hb$ and $(b, a) \in \impGRSI.\fr$.
Let $\loc a = \loc b = \x$. 
There are then five cases to consider: 
1) $\exsts{\txid} (a, b) \in \impGRSI.\Events_{\txid}$; or 
2) $a, b \in \impGRSI.\NT$; or 
3) $a \in \impGRSI.\NT$ and $b \in \impGRSI.\Events_{\txid_b}$; or 
4) $a \in \impGRSI.\Events_{\txid_a}$ and $b \in \impGRSI.\NT$; or 
5) $a \in \impGRSI.\Events_{\txid_a}$, $b \in \impGRSI.\Events_{\txid_b}$ and $\txid_a \ne \txid_b$. \\

In case (1) from \cref{lem:rsi_completeness} we have $(a, b) \in \impGRSI.\po$.
On the other hand since $(b, a) \in \impGRSI.\fr$, from the construction of $\impGRSI$ we know that $(b, a) \in \impGRSI.\po$. As such, since $\impGRSI.\po$ is transitively closed and $a \relarrow{\impGRSI.\po} b \relarrow{\impGRSI.\po} a$, we have $(a, a) \in \impGRSI.\po$, leading to a contradiction as $\impGRSI$ is acyclic by construction. 

In case (2) from the construction of $\impGRSI.\fr$ we know that $(a, b) \in \absGRSI.\fr$. 
On the other hand, from \cref{lem:rsi_completeness} we have $(a, b) \in \absGRSI.\rsihb$.
As such we have $a \relarrow{\absGRSI.\rsihb} b \relarrow{\absGRSI.\fr} a$, contradicting the assumption that $\absGRSI$ is RSI-consistent. 

In case (3) from the construction of $\impGRSI.\fr$ we know that $b = \txid_b.\mathit{rs}_{\x} \lor b = \txid_b.\mathit{vs}_{\x}$. 
Consequently, from \cref{lem:rsi_completeness} we have $\{a\} \times \absGRSI.\Transactions_{\txid_b} \subseteq \absGRSI.\rsihb$.
On the other hand, from the construction of $\impGRSI.\fr$ we know there exists $r \in \absGRSI.\Transactions_{\txid_b}$ such that $(r, a) \in \absGRSI.\fr$. 
As such, we have $a \relarrow{\absGRSI.\rsihb} r \relarrow{\absGRSI.\fr} a$, contradicting the assumption that $\absGRSI$ is RSI-consistent. 

In case (4) from the construction of $\impGRSI.\fr$ we know that $(a, b) \in \absGRSI.\fr$. 
Moreover, since $a \in \impGRSI.\Writes$, from \cref{lem:rsi_completeness} we have either i) $(a, b) \in \absGRSI.\rsihb$; or ii) $\exsts{d \in \txid_a.\Writes} a \relarrow{\impGRSI.\refC{\po}} d \land (d, b) \in \absGRSI.\rsihb$.
As such, since $a, d$ are both write events and $a \relarrow{\impGRSI.\refC{\po}} d$, we also have $a \relarrow{\absGRSI.\refC{\po}} d$. Moreover, since $\absGRSI.\poi \cap \Writes^2 \subseteq \absGRSI.\rsihb$, we have $(a, d) \in \absGRSI.\refC{\rsihb}$. 
As such we have $a \relarrow{\absGRSI.\refC{\rsihb}} d \relarrow{\absGRSI.\rsihb} b$ and from the transitivity of $\absGRSI.\rsihb$ in both cases (i, ii) we have $(a, b) \in \absGRSI.\rsihb$.
Consequently, we have $a \relarrow{\absGRSI.\rsihb} b \relarrow{\absGRSI.\fr} a$, contradicting the assumption that $\absGRSI$ is RSI-consistent.

In case (5) from the construction of $\impGRSI.\fr$ we know that $b = \txid_b.\mathit{rs}_{\x} \lor b = \txid_b.\mathit{vs}_{\x}$. 
Consequently, from \cref{lem:rsi_completeness} and since $a$ is a write event, we have either
i) $\absGRSI.\Transactions_{\txid_a} \times \absGRSI.\Transactions_{\txid_b} \subseteq \absGRSI.\rsihb$; or
ii) $\exsts{d \in \txid_a.\Writes} a \relarrow{\impGRSI.\refC{\po}} d \land\{d\} \times \absGRSI.\Transactions_{\txid_b} \subseteq \absGRSI.\rsihb$. 
As such, since $a, d$ are both write events and $a \relarrow{\impGRSI.\refC{\po}} d$, we also have $a \relarrow{\absGRSI.\refC{\po}} d$. Moreover, since $\absGRSI.\poi \cap \Writes^2 \subseteq \absGRSI.\rsihb$, we have $(a, d) \in \absGRSI.\refC{\rsihb}$. 
As such we have $a \relarrow{\absGRSI.\refC{\rsihb}} d \relarrow{\absGRSI.\rsihb} b$ and from the transitivity of $\absGRSI.\rsihb$ in both cases (i, ii) we have $\{a\} \times \absGRSI.\Transactions_{\txid_b} \subseteq \absGRSI.\rsihb$.
On the other hand, from the construction of $\impGRSI.\fr$ we know there exists $r \in \absGRSI.\Transactions_{\txid_b}$ such that $(r, a) \in \absGRSI.\fr$. 
Consequently, we have $a \relarrow{\absGRSI.\rsihb} b \relarrow{\absGRSI.\fr} a$, contradicting the assumption that $\absGRSI$ is RSI-consistent. 
\end{proof}
\end{theorem}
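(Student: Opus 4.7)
}
The plan is to reduce the stated outcome equality to a pair of graph-level correspondences, one for each implementation variant, and to prove those correspondences by a soundness construction (from an RA-consistent implementation graph to an RSI-consistent abstract graph) and a completeness construction (in the opposite direction), each preserving the final values of local variables. Concretely, for $\textsc{x}\in\{\textsc{e},\textsc{l}\}$ I would show:
\begin{itemize}
\item (Soundness) For every RA-consistent execution graph $\impGRSI$ of $\denot{P}_{\textsc x}$ with outcome $\omega$, there is an RSI-consistent execution graph $\absGRSI$ of $P$ with the same outcome $\omega$.
\item (Completeness) For every RSI-consistent execution graph $\absGRSI$ of $P$ with outcome $\omega$, there is an RA-consistent execution graph $\impGRSI$ of $\denot{P}_{\textsc x}$ with outcome $\omega$.
\end{itemize}
Outcome preservation in both directions is routine: the abstraction collapses each transaction trace to the sequence of its reads/writes (in the eager case), or to the reads and the values logged in \code{s}/\code{wseq} (in the lazy case), and in every case local-variable final values coincide.

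For the eager implementation I would adapt, almost verbatim, the construction already used for SI in Appendix~\ref{app:si}, but with two modifications. First, the abstraction map $\trace_{\txid}\mapsto\trace'_{\txid}$ is augmented so that the duplicated $\mathit{vs}_{\x}$ snapshot reads are mapped consistently with $\mathit{rs}_{\x}$, and the induced $\absGRSI.\rf$ relates the source write to both copies. Second, non-transactional events are carried over verbatim into $\absGRSI$, and $\absGRSI.\rf$ is extended with the implementation $\rf$ edges whose target is non-transactional. The key new lemma to establish is the analogue of Lemma~\ref{lem:si_soundness}, namely that $\absGRSI.\rsihb' \subseteq \impGRSI.\hb$ (where $\rsihb'$ drops the $\rsifr$ suffix), proved by induction on the structure of $\rsihb$, using the MRSW lock axioms of Definition~\ref{def:si_implementation_consistency} to convert each of the four shape clusters of $\rsihb_0$ (namely $\rsipo$, $\rsirf$, $\cot$, $\rsifr$) into an $\hb$ path through the implementation's lock acquire/release/promote events; the non-trivial new case is $\rsipo$ between two intra-transaction writes, which is handled because writes are performed in place and thus $\impGRSI.\poi$ orders them directly. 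For completeness I would reuse the $\sphase[\code x]{i}$ / $\TCO$ machinery of Appendix~\ref{subapp:si_completeness} verbatim, define $\lo$ as the weaker $\LO_\x$ (without reader-reader synchronisation), and additionally account for non-transactional events by placing them into the correct phase based on $\rsihb$; this is exactly where completeness depends on the weaker MRSW specification (cf.\ the \eqref{ex:SBT} example).

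For the lazy implementation I would similarly extend Appendices~\ref{app:si_alternative}. The abstraction for soundness collapses each $\wseq{\x}{v}{-,-}\relarrow{\po}\mathit{lw}_{\x}\relarrow{\po}\mathit{wws}_{\x}$ block to a single abstract write, and $\absGRSI.\rf$ is defined so that external transactional reads come from the \emph{maximal} logged writer $\fw{\x}$ in the source transaction (since that is the one actually propagated to memory at commit time), while intermediate writes remain visible to non-transactional reads between commit and subsequent propagation, giving the $\co$ sequence induced by the $\iw$--$\fw$ chain. For completeness, the constructed implementation trace includes a single successful iteration of the validation loop; this is where the hypothesis on non-transactional writes is crucial. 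The validation loop reads each $\x\in\readset$ twice and succeeds exactly when both reads return the same value. Under the stated uniqueness/$\hb$-ordering hypothesis, an RSI-consistent execution that has the transaction read a particular value for $\x$ can be realised so that the second read returns the same value: either (i) at most one non-transactional write of that value exists, so the abstract $\rf$ edge pins the second read to the same writer as the first, or (ii) all non-transactional writes of that value to $\x$ are $\hb$-ordered with the transaction, so the two reads cannot straddle two different non-transactional writes of the same value.

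The main obstacle I anticipate is the lazy completeness argument, and more specifically two intertwined subtleties. The first is scheduling the validation loop: given the RSI-consistent $\absGRSI$, I must insert the second round of snapshot reads so that their $\rf$ sources agree with the first round in value; this requires a case analysis on whether each $\readset$ location is also read or written non-transactionally, and uses the hypothesis precisely to rule out the ``stuck'' case. The second subtlety is the write-sequence propagation: the lazy implementation propagates \emph{all} logged writes in FIFO order (not just the last per location), so the $\co$ relation built on the implementation side must be shown to agree with $\absGRSI.\co$ even in the presence of concurrent non-transactional writes and intermediate transactional writes. I expect this to be handled by defining $\impGRSI.\co$ through a combination of the intra-transaction $\po$ order on writes and an $\LO$-induced order between transactions (analogous to $\sphase$/$\TCO$), then verifying acyclicity of $\hbloc\cup\co\cup\fr$ via a graph-level lemma of the same shape as Lemma~\ref{lem:rsi_completeness}, with an additional case for multiple in-place propagation writes to the same location within one commit block.
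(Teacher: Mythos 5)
Your plan for this particular statement --- the completeness of the \emph{eager} RSI implementation --- diverges from the paper at exactly the point where the real work happens, and the divergence hides a genuine gap. You propose to ``reuse the $\sphase[\code x]{i}$\,/\,$\TCO$ machinery of \cref{subapp:si_completeness} verbatim \ldots{} and additionally account for non-transactional events by placing them into the correct phase based on $\rsihb$.'' Two problems. First, $\TCO$ is seeded from $\acyc{\refC{\frt};(\pot\cup\cot\cup\rft)}$, which is exactly what SI-consistency provides; RSI-consistency only gives you $\ref{ax:si_int}$ plus $\acyc{\rsihbloc\cup\co\cup\fr}$, a \emph{per-location} condition, so the global total order on transaction classes that the SI argument linearises is not available for free, and it is not obvious what relation you would extend to obtain it once $\rsipo$ and non-transactional $\rf$ edges enter the picture. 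Second, the phase decomposition $\sphase[\code x]{i}$ is driven by the $\co$/$\rf$ chain among \emph{lock-holding} writers to $\x$; a non-transactional write can sit $\co$-between two transactional writers without acquiring any lock, so ``placing it in a phase'' does not yield any $\lo$ edge and the auxiliary lemmas (the analogues of \cref{lem:si_completeness_aux2,lem:si_completeness_aux3}) that convert $\hb$ paths into $\TCO$ or $\poi$ edges no longer go through.

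The paper abandons the $\TCO$/phase machinery for RSI altogether. It defines $\LO_\x$ \emph{directly} from the abstract relations --- e.g.\ $(\txid.\mathit{wu}_\x,\txid'.\mathit{rl}_\x)\in\LO_\x$ whenever some $\co$ edge relates the two transactions at $\x$, and similarly for $\rf$, $\refC{\co};\rf$ and $[\EReads];\fr$ --- and then proves \cref{lem:rsi_completeness}: an induction on the length of $\hb$ paths showing that every implementation $\hb$ edge either collapses to $\impGRSI.\poi$ or projects to $A\times B\subseteq\absGRSI.\rsihb$ for sets $A,B$ determined by whether each endpoint is non-transactional and by its \emph{stage} $\stg{\cdot}{\cdot}$ within the transaction trace (snapshot, reader-unlock, promote, rest). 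The three goals $\irr{\hb}$, $\irr{\co;\hb}$, $\irr{\fr;\hb}$ then each fall out by a case split on the endpoints, contradicting RSI-consistency of $\absGRSI$. If you want to salvage your route you would have to prove that a suitable global order exists and that non-transactional accesses never need to be ordered by it, at which point you have essentially reconstructed the stage-annotated projection lemma; I would recommend adopting the paper's structure instead.
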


\newcommand{\ws}{\ensuremath{ws}}
\newcommand{\mw}[1]{\func{mw}{#1}}
\newcommand{\mwi}[1]{\funcFont{mw}^{-1}(#1)}
\newpage
\section{Soundness and Completeness of the Lazy RSI Implementation}\label{app:rsi_alternative}
Given an execution graph $\impGRSI$ of the lazy RSI implementation, let us assign a transaction identifier to each transaction executed by the program; and given a transaction $\txid$. 
Let $\mathit{RS}^0_{\txid} = \mathit{WS}^0_{\txid} = \emptyset$. 
Let us write $\impGRSI.\NT$ for those events in $\impGRSI.\Events$ that do not occur in a transaction.
Observe that given a transaction $\txid$ of the lazy RSI implementation, the trace of $\txid$, written $\trace_{\txid}$, is of the form: 
\[
	\mathit{Fs}
	\relarrow{\imm \po} \mathit{Is}
	\relarrow{\imm \po} \mathit{Ts}
	\relarrow{\imm \po} \mathit{VRs}
	\relarrow{\imm \po} \mathit{RUs}
	\relarrow{\imm \po} \mathit{PLs}
	\relarrow{\imm \po} \mathit{Ws}
	\relarrow{\imm \po} \mathit{Us}
\]
where:
\begin{itemize}
	\item $\mathit{Fs}$ denotes the sequence of events failing to obtain the necessary locks (i.e.\ those iterations that do not succeed in promoting the writer locks) or validate the snapshot;
	\item $\mathit{Is}$ denotes the sequence of events initialising the values of \code{LS}, \readset and  \writeset with $\emptyset$, initialising \code{ws} with \code{[]} and initialising \code{s[x]} with $(\bot, \bot)$ for each location \x;
	\item $\mathit{Ts}$ denotes the sequence of events corresponding to the execution of \denot{\code{T}} and is of the form $\mathit{t}_1 \relarrow{\imm \po} \cdots \relarrow{\imm \po} \mathit{t}_k$, where for all $m \in \{1 \cdots k\}$:
	\[
	\mathit{t}_m {=} 
	\begin{cases}
		\rseq{\code x_m}{v_m}{\mathit{RS}_{m {-} 1},  \mathit{WS}_{m {-} 1}}
		\relarrow{\imm \po} \mathit{lr}_{\x_m}
		& \hspace{-5pt}\text{if } O_m {=} \readE{-}{\code x_m}{v_m} \\
		\wseq{\code x_m}{v_m}{\mathit{RS}_{m {-} 1},  \mathit{WS}_{m {-} 1}} 
		\relarrow{\imm \po} \mathit{wws}_{\x_m} 
		& \hspace{-5pt} \text{if } O_m {=} \writeE{\rel}{\code x_m}{v_m} \\
		\relarrow{\imm \po} \readE{}{\code{s[x}_m\code ]}{(v'_m, -)}
		\relarrow{\imm \po} \mathit{lw}_{\x_m}
		\relarrow{\imm \po} \writeE{}{\code{ws}}{\mathit{ws}_m}
	\end{cases}
	\]
	where $O_m$ denotes the $m$th event in the trace of $\code T$; 
	$\mathit{lr}_{\x_m} \eqdef \readE{}{\code{s[x}_m\code ]}{(-,v_m)}$; 
\[
\small
	\rseq{\code x_m}{v_m}{\mathit{RS}_{m {-} 1},  \mathit{WS}_{m {-} 1}}
	\eqdef
	\begin{cases}
		\readE{}{\code{s[x}_m\code ]}{(\bot,\bot)}
		& \text{if } \code x_m \not\in \mathit{RS}_{m {-} 1} \cup  \mathit{WS}_{m {-} 1} \\
		\relarrow{\imm \po} \mathit{fs_m}  \\
		\relarrow{\imm \po} \mathit{rl}_{\x_m} \\
		\relarrow{\imm \po} \mathit{wrs}_{\x_m} \\
		\relarrow{\imm \po} \mathit{rs}_{\x_m} \\
		\relarrow{\imm \po} \mathit{ws}_{\x_m}
		& \\\\
		\emptyset
		& \text{otherwise} 
	\end{cases}
\]	
$\mathit{fs}_m$ denotes the sequence of events attempting (but failing) to acquire the read lock on $\code x_m$,  
$\mathit{rl}_{\code x_m} \eqdef \rlockE{\xl_m}$, 
$\mathit{wrs}_{\x_m} \eqdef \writeE{}{\readset}{\mathit{RS}_m}$, 
$\mathit{rs}_{\code x_m} \eqdef \readE{}{\code x_m}{v_{\x_m}^0}$, 
$\mathit{ws}_{\code x_m} \eqdef \writeE{}{\code{s[x}_m\code ]}{(v_{\x_m}^0, v_{\x_m}^0)}$; 
and for all $m > 0$:
\[
	\mathit{RS}_{m {+} 1} \eqdef 
	\begin{cases}
		\mathit{RS}_m \cup \{\code x_m\} & \text{if } O_m {=} \readE{-}{\code x_m}{-} \\
		\mathit{RS}_m & \text{otherwise}
	\end{cases}
\]
and
\[
\small
	\wseq{\code x_m}{v_m}{\mathit{RS}_{m {-} 1},  \mathit{WS}_{m {-} 1}}
	\eqdef
	\begin{cases}
		\readE{}{\code{s[x}_m\code ]}{(\bot,\bot)}
		& \text{if } \code x_m \not\in \mathit{RS}_{m {-} 1} \cup  \mathit{WS}_{m {-} 1} \\
		\relarrow{\imm \po} \mathit{fs_m}  \\
		\relarrow{\imm \po} \mathit{rl}_{\x_m}
		\emptyset
		& \text{otherwise} 
	\end{cases}
\]	
$\mathit{wws}_{\x_m} \eqdef \writeE{}{\writeset}{\mathit{WS}_m}$; 
$\mathit{lw}_{\x_m} \eqdef \writeE{}{\code{s[x}_m\code ]}{(v'_m, v_m)}$; 
$\mathit{fs}_m$ and $\mathit{rl}_{\code x_m}$ are as defined above; and for all $m > 0$:
\[
\begin{array}{@{} c @{}}
	\mathit{WS}_{m {+} 1} \eqdef 
	\begin{cases}
		\mathit{WS}_m \cup \{\code x_m\} & \text{if } O_m {=} \writeE{-}{\code x_m}{-} \\
		\mathit{WS}_m & \text{otherwise}
	\end{cases} \\
	\mathit{ws}_{m {+} 1} \eqdef 
	\begin{cases}
		\mathit{ws}_m {++} [(\x_m, v_m)] & \text{if } O_m {=} \writeE{-}{\code x_m}{v_m} \\
		\mathit{ws}_m & \text{otherwise}
	\end{cases}
\end{array}
\]	
Let $\readset_\txid = \mathit{RS}_m$,  
$\writeset_\txid = \mathit{WS}_m$, and
$\ws_{\txid} = \ws_m$;
let $\readset_\txid \cup \writeset_\txid$ be enumerated as $\{\x_1 \cdots \x_i\}$ for some $i$, 
and $\ws_\txid$ be enumerated as $\{(\x_1, v_1) \cdots (\x_j, v_j)\}$ for some $j$.
	\item $\mathit{VRs}$ denotes the sequence of events validating the reads and is of the  form $\mathit{v}_{\code x_1} \relarrow{\imm \po} \cdots \relarrow{\imm \po} \mathit{v}_{\code x_i}$, where for all $n \in \{1 \cdots i\}$:
	\[
	\begin{array}{l}
		\mathit{v}_{\code x_n} = 
		\begin{cases}
			\readE{}{\code{s[x$_n$]}}{(v_{\x_n}^0, -)}	
			\relarrow{\imm \po} \mathit{vr}_{\code x_n} {=} \readE{}{\x_n}{v_{\x_n}^0}
			& \text{ if } \code x_n \in \readset_{\txid}  \\
			\emptyset
			& \text{ otherwise}
		\end{cases}
	\end{array}	
	\]
	\item $\mathit{RUs}$ denotes the sequence of events releasing the reader locks (when the given location is in the read set only) and is of the form $\mathit{ru}_{\code x_1} \relarrow{\imm \po} \cdots \relarrow{\imm \po} \mathit{ru}_{\code x_i}$, where for all $n \in \{1 \cdots i\}$:
	\[
	\begin{array}{l}
		\mathit{ru}_{\code x_n} = 
		\begin{cases}
			\runlockE{\xl_n}
			& \text{ if } \code x_n \not\in \writeset_{\txid}  \\
			\emptyset
			& \text{ otherwise}
		\end{cases}
	\end{array}	
	\]
	\item $\mathit{PLs}$ denotes the sequence of events promoting the reader locks to writer ones (when the given location is in the write set), and is of the form $\mathit{pl}_{\code x_1} \relarrow{\imm \po} \cdots \relarrow{\imm \po} \mathit{pl}_{\code x_i}$, where for all $n \in \{1 \cdots i\}$:
	\[
	\begin{array}{l}
		\mathit{pl}_{\code x_n} = 
		\begin{cases}
			\plockE{\xl_n}
			& \text{if }  \code x_n \in \writeset_{\txid} \\
			\emptyset
			& \text{ otherwise } 
		\end{cases}
	\end{array}	
	\]
	\item $\mathit{Ws}$ denotes the sequence of events committing the writes of \denot{\code{T}} and is of the form $\mathit{c}_{\x_1, v_1} \relarrow{\imm \po} \cdots \relarrow{\imm \po} \mathit{c}_{\x_j, v_j}$, where for all $n \in \{1 \cdots j\}$:
$
	\mathit{c}_{\x_n, v_n} =  \writeE{}{\x_n}{v_n} 
$
	\item $\mathit{Us}$ denotes the sequence of events releasing the locks on the write set, and is of the form $\mathit{wu}_{\code x_1} \relarrow{\imm \po} \cdots \relarrow{\imm \po} \mathit{wu}_{\code x_i}$, where for all $n \in \{1 \cdots i\}$:
	\[
		\mathit{wu}_{\code x_n} = 
		\begin{cases}
			\wunlockE{\code{xl}_n} & \text{if } \code x_n \in \writeset_{\txid} \\
			\emptyset & \text{otherwise}
		\end{cases}				
	\]
\end{itemize}


Given a transaction trace $\trace_{\txid}$, we write e.g.~$\txid.\mathit{Us}$ to refer to its constituent $\mathit{Us}$ sub-trace and write $\mathit{Us}.\Events$ for the set of events related by \po in $\mathit{Us}$. Similarly, we write $\txid.\Events$ for the set of events related by \po in $\trace_{\txid}$.
Note that $\impGRSI.\Events = \bigcup\limits_{\txid \in \sort{Tx}}  \txid.\Events $.

Note that for each transaction $\txid$ and each location $\x$, the $\txid.\mathit{rl}_\x$, $\txid.\mathit{rs}_\x$, $\txid.\mathit{vr}_\x$, $\txid.\mathit{ru}_\x$, $\txid.\mathit{pl}_\x$ and $\txid.\mathit{wu}_\x$ are uniquely identified when they exist. 
Let $\txid.\mathit{w}_\x$ the last (in \po order) write to \x in $\mathit{Ws}$, when it exists.

For each location $\x \in \writeset_\txid$, let $\fw \x$ denote the maximal write (in $\po$ order within $\txid$) logging a write for \x in \code{s[x]}. That is,  when $\trace_{\txid}=  t_1 \relarrow{\imm \po} \cdots \relarrow{\imm \po} t_m$, let $\fw \x = \func{wmax}{\x, [t_1 \cdots t_m]}$, where
\[
\begin{array}{@{} c @{}}
	 \func{wmax}{\x, [\,]} \text{ undefined} \\
	  \func{wmax}{\x, L {++} [t]}
	  \eqdef
	  \begin{cases}
	  	\mathit{lw}_{\x}
	  	& \text{if } t {=} \wseq{\x}{-}{-, -} \relarrow{\po} \mathit{lw}_{\x}  \relarrow{\po} \mathit{wws}_{\x} \\
	  	 \func{wmax}{\x, L}
	  	 & \text{otherwise}
	  \end{cases}
\end{array}
\]
%
%
%
%
%
\subsection{Implementation Soundness}
In order to establish the soundness of our implementation, it suffices to show that given an RA-consistent execution graph of the implementation $\impGRSI$, we can construct a corresponding RSI-consistent execution graph $\absGRSI$ with the same outcome.
%

Given a transaction $\txid \in \sort{Tx}$ with $\readset_{\txid} \cup \writeset_{\txid}=\{\code x_1 \cdots \code x_i\}$
and trace 
$ \trace_{\txid}$ as above with  $\mathit{Ts}= \mathit{t}_1 \relarrow{\imm \po} \cdots \relarrow{\imm \po} \mathit{t}_k$, 
we construct the corresponding RSI execution trace $\trace'_{\txid}$ as follows:
\[
	\trace'_{\txid} \eqdef  \mathit{t}'_1 \relarrow{\imm \po} \cdots \relarrow{\imm \po} \mathit{t}'_k
\]
where  for all $m \in \{1 \cdots k\}$:
\[
\begin{array}{l c l}
	\mathit{t}'_m {=} \readE{}{\code x_m}{v_m}
	& \text{when} & 
	\mathit{t}_m = 
	\rseq{\x_m}{v_m}{S_m}
	\relarrow{\imm \po} \mathit{lr}_{\x_m} \\
	
	\mathit{t}'_m {=} \writeE{}{\code x_m}{v_m}
	& \text{when} & 
	\mathit{t}_m = 
	\wseq{\x_m}{v_m}{S_m}
	\relarrow{\imm \po} \mathit{wws}_{\x_m}
		\relarrow{\imm \po} \cdots
\end{array}	
\]
such that in the first case the identifier of $\mathit{t}'_m$ is that of $\mathit{lr}_{\x_m}$; 
and in the second case the identifier of $\mathit{t}'_m$ is that of $\mathit{lw}_{\x_m}$.
Note that for each write operation $w$ in $\trace'_{\txid}$, there exists a \emph{matching} write operation in $\trace_{\txid}.\mathit{Ws}$, denoted by $\mw{w}$. 
That is, 
\[
	\mw{w} {=} w'
	\iffdef
	\exsts{i}
	\land \itemAt{(\trace'_{\txid}.\Events \cap \Writes)}{i} = w
	\land  \itemAt{(\trace_{\txid}.\mathit{Ws})}{i} = w'
\]
We then define:
\[
	\mathsf{RF}_{\txid} \eqdef
	\begin{array}[t]{@{} l @{}}
	\setcomp{
		(w, t'_j)	
	}{
		t'_j \in \trace'_{\txid}.\Events \land \exsts{\code x, v} t'_j {=} \readE{\acq}{\code x}{v} \land w {=} \writeE{\rel}{\code x}{v} \\
		\land (w \in \txid.\Events \Rightarrow 
		\begin{array}[t]{@{} l @{}}
			w \relarrow{\po} t'_j \,\land\\
			(\for{e \in \txid.\Events } w \relarrow{\po} e \relarrow{\po} t'_j \Rightarrow (\loc e {\ne} \code x \lor e {\not\in} \Writes)))
		\end{array} \\				
		
		\land (w \not \in \txid.\Events \Rightarrow 
		\begin{array}[t]{@{} l @{}}
			(\for{e \in \txid.\Events} (e \relarrow{\po} t'_j \Rightarrow (\loc e \ne \code x \lor e \not\in \Writes)) \\
			\land\, 
			\big(
			(\exsts{\txid'} 
		 	(\txid'.w_\x, \txid.\mathit{rs}_{\x})  \in \impGRSI.\rf)
			\land w {=} \txid'.\fw \x) \\
			\;\quad \lor 
			(w \in \impGRSI.\NT \land (w, \mathit{rs}_{\x})  \in \impGRSI.\rf)
			\big)
		\end{array}		
	} \\
	\cup
	\setcomp{
		(w, r)	
	}{
		r \in \impGRSI.\NT \,\land \\
		\big(
			(w \in \impGRSI.\NT \land (w, r) \in \impGRSI.\rf)  \\
			\quad \lor (\exsts{w'} 
			w' = \mw{w}
			\land (w', r) \in \impGRSI.\rf) 
		\big)	
	}
	\end{array}	
\]
Similarly, we define: 
\[
	\mathsf{MO} \eqdef
	\begin{array}[t]{@{} l @{}}
	\setcomp{
		(w_1, w_2)
	}{
		\exsts{w'_1, w'_2} 
		w'_1 = \mw{w_1}
		\land w'_2 = \mw{w_2}
		\land (w'_1, w'_2) \in \impGRSI.\co 
	} \\
	\cup
	\setcomp{
		(w, w')	
	}{
		w, w' \in \impGRSI.\NT \land (w, w') \in \impGRSI.\co	
	}
	 \\
	\cup
	\setcomp{
		(w, w')	
	}{
		w' \in \impGRSI.\NT \land 
		\exsts{w''} 
		w'' = \mw{w}
		\land (w'', w') \in \impGRSI.\co 
	} \\
	\cup 
	\setcomp{
		(w', w)	
	}{
		w' \in \impGRSI.\NT \land 
		\exsts{w''} 
		w'' = \mw{w}
		\land (w', w'') \in \impGRSI.\co 
	}
\end{array}	
\]
We are now in a position to demonstrate the soundness of our implementation. Given an RA-consistent execution graph $\impGRSI$ of the implementation, we construct an RSI execution graph $\absGRSI$ as follows and demonstrate that $\rsicon$ holds.

\begin{itemize}
	\item $\absGRSI.\Events = \bigcup\limits_{\txid \in \sort{Tx}} \trace'_{\txid}.\Events \cup \impGRSI.\NT$, with the $\tx{.}$ function defined as:
	\[
		\tx{a} \eqdef \txid \quad \text{ when } \quad a \in \trace'_{\txid}
		\qquad 
		\tx{a} \eqdef 0 \quad \text{ when } \quad a \in \impGRSI.\NT
	\]
	\item $\absGRSI.\po = \coerce{\impGRSI.\po}{\absGRSI.\Events}$
	\item $\absGRSI.\rf = \bigcup_{\txid \in \textsc{Tx}} \mathsf{RF}_{\txid}$
	\item $\absGRSI.\co = \mathsf{MO}$
	\item $\absGRSI.\silo = \emptyset$
\end{itemize}
Observe that the events of each $\trace'_{\txid}$ trace coincides with those of the equivalence class  $\Transactions_{\txid}$ of $\absGRSI$. That is,  $\trace'_{\txid}.\Events = \Transactions_{\txid}$. 

\begin{lemma}\label{lem:rsi_alt_lock_hb}
Given an RA-consistent execution graph $\impGRSI$ of the implementation and its corresponding RSI execution graph $\absGRSI$ constructed as above, for all $a, b, \txid_a, \txid_b, \x$:
\small
\begin{align}
	&  \hspace*{-15pt}\txid_a \ne \txid_b
	\land a \in \txid_a.\Events 
	\land b \in \txid_b.\Events 
	\land \loc a = \loc b = \x
	\Rightarrow \nonumber \\
	& \hspace*{-15pt} \;\;  
	((a, b) \in \absGRSI.\rf \Rightarrow  \txid_a.\mathit{wu}_{\x} \relarrow{\impGRSI.\hb} \txid_b.\mathit{rl}_{\x} ) 
	\label{lem:rsi_alt_lock_hb_rf} \\
	& \hspace*{-15pt}\;\; \land ((a, b) \in \absGRSI.\co \Rightarrow  \txid_a.\mathit{wu}_{\x} \relarrow{\impGRSI.\hb} \txid_b.\mathit{rl}_{\x} ) 
	\label{lem:rsi_alt_lock_hb_co}\\
	& \hspace*{-15pt}\;\; \land \big((a, b) \in \absGRSI.\fr \Rightarrow  
		(\x \in \writeset_{\txid_a} \land \txid_a.\mathit{wu}_{\x} \relarrow{\impGRSI.\hb} \txid_b.\mathit{rl}_{\x} ) 
		\lor 
		(\x \not\in \writeset_{\txid_a} \land \txid_a.\mathit{ru}_{\x} \relarrow{\impGRSI.\hb} \txid_b.\mathit{pl}_{\x} ) 
	\big)
	\label{lem:rsi_alt_lock_hb_fr} \\
	& \hspace*{-15pt} \;\; \land ((a, b) \in \absGRSI.(\co;\rf) \Rightarrow  \txid_a.\mathit{wu}_{\x} \relarrow{\impGRSI.\hb} \txid_b.\mathit{rl}_{\x} ) 
	\label{lem:rsi_alt_lock_hb_co_rf} 
\end{align}	
\normalsize
\begin{proof}
Pick an arbitrary RA-consistent execution graph $\impGRSI$ of the implementation and its corresponding RSI execution graph $\absGRSI$ constructed as above. Pick an arbitrary $a, b, \txid_a, \txid_b, \x$ such that $\txid_a \ne \txid_b$, $a \in \txid_a.\Events$, $a \in \txid_a.\Events$, and $\loc a = \loc b = \x$.\\

\noindent \textbf{RTS. (\ref{lem:rsi_alt_lock_hb_rf})}\\
Assume $(a, b) \in \absGRSI.\rf$. From the definition of $\absGRSI.\rf$ we then know $(\txid_a.w_\x, \txid_b.\mathit{rs}_{\x}) \in \impGRSI.\rf$.
On the other hand, from \cref{lem:lock-ordering} we  know that either i) $\x \in \writeset_{\txid_b}$ and $\txid_b.\mathit{wu}_{x} \relarrow {\impGRSI.\hb} \txid_a.\mathit{rl}_{x}$; or ii)  $\x \not\in \writeset_{\txid_b}$ and  $\txid_b.\mathit{ru}_{x} \relarrow {\impGRSI.\hb} \txid_a.\mathit{pl}_{x}$; or iii) $\txid_a.\mathit{wu}_{x} \relarrow {\impGRSI.\hb} \txid_b.\mathit{rl}_{x}$.
In case (i) we then have $\txid_a.w_\x \relarrow{\impGRSI.\rf} \txid_b.\mathit{rs}_{\x} \relarrow{\impGRSI.\po} \txid_b.\mathit{wu}_{x}  \relarrow{\impGRSI.\hb} \txid_a.\mathit{rl}_{x}  \relarrow{\impGRSI.\po} \txid_a.w_\x$. That is, we have $\txid_a.w_\x \relarrow{\impGRSI.\hbloc} \txid_a.w_\x$, contradicting the assumption that $\impGRSI$ is RA-consistent. 
Similarly in case (ii) we have $\txid_a.w_\x \relarrow{\impGRSI.\rf} \txid_b.\mathit{rs}_{\x} \relarrow{\impGRSI.\po} \txid_b.\mathit{ru}_{x}  \relarrow{\impGRSI.\hb} \txid_a.\mathit{pl}_{x}  \relarrow{\impGRSI.\po} \txid_a.w_\x$.  That is, we have $\txid_a.w_\x \relarrow{\impGRSI.\hbloc} \txid_a.w_\x$, contradicting the assumption that $\impGRSI$ is RA-consistent. 
In case (iii) the desired result holds trivially.\\

\noindent \textbf{RTS. (\ref{lem:rsi_alt_lock_hb_co})}\\
Assume $(a, b) \in \absGRSI.\co$. From the definition of $\absGRSI.\co$ we then know there exist $w_1 \in \txid_a.\mathit{Ws}$ and $w_2 \in \txid_b.\mathit{Ws}$ such that $(w_1, w_2) \in \impGRSI.\co$ and $\loc{w_1} = \loc{w_2} = \x$.
On the other hand, from \cref{lem:lock-ordering}  we  know that either i) $\txid_b.\mathit{wu}_{x} \relarrow {\impGRSI.\hb} \txid_a.\mathit{rl}_{x}$; or ii) $\txid_a.\mathit{wu}_{x} \relarrow {\impGRSI.\hb} \txid_b.\mathit{rl}_{x}$.
In case (i) we then have $w_1 \relarrow{\impGRSI.\co} w_2 \relarrow{\impGRSI.\po} \txid_b.\mathit{wu}_{x}  \relarrow{\impGRSI.\hb} \txid_a.\mathit{rl}_{x}  \relarrow{\impGRSI.\po} w_1$. That is, we have $w_1 \relarrow{\impGRSI.\co} w_2 \relarrow{\impGRSI.\hbloc} w_1$, contradicting the assumption that $\impGRSI$ is RA-consistent. 
In case (ii) the desired result holds trivially.\\

%

\noindent \textbf{RTS. (\ref{lem:rsi_alt_lock_hb_fr})}\\
Assume $(a, b) \in \absGRSI.\fr$. From the definition of $\absGRSI.\fr$ we then know that 
there exist $w \in \txid_b.\mathit{Ws}$ such that $\loc w = \x$ and $(\txid_a.\mathit{rs}_\x, w) \in \impGRSI.\fr$. 
From \cref{lem:lock-ordering} we then know that either i) $\txid_b.\mathit{wu}_{x} \relarrow {\impGRSI.\hb} \txid_a.\mathit{rl}_{x}$; or ii)  $\x \not\in \writeset_{\txid_a}$ and $\txid_a.\mathit{ru}_{x} \relarrow {\impGRSI.\hb} \txid_a.\mathit{pl}_{x}$; or iii) $\x \in \writeset_{\txid_a}$ and  $\txid_a.\mathit{wu}_{x} \relarrow {\impGRSI.\hb} \txid_b.\mathit{rl}_{x}$.
In case (i) we then have $w \relarrow{\impGRSI.\po} \txid_b.\mathit{wu}_{x}  \relarrow{\impGRSI.\hb} \txid_a.\mathit{rl}_{x}  \relarrow{\impGRSI.\po} \txid_a.\mathit{rs}_{\x} \relarrow{\impGRSI.\fr} w$. That is, we have $w \relarrow{\impGRSI.\hbloc} \txid_a.\mathit{rs}_{\x} \relarrow{\impGRSI.\fr} w$, contradicting the assumption that $\impGRSI$ is RA-consistent. 
In cases (ii-iii) the desired result holds trivially.
\end{proof}
\end{lemma}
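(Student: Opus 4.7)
The plan is to prove part (\ref{lem:rsi_alt_lock_hb_co_rf}) by an analogous case analysis to part (\ref{lem:rsi_lock_hb_corf}) of \cref{lem:rsi_lock_hb} (and part (\ref{lem:si_lock_hb_corf}) of \cref{lem:si_lock_hb}), with one genuinely new case arising from the possibility that the intermediate write is non-transactional. I would unpack $(a,b) \in \absGRSI.(\co;\rf)$ into an intermediate $w$ with $(a,w) \in \absGRSI.\co$ and $(w,b) \in \absGRSI.\rf$, and then case-split on where $w$ lives: (i) $w \in \txid_a.\Events$; (ii) $w \in \txid_b.\Events$; (iii) $w \in \txid_c.\Events$ for some third transaction $\txid_c$ with $\txid_c \notin \{\txid_a, \txid_b\}$; or (iv) $w \in \impGRSI.\NT$.

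Cases (i)--(iii) reduce directly to the parts already established above in this lemma. In (i), both $a$ and $w$ commit through the same writer-unlock $\txid_a.\mathit{wu}_\x$, so part (\ref{lem:rsi_alt_lock_hb_rf}) applied to the (necessarily extra-transactional) $\rf$-edge $(w,b)$ yields the conclusion immediately. In (ii), $(a,w) \in \absGRSI.\co$ is an extra-transactional $\co$-edge between $\txid_a$ and $\txid_b$, so part (\ref{lem:rsi_alt_lock_hb_co}) applies directly. In (iii), applying (co) to $(a,w)$ and (rf) to $(w,b)$ yields $\txid_a.\mathit{wu}_\x \relarrow{\impGRSI.\hb} \txid_c.\mathit{rl}_\x$ and $\txid_c.\mathit{wu}_\x \relarrow{\impGRSI.\hb} \txid_b.\mathit{rl}_\x$; I then stitch these together via the $\po$-edge $\txid_c.\mathit{rl}_\x \relarrow{\impGRSI.\po} \txid_c.\mathit{wu}_\x$ and transitivity of $\impGRSI.\hb$.

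The main obstacle is case (iv), which has no counterpart in the eager SI proof and requires unfolding the $\mw{\cdot}$ machinery underlying $\absGRSI.\co$ and $\absGRSI.\rf$. From the relevant clauses of $\mathsf{MO}$ and $\mathsf{RF}_{\txid_b}$, $(a,w) \in \absGRSI.\co$ with $w \in \NT$ forces $(\mw{a}, w) \in \impGRSI.\co$, while $(w,b) \in \absGRSI.\rf$ with $w \in \NT$ and $b \in \txid_b.\Events$ forces $(w, \txid_b.\mathit{rs}_\x) \in \impGRSI.\rf$. I would then invoke \cref{lem:lock-ordering} to compare $\txid_a$'s and $\txid_b$'s lock events on $\x$: the two branches in which a release or promotion event of $\txid_b$ is $\impGRSI.\hb$-before a lock acquisition of $\txid_a$ would close a cycle
\[
\mw{a} \relarrow{\impGRSI.\co} w \relarrow{\impGRSI.\rf} \txid_b.\mathit{rs}_\x \relarrow{\impGRSI.\po} \txid_b.\mathit{wu}_\x \relarrow{\impGRSI.\hb} \txid_a.\mathit{rl}_\x \relarrow{\impGRSI.\po} \mw{a}
\]
(with $\txid_b.\mathit{ru}_\x$ and $\txid_a.\mathit{pl}_\x$ substituted when $\x \notin \writeset_{\txid_b}$), violating $\acyc{\impGRSI.(\hbloc \cup \co \cup \fr)}$ and hence the RA-consistency of $\impGRSI$; the third branch delivers the sought $\txid_a.\mathit{wu}_\x \relarrow{\impGRSI.\hb} \txid_b.\mathit{rl}_\x$ directly.
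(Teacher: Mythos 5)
Your proposal addresses only conjunct (\ref{lem:rsi_alt_lock_hb_co_rf}) and explicitly leans on conjuncts (\ref{lem:rsi_alt_lock_hb_rf}), (\ref{lem:rsi_alt_lock_hb_co}) and (\ref{lem:rsi_alt_lock_hb_fr}) as ``already established''. For the lazy implementation those three parts are not free: unlike the eager case, $\absGRSI.\rf$ and $\absGRSI.\co$ are defined through the commit-phase witnesses ($\txid.w_\x$, $\txid.\mathit{Ws}$ and the $\mw{\cdot}$/$\fw{\cdot}$ machinery), so establishing, say, (\ref{lem:rsi_alt_lock_hb_co}) requires first extracting $w_1 \in \txid_a.\mathit{Ws}$ and $w_2 \in \txid_b.\mathit{Ws}$ with $(w_1,w_2) \in \impGRSI.\co$ before the lock-ordering argument can be run. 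The paper's proof spends its effort precisely on this unfolding for parts (\ref{lem:rsi_alt_lock_hb_rf})--(\ref{lem:rsi_alt_lock_hb_fr}); a complete proof of the statement must include it, and your submission does not.

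That said, the part you do prove is correct, and it is in fact the one conjunct the paper's own proof never discharges (the printed proof stops after (\ref{lem:rsi_alt_lock_hb_fr})). Your four-way split on the location of the intermediate write $w$ mirrors the $\co;\rf$ treatment in \cref{lem:rsi_lock_hb}: cases (i)--(iii) reduce to the earlier conjuncts exactly as there, with the third-transaction case stitched through $\txid_c.\mathit{rl}_\x \relarrow{\impGRSI.\po} \txid_c.\mathit{wu}_\x$. The genuinely new case $w \in \impGRSI.\NT$ is handled soundly: the clauses of $\mathsf{MO}$ and $\mathsf{RF}_{\txid_b}$ do yield $(\mw{a}, w) \in \impGRSI.\co$ and $(w, \txid_b.\mathit{rs}_\x) \in \impGRSI.\rf$, and in the two bad branches of \cref{lem:lock-ordering} the segment from $w$ back to $\mw{a}$ lies in $\impGRSI.\hb$ with same-location endpoints, so the cycle is one in $\hbloc \cup \co$ and contradicts \ref{ax:ra_acyc}. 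Supplying the omitted three conjuncts (following the eager-RSI proofs but routing through the $\mathit{Ws}$ witnesses) would make this a complete and, relative to the paper, strictly more thorough proof.
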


\begin{lemma}
\label{lem:rsi_alt_soundness}
For all RA-consistent execution graphs $\impGRSI$ of the implementation and their counterpart RSI execution graphs $\absGRSI$ constructed as above and $S \eqdef \transC{\absGRSI.(\rsipo \cup \rsirf \cup \cot)}$:
\[
\begin{array}{@{} r @{\hspace{2pt}} l @{}}
	\for{\txid_a, \txid_b, a, b}
	\for{a, b} \\
	\quad 
	(a, b) \in  S
	\Rightarrow
	& 
	\begin{array}[t]{@{} l @{}}
		(a, b \in \absGRSI.\NT \land (a, b) \in \impGRSI.\hb) \\
		\lor
		(a \in \absGRSI.\Transactions_{\txid_a} 
		\land b  \in \absGRSI.\Transactions_{\txid_b}
		\land \txid_a = \txid_b
		\land (a,b) \in \impGRSI.\po
		) \\
		\lor 
		\left(
		\begin{array}{@{} l @{}}
			a \in \absGRSI.\Transactions_{\txid_a} \land b  \in \absGRSI.\Transactions_{\txid_b} 
			\land \txid_a \ne \txid_b\\
			\land \exsts{d \in  \absGRSI.\Transactions_{\txid_b}} 
			\for{c \in \absGRSI.\Transactions_{\txid_a}} (c, d) \in \impGRSI.\hb \\
			\land\, a \in \Writes \Rightarrow (\mw{a}, d) \in \impGRSI.\hb
		\end{array}
		\right) \\
		\lor
		\left(
		\begin{array}{@{} l @{}}
			a \in \absGRSI.\NT \land b  \in \absGRSI.\Transactions_{\txid_b} \\
			\land \exsts{d \in  \absGRSI.\Transactions_{\txid_b}} 
			(a, d) \in \impGRSI.\hb 
		\end{array}
		\right) \\
		\lor
		\left(
		\begin{array}{@{} l @{}}
			a \in \absGRSI.\Transactions_{\txid_a} \land b  \in \absGRSI.\NT \\
			\land \for{c \in \absGRSI.\Transactions_{\txid_a}} (c, b) \in \impGRSI.\hb \\
			\land\, (\exsts{e \in \txid_a.\mathit{Ws}}  (e, b) \in \impGRSI.\hb
				\lor \for{c \in \txid_a.\Events} (c, b) \in \impGRSI.\hb)
		\end{array}
		\right) \\
	\end{array}		
\end{array}	
\]
\begin{proof}
Let $S_0 = \absGRSI.(\rsipo \cup \rsirf \cup \cot)$, and $S_{n {+} 1} = S_0; S_n$, for all $n >=0$. 
It is straightforward to demonstrate that $S = \bigcup\limits_{i \in \Nats} S_i$. 
We thus demonstrate instead that:
\[
\begin{array}{@{} r @{\hspace{2pt}} l @{}}
	\for{i \in \Nats} \for{\txid_a, \txid_b, a, b}
	\for{a, b} \\
	\quad 
	(a, b) \in  S_i
	\Rightarrow
	& 
	\begin{array}[t]{@{} l @{}}
		(a, b \in \absGRSI.\NT \land (a, b) \in \impGRSI.\hb) \\
		\lor
		(a \in \absGRSI.\Transactions_{\txid_a} 
		\land b  \in \absGRSI.\Transactions_{\txid_b}
		\land \txid_a = \txid_b
		\land (a,b) \in \impGRSI.\po
		) \\
		\lor 
		\left(
		\begin{array}{@{} l @{}}
			a \in \absGRSI.\Transactions_{\txid_a} \land b  \in \absGRSI.\Transactions_{\txid_b} 
			\land \txid_a \ne \txid_b\\
			\land \exsts{d \in  \absGRSI.\Transactions_{\txid_b}} 
			\for{c \in \absGRSI.\Transactions_{\txid_a}} (c, d) \in \impGRSI.\hb \\
			\land\, (\exsts{e \in \txid_a.\mathit{Ws}}  (e, d) \in \impGRSI.\hb \\
				\quad \lor \for{c \in \txid_a.\Events} (c, d) \in \impGRSI.\hb)
		\end{array}
		\right) \\
		\lor
		\left(
		\begin{array}{@{} l @{}}
			a \in \absGRSI.\NT \land b  \in \absGRSI.\Transactions_{\txid_b} \\
			\land \exsts{d \in  \absGRSI.\Transactions_{\txid_b}} 
			(a, d) \in \impGRSI.\hb 
		\end{array}
		\right) \\
		\lor
		\left(
		\begin{array}{@{} l @{}}
			a \in \absGRSI.\Transactions_{\txid_a} \land b  \in \absGRSI.\NT \\
			\land \for{c \in \absGRSI.\Transactions_{\txid_a}} (c, b) \in \impGRSI.\hb \\
			\land\, (\exsts{e \in \txid_a.\mathit{Ws}}  (e, b) \in \impGRSI.\hb \\
				\quad \lor \for{c \in \txid_a.\Events} (c, b) \in \impGRSI.\hb)
		\end{array}
		\right) \\
	\end{array}		
\end{array}	
\]
We proceed by induction on $i$.\\

\noindent \textbf{Base case $i = 0$}\\
Pick arbitrary $\txid_a, \txid_b$, $a, b$ such that $\txid_a \ne \txid_b$ and $(a, b) \in  S_0$. 
There are now four cases to consider: 
A) $a \in \absGRSI.\Transactions_{\txid_a}, b  \in \absGRSI.\Transactions_{\txid_b}$; or
B) $a, b \in \absGRSI.\NT$; or 
C) $a \in \absGRSI.\NT, b \in \absGRSI.\Transactions_{\txid_b}$; or 
D) $a \in \absGRSI.\Transactions_{\txid_a}, b  \in \absGRSI.\NT$.

In case (A) there are three additional cases to consider: 
1) $(a, b) \in \absGRSI.\rsipo$; or 
2) $(a, b) \in \absGRSI.\rsirf$; or 
3) $(a, b) \in \absGRSI.\cot$.

In case (A.1), pick an arbitrary $c \in \absGRSI.\Transactions_{\txid_a}$. From the definition of $\absGRSI.\pot$ we have $(c, b) \in \impGRSI.\po \subseteq \impGRSI.\hb$, as required. 
Now assume that $a \in \Writes$. From the definition of $\absGRSI.\po$ we then have $(\mw{a}, d) \in \impGRSI.\po \subseteq \impGRSI.\hb$, as required. 

In case (A.2), we then know there exists $w \in \absGRSI.\Transactions_{\txid_a}$ and $r \in \absGRSI.\Transactions_{\txid_b}$ such that $(w, r) \in \absGRSI.\rf$. 
Let $\loc w = \loc r = \x$. 
From \cref{lem:rsi_alt_lock_hb} we then have $\txid_a.\mathit{wu}_\x  \relarrow{\impGRSI.\hb} \txid_b.\mathit{rl}_\x$.
Pick an arbitrary $c \in \absGRSI.\Transactions_{\txid_a}$.
As such we have 
$c \relarrow{\impGRSI.\po} 
\txid_a.\mathit{wu}_\x  \relarrow{\impGRSI.\hb} 
\txid_b.\mathit{rl}_\x  \relarrow{\impGRSI.\po} \txid_b.\mathit{rs}_\x$. 
That is, we have $(c, \txid_b.\mathit{rs}_\x) \in \impGRSI.\hb$, as required.
Now assume that $a \in \Writes$. We then have $(\mw{a}, \txid_a.\mathit{wu}_\x) \in \impGRSI.\po$.
As such, from the transitivity of $\impGRSI.\hb$ we also have $(\mw{a}, \txid_b.\mathit{rs}_\x) \in \impGRSI.\hb$, as required. 


In case (A.3), we then know there exists $w \in \absGRSI.\Transactions_{\txid_a}$ and $w' \in \absGRSI.\Transactions_{\txid_b}$ such that $(w, w') \in \absGRSI.\co$. 
Let $\loc w = \loc{w'} = \x$. 
From \cref{lem:rsi_alt_lock_hb} we then have $\txid_a.\mathit{wu}_\x  \relarrow{\impGRSI.\hb} \txid_b.\mathit{rl}_\x$.
Pick an arbitrary $c \in \absGRSI.\Transactions_{\txid_a}$.
As we also have $\txid_b.\mathit{rl}_\x  \relarrow{\impGRSI.\po} w_2$,
$w_1 \relarrow{\impGRSI.\po} \txid_a.\mathit{wu}_\x  $, 
and $c \relarrow{\impGRSI.\po} \txid_a.\mathit{wu}_\x$, we then have 
$(c,  w_2) \in \impGRSI.\hb$ and $(w_1, w_2) \in \impGRSI.\hb$, as required.
Now assume that $a \in \Writes$. We then have $(\mw{a}, \txid_a.\mathit{wu}_\x) \in \impGRSI.\po$.
As such, from the transitivity of $\impGRSI.\hb$ we also have $(\mw{a}, w_2) \in \impGRSI.\hb$, as required. 

In case (B) there are three additional cases to consider: 
1) $(a, b) \in \absGRSI.\rsipo$; or 
2) $(a, b) \in \absGRSI.\rsirf$; or 
3) $(a, b) \in \absGRSI.\cot$.

In case (B.1) we then have $(a, b) \in \absGRSI.\po$ and from the definition of $\absGRSI.\po$ we have $(a, b) \in \impGRSI.\po \subseteq \impGRSI.\hb$, as required.
In case (B.2) we then have $(a, b) \in \absGRSI.\rf$ and from the definition of $\absGRSI.\rf$ we have $(a, b) \in \impGRSI.\rf \subseteq \impGRSI.\hb$, as required.
Case (B.3) does not apply as $a, b \in \absGRSI.\NT$. 

In case (C) there are three additional cases to consider: 
1) $(a, b) \in \absGRSI.\rsipo$; or 
2) $(a, b) \in \absGRSI.\rsirf$; or 
3) $(a, b) \in \absGRSI.\cot$.

In case (C.1) we then have $(a, b) \in \absGRSI.\po$ and from the definition of $\absGRSI.\po$ we have $(a, b) \in \impGRSI.\po \subseteq \impGRSI.\hb$, as required.
In case (C.2) we know there exists $r \in \absGRSI.\Transactions_{\txid_b}$ such that $(a, r) \in \absGRSI.\rf$. 
Let $\loc a = \loc r = \x$. 
From  the definition of $\absGRSI.\rf$ we then have $(a, \txid_b.\mathit{rs}_\x) \in \impGRSI.\rf \subseteq \impGRSI.\hb$, as required.
Case (C.3) does not apply as $a \in \absGRSI.\NT$.

In case (D) there are three additional cases to consider: 
1) $(a, b) \in \absGRSI.\rsipo$; or 
2) $(a, b) \in \absGRSI.\rsirf$; or 
3) $(a, b) \in \absGRSI.\cot$.

In case (D.1) we then have $(a, b) \in \absGRSI.\po$.
Pick an arbitrary $c \in \absGRSI.\Transactions_{\txid_a}$. From the definition of $\absGRSI.\po$ we have $(c, b) \in \absGRSI.\po$.
As such, from the definition of $\absGRSI.\po$ we have $(c, b) \in \impGRSI.\po \subseteq \impGRSI.\hb$, as required.
Now assume that $a \in \Writes$. We then have $(\mw{a}, b) \in \impGRSI.\po \subseteq \impGRSI.\hb$, as required.

In case (D.2) we then have $(a, b) \in \absGRSI.\rf$ and from the definition of $\absGRSI.\rf$ we know that $\mw{a} \in \txid_a.\mathit{Ws}$ and $(\mw{a}, b) \in \impGRSI.\rf \subseteq \impGRSI.\hb$, as required. 
Pick an arbitrary $c \in \absGRSI.\Transactions_{\txid_a}$. 
We then know $(c, \mw{a}) \in \impGRSI.\po$. 
We then have $c \relarrow{\impGRSI.\po} \mw{a} \relarrow{\impGRSI.\rf} b$.
As such, from the definition of $\impGRSI.\hb$ we have $(c, b) \in \impGRSI.\hb$, as required.
Case (D.3) does not apply as $b \in \absGRSI.\NT$. \\

\noindent \textbf{Inductive case $i {=}  n {+} 1$}
\begin{align}
\hspace*{-15pt}
\begin{array}{@{} r @{\hspace{2pt}} l @{}}
	\for{j \in \Nats} \for{\txid_a, \txid_b, a, b}
	\for{a, b} \\
	\quad 
	(a, b) \in  S_j
	\land j \leq n
	\Rightarrow
	& 
	\begin{array}[t]{@{} l @{}}
		(a, b \in \absGRSI.\NT \land (a, b) \in \impGRSI.\hb) \\
		\lor
		(a \in \absGRSI.\Transactions_{\txid_a} 
		\land b  \in \absGRSI.\Transactions_{\txid_b}
		\land \txid_a = \txid_b
		\land (a,b) \in \impGRSI.\po
		) \\
		\lor 
		\left(
		\begin{array}{@{} l @{}}
			a \in \absGRSI.\Transactions_{\txid_a} \land b  \in \absGRSI.\Transactions_{\txid_b} 
			\land \txid_a \ne \txid_b\\
			\exsts{d \in  \absGRSI.\Transactions_{\txid_b}} 
			\for{c \in \absGRSI.\Transactions_{\txid_a}} (c, d) \in \impGRSI.\hb \\
			\land\, (\exsts{e \in \txid_a.\mathit{Ws}}  (e, d) \in \impGRSI.\hb \\
				\quad \lor \for{c \in \txid_a.\Events} (c, d) \in \impGRSI.\hb)
		\end{array}
		\right) \\
		\lor
		\left(
		\begin{array}{@{} l @{}}
			a \in \absGRSI.\NT \land b  \in \absGRSI.\Transactions_{\txid_b} \\
			\land \exsts{d \in  \absGRSI.\Transactions_{\txid_b}} 
			(a, d) \in \impGRSI.\hb 
		\end{array}
		\right) \\
		\lor
		\left(
		\begin{array}{@{} l @{}}
			a \in \absGRSI.\Transactions_{\txid_a} \land b  \in \absGRSI.\NT \\
			\land \for{c \in \absGRSI.\Transactions_{\txid_a}} (c, b) \in \impGRSI.\hb \\
			\land\, (\exsts{e \in \txid_a.\mathit{Ws}}  (e, b) \in \impGRSI.\hb \\
				\quad \lor \for{c \in \txid_a.\Events} (c, b) \in \impGRSI.\hb)
		\end{array}
		\right) \\
	\end{array}		
\end{array}	
\tag{I.H.}
\label{IH:rsi_alt_soundnenss}
\end{align}
Pick arbitrary $\txid_a, \txid_b$, $a \in \absGRSI.\Transactions_{\txid_a}, b  \in \absGRSI.\Transactions_{\txid_b}$ such that $(a, b) \in  S_i$.
From the definition of $S_i$ we then know there exist $e, \txid_e$ such that $e \in \absGRSI.\Transactions_{\txid_e}$, $(a, e) \in S_0$ and $(e, b) \in S_n$. 
The desired result then follows from the inductive hypothesis and case analysis on $a, b$ annd $c$. 

\end{proof}
\end{lemma}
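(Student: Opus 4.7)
The plan is to mirror the structure already used for the eager implementation in \cref{lem:si_soundness}, namely unfolding the transitive closure $S$ as $S = \bigcup_{n\in\Nats} S_n$ with $S_0 \eqdef \absGRSI.(\rsipo \cup \rsirf \cup \cot)$ and $S_{n+1} \eqdef S_0; S_n$, and then proving the implication by induction on $n$. The statement to prove is a large disjunction on the transactional/non-transactional status of $a$ and $b$, so at each induction step I would proceed by case analysis on which of the four ``shapes'' $(a,b)$ has: both transactional (in the same or in different classes), both non-transactional, or mixed. The role of the witness $d \in \absGRSI.\Transactions_{\txid_b}$ in the conclusion is the analogue of the explicit ``$\txid.\mathit{wu}_\y$'' pivot used in the eager proof and is what makes the statement composable.

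For the base case $n=0$, I would further split on whether the edge is in $\rsipo$, $\rsirf$, or $\cot$. The $\rsipo$-subcases follow directly from the definitions of $\absGRSI.\po$ and $\impGRSI.\po$ (including the fact that $\mw{a}$ lies $\po$-after $a$ when $a$ is a write, which handles the extra ``$(\mw{a}, d)\in \impGRSI.\hb$'' conjunct). The $\rsirf$- and $\cot$-subcases between two different transaction classes rely crucially on \cref{lem:rsi_alt_lock_hb}: from an $\rf$ or $\co$ edge between $\txid_a$ and $\txid_b$ one extracts the synchronisation $\txid_a.\mathit{wu}_\x \relarrow{\impGRSI.\hb} \txid_b.\mathit{rl}_\x$ and then pads with $\po$ on either side to reach any chosen $c \in \absGRSI.\Transactions_{\txid_a}$ and witness $d \in \absGRSI.\Transactions_{\txid_b}$; the additional constraint involving $\mw{a}$ follows since $\mw{a} \relarrow{\impGRSI.\po} \txid_a.\mathit{wu}_\x$. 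The mixed cases with $\NT$ endpoints use the direct $\impGRSI.\rf$ and $\impGRSI.\po$ clauses of the construction of $\absGRSI$.

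For the inductive step, I would write $(a,b) \in S_{n+1}$ as $(a,e) \in S_0$ followed by $(e,b) \in S_n$, apply the base case to the first edge and the induction hypothesis to the second, and then glue the witnesses. The gluing is almost mechanical: from the base case I obtain either a $\po$-edge (when $a,e$ share a transaction) or a witness $d_e \in \absGRSI.\Transactions_{\txid_e}$ with $c \relarrow{\impGRSI.\hb} d_e$ for every $c \in \absGRSI.\Transactions_{\txid_a}$ (plus the $\mw{a}$-pivot if $a$ is a write), and from the IH I obtain (in the interesting case) a witness $d \in \absGRSI.\Transactions_{\txid_b}$ with $f' \relarrow{\impGRSI.\hb} d$ for every $f' \in \absGRSI.\Transactions_{\txid_e}$; specialising the IH at $f' = d_e$ and using transitivity of $\impGRSI.\hb$ gives the required pair $(c, d) \in \impGRSI.\hb$.

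The step I expect to cost the most attention is not any single algebraic calculation but the \emph{bookkeeping of disjuncts}: because the conclusion has five disjuncts and the intermediate event $e$ can itself be transactional, non-transactional, or the same transaction as $a$ or $b$, the induction step splits into a matrix of cases analogous to cases (1.a)--(5.d) in the proof of \cref{lem:rsi_completeness}. The main obstacle is discharging the subcase where $e$ is a transactional write and the base-case conclusion furnishes a $\mw{a}$-pivot rather than the ``for all $c$'' conjunct, and making sure that in each composed case the stronger of the two conjuncts (``$\exsts e \in \txid_a.\mathit{Ws}$'' versus ``$\forall c \in \txid_a.\Events$'') is preserved in the right direction; this is exactly the place where an uncritical use of the IH would drop information needed for the subsequent soundness theorem (specifically, for the $\absGRSI.\fr$-cycle argument via \cref{lem:rsi_alt_lock_hb}\eqref{lem:rsi_alt_lock_hb_fr}).
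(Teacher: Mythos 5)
Your proposal matches the paper's proof essentially step for step: the same unfolding $S=\bigcup_n S_n$ with $S_{n+1}=S_0;S_n$, the same induction on $n$ with a base-case split on edge type ($\rsipo$/$\rsirf$/$\cot$) and on the transactional status of the endpoints, the same appeal to \cref{lem:rsi_alt_lock_hb} to obtain the $\txid_a.\mathit{wu}_\x \relarrow{\impGRSI.\hb} \txid_b.\mathit{rl}_\x$ pivot, and the same gluing of witnesses through the intermediate event in the inductive step. If anything, your discussion of the disjunct bookkeeping in the inductive case is more explicit than the paper's, which dispatches that step with a brief remark.
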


\begin{theorem}[Soundness]
For all execution graphs $\impGRSI$ of the implementation and their counterpart RSI execution graphs $\absGRSI$ constructed as above,
\[
	\consistent{\impGRSI} \Rightarrow \rsicon
\]
\begin{proof}
Pick an arbitrary execution graph $\impGRSI$ of the implementation such that $\consistent{\impGRSI}$ holds, and its associated RSI execution graph $\absGRSI$ constructed as described above. \\

\noindent \textbf{RTS. $\irr{\absGRSI.\rsihb}$}\\ 
We proceed by contradiction. Let us assume $\neg \irr{\absGRSI.\rsihb}$ 
Let $S = \transC{\absGRSI.(\rsipo \cup \rsirf \cup \cot)}$.
There are now two cases to consider: either there is an $\rsihb$ cycle without a $\rsifr$ edge; or there is a cycle with one or more $\rsifr$ edges. 
That is, either
1) there exists $a$ such that $(a, a) \in  S$; or 
2) there exist $a_1, b_1, \cdots, a_n, b_n$  such that 
$a_1 \relarrow{\absGRSI.\rsifr} b_1 \relarrow{S} a_2 \relarrow{\absGRSI.\rsifr} b_2 \relarrow{S} \cdots \relarrow{S} a_n \relarrow{\absGRSI.\rsifr} b_n \relarrow{S} a_1$. 

In case (1) we then know that either i) there exists $\txid$ such that $a \in \absGRSI.\Transactions_{\txid_a}$; or ii) $a \in \absGRSI.\NT$. 
In case (1.i) from \cref{lem:rsi_alt_soundness} we then have $(a, a) \in \impGRSI.\po$, contradicting the assumption that $\impGRSI$ is RA-consistent. 
Similarly, in case (1.ii) from \cref{lem:rsi_alt_soundness} we know that $(a, a) \in \impGRSI.\hb$, contradicting the assumption that $\impGRSI$ is RA-consistent. 

In case (2), for an arbitrary  $i \in \{1 \cdots n\}$, let $j = i {+} 1$ when $i \ne n$; and $j = 1$ when $i = n$. 
As $a_i \relarrow{\absGRSI.\rsifr} b_i$, we know there exists $\txid_{a_i}, \txid_{b_i}$ such that $a_i \in \Transactions_{\txid_{a_i}}$, $b_i \in \Transactions_{\txid_{b_i}}$, and that there exist $r_i \in \Transactions_{\txid_{a_i}} \cap \EReads$ and $w_i \in \Transactions_{\txid_{b_i}} \cap \Writes$ such that $(r_i, w_i) \in \absGRSI.\fr$. 
Let $\loc{r_i} = \loc{w_i} = \x_i$. 
From \cref{lem:rsi_alt_lock_hb} we then know that either 
i) $\txid_{a_i}.\mathit{ru}_{\x_i} \relarrow{\impGRSI.\hb} \txid_{b_i}.\mathit{pl}_{\x_i}$; or 
ii) $\txid_{a_i}.\mathit{wu}_{\x_i} \relarrow{\impGRSI.\hb} \txid_{b_i}.\mathit{rl}_{\x_i}$.
Note that for all $w' \in \txid_{b_i}.\mathit{Ws}$, we know $\txid_{b_i}.\mathit{rl}_{\x_i} \relarrow{\impGRSI.\po} w'$ and $\txid_{b_i}.\mathit{pl}_{\x_i} \relarrow{\impGRSI.\po} w'$.
As such, as $(b_i, a_j) \in S$, from \cref{lem:rsi_alt_soundness} and since $\impGRSI.\hb$ is transitively closed, we know there exists $d_j \in \Transactions_{\txid_{a_j}}$ such that either $\txid_{a_i}.\mathit{ru}_{\x_i} \relarrow{\impGRSI.\hb} d_j$, or $\txid_{a_i}.\mathit{wu}_{\x_i} \relarrow{\impGRSI.\hb} d_j$.
That is, $\txid_{a_i}.\mathit{u}_{\x_i} \relarrow{\impGRSI.\hb} d_j$, where either $\mathit{u}_{\x_i} = \mathit{ru}_{\x_i}$ or $\mathit{u}_{\x_i} = \mathit{wu}_{\x_i}$.
On the other hand, observe that for all $d_i \in \Transactions_{\txid_{a_i}}$ we have 
$d_i \relarrow{\impGRSI.\po} \txid_{a_i}.\mathit{u}_{\x_i}$ 
i.e.\ $d_i \relarrow{\impGRSI.\hb} \txid_{a_i}.\mathit{u}_{\x_i}$.
As such, we have $d_j \relarrow{\impGRSI.\hb} \txid_{a_j}.\mathit{u}_{\x_j}$.
As we also have $\txid_{a_i}.\mathit{u}_{\x_i} \relarrow{\impGRSI.\hb} d_j$
and $\impGRSI.\hb$ is transitively closed, we have $\txid_{a_i}.\mathit{u}_{\x_i} \relarrow{\impGRSI.\hb} \txid_{a_j}.\mathit{u}_{\x_j}$.
We then have 
$\txid_{a_1}.\mathit{u}_{\x_1} \relarrow{\impGRSI.\hb} \txid_{a_2}.\mathit{u}_{\x_2} \relarrow{\impGRSI.\hb} \cdots \relarrow{\impGRSI.\hb} \txid_{a_n}.\mathit{u}_{\x_n} \relarrow{\impGRSI.\hb} \txid_{a_1}.\mathit{u}_{\x_1}$. 
That is, $\txid_{a_1}.\mathit{u}_{\x_1} \relarrow{\impGRSI.\hb} \txid_{a_1}.\mathit{u}_{\x_1}$, 
contradicting the assumption that $\impGRSI$ is RA-consistent. 
\\

\noindent \textbf{RTS. $\rfi \cup \coi \cup \fri \subseteq \po$}\\
Follows immediately from the construction of $\absGRSI$.\\

\noindent \textbf{RTS. $\irr{\absGRSI.(\rsihb; \co)}$}\\ 
We proceed by contradiction. Let us assume $\neg \irr{\absGRSI.(\rsihb; \co)}$.\\
That is, there exists $a, b$ such that $(a, b) \in \rsihb$ and $(b, a) \in \co$. 
There are now five cases to consider: 
1) there exists $\txid$ such that $a, b \in \Transactions_\txid$; or
2) there exists $\txid_a, \txid_b$ such that $\txid_a \ne \txid_b$, $a \in \Transactions_{\txid_a}$ and $b \in \Transactions_{\txid_b}$; or
3) there exists $\txid$ such that $a \in \Transactions_\txid$ and $b \in \absGRSI.\NT$; or
4) there exists $\txid$ such that $b \in \Transactions_\txid$ and $a \in \absGRSI.\NT$; or
5) $a, b \in \absGRSI.\NT$.

In case (1) we then have $(b, a) \in \coi \subseteq \poi$ (from the proof of the previous part) and since $a,b$ are both write events, we have $(b,a) \subseteq \rsihb$. 
We then have $a \relarrow{\rsihb} b \relarrow{\rsihb} a$, contradicting our proof above that $\rsihb$ is irreflexive. 

In case (2) we then have $(b, a) \in \cot \subseteq \rsihb$.
We then have $a \relarrow{\rsihb} b \relarrow{\rsihb} a$, contradicting our proof above that $\rsihb$ is irreflexive.

In case (3) from \cref{lem:rsi_alt_soundness} we know $(\mw{a}, b) \in \impGRSI.\hb$. 
From the definition of $\absGRSI.\co$ we also have $(b, \mw{a}) \in \impGRSI.\co$.
We then have $\mw a \relarrow{\impGRSI.\hb} b \relarrow{\impGRSI.\co} \mw a$, contradicting the assumption that $\impGRSI$ is RA-consistent.

In case (4) from \cref{lem:rsi_alt_soundness} we know there exists $d \in \Transactions_{\txid}$ such that $(a, d) \in \impGRSI.\hb$. 
From the definition of $\absGRSI.\co$ we also have $(\mw b, a) \in \impGRSI.\co$.
Moreover, from the construction of $\absGRSI$ we know $(d, \mw b) \in \impGRSI.\po$
We then have $a \relarrow{\impGRSI.\hb} d \relarrow{\impGRSI.\po} \mw b \relarrow{\impGRSI.\co} a$, contradicting the assumption that $\impGRSI$ is RA-consistent.

In case (5) from \cref{lem:rsi_alt_soundness} we know $(a, b) \in \impGRSI.\hb$. 
From the definition of $\absGRSI.\co$ we also have $(b, a) \in \impGRSI.\co$.
We then have $a \relarrow{\impGRSI.\hb} b \relarrow{\impGRSI.\co} a$, contradicting the assumption that $\impGRSI$ is RA-consistent.\\

\noindent \textbf{RTS. $\irr{\absGRSI.(\rsihb; \fr)}$}\\ 
We proceed by contradiction. Let us assume $\neg \irr{\absGRSI.(\rsihb; \fr)}$.\\
That is, there exists $a, b$ such that $(a, b) \in \rsihb$ and $(b, a) \in \fr$. 
There are now five cases to consider: 
1) there exists $\txid$ such that $a, b \in \Transactions_\txid$; or
2) there exists $\txid_a, \txid_b$ such that $\txid_a \ne \txid_b$, $a \in \Transactions_{\txid_a}$ and $b \in \Transactions_{\txid_b}$; or
3) there exists $\txid$ such that $a \in \Transactions_\txid$ and $b \in \absGRSI.\NT$; or
4) there exists $\txid$ such that $b \in \Transactions_\txid$ and $a \in \absGRSI.\NT$; or
5) $a, b \in \absGRSI.\NT$.

In case (1) we then have $(b, a) \in \fri \subseteq \poi$ (from the proof of the earlier part) and thus  $(b, a) \in \impGRSI.\po$.
Moreover, from  \cref{lem:rsi_alt_soundness} we know $(a, b) \in \impGRSI.\po$
We then have $a \relarrow{\impGRSI.\po} b \relarrow{\impGRSI.\po} a$, contradicting the assumption that $\impGRSI$ is RA-consistent.

In case (2) we then have $(b, a) \in \sifr \cup \cot \subseteq \rsihb$.
We then have $a \relarrow{\rsihb} b \relarrow{\rsihb} a$, contradicting our proof above that $\rsihb$ is irreflexive.

In case (3) from \cref{lem:rsi_alt_soundness} we know $(\mw{a}, b) \in \impGRSI.\hb$. 
From the definition of $\absGRSI.\fr$ we also have $(b, \mw{a}) \in \impGRSI.\fr$.
We then have $\mw a \relarrow{\impGRSI.\hb} b \relarrow{\impGRSI.\fr} \mw a$, contradicting the assumption that $\impGRSI$ is RA-consistent.

In case (4) from \cref{lem:rsi_alt_soundness} we know there exists $d \in \Transactions_{\txid}$ such that $(a, d) \in \impGRSI.\hb$. 
Let $\loc a = \loc b = \x$.  
From the definition of $\absGRSI.\fr$ and our race-freedom stipulation of non-transactional writes with the same transaction we also have $(\txid_b.\mathit{vr}_\x, a) \in \impGRSI.\fr$.
Moreover, from the construction of $\absGRSI$ we know $(d, \txid_b.\mathit{vr}_\x,b) \in \impGRSI.\po$
We then have $a \relarrow{\impGRSI.\hb} d \relarrow{\impGRSI.\po} \txid_b.\mathit{vr}_\x, \relarrow{\impGRSI.\fr} a$, contradicting the assumption that $\impGRSI$ is RA-consistent.

In case (5) from \cref{lem:rsi_alt_soundness} we know $(a, b) \in \impGRSI.\hb$. 
From the definition of $\absGRSI.\fr$ we also have $(b, a) \in \impGRSI.\fr$.
We then have $a \relarrow{\impGRSI.\hb} b \relarrow{\impGRSI.\fr} a$, contradicting the assumption that $\impGRSI$ is RA-consistent.
\end{proof}
\end{theorem}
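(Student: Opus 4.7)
The target statement is the Soundness theorem for the lazy RSI implementation: given an RA-consistent execution graph $\impGRSI$ of the implementation in \cref{fig:rsi_alt_implementation} and the RSI execution graph $\absGRSI$ built from it by the recipe above, we must show $\rsicon[\absGRSI]$. Unpacking \cref{def:rsi_consistency}, this breaks into the internal axiom $\rfi\cup\coi\cup\fri\subseteq\po$ and acyclicity of $\rsihbloc\cup\co\cup\fr$; following the usual decomposition one proves the latter as three irreflexivity goals: $\irr{\absGRSI.\rsihb}$, $\irr{\absGRSI.(\rsihb;\co)}$, and $\irr{\absGRSI.(\rsihb;\fr)}$. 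The internal axiom is immediate from the construction, since every intra-transaction $\rf/\co/\fr$ edge in $\absGRSI$ comes from the local write-log read-back that is visibly $\po$-ordered inside $\trace_\txid$.

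The plan for the acyclicity goals is to lift cycles in $\absGRSI.\rsihb$ into cycles of $\impGRSI.\hb$ and derive a contradiction with $\consistent{\impGRSI}$. First, I would prove the lock-ordering lemma (the analogue of \cref{lem:rsi_alt_lock_hb}) giving, for each cross-transaction edge in $\rf$, $\co$, $\co;\rf$ or $\fr$ at location $\x$, a concrete $\impGRSI.\hb$ edge from the unlock of the source transaction to the corresponding lock/promote of the target. The non-trivial case is $\fr$, which splits according to whether $\x\in\writeset_{\txid_a}$: either $\mathit{wu}_\x\to\mathit{rl}_\x$ or $\mathit{ru}_\x\to\mathit{pl}_\x$ synchronises the two transactions via the MRSW axioms of \cref{def:si_implementation_consistency} (invoked through \cref{lem:lock-ordering}). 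Second, I would prove the central bookkeeping lemma (analogue of \cref{lem:rsi_alt_soundness}): every path in $S\eqdef\transC{\absGRSI.(\rsipo\cup\rsirf\cup\cot)}$ induces a corresponding $\impGRSI.\hb$ path between specified endpoints (either the full transaction classes, their write-commit events $\mw\cdot$, or genuine non-transactional endpoints). The proof is by induction on path length, with the base case covering each generator: $\rsipo$ collapses to $\impGRSI.\po$ by construction, $\rsirf$'s four sub-cases ($\rf;[\NT]$, $[\NT];\rf;\st$, $\tlift{\rf}$, $\tlift{(\co;\rf)}$) each go via the lock-ordering lemma, and $\cot$ goes via write-lock synchronisation.

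With these in place, cycles without $\rsifr$ edges go directly through the bookkeeping lemma, producing $(a,a)\in\impGRSI.\hb$ and contradicting $\consistent{\impGRSI}$. Cycles that use $\rsifr$ edges are threaded: for each $\rsifr$ edge $a_i\to b_i$ the lock-ordering lemma fixes a synchronising unlock-event $\txid_{a_i}.\mathit{u}_{\x_i}$, and the $S$-segments between them propagate happens-before via the bookkeeping lemma, closing the loop $\txid_{a_1}.\mathit{u}_{\x_1}\to\cdots\to\txid_{a_1}.\mathit{u}_{\x_1}$ in $\impGRSI.\hb$. The $\rsihb;\co$ and $\rsihb;\fr$ irreflexivity goals follow by the same cycle-lifting, provided I combine them with the definitions of $\absGRSI.\co$ (which maps back to $\impGRSI.\co$ on commit writes $\mw\cdot$) and $\absGRSI.\fr$ (which maps to $\impGRSI.\fr$ via either $\txid.\mathit{rs}_\x$ or the validation event $\txid.\mathit{vr}_\x$). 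The latter case for $\fr$ with $a\in\NT$, $b\in\Transactions_{\txid}$ is precisely where the statement's stipulation is consumed: since the lazy implementation validates by value comparison using $\txid.\mathit{vr}_\x$, the hypothesis that coincident-value non-transactional writes are $\hb$-ordered with every transaction on $\x$ is needed to guarantee that the $\rf$ targeting $\txid.\mathit{rs}_\x$ and the $\rf$ targeting $\txid.\mathit{vr}_\x$ come from the same (or $\hb$-consistent) non-transactional write, so that $(\txid.\mathit{vr}_\x,a)\in\impGRSI.\fr$ can be established.

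The main obstacle I anticipate is the non-transactional interaction in the $\rsihb;\fr$ case, specifically subcase~(4) in the proof template ($a\in\Transactions_{\txid_a}$, $b\in\NT$). Unlike the eager implementation, the lazy implementation's validation reads $\mathit{vr}_\x$ occur after the snapshot reads $\mathit{rs}_\x$, so the $\impGRSI.\fr$ edge required to close the contradiction must be routed through the right event, and this only works under the stated race-freedom assumption on equal-valued non-transactional writes. Making this rigorous requires carefully tracking, inside the bookkeeping lemma, not just that some transactional event of $\txid_a$ is $\hb$-reachable from $a$ but specifically that a commit event $\mw a$ (or, symmetrically, $\txid.\mathit{vr}_\x$) is; this is why the statement of the lemma explicitly distinguishes the two existential disjuncts and isolates the write-commit subcase.
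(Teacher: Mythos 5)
Your proposal follows essentially the same route as the paper: the same decomposition into the internal axiom plus the three irreflexivity goals, the same two supporting lemmas (the lock-ordering lemma and the path-lifting bookkeeping lemma proved by induction on $S$-path length), the same threading of $\rsifr$-containing cycles through the unlock events, and the same use of the race-freedom stipulation via the validation read $\txid.\mathit{vr}_\x$ in the non-transactional $\fr$ case. The only blemish is a label flip in your final paragraph (you write the hard $\fr$ subcase as $a\in\Transactions_{\txid_a}$, $b\in\NT$, whereas it is the one with $a\in\NT$ and the read $b$ transactional, as you correctly state in the preceding paragraph); the substance is right.
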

\subsection{Implementation Completeness}
In order to establish the completeness of our implementation, it suffices to show that given an RSI-consistent execution graph $\absGRSI = (\Events, \po, \rf, \co, \rsilo)$, we can construct a corresponding RA-consistent execution graph $\impGRSI$ of the implementation.
%
%
%
%

Note that the execution trace for each transaction $\Transactions_{\txid} \in \absGSI.\Transactions/\st$ is of the form 
$\trace'_{\txid} = \mathit{t}'_1 \relarrow{\imm \po} \cdots \relarrow{\imm \po} \mathit{t}'_k$ for some $k$, where each $\mathit{t}'_i$ is a read or write event.
As such, we have $\absGSI.\Transactions = \bigcup_{\Transactions_{\txid} \in \absGSI.\Transactions/\st} \Transactions_{\txid} = \trace'_{\txid}.\Events$.
For each transaction $\txid$, we construct the implementation trace $\trace_{\txid}$ as follows:
\[
	\mathit{Fs}
	\relarrow{\imm \po} \mathit{Is}
	\relarrow{\imm \po} \mathit{Ts}
	\relarrow{\imm \po} \mathit{VRs}
	\relarrow{\imm \po} \mathit{RUs}
	\relarrow{\imm \po} \mathit{PLs}
	\relarrow{\imm \po} \mathit{Ws}
	\relarrow{\imm \po} \mathit{Us}
\]
where:
\begin{itemize}
	\item $\mathit{Fs}$ denotes the sequence of events failing to obtain the necessary locks (i.e.\ those iterations that do not succeed in promoting the writer locks) or validate the snapshot;
	\item $\mathit{Is}$ denotes the sequence of events initialising the values of \code{LS}, \readset and  \writeset with $\emptyset$, initialising \code{ws} with \code{[]} and initialising \code{s[x]} with $(\bot, \bot)$ for each location \x;
	\item $\mathit{Ts}$ denotes the sequence of events corresponding to the execution of \denot{\code{T}} and is of the form $\mathit{t}_1 \relarrow{\imm \po} \cdots \relarrow{\imm \po} \mathit{t}_k$, where for all $m \in \{1 \cdots k\}$:
	\[
	\mathit{t}_m {=} 
	\begin{cases}
		\rseq{\code x_m}{v_m}{\mathit{RS}_{m {-} 1},  \mathit{WS}_{m {-} 1}}
		\relarrow{\imm \po} \mathit{lr}_{\x_m}
		& \hspace{-10pt} \text{if } O_m {=} \readE{-}{\code x_m}{v_m} \\
		\wseq{\code x_m}{v_m}{\mathit{RS}_{m {-} 1},  \mathit{WS}_{m {-} 1}} 
		\relarrow{\imm \po} \mathit{wws}_{\x_m} 
		& \hspace{-10pt} \text{if } O_m {=} \writeE{\rel}{\code x_m}{v_m} \\
		\relarrow{\imm \po} \readE{}{\code{s[x}_m\code ]}{(v'_m, -)}
		\relarrow{\imm \po} \mathit{lw}_{\x_m}
		\relarrow{\imm \po} \writeE{}{\code{ws}}{\mathit{ws}_m}
	\end{cases}
	\]
	where $O_m$ denotes the $m$th event in the trace of $\code T$; 
	$\mathit{lr}_{\x_m} \eqdef \readE{}{\code{s[x}_m\code ]}{(-,v_m)}$; 
\[
\small
	\rseq{\code x_m}{v_m}{\mathit{RS}_{m {-} 1},  \mathit{WS}_{m {-} 1}}
	\eqdef
	\begin{cases}
		\readE{}{\code{s[x}_m\code ]}{(\bot,\bot)}
		& \text{if } \code x_m \not\in \mathit{RS}_{m {-} 1} \cup  \mathit{WS}_{m {-} 1} \\
		\relarrow{\imm \po} \mathit{fs_m} \\
		\relarrow{\imm \po} \mathit{rl}_{\x_m} \\
		\relarrow{\imm \po} \mathit{wrs}_{\x_m} \\
		\relarrow{\imm \po} \mathit{rs}_{\x_m} \\
		\relarrow{\imm \po} \mathit{ws}_{\x_m}
		& \\\\
		\emptyset
		& \text{otherwise} 
	\end{cases}
\]	
$\mathit{fs}_m$ denotes the sequence of events attempting (but failing) to acquire the read lock on $\code x_m$,  
$\mathit{rl}_{\code x_m} \eqdef \rlockE{\xl_m}$, 
$\mathit{wrs}_{\x_m} \eqdef \writeE{}{\readset}{\mathit{RS}_m}$, 
$\mathit{rs}_{\code x_m} \eqdef \readE{}{\code x_m}{v_{\x_m}^0}$, 
$\mathit{ws}_{\code x_m} \eqdef \writeE{}{\code{s[x}_m\code ]}{(v_{\x_m}^0, v_{\x_m}^0)}$; 
and for all $m > 0$:
\[
	\mathit{RS}_{m {+} 1} \eqdef 
	\begin{cases}
		\mathit{RS}_m \cup \{\code x_m\} & \text{if } O_m {=} \readE{-}{\code x_m}{-} \\
		\mathit{RS}_m & \text{otherwise}
	\end{cases}
\]
and
\[
\small
	\wseq{\code x_m}{v_m}{\mathit{RS}_{m {-} 1},  \mathit{WS}_{m {-} 1}}
	\eqdef
	\begin{cases}
		\readE{}{\code{s[x}_m\code ]}{(\bot,\bot)}
		&  \hspace{-5pt} \text{if } \code x_m \not\in \mathit{RS}_{m {-} 1} \cup  \mathit{WS}_{m {-} 1} \\
		\relarrow{\imm \po} \mathit{fs_m} \\
		\relarrow{\imm \po} \mathit{rl}_{\x_m}
		\emptyset
		& \hspace{-5pt} \text{otherwise} 
	\end{cases}
\]	
$\mathit{wws}_{\x_m} \eqdef \writeE{}{\writeset}{\mathit{WS}_m}$; 
$\mathit{lw}_{\x_m} \eqdef \writeE{}{\code{s[x}_m\code ]}{(v'_m, v_m)}$; 
$\mathit{fs}_m$ and $\mathit{rl}_{\code x_m}$ are as defined above; and for all $m > 0$:
\[
\begin{array}{@{} c @{}}
	\mathit{WS}_{m {+} 1} \eqdef 
	\begin{cases}
		\mathit{WS}_m \cup \{\code x_m\} & \text{if } O_m {=} \writeE{-}{\code x_m}{-} \\
		\mathit{WS}_m & \text{otherwise}
	\end{cases} \\
	\mathit{ws}_{m {+} 1} \eqdef 
	\begin{cases}
		\mathit{ws}_m {++} [(\x_m, v_m)] & \text{if } O_m {=} \writeE{-}{\code x_m}{v_m} \\
		\mathit{ws}_m & \text{otherwise}
	\end{cases}
\end{array}	
\]	
Let $\readset_\txid = \mathit{RS}_m$,  
$\writeset_\txid = \mathit{WS}_m$, and
$\ws_{\txid} = \ws_m$;
let $\readset_\txid \cup \writeset_\txid$ be enumerated as $\{\x_1 \cdots \x_i\}$ for some $i$, 
and $\ws_\txid$ be enumerated as $\{(\x_1, v_1) \cdots (\x_j, v_j)\}$ for some $j$.
	\item $\mathit{VRs}$ denotes the sequence of events validating the reads and is of the  form $\mathit{v}_{\code x_1} \relarrow{\imm \po} \cdots \relarrow{\imm \po} \mathit{v}_{\code x_i}$, where for all $n \in \{1 \cdots i\}$:
	\[
	\begin{array}{l}
		\mathit{v}_{\code x_n} = 
		\begin{cases}
			\readE{}{\code{s[x$_n$]}}{(v_{\x_n}^0, -)}	
			\relarrow{\imm \po} \mathit{vr}_{\code x_n} {=} \readE{}{\x_n}{v_{\x_n}^0}
			& \text{ if } \code x_n \in \readset_{\txid}  \\
			\emptyset
			& \text{ otherwise}
		\end{cases}
	\end{array}	
	\]
	\item $\mathit{RUs}$ denotes the sequence of events releasing the reader locks (when the given location is in the read set only) and is of the form $\mathit{ru}_{\code x_1} \relarrow{\imm \po} \cdots \relarrow{\imm \po} \mathit{ru}_{\code x_i}$, where for all $n \in \{1 \cdots i\}$:
	\[
	\begin{array}{l}
		\mathit{ru}_{\code x_n} = 
		\begin{cases}
			\runlockE{\xl_n}
			& \text{ if } \code x_n \not\in \writeset_{\txid}  \\
			\emptyset
			& \text{ otherwise}
		\end{cases}
	\end{array}	
	\]
	\item $\mathit{PLs}$ denotes the sequence of events promoting the reader locks to writer ones (when the given location is in the write set), and is of the form $\mathit{pl}_{\code x_1} \relarrow{\imm \po} \cdots \relarrow{\imm \po} \mathit{pl}_{\code x_i}$, where for all $n \in \{1 \cdots i\}$:
	\[
	\begin{array}{l}
		\mathit{pl}_{\code x_n} = 
		\begin{cases}
			\plockE{\xl_n}
			& \text{if }  \code x_n \in \writeset_{\txid} \\
			\emptyset
			& \text{ otherwise } 
		\end{cases}
	\end{array}	
	\]
	\item $\mathit{Ws}$ denotes the sequence of events committing the writes of \denot{\code{T}} and is of the form $\mathit{c}_{\x_1, v_1} \relarrow{\imm \po} \cdots \relarrow{\imm \po} \mathit{c}_{\x_j, v_j}$, where for all $n \in \{1 \cdots j\}$:
$
	\mathit{c}_{\x_n, v_n} =  \writeE{}{\x_n}{v_n} 
$
	\item $\mathit{Us}$ denotes the sequence of events releasing the locks on the write set, and is of the form $\mathit{wu}_{\code x_1} \relarrow{\imm \po} \cdots \relarrow{\imm \po} \mathit{wu}_{\code x_i}$, where for all $n \in \{1 \cdots i\}$:
	\[
		\mathit{wu}_{\code x_n} = 
		\begin{cases}
			\wunlockE{\code{xl}_n} & \text{if } \code x_n \in \writeset_{\txid} \\
			\emptyset & \text{otherwise}
		\end{cases}				
	\]
\end{itemize}
Given a transaction trace $\trace_{\txid}$, we write e.g.~$\txid.\mathit{Us}$ to refer to its constituent $\mathit{Us}$ sub-trace and write $\mathit{Us}.\Events$ for the set of events related by \po in $\mathit{Us}$. Similarly, we write $\txid.\Events$ for the set of events related by \po in $\trace_{\txid}$.
Note that $\impGRSI.\Events = \bigcup\limits_{\txid \in \sort{Tx}}  \txid.\Events $.

Note that for each transaction $\txid$ and each location $\x$, the $\txid.\mathit{rl}_\x$, $\txid.\mathit{rs}_\x$, $\txid.\mathit{vr}_\x$, $\txid.\mathit{ru}_\x$, $\txid.\mathit{pl}_\x$ and $\txid.\mathit{wu}_\x$ are uniquely identified when they exist. 
Let $\txid.\mathit{w}_\x$ the last (in \po order) write to \x in $\mathit{Ws}$, when it exists.

%
%
%
%
%
Note that for each write operation $w$ in $\trace'_{\txid}$, there exists a \emph{matching} write operation in $\trace_{\txid}.\mathit{Ws}$, denoted by $\mw{w}$ is a one-to-one function such that: 
That is, 
\[
	\mw{w} {=} w'
	\iffdef
	\exsts{i}
	\land \itemAt{(\trace'_{\txid}.\Events \cap \Writes)}{i} = w
	\land  \itemAt{(\trace_{\txid}.\mathit{Ws})}{i} = w'
\]
We then define:
\[
	\mathsf{RF} \eqdef
	\begin{array}[t]{@{} l @{}}
		\setcomp{
			(w, r)	
		}{
			r \in \absGRSI.\NT \\ 	
			\land\, \big(
				 (w \in \impGRSI.\NT \land (w, r)  \in \absGRSI.\rf) \\
		 		\qquad \lor\,
				(\exsts{\txid'} \exsts{w' \in \txid'.\Events \cap \Writes} 
		 		(w', r)  \in \absGRSI.\rf
		 		\land w {=} \mw{w'} ) 
		 	\big)	
		} \\
		\cup
		\setcomp{
			(w, \txid.\mathit{rs}_\x)	, \\
			(w, \txid.\mathit{vr}_\x)	
		}{
			
				(w \in \impGRSI.\NT \land (w, \txid.\mathit{rs}_\x)  \in \absGRSI.\rf) \\
		 		\lor\,
				(\exsts{\txid'} \exsts{w' \in \txid'.\Events \cap \Writes} 
		 		(w', \txid.\mathit{rs}_\x)  \in \absGRSI.\rf
		 		\land w {=} \mw{w'} ) 
		} 
	\end{array}	
\]
Similarly, we define: 
\[
	\mathsf{MO} \eqdef
	\begin{array}[t]{@{} l @{}}
	\setcomp{
		(w_1, w_2)
	}{
		\exsts{w'_1, w'_2} 
		w_1 = \mw{w'_1}
		\land w_2 = \mw{w'_2}
		\land (w'_1, w'_2) \in \absGRSI.\co 
	} \\
	\cup
	\setcomp{
		(w, w')	
	}{
		w, w' \in \impGRSI.\NT \land (w, w') \in \absGRSI.\co	
	}
	 \\
	\cup
	\setcomp{
		(w, w')	
	}{
		w' \in \impGRSI.\NT \land 
		\exsts{w''} 
		w = \mw{w''}
		\land (w'', w') \in \absGRSI.\co 
	} \\
	\cup 
	\setcomp{
		(w', w)	
	}{
		w' \in \impGRSI.\NT \land 
		\exsts{w''} 
		w = \mw{w''}
		\land (w', w'') \in \absGRSI.\co 
	}
\end{array}	
\]
For each location \x we then define:
\[
\begin{array}{@{} r @{\hspace{2pt}} l @{}}
\LO_{\x} \eqdef
& \begin{array}[t]{@{} l @{\hspace{1pt}} l @{}} 
	& \setcomp{
		(\txid.\mathit{rl}_{\code x}, \txid.\mathit{pl}_{\code x}), 
		(\txid.\mathit{rl}_{\code x}, \txid.\mathit{wu}_{\code x}), 
		(\txid.\mathit{pl}_{\code x}, \txid.\mathit{wu}_{\code x})
	} 
	{ 
		\absGRSI.\Transactions_\txid \cap \Writes_\x \ne \emptyset
	} \\
	\cup 
	& \setcomp{
		(\txid.\mathit{rl}_{\code x}, \txid'.\mathit{pl}_{\code x}), 
		(\txid.\mathit{rl}_{\code x}, \txid'.\mathit{wu}_{\code x}), \\
		(\txid.\mathit{ru}_{\code x}, \txid'.\mathit{pl}_{\code x}), 	
		(\txid.\mathit{ru}_{\code x}, \txid'.\mathit{wu}_{\code x})
	} 
	{ 
		\txid \ne \txid'
		\land \exsts{a, b, x} \\
			\quad a \in \absGRSI.\Transactions_\txid \land b \in \absGRSI.\Transactions_\txid' \\
			\quad \land \loc{a} = \loc{b} = \x \\
			\quad \land\, a \in \absGRSI.\EReads
			\land (a, b) \in \absGRSI.\fr
	} \\
	\cup & \setcomp{
		(\txid.\mathit{rl}_{\code x}, \txid'.\mathit{pl}_{\code x}), 
		(\txid.\mathit{rl}_{\code x}, \txid'.\mathit{wu}_{\code x}), \\
		(\txid.\mathit{pl}_{\code x}, \txid'.\mathit{rl}_{\code x}), 
		(\txid.\mathit{pl}_{\code x}, \txid'.\mathit{pl}_{\code x}), 
		(\txid.\mathit{pl}_{\code x}, \txid'.\mathit{wu}_{\code x}), \\
		(\txid.\mathit{wu}_{\code x}, \txid'.\mathit{rl}_{\code x}), 
		(\txid.\mathit{wu}_{\code x}, \txid'.\mathit{pl}_{\code x}), 
		(\txid.\mathit{wu}_{\code x}, \txid'.\mathit{wu}_{\code x})
	} 
	{ 
		\txid \ne \txid'
		\land \exsts{a, b, x} \\
			a \in \absGRSI.\Transactions_\txid \land b \in \absGRSI.\Transactions_{\txid'} \\
			\land\, \loc{a} = \loc{b} = \x \\
			\land\, (a, b) \in \absGRSI.\co
	} \\
	\cup & \setcomp{
		(\txid.\mathit{pl}_{\code x}, \txid'.\mathit{rl}_{\code x}), 
		(\txid.\mathit{pl}_{\code x}, \txid'.\mathit{ru}_{\code x}), \\
		(\txid.\mathit{wu}_{\code x}, \txid'.\mathit{rl}_{\code x}), 
		(\txid.\mathit{wu}_{\code x}, \txid'.\mathit{ru}_{\code x})
	} 
	{ 
		\txid \ne \txid'
		\land \exsts{a, b, x} \\
			\quad a \in \absGRSI.\Transactions_\txid \land b \in \absGRSI.\Transactions_{\txid'} \\
			\quad \land \absGRSI.\Transactions_{\txid'} \cap \Writes_\x = \emptyset \\
			\quad \land\, \loc{a} {=} \loc{b} {=} \x \\
			\quad \land (a, b) \in \absGRSI.(\refC{\co}; \rf)
	} 
\end{array} 
\end{array}
\]
Note that each $\LO_\x$ satisfies the conditions in \cref{def:si_implementation_consistency}.

We are now in a position to demonstrate the completeness of our implementation. Given an RSI-consistent execution graph $\absGRSI$,
 we construct an execution graph $\impGRSI$ of the implementation as follows and demonstrate that $\consistent{\impGRSI}$ holds. 
\begin{itemize}
	\item $\impGRSI.\Events = \bigcup\limits_{\Transactions_\txid \in \absGRSI.\Transactions/\st} \trace_\txid.\Events \cup \absGRSI.\NT$. Observe that 
	$\absGRSI.\Events \subseteq \impGRSI.\Events$.
	\item $\impGRSI.\po$ is defined as $\absGRSI.\po$ extended by the $\po$ for the additional events of $\impGRSI$, given by each $\trace_\txid$ trace defined above. Note that $\impGRSI.\po$ does not introduce additional orderings between events of $\absGRSI.\Events$. That is, $\for{a, b \in \absGRSI.\Events} (a, b) \in \absGRSI.\po \Leftrightarrow (a, b) \in \impGRSI.\po$.
	\item 
	$\impGRSI.\rf = \mathsf{RF}$.
	\item $\impGRSI.\co = \mathsf{MO}$. 
	\item 
	$\impGRSI.\rsilo = 
	\bigcup_{\code x \in \textsc{Locs}} \LO_{\x}
	$, with $\LO_\x$ as defined above.
\end{itemize}
\paragraph{Notation} 
In what follows, given an RSI implementation graph $\impGSI$ as constructed above we write $\impGRSI.\NT$ for the non-transactional events of $\impGRSI$, i.e.~$\impGRSI.\NT \eqdef \setcomp{a}{\neg\exsts{\txid} a \in \impGRSI.\Events_{\txid}}$.
Moreover, as before, given a relation $\makerel r \subseteq \impGRSI.\Events \times \impGRSI.\Events$, we override the $\tin{\makerel r}$ notation and write $\tin{\makerel r}$ for $\setcomp{(a, b) \in \makerel r}{\exsts{\txid} a, b \in \trace_{\txid}.\Events}$.

\begin{lemma}\label{lem:rsi_alt_completeness}
Given an RSI-consistent execution graph $\absGRSI$ and its corresponding implementation graph $\impGRSI$ constructed as above, for all $a, b, \txid_a, \txid_b$:
\[
\begin{array}{@{} l @{}}
	(a, b) \in \impGRSI.\hb \Rightarrow \\
	\qquad\! (\exsts{\txid} a, b \in  \impGRSI.\Events_{\txid} \Rightarrow (a, b) \in \impGRSI.\po) \\
	\quad \land 
	\Big( \neg \exsts{\txid} a, b \in \impGRSI.\Events_{\txid} \Rightarrow \\
	\qqqquad 
	\begin{array}[t]{@{} l @{}}
		 \exsts{A, B} \emptyset \subset A \times B \subseteq \absGRSI.\rsihb \\
		 \land\ (a \in \impGRSI.\NT \Rightarrow A {=} \{a\}) \land (b \in \impGRSI.\NT \Rightarrow B {=} \{b\}) \\
		\land\ [a \in \impGRSI.\Events_{\txid_a} \Rightarrow \\
			\qquad A {=} \absGRSI.\Transactions_{\txid_a} \lor (\stg{a}{\txid_a} \leq 2 \land A {=} \absGRSI.\Transactions_{\txid_a} \cap \EReads) \\
			\qquad \lor\ (\stg{a}{\txid_a} \leq 4 \land \exsts{d \in \txid_a.\Writes} a \relarrow{\impGRSI.\refC{\po}} d \land A = \{\mwi{d}\}) \;] \\
		\land\ [b \in \impGRSI.\Events_{\txid_b} \Rightarrow \\
			\qquad B {=} \absGRSI.\Transactions_{\txid_b} \lor (\stg{b}{\txid_b} \geq 3 \land B {=} \absGRSI.\Transactions_{\txid_b} \cap \Writes) \;]	\quad \Big)
\end{array}		
\end{array}
\]
where
\[
	\stg{a}{\txid_a} \eqdef 
	\begin{cases}
		1 & \text{if } a \not\in \in 
		\txid_a.\mathit{RUs}_{\txid_a} 
		\cup \txid_a.\mathit{PLs}_{\txid_a}
		\cup \txid_a.\mathit{Ws}_{\txid_a}
		\cup \txid_a.\mathit{Us}_{\txid_a}\\
		2 & \text{if } a \in  \txid_a.\mathit{RUs}_{\txid_a} \\
		3 & \text{if } a \in  \txid_a.\mathit{PLs}_{\txid_a} \\
		4 & \text{otherwise}
	\end{cases}
\]
\begin{proof}
Pick an arbitrary RSI-consistent execution graph $\absGRSI$ and its corresponding implementation graph $\impGRSI$ constructed as above.
Let $\hb_0 \eqdef \impGRSI.(\po \cup \rf \cup \rsilo)$ and $\hb_{n {+} 1} \eqdef \hb_0; \hb_n$, for all $n \in \Nats$. It is then straightforward to demonstrate that $\impGRSI.\hb = \bigcup_{i \in \Nats} \hb_i$.
We thus demonstrate instead that:
\[
\begin{array}{@{} l @{}}
	\for{n \in \Nats} \for{a, b, \txid_a, \txid_b}
	(a, b) \in \impGRSI.\hb_n \Rightarrow \\
	\qqquad\! (\exsts{\txid} a, b \in  \impGRSI.\Events_{\txid} \Rightarrow (a, b) \in \impGRSI.\po) \\
	\qquad \land 
	\Big( \neg \exsts{\txid} a, b \in \impGRSI.\Events_{\txid} \Rightarrow \\
	\qqqquad 
	\begin{array}[t]{@{} l @{}}
		 \exsts{A, B} \emptyset \subset A \times B \subseteq \absGRSI.\rsihb \\
		 \land\ (a \in \impGRSI.\NT \Rightarrow A {=} \{a\}) \land (b \in \impGRSI.\NT \Rightarrow B {=} \{b\}) \\
		\land\ [a \in \impGRSI.\Events_{\txid_a} \Rightarrow \\
			\qquad A {=} \absGRSI.\Transactions_{\txid_a} \lor (\stg{a}{\txid_a} \leq 2 \land A {=} \absGRSI.\Transactions_{\txid_a} \cap \EReads) \\
			\qquad \lor\ (\stg{a}{\txid_a} \leq 4 \land \exsts{d \in \txid_a.\Writes} a \relarrow{\impGRSI.\refC{\po}} d \land A = \{\mwi d\}) \;] \\
		\land\ [b \in \impGRSI.\Events_{\txid_b} \Rightarrow \\
			\qquad B {=} \absGRSI.\Transactions_{\txid_b} \lor (\stg{b}{\txid_b} \geq 3 \land B {=} \absGRSI.\Transactions_{\txid_b} \cap \Writes) \;]	\quad \Big)
\end{array}		
\end{array}
\]
We proceed by induction on $n$.\\

\noindent \textbf{Base case $n = 0$}\\
There are three cases to consider: 1) $(a, b) \in \impGRSI.\po$; or 2) $(a, b) \in \impGRSI.\rf$; or 3) $(a, b) \in \impGRSI.\rsilo$.

In case (1) there are five cases to consider: a) $\exsts{\txid} (a, b) \in \impGRSI.\Events_{\txid}$; or b) $a, b \in \impGRSI.\NT$; or c) $a \in \impGRSI.\NT$ and $b \in \impGRSI.\Events_{\txid_b}$; or d) $a \in \impGRSI.\Events_{\txid_a}$ and $b \in \impGRSI.\NT$; or e) $a \in \impGRSI.\Events_{\txid_a}$, $b \in \impGRSI.\Events_{\txid_b}$ and $\txid_a \ne \txid_b$.
In case (1.a) we then have $(a, b) \in \impGRSI.\poi$, as required.
In case (1.b) from the construction of $\impGRSI.\po$ we have $(a, b) \in \absGRSI.\po$, as required.
In case (1.c) from the construction of $\impGRSI.\po$ we have $(\{a\} \times \absGRSI.\Transactions_{\txid_b}) \in \absGRSI.\po$, as required.
In case (1.d) from the construction of $\impGRSI.\po$ we have $(\absGRSI.\Transactions_{\txid_a} \times \{b\}) \in \absGRSI.\po$, as required.
In case (1.e) from the construction of $\impGRSI.\po$ we have $(\absGRSI.\Transactions_{\txid_a} \times \absGRSI.\Transactions_{\txid_b}) \in \absGRSI.\po$, as required.

In case (2) there are five cases to consider: 
a) $\exsts{\txid} (a, b) \in \impGRSI.\Events_{\txid}$; or 
b) $a, b \in \impGRSI.\NT$; or 
c) $a \in \impGRSI.\NT$ and $b \in \impGRSI.\Events_{\txid_b}$; or 
d) $a \in \impGRSI.\Events_{\txid_a}$ and $b \in \impGRSI.\NT$; or e) $a \in \impGRSI.\Events_{\txid_a}$, $b \in \impGRSI.\Events_{\txid_b}$ and $\txid_a \ne \txid_b$. 
Case (2.a) holds vacuously as $(a, b) \in \impGRSI.\rfi = \emptyset$.
In case (2.b) from the construction of $\impGRSI.\rf$ we have $(a, b) \in \absGRSI.\rf$, as required.
In case (2.c) from the construction of $\impGRSI.\rf$ we know that there exists $r \in \absGRSI.\Transactions_{\txid_b}$ such that $(a, r) \in \absGRSI.\rf$. 
As such we have $(\{w\} \times \absGRSI.\Transactions_{\txid_b}) \subseteq \absGRSI.([\NT]; \rf; \st) \subseteq \absGRSI.\rsihb$, as required.
In case (2.d) from the construction of $\impGRSI.\rf$ we then have $(\mwi a, b) \in \absGRSI.\rf$; that is $a \relarrow{\impGRSI.\refC{\po}} a$ and $(\mwi a, b) \in \absGRSI.\rf$, as required.
In case (2.e) from the construction of $\impGRSI.\rf$ we know that there exists $w \in \absGRSI.\Transactions_{\txid_a}$, $r \in \absGRSI.\Transactions_{\txid_b}$ such that $(w, r) \in \absGRSI.\rf$. As such we have $(\absGRSI.\Transactions_{\txid_a} \times \absGRSI.\Transactions_{\txid_b}) \subseteq \absGRSI.\rft \subseteq \absGRSI.\rsihb$, as required.

In case (3) there are five cases to consider: 
a) $\exsts{\txid} (a, b) \in \impGRSI.\Events_{\txid}$; or 
b) $a, b \in \impGRSI.\NT$; or 
c) $a \in \impGRSI.\NT$ and $b \in \impGRSI.\Events_{\txid_b}$; or 
d) $a \in \impGRSI.\Events_{\txid_a}$ and $b \in \impGRSI.\NT$; or 
e) $a \in \impGRSI.\Events_{\txid_a}$, $b \in \impGRSI.\Events_{\txid_b}$ and $\txid_a \ne \txid_b$. 
In case (3.a) from the construction of $\lo$ we have $(a, b) \in \impGRSI.\po$, as required.
Cases (3.b-3.d) hold vacuously as there are no $\rsilo$ edge 
to or from non-transactional events.
In case (3.e) from the construction of $\rsilo$ we know there exist $\x, c, d$ such that  $c \in \absGRSI.\Transactions_{\txid_a}$, $d \in \absGRSI.\Transactions_{\txid_b}$, and either 
i) $(c, d) \in \absGRSI.\rf$; or 
ii) $(c, d) \in \absGRSI.(\co; \rf)$; or 
iii) $(c, d) \in \absGRSI.\co$; or
iv) $\loc c = \loc d = \x$, $a = \txid_a.\mathit{rl}_{\x} \lor a = \txid_a.\mathit{ru}_{\x}$, $b = \txid_a.\mathit{pl}_{\x} \lor b = \txid_a.\mathit{wu}_{\x}$ and $c \in \absGRSI.\EReads$ and $(c, d) \in \absGRSI.\fr$.
In cases (3.e.i, 3.e.ii) we have $\absGRSI.\Transactions_{\txid_a} \times \absGRSI.\Transactions_{\txid_b} \subseteq \absGRSI.(\rft \cup \tlift{(\co; \rf)}) \subseteq \rsirf  \subseteq \rsihb$, as required.
In case (3.e.iii) we have $\absGRSI.\Transactions_{\txid_a} \times \absGRSI.\Transactions_{\txid_b} \subseteq \absGRSI.\cot  \subseteq \rsihb$, as required.
In case (3.e.iv) we then have $\stg{a}{\txid_a} \leq 2$, $\stg{b}{\txid_b} \geq 3$, and $\absGRSI.(\Transactions_{\txid_a} \cap \EReads) \times \absGRSI.(\Transactions_{\txid_b} \cap \Writes) \subseteq \absGRSI.\rsifr  \subseteq \rsihb$, as required.\\

\noindent \textbf{Inductive case $n = m {+} 1$}\\
\begin{align}
\begin{array}{@{} l @{}}
	\for{i \in \Nats} \for{a, b, \txid_a, \txid_b}
	i \leq m \land (a, b) \in \impGRSI.\hb_i \Rightarrow \\
	\qqquad\! (\exsts{\txid} a, b \in  \impGRSI.\Events_{\txid} \Rightarrow (a, b) \in \impGRSI.\po) \\
	\qquad \land 
	\Big( \neg \exsts{\txid} a, b \in \impGRSI.\Events_{\txid} \Rightarrow \\
	\qqqquad 
	\begin{array}[t]{@{} l @{}}
		 \exsts{A, B} \emptyset \subset A \times B \subseteq \absGRSI.\rsihb \\
		 \land\ (a \in \impGRSI.\NT \Rightarrow A {=} \{a\}) \land (b \in \impGRSI.\NT \Rightarrow B {=} \{b\}) \\
		\land\ [a \in \impGRSI.\Events_{\txid_a} \Rightarrow \\
			\qquad A {=} \absGRSI.\Transactions_{\txid_a} \lor (\stg{a}{\txid_a} \leq 2 \land A {=} \absGRSI.\Transactions_{\txid_a} \cap \EReads) \\
			\qquad \lor\ (\stg{a}{\txid_a} \leq 4 \land \exsts{d \in \txid_a.\Writes} a \relarrow{\impGRSI.\refC{\po}} d \land A = \{\mwi d\}) \;] \\
		\land\ [b \in \impGRSI.\Events_{\txid_b} \Rightarrow \\
			\qquad B {=} \absGRSI.\Transactions_{\txid_b} \lor (\stg{b}{\txid_b} \geq 3 \land B {=} \absGRSI.\Transactions_{\txid_b} \cap \Writes) \;]	\quad \Big)
\end{array}		
\end{array}
\tag{I.H.}
\label{IH:rsi_alt_completeness}
\end{align}
Since $(a, b) \in \hb_n$, from the definition of $\hb_n$ we know there exists $c$ such that $(a, c) \in \hb_0$ and $(c, b) \in \hb_m$. 
There are then five cases to consider: 
1) $\exsts{\txid} (a, b) \in \impGRSI.\Events_{\txid}$; or 
2) $a, b \in \impGRSI.\NT$; or 
3) $a \in \impGRSI.\NT$ and $b \in \impGRSI.\Events_{\txid_b}$; or 
4) $a \in \impGRSI.\Events_{\txid_a}$ and $b \in \impGRSI.\NT$; or 
5) $a \in \impGRSI.\Events_{\txid_a}$, $b \in \impGRSI.\Events_{\txid_b}$ and $\txid_a \ne \txid_b$. \\

\noindent \textbf{Case 1}\\
In case (1) pick arbitrary $\txid$ such that $a, b \in \impGRSI.\Events_{\txid}$. There are then three additional cases to consider: 
a) $c \in \impGRSI.\NT$; 
b) $c \in \impGRSI.\Events_{\txid}$; or 
c) there exists $\txid' \ne \txid$ such that $c \in \impGRSI.\Events_{\txid'}$.

In case (1.a) from the proof of the base case and the \eqref{IH:rsi_alt_completeness} we know there exist $A, B, C \ne \emptyset$ such that $C = \{c\}$, $A \times C \subseteq \absGRSI.\rsihb$, $C \times B \subseteq \absGRSI.\rsihb$ and thus $A \times B \subseteq \absGRSI.\rsihb$ and either: 
i) $A = \absGRSI.\Transactions_{\txid}$ and  $B = \absGRSI.\Transactions_{\txid}$; or 
ii) $A = \absGRSI.\Transactions_{\txid}$ and  $B = (\absGRSI.(\Transactions_{\txid} \cap  \Writes)$; or 
iii) $\stg{a}{\txid} \leq 2$, $A = \absGRSI.(\Transactions_{\txid} \cap \EReads)$ and $B = \absGRSI.\Transactions_{\txid}$; or
iv) $\stg{a}{\txid} \leq 2$, $A = \absGRSI.(\Transactions_{\txid} \cap \EReads)$, $\stg{b}{\txid} \geq 3$ and $B = \absGRSI.(\Transactions_{\txid} \cap \Writes)$; or
v) $\stg{a}{\txid} \leq 4$, $\exsts{d \in \txid.\Writes} a \relarrow{\impGRSI.\refC{\po}} d \land A = \{\mwi d\}$ and $B = \absGRSI.\Transactions_{\txid}$; or
vi) $\stg{a}{\txid} \leq 4$, $\exsts{d \in \txid.\Writes} a \relarrow{\impGRSI.\refC{\po}} d \land A = \{\mwi d\}$ and $B = \absGRSI.(\Transactions_{\txid} \cap \Writes)$.

Case (i) cannot arise as we would have $A = B$ and thus $A \times A \subseteq \absGRSI.\rsihb$, contradicting the assumption that $\absGRSI$ is RSI-consistent. 
Case (ii) cannot arise as we would have $(\absGRSI.(\Transactions_{\txid} \cap  \Writes) \times (\absGRSI.(\Transactions_{\txid} \cap  \Writes) \subseteq \absGRSI.\rsihb$, contradicting the assumption that $\absGRSI$ is RSI-consistent. 
Case (iii) cannot arise as we would have $(\absGRSI.(\Transactions_{\txid} \cap  \EReads) \times (\absGRSI.(\Transactions_{\txid} \cap  \EReads) \subseteq \absGRSI.\rsihb$, contradicting the assumption that $\absGRSI$ is RSI-consistent. 
In case (iv) since we have $\stg{a}{\txid} \leq 2$ and $\stg{b}{\txid} \geq 3$, from the definition of $\stg{.}{.}$ and the construction of $\impGRSI$ we have $(a, b) \in \impGRSI.\po$, as required.
Cases (v-vi) cannot arise as we would have $\exsts{d \in \txid.\Writes} (\mwi d, \mwi d)  \in \absGRSI.\rsihb$, contradicting the assumption that $\absGRSI$ is RSI-consistent. \\

In case (1.b) from the proof of the base case we have $(a, c) \in \impGRSI.\po$. On the other hand from (\ref{IH:rsi_alt_completeness}) we have $(c, b) \in \impGRSI.\po$. As $\impGRSI.\po$ is transitively closed, we have $(a, b) \in \impGRSI.\po$, as required.\\

In case (1.c)  from the proof of the base case (in cases 1.e, 2.e and 3.e) we have either 
A) $\absGRSI.\Transactions_{\txid} \times \absGRSI.\Transactions_{\txid'} \subseteq \absGRSI.\rsihb$; or 
B) $\stg{a}{\txid} \leq 2$, $\stg{c}{\txid'} \geq 3$,  and $\emptyset \subset \absGRSI.(\Transactions_{\txid} \cap \EReads) \times \absGRSI.(\Transactions_{\txid'} \cap \Writes) \subseteq \absGRSI.\rsihb$.
On the other hand from \eqref{IH:rsi_alt_completeness} we know there exist $C, B \ne \emptyset$ such that $C \times B \subseteq \absGRSI.\rsihb$ and either: 
i) $C = \absGRSI.\Transactions_{\txid'}$ and  $B = \absGRSI.\Transactions_{\txid}$; or 
ii) $C = \absGRSI.\Transactions_{\txid'}$,  $\stg{b}{\txid} \geq 3$ and $B = \absGRSI.(\Transactions_{\txid} \cap  \Writes)$; or 
iii) $\stg{c}{\txid'} \leq 2$, $C = \absGRSI.(\Transactions_{\txid'} \cap \EReads)$ and $B = \absGRSI.\Transactions_{\txid}$; or
iv) $\stg{c}{\txid'} \leq 2$, $C = \absGRSI.(\Transactions_{\txid'} \cap \EReads)$, $\stg{b}{\txid} \geq 3$ and $B = \absGRSI.(\Transactions_{\txid} \cap \Writes)$; or
v) $\stg{c}{\txid'} \leq 4$, $\exsts{d \in \txid'.\Writes} c \relarrow{\impGRSI.\refC{\po}} d \land C = \{\mwi d\}$ and $B = \absGRSI.\Transactions_{\txid}$; or
vi) $\stg{c}{\txid'} \leq 4$, $\exsts{d \in \txid'.\Writes} c \relarrow{\impGRSI.\refC{\po}} d \land C = \{\mwi d\}$, $\stg{b}{\txid} \geq 3$ and $B = \absGRSI.(\Transactions_{\txid} \cap \Writes)$.

Cases (A.i-A.vi) lead to a cycle in $\absGRSI.\rsihb$, contradicting the assumption that $\absGRSI$ is RSI-consistent.
In cases (B.ii, B.iv, B.vi) we then have $\stg{a}{\txid} \leq 2$ and $\stg{b}{\txid} \geq 3$. Consequently from the definition of $\stg{.}{.}$ and the construction of $\impGRSI$ we have $(a, b) \in \impGRSI.\po$, as required.
Case (B.i) leads to a cycle in $\absGRSI.\rsihb$, contradicting the assumption that $\absGRSI$ is RSI-consistent.
In cases (B.iii) we have $\stg{c}{\txid'} \geq 3$  and $\stg{c}{\txid'} \leq 2$, leading to a contradiction.
In case (B.v) we then know  $\exsts{d  \in \txid'.\Writes} \emptyset \subset \absGRSI.(\Transactions_{\txid} \cap \EReads) \times \{\mwi d\} \subseteq \absGRSI.\rsihb \land \{\mwi d\} \times \absGRSI.\Transactions_\txid \subseteq \absGRSI.\rsihb$. That is, we have  $\emptyset \subset \absGRSI.(\Transactions_{\txid} \cap \EReads) \times  \absGRSI.(\Transactions_{\txid} \cap \EReads) \subseteq \absGRSI.\rsihb$, contradicting the assumption that $\absGRSI$ is RSI-consistent.
\\

\noindent \textbf{Case 2}\\
There are two additional cases to consider: 
a) $c \in \impGRSI.\NT$; or
b) there exists $\txid$ such that $c \in \impGRSI.\Events_{\txid}$.

In case (2.a) from the proof of the base case we have $(a, c) \in \absGRSI.\rsihb$. On the other hand from (\ref{IH:rsi_alt_completeness}) we have $(c, b) \in \absGRSI.\rsihb$. As $\absGRSI.\rsihb$ is transitively closed, we have $(a, b) \in \absGRSI.\rsihb$, as required.

In case (2.b) from the proof of the base case we know there exists $C_1 \ne \emptyset$ such that $\{a\} \times C_1 \subseteq \absGRSI.\rsihb$ and either: 
A) $C_1 = \absGRSI.\Transactions_{\txid}$; or 
B) $\stg{c}{\txid} \geq 3$ and $C_1 = \absGRSI.(\Transactions_{\txid} \cap \Writes)$.
On the other hand, from (\ref{IH:rsi_alt_completeness}) we know there exists $C_2 \ne \emptyset$ such that $C_2 \times \{b\} \in \absGRSI\rsihb$ and either:
i) $C_2 = \absGRSI.\Transactions_{\txid}$; or
ii) $\stg{c}{\txid} \leq 2$ and $C_2 = \absGRSI.(\Transactions_{\txid} \cap \EReads)$; or
iii) $\stg{c}{\txid} \leq 4$ and $\exsts{d \in \txid.\Writes} c \relarrow{\impGRSI.\refC{\po}} d \land C_2 = \{\mwi d\}$.

In cases (A.i-A.iii) and (B.i, B.iii) from the transitivity of $\absGRSI.\rsihb$ we have $(a, b) \in \absGRSI.\rsihb$, as required.
Case (B.ii) cannot arise as otherwise we would have $3 \leq \stg{c}{\txid} \leq 2$, leading to a contradiction.\\

\noindent \textbf{Case 3}\\
There are three additional cases to consider: 
a) $c \in \impGRSI.\NT$; 
b) $c \in \impGRSI.\Events_{\txid_b}$; or 
c) there exists $\txid_c \ne \txid_b$ such that $c \in \impGRSI.\Events_{\txid_c}$.

In case (3.a) from the proof of the base case we have $(a, c) \in \absGRSI.\rsihb$. On the other hand from (\ref{IH:rsi_alt_completeness}) we know there exists $B \ne \emptyset$ such that $\{c\} \times B \subseteq \absGRSI.\rsihb$ and $B = \absGRSI.\Transactions_{\txid_b} \lor (\stg{b}{\txid_b} \geq 3 \land B = \absGRSI.(\Transactions_{\txid_b} \cap \Writes))$.
As $\absGRSI.\rsihb$ is transitively closed we then know there exists $B \ne \emptyset$ such that $\{a\} \times B \subseteq \absGRSI.\rsihb$ and $B = \absGRSI.\Transactions_{\txid_b} \lor (\stg{b}{\txid_b} \geq 3 \land B = \absGRSI.(\Transactions_{\txid_b} \cap \Writes))$, as required.

In case (3.b) from the proof of the base case we know there exists $B \ne \emptyset$ such that $\{a\} \times B \subseteq \absGRSI.\rsihb$ and $B = \absGRSI.\Transactions_{\txid_b} \lor (\stg{c}{\txid_b} \geq 3 \land B = \absGRSI.(\Transactions_{\txid_b} \cap \Writes))$.
On the other hand, from (\ref{IH:rsi_alt_completeness}) we have $c \relarrow{\impGRSI.\po} b$ and thus from the definition of $\stg{.}{.}$ and the construction of $\impGRSI$ we have $\stg{b}{\txid_b} \geq \stg{c}{\txid_b}$. As such, we we know there exists $B \ne \emptyset$ such that $\{a\} \times B \subseteq \absGRSI.\rsihb$ and $B = \absGRSI.\Transactions_{\txid_b} \lor (\stg{b}{\txid_b} \geq 3 \land B = \absGRSI.(\Transactions_{\txid_b} \cap \Writes))$, as required.

In case (3.c) from the proof of the base case we know there exists $C_1 \ne \emptyset$ such $\{a\} \times C_1 \subseteq \absGRSI.\rsihb$ and either 
A) $C_1 = \absGRSI.\Transactions_{\txid_c}$; or
B)$(\stg{c}{\txid_c} \geq 3 \land C_1 = \absGRSI.(\Transactions_{\txid_c} \cap \Writes))$.
On the other hand, from (\ref{IH:rsi_alt_completeness}) we know there exist $C_2, B \ne \emptyset$ such that $C_2 \times B \in \absGRSI.\rsihb$ and either:
i) $C_2 = \absGRSI.\Transactions_{\txid_c}$ and $B = \absGRSI.\Transactions_{\txid_b}$; or
ii) $C_2 = \absGRSI.\Transactions_{\txid_c}$ and $(\stg{b}{\txid_b} \geq 3 \land B = \absGRSI.(\Transactions_{\txid_b} \cap \Writes))$; or
iii) $\stg{c}{\txid_c} \leq 2 \land C_2 = \absGRSI.(\Transactions_{\txid_c} \cap \EReads)$ and $B = \absGRSI.\Transactions_{\txid_b}$; or
iv) $\stg{c}{\txid_c} \leq 2 \land C_2 = \absGRSI.(\Transactions_{\txid_c} \cap \EReads)$ and $(\stg{b}{\txid_b} \geq 3 \land B = \absGRSI.(\Transactions_{\txid_b} \cap \Writes))$; or
v) $\stg{c}{\txid_c} \leq 4 \land \exsts{d \in \txid_c.\Writes} c \relarrow{\impGRSI.\refC{\po}} d \land C_2 = \{\mwi d\}$ and $B = \absGRSI.\Transactions_{\txid_b}$; or
vi) $\stg{c}{\txid_c} \leq 4 \land \exsts{d \in \txid_c.\Writes} c \relarrow{\impGRSI.\refC{\po}} d \land C_2 = \{\mwi d\}$ and $(\stg{b}{\txid_b} \geq 3 \land B = \absGRSI.(\Transactions_{\txid_b} \cap \Writes))$. 

In cases (A.i-A.vi) and (B.i, B.ii, B.v, B.vi) from the transitivity of $\absGRSI.\rsihb$ we know there exists $B \ne \emptyset$ such that $\{a\} \times B \subseteq \absGRSI.\rsihb$ and  $B = \absGRSI.\Transactions_{\txid_b} \lor (\stg{b}{\txid_b} \geq 3 \land B = \absGRSI.(\Transactions_{\txid_b} \cap \Writes))$, as required.
Cases (B.iii, B.iv) cannot arise as we would otherwise have $3 \leq \stg{c}{\txid_c} \leq 2$, leading to a contradiction.\\

\noindent \textbf{Case 4}\\
There are three additional cases to consider: 
a) $c \in \impGRSI.\NT$; 
b) $c \in \impGRSI.\Events_{\txid_a}$; or 
c) there exists $\txid_c \ne \txid_a$ such that $c \in \impGRSI.\Events_{\txid_c}$.

In case (4.a) from (\ref{IH:rsi_alt_completeness}) we have $(c, b) \in \absGRSI.\rsihb$. On the other hand from the base case we know there exists $A \ne \emptyset$ such that $A \times \{c\}  \subseteq \absGRSI.\rsihb$ and $A = \absGRSI.\Transactions_{\txid_a} \lor (\stg{a}{\txid_a} \leq 2 \land A = \absGRSI.(\Transactions_{\txid_a} \cap \EReads)) \lor \stg{a}{\txid_a} \leq 4 \land \exsts{d \in \txid_a.\Writes} a \relarrow{\impGRSI.\refC{\po}} d \land A = \{\mwi d\}$.
As $\absGRSI.\rsihb$ is transitively closed we then know we know there exists $A \ne \emptyset$ such that $A \times \{b\} \subseteq \absGRSI.\rsihb$ and $A = \absGRSI.\Transactions_{\txid_a} \lor (\stg{a}{\txid_a} \leq 2 \land A = \absGRSI.(\Transactions_{\txid_a} \cap \EReads)) \lor \stg{a}{\txid_a} \leq 4 \land \exsts{d \in \txid_a.\Writes} a \relarrow{\impGRSI.\refC{\po}} d \land A = \{\mwi d\}$, as required.

In case (4.b) from \eqref{IH:rsi_alt_completeness} we know there exists $A \ne \emptyset$ such that $A \times \{b\}  \subseteq \absGRSI.\rsihb$ and $A = \absGRSI.\Transactions_{\txid_a} \lor (\stg{c}{\txid_a} \leq 2 \land A = \absGRSI.(\Transactions_{\txid_a} \cap \EReads)) \lor \stg{c}{\txid_a} \leq 4 \land \exsts{d \in \txid_a.\Writes} a \relarrow{\impGRSI.\refC{\po}} d \land A = \{\mwi d\}$.
On the other hand, from the proof of the base case we have $a \relarrow{\impGRSI.\po} c$ and thus from the definition of $\stg{.}{.}$ and the construction of $\impGRSI$ we have $\stg{a}{\txid_a} \leq \stg{c}{\txid_a}$.
As such, we know there exists $A \ne \emptyset$ such that $A \times \{b\}  \subseteq \absGRSI.\rsihb$ and $A = \absGRSI.\Transactions_{\txid_a} \lor (\stg{a}{\txid_a} \leq 2 \land A = \absGRSI.(\Transactions_{\txid_a} \cap \EReads)) \lor \stg{a}{\txid_a} \leq 4 \land \exsts{d \in \txid_a.\Writes} a \relarrow{\impGRSI.\refC{\po}} d \land A = \{\mwi d\}$, as required.

In case (4.c) from the proof of the base case we know there exist $A, C_1 \ne \emptyset$ such that $A \times C_1  \subseteq \absGRSI.\rsihb$ and either:
i) $A = \absGRSI.\Transactions_{\txid_a}$ and $C_1 = \absGRSI.\Transactions_{\txid_c}$; or
ii) $A = \absGRSI.\Transactions_{\txid_a}$ and $(\stg{c}{\txid_c} \geq 3 \land C_1 = \absGRSI.(\Transactions_{\txid_c} \cap \Writes))$; or
iii) $\stg{a}{\txid_a} \leq 2 \land A = \absGRSI.(\Transactions_{\txid_a} \cap \EReads)$ and $C_1 = \absGRSI.\Transactions_{\txid_c}$; or
iv) $\stg{a}{\txid_a} \leq 2 \land A = \absGRSI.(\Transactions_{\txid_a} \cap \EReads)$ and $(\stg{c}{\txid_c} \geq 3 \land C_1 = \absGRSI.(\Transactions_{\txid_c} \cap \Writes))$; or
v) $\stg{a}{\txid_a} \leq 4 \land \exsts{d \in \txid_a.\Writes} a \relarrow{\impGRSI.\refC{\po}} d \land A = \{\mwi d\}$ and $C_1 = \absGRSI.\Transactions_{\txid_c}$; or
vi) $\stg{a}{\txid_a} \leq 4 \land \exsts{d \in \txid_a.\Writes} a \relarrow{\impGRSI.\refC{\po}} d \land A = \{\mwi d\}$ and $(\stg{c}{\txid_c} \geq 3 \land C_1 = \absGRSI.(\Transactions_{\txid_c} \cap \Writes))$.

On the other hand, from (\ref{IH:rsi_alt_completeness}) we know there exists $C_2 \ne \emptyset$ such $C_2 \times \{b\} \subseteq \absGRSI.\rsihb$ and either 
A) $C_2 = \absGRSI.\Transactions_{\txid_c}$; or
B) $(\stg{c}{\txid_c} \leq 2 \land C_2 = \absGRSI.(\Transactions_{\txid_c} \cap \EReads))$; or
C) $(\stg{c}{\txid_c} \leq 4 \land \exsts{e \in \txid_c.\Writes} c \relarrow{\impGRSI.\refC{\po}} e \land C_2 = \{\mwi e\})$.

In cases (A.i-A.vi), (C.i-C.vi) and (B.i, B.iii, B.v) from the transitivity of $\absGRSI.\rsihb$ we know there exists $A \ne \emptyset$ such that $A \times \{b\} \subseteq \absGRSI.\rsihb$ and $A = \absGRSI.\Transactions_{\txid_a} \lor (\stg{a}{\txid_a} \leq 2 \land A = \absGRSI.(\Transactions_{\txid_a} \cap \EReads)) \lor (\stg{a}{\txid_a} \leq 4 \land \exsts{d \in \txid_a.\Writes} a \relarrow{\impGRSI.\refC{\po}} d \land A = \{d\})$, as required.
Cases (B.ii, B.iv, B.vi) cannot arise as we would otherwise have $3 \leq \stg{c}{\txid_c} \leq 2$, leading to a contradiction.\\

\noindent \textbf{Case 5}\\
There are four additional cases to consider: 
a) $c \in \impGRSI.\NT$; 
b) $c \in \impGRSI.\Events_{\txid_a}$; or 
c) $c \in \impGRSI.\Events_{\txid_b}$; or 
d) there exists $\txid_c$ such that $\txid_c \ne \txid_a$, $\txid_c \ne \txid_b$ and $c \in \impGRSI.\Events_{\txid_c}$.

In case (5.a) from the proof of the base case we know there exists $A \ne \emptyset$ such that $A \times \{c\} \subseteq \absGRSI.\rsihb$ and $A = \absGRSI.\Transactions_{\txid_a} \lor (\stg{a}{\txid_a} \leq 2 \land A = \absGRSI.(\Transactions_{\txid_a} \cap \EReads)) \lor (\stg{a}{\txid_a} \leq 4 \land \exsts{d \in \txid_a.\Writes} a \relarrow{\impGRSI.\refC{\po}} d \land A = \{\mwi d\})$.
On the other hand from (\ref{IH:rsi_alt_completeness}) we know there exists $B \ne \emptyset$ such that $\{c\} \times B \subseteq \absGRSI.\rsihb$ and  $B = \absGRSI.\Transactions_{\txid_b} \lor (\stg{b}{\txid_b} \geq 3 \land B = \absGRSI.(\Transactions_{\txid_b} \cap \Writes))$.
As such, sine $\absGRSI.\rsihb$ is transitive we know there exist $A, B \ne \emptyset$ such that $A \times B \subseteq \absGRSI.\rsihb$; that $A = \absGRSI.\Transactions_{\txid_a} \lor (\stg{a}{\txid_a} \leq 2 \land A = \absGRSI.(\Transactions_{\txid_a} \cap \EReads)) \lor (\stg{a}{\txid_a} \leq 4 \land \exsts{d \in \txid_a.\Writes} a \relarrow{\impGRSI.\refC{\po}} d \land A = \{\mwi d\})$; 
and that $B = \absGRSI.\Transactions_{\txid_b} \lor (\stg{b}{\txid_b} \geq 3 \land B = \absGRSI.(\Transactions_{\txid_b} \cap \Writes))$, as required.

In case (5.b) from (\ref{IH:rsi_alt_completeness}) we know there exist $A, B \ne \emptyset$ such that $A \times B \subseteq \absGRSI.\rsihb$; that $A = \absGRSI.\Transactions_{\txid_a} \lor (\stg{c}{\txid_a} \leq 2 \land A = \absGRSI.(\Transactions_{\txid_a} \cap \EReads)) \lor (\stg{c}{\txid_a} \leq 4 \land \exsts{d \in \txid_a.\Writes} a \relarrow{\impGRSI.\refC{\po}} d \land A = \{\mwi d\})$; 
and that $B = \absGRSI.\Transactions_{\txid_b} \lor (\stg{b}{\txid_b} \geq 3 \land B = \absGRSI.(\Transactions_{\txid_b} \cap \Writes))$.
On the other hand from the proof of the base case we have $(a, c) \in \absGRSI.\po$ and thus from the definition of $\stg{.}{.}$ and the construction of $\impGRSI$ we have $\stg{a}{\txid_a} \leq \stg{c}{\txid_a}$.
As such we know there exist $A, B \ne \emptyset$ such that $A \times B \subseteq \absGRSI.\rsihb$; that $A = \absGRSI.\Transactions_{\txid_a} \lor (\stg{a}{\txid_a} \leq 2 \land A = \absGRSI.(\Transactions_{\txid_a} \cap \EReads)) \lor (\stg{a}{\txid_a} \leq 4 \land \exsts{d \in \txid_a.\Writes} a \relarrow{\impGRSI.\refC{\po}} d \land A = \{\mwi d\})$ and that $B = \absGRSI.\Transactions_{\txid_b} \lor (\stg{b}{\txid_b} \geq 3 \land B = \absGRSI.(\Transactions_{\txid_b} \cap \Writes))$, as required.

In case (5.c) from the proof of the base case we know there exist $A, B \ne \emptyset$ such that $A \times B \subseteq \absGRSI.\rsihb$; that $A = \absGRSI.\Transactions_{\txid_a} \lor (\stg{a}{\txid_a} \leq 2 \land A = \absGRSI.(\Transactions_{\txid_a} \cap \EReads)) \lor (\stg{a}{\txid_a} \leq 4 \land \exsts{d \in \txid_a.\Writes} a \relarrow{\impGRSI.\refC{\po}} d \land A = \{\mwi d\})$, and that $B = \absGRSI.\Transactions_{\txid_b} \lor (\stg{c}{\txid_b} \geq 3 \land B = \absGRSI.(\Transactions_{\txid_b} \cap \Writes))$.
On the other hand from (\ref{IH:rsi_alt_completeness}) we have $(c, b) \in \absGRSI.\po$ and thus from the definition of $\stg{.}{.}$ and the construction of $\impGRSI$ we have $\stg{c}{\txid_a} \leq \stg{b}{\txid_a}$.
As such we know there exist $A, B \ne \emptyset$ such that $A \times B \subseteq \absGRSI.\rsihb$; that $A = \absGRSI.\Transactions_{\txid_a} \lor (\stg{a}{\txid_a} \leq 2 \land A = \absGRSI.(\Transactions_{\txid_a} \cap \EReads)) \lor (\stg{a}{\txid_a} \leq 4 \land \exsts{d \in \txid_a.\Writes} a \relarrow{\impGRSI.\refC{\po}} d \land A = \{\mwi d\})$ and that $B = \absGRSI.\Transactions_{\txid_b} \lor (\stg{b}{\txid_b} \geq 3 \land B = \absGRSI.(\Transactions_{\txid_b} \cap \Writes))$, as required.

In case (5.d) from the proof of the base case we know there exist $A, C_1 \ne \emptyset$ such that $A \times C_1 \subseteq \absGRSI.\rsihb$; that $A = \absGRSI.\Transactions_{\txid_a} \lor (\stg{a}{\txid_a} \leq 2 \land A = \absGRSI.(\Transactions_{\txid_a} \cap \EReads)) \lor (\stg{a}{\txid_a} \leq 4 \land \exsts{d \in \txid_a.\Writes} a \relarrow{\impGRSI.\refC{\po}} d \land A = \{\mwi d\})$, and that either 
A) $C_1 = \absGRSI.\Transactions_{\txid_c}$; or
B) $(\stg{c}{\txid_c} \geq 3 \land C_1 = \absGRSI.(\Transactions_{\txid_c} \cap \Writes))$.

On the other hand, from (\ref{IH:rsi_alt_completeness}) we know there exist $C_2, B \ne \emptyset$ such that $C_2 \times B \subseteq \absGRSI.\rsihb$; that $B = \absGRSI.\Transactions_{\txid_b} \lor (\stg{b}{\txid_b} \geq 3 \land B = \absGRSI.(\Transactions_{\txid_b} \cap \Writes))$; and that either:
i) $C_2 = \absGRSI.\Transactions_{\txid_c}$; or
ii) $\stg{c}{\txid_c} \leq 2 \land C_2 = \absGRSI.(\Transactions_{\txid_c} \cap \EReads)$; 
iii) $\stg{c}{\txid_c} \leq 4 \land \exsts{e \in \txid_c.\Writes} c \relarrow{\impGRSI.\refC{\po}} e \land C_2 = \{\mwi  e\})$.

In cases (A.i-A.iii) and (B.i, B.iii) from the transitivity of $\absGRSI.\rsihb$ we know there exists $A, B \ne \emptyset$ such that $A \times B \subseteq \absGRSI.\rsihb$; that $A = \absGRSI.\Transactions_{\txid_a} \lor (\stg{a}{\txid_a} \leq 2 \land A = \absGRSI.(\Transactions_{\txid_a} \cap \EReads)) \lor (\stg{a}{\txid_a} \leq 4 \land \exsts{d \in \txid_a.\Writes} a \relarrow{\impGRSI.\refC{\po}} d \land A = \{\mwi d\})$; and that $B = \absGRSI.\Transactions_{\txid_b} \lor (\stg{b}{\txid_b} \geq 3 \land B = \absGRSI.(\Transactions_{\txid_b} \cap \Writes))$, as required.
Case (B.ii) cannot arise as we would otherwise have $3 \leq \stg{c}{\txid_c} \leq 2$, leading to a contradiction.

\end{proof}
\end{lemma}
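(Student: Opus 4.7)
}
I propose a proof that closely mirrors the one for Lemma \ref{lem:rsi_completeness} (the eager case), adapting it to account for the lazy implementation's use of matching writes $\mw{\cdot}$. As in the eager case, I would decompose $\impGRSI.\hb$ into its finite approximations: let $\hb_0 \eqdef \impGRSI.(\po \cup \rf \cup \rsilo)$ and $\hb_{n{+}1} \eqdef \hb_0 ; \hb_n$, so that $\impGRSI.\hb = \bigcup_{n \in \Nats} \hb_n$. I then prove the strengthened statement for each $\hb_n$ by induction on $n$.

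For the base case $n = 0$, I would perform a 3-way split on the primitive edge ($\po$, $\rf$, or $\rsilo$), and within each a 5-way split on whether each of $a,b$ is non-transactional or transactional (in the same or different transaction). The $\po$ case is immediate from the construction of $\impGRSI.\po$. For the $\rf$ case, I use the four clauses of $\mathsf{RF}$: when $a \in \impGRSI.\Events_{\txid_a}$ is a write, the edge originates from $\mw{a}$ in the $\mathit{Ws}$ block, giving the witness $A = \{\mwi{d}\}$ with $d = a$; the remaining clauses yield witnesses of the form $\Transactions_{\txid_a}$ (for $a \in \impGRSI.\NT$) or $\Transactions_{\txid_b}$ by closure of $\rf$ under $\rsirf$. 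For $\rsilo$, I consult the four constructor sets of $\LO_\x$: the $\rf$, $\co$, $\co;\rf$ closures each yield a full $\tlift{\rf} \cup \cot \cup \tlift{(\co;\rf)} \subseteq \rsihb$ relationship between the transaction classes, while the $\fr$-induced edges ($\mathit{rl}/\mathit{ru}$ to $\mathit{pl}/\mathit{wu}$) give the stage-constrained witnesses $\stg{a}{\txid_a} \leq 2$, $\stg{b}{\txid_b} \geq 3$, $A \times B \subseteq \absGRSI.\rsifr$.

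For the inductive step, I decompose $(a,b) \in \hb_n$ into $(a,c) \in \hb_0$ and $(c,b) \in \hb_m$, then do a 5-way split on the location of $a$ relative to $b$ and a further 3-way split on where $c$ lies. Combining the base-case witness for $(a,c)$ with the IH witness for $(c,b)$, transitivity of $\absGRSI.\rsihb$ gives the required witness $A \times B \subseteq \absGRSI.\rsihb$ in most cases. The delicate sub-cases are those where either: (i) the combined witness collapses to a self-relation (e.g.\ $\Transactions_\txid \times \Transactions_\txid$ or $\{\mwi d\} \times \{\mwi d\}$), yielding a contradiction with RSI-consistency of $\absGRSI$; or (ii) the two stage bounds on $c$ coming from the two sides of the composition are incompatible (e.g.\ $\stg{c}{\txid_c} \leq 2$ and $\stg{c}{\txid_c} \geq 3$), yielding a direct contradiction. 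A crucial subtlety in the "in-transaction" case $a,b \in \impGRSI.\Events_\txid$ with $c$ also in $\Transactions_\txid$ is that the only non-contradictory witness forces $\stg{a}{\txid} \leq 2$ and $\stg{b}{\txid} \geq 3$, which from the construction of $\impGRSI.\po$ yields $(a,b) \in \impGRSI.\po$ as required.

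The main obstacle I anticipate is bookkeeping: keeping the case analysis disciplined enough that each of the six possible shapes of $A$ (three from $a \in \impGRSI.\Events_{\txid_a}$, one from $a \in \impGRSI.\NT$, plus the degenerate empty cases) correctly composes with each of the four shapes of $C$, under the $\mw{\cdot}$ correspondence which introduces a layer of indirection absent from the eager proof. In particular, I must verify that when the eager-proof witness uses $\{d\}$ for a transactional write, the corresponding lazy-proof witness $\{\mwi d\}$ still lies in $\absGRSI.\Writes \cap \Transactions_{\txid_a}$, so that the $\rsihb$-closure arguments (especially involving $\absGRSI.\poi \cap \Writes^2 \subseteq \rsihb$ via $\rsipo$) go through unchanged. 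Once the cases are set up carefully, the argument is otherwise parallel to the eager case.
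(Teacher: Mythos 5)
Your plan follows essentially the same route as the paper's proof: the same decomposition of $\impGRSI.\hb$ into the approximations $\hb_n$, the same induction on $n$ with a $\po$/$\rf$/$\rsilo$ split in the base case and a five-way split on the transactional status of $a$ and $b$ (plus a further split on the intermediate event $c$) in the inductive step, and the same treatment of the delicate sub-cases via RSI-consistency contradictions and stage incompatibilities, with $\{\mwi d\}$ replacing the eager proof's $\{d\}$ exactly as the paper does. No gaps; this matches the paper's argument.
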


\begin{theorem}[Completeness]
For all RSI execution graphs $\absGRSI$ and their counterpart implementation graphs $\impGRSI$ constructed as above,
\[
	\rsicon  \Rightarrow \consistent{\impGRSI}
\]
\begin{proof}
Pick an arbitrary RSI execution graph $\absGRSI$ and its counterpart implementation graph $\impGRSI$ constructed as above and let us assume that $\rsicon$ holds.
From the \noshade{definition of $\consistent{\impGRSI}$} it then suffices to show: 
\begin{enumerate}
	\item $\irr{\impGRSI.\hb}$ \label{goal:rsi_alt_completeness_hb_irr}
	\item $\irr{\impGRSI.\co ; \impGRSI.\hb}$ \label{goal:rsi_alt_completeness_co_hb_irr}
	\item $\irr{\impGRSI.\fr ; \impGRSI.\hb}$ \label{goal:rsi_alt_completeness_fr_hb_irr} \\
\end{enumerate}

\noindent \textbf{RTS. part \ref{goal:rsi_alt_completeness_hb_irr}}\\
We proceed by contradiction. Let us assume that there exists $a$ such that $(a, a) \in \impGRSI.\hb$.
There are now two cases to consider: 1) $a \in \impGRSI.\NT$; or 2) $\exsts{\txid} a \in \impGRSI.\Events_{\txid}$
In case (1) from \cref{lem:rsi_alt_completeness} we have $(a, a) \in \absGRSI.\rsihb$, contradicting the assumption that $\absGRSI$ is RSI-consistent. 
Similarly, in case (2) from \cref{lem:rsi_alt_completeness} we have $(a, a) \in \impGRSI.\po$, leading to a contradiction as $\impGRSI.\po$ is acyclic by construction. \\

\noindent \textbf{RTS. part \ref{goal:rsi_alt_completeness_co_hb_irr}}\\
We proceed by contradiction. Let us assume that there exist $a, b$ such that $(a, b) \in \impGRSI.\hb$ and $(b, a) \in \impGRSI.\co$.
Let $\loc a = \loc b = \x$. 
There are then five cases to consider: 
1) $\exsts{\txid} (a, b) \in \impGRSI.\Events_{\txid}$; or 
2) $a, b \in \impGRSI.\NT$; or 
3) $a \in \impGRSI.\NT$ and $b \in \impGRSI.\Events_{\txid_b}$; or 
4) $a \in \impGRSI.\Events_{\txid_a}$ and $b \in \impGRSI.\NT$; or 
5) $a \in \impGRSI.\Events_{\txid_a}$, $b \in \impGRSI.\Events_{\txid_b}$ and $\txid_a \ne \txid_b$. 

In case (1) from the construction of $\impGRSI.\co$ we have $(\mwi  b, \mwi  a) \in \absGRSI.\co$. 
Moreover, from \cref{lem:rsi_alt_completeness} we have $(a, b) \in \impGRSI.\po$, and thus from the construction of $\impGRSI.\po$ we have $(\mwi a, \mwi b) \in \absGRSI.\po$. 
Moreover, since $\mwi a, \mwi b$ are write events in the same transaction $\txid$, we have $(\mwi a,$ $ \mwi b) \in \absGRSI.\poi \cap \Writes^2 \subseteq \absGRSI.\rsihb$. As such we have $\mwi a \relarrow{\absGRSI.\rsihb} \mwi b$ $\relarrow{\absGRSI.\co} \mwi a$, contradicting the assumption that $\absGRSI$ is RSI-consistent. 

In case (2) from the construction of $\impGRSI.\co$ we have $(b, a) \in \absGRSI.\co$. 
Moreover, from \cref{lem:rsi_alt_completeness} we have $(a, b) \in \absGRSI.\rsihb$.
As such we have $a \relarrow{\absGRSI.\rsihb} b \relarrow{\absGRSI.\co} a$, contradicting the assumption that $\absGRSI$ is RSI-consistent. 

In case (3) from the construction of $\impGRSI.\co$ we have $(\mwi b, a) \in \absGRSI.\co$. 
Moreover, since $b \in \impGRSI.\Writes$ and thus $\mwi b \in \absGRSI.\Writes$, from \cref{lem:rsi_alt_completeness} we have $(a, b) \in \absGRSI.\rsihb$.
As such we have $a \relarrow{\absGRSI.\rsihb} \mwi b \relarrow{\absGRSI.\co} a$, contradicting the assumption that $\absGRSI$ is RSI-consistent.

In cases (4, 5) from the construction of $\impGRSI$ we know that 
$\mwi a \relarrow{\impGRSI.\po} a \relarrow{\impGRSI.\hb} b$, and thus from the transitivity of $\impGRSI.\hb$ we have $(\mwi a, b) \in \impGRSI.\hb$.
As $\stg{\mwi a}{\txid_a} = 1$, from \cref{lem:rsi_alt_completeness} we have $(\mwi a, b) \in \absGRSI.\rsihb$.
On the other hand, from the construction of $\impGRSI.\co$ we have $(b, \mwi a) \in \absGRSI.\co$. 
Consequently, we have $\mwi a \relarrow{\absGRSI.\rsihb} b \relarrow{\absGRSI.\co} \mwi a$, contradicting the assumption that $\absGRSI$ is RSI-consistent. \\

\noindent \textbf{RTS. part \ref{goal:rsi_alt_completeness_fr_hb_irr}}\\
We proceed by contradiction. Let us assume that there exists $a, b$ such that $(a, b) \in \impGRSI.\hb$ and $(b, a) \in \impGRSI.\fr$.
Let $\loc a = \loc b = \x$. 
There are then five cases to consider: 
1) $\exsts{\txid} (a, b) \in \impGRSI.\Events_{\txid}$; or 
2) $a, b \in \impGRSI.\NT$; or 
3) $a \in \impGRSI.\NT$ and $b \in \impGRSI.\Events_{\txid_b}$; or 
4) $a \in \impGRSI.\Events_{\txid_a}$ and $b \in \impGRSI.\NT$; or 
5) $a \in \impGRSI.\Events_{\txid_a}$, $b \in \impGRSI.\Events_{\txid_b}$ and $\txid_a \ne \txid_b$. \\

Case (1) cannot arise as from the definition of $\impGRSI.\fr$ we know $a$ is a write event in $\txid.\mathit{Ws}$ while $b$ is a read event in $\txid.\mathit{Ts}$ and no $\po$ edge exists between the events of $\txid.\mathit{Ts}$ and $\txid.\mathit{Ws}$.

In case (1) from \cref{lem:rsi_alt_completeness} we have $(a, b) \in \impGRSI.\po$. 
This however leads to a contradiction as from the definition of $\impGRSI.\fr$ we know $a$ is a write event in $\txid.\mathit{Ws}$ while $b$ is a read event in $\txid.\mathit{Ts}$ and no $\po$ edge exists between the events of $\txid.\mathit{Ws}$ and $\txid.\mathit{Ts}$.

In case (2) from the construction of $\impGRSI.\fr$ we know that $(a, b) \in \absGRSI.\fr$. 
On the other hand, from \cref{lem:rsi_alt_completeness} we have $(a, b) \in \absGRSI.\rsihb$.
As such we have $a \relarrow{\absGRSI.\rsihb} b \relarrow{\absGRSI.\fr} a$, contradicting the assumption that $\absGRSI$ is RSI-consistent. 

In case (3), as $b$ is a read event and thus $\stg{b}{\txid_b} = 1$, from \cref{lem:rsi_alt_completeness} we have $(a, b) \in \absGRSI.\rsihb$.
On the other hand, from the construction of $\impGRSI.\fr$ we have $(b, a) \in \absGRSI.\fr$. 
Consequently, we have $a \relarrow{\absGRSI.\rsihb} b \relarrow{\absGRSI.\fr} a$, contradicting the assumption that $\absGRSI$ is RSI-consistent.

In case (4) since $a$ is a write, from the construction of $\impGRSI$ we know that 
$\mwi a \relarrow{\impGRSI.\po} a \relarrow{\impGRSI.\hb} b$, and thus from the transitivity of $\impGRSI.\hb$ we have $(\mwi a, b) \in \impGRSI.\hb$.
As $\stg{\mwi a}{\txid_a} = 1$, from \cref{lem:rsi_alt_completeness} we have $(\mwi a, b) \in \absGRSI.\rsihb$.
On the other hand, from the construction of $\impGRSI.\fr$ we have $(b, \mwi a) \in \absGRSI.\fr$. 
Consequently, we have $\mwi a \relarrow{\absGRSI.\rsihb} b \relarrow{\absGRSI.\fr} \mwi a$, contradicting the assumption that $\absGRSI$ is RSI-consistent.

Similarly, in case (5) since $a$ is a write event, from the construction of $\impGRSI$ we know that 
$\mwi a \relarrow{\impGRSI.\po} a \relarrow{\impGRSI.\hb} b$, and thus from the transitivity of $\impGRSI.\hb$ we have $(\mwi a, b) \in \impGRSI.\hb$.
As $\stg{\mwi a}{\txid_a} = \stg{b}{\txid_b} = 1$, from \cref{lem:rsi_alt_completeness} we have $(\mwi a, b) \in \absGRSI.\rsihb$.
On the other hand, from the construction of $\impGRSI.\fr$ we have $(b, \mwi a) \in \absGRSI.\fr$. 
Consequently, we have $\mwi a \relarrow{\absGRSI.\rsihb} b \relarrow{\absGRSI.\fr} \mwi a$, contradicting the assumption that $\absGRSI$ is RSI-consistent. \\

\end{proof}
\end{theorem}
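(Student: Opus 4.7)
The plan is to reduce $\consistent{\impGRSI}$ to the three irreflexivity obligations stated at the end of the excerpt --- $\irr{\impGRSI.\hb}$, $\irr{\impGRSI.\hb; \impGRSI.\co}$, and $\irr{\impGRSI.\hb; \impGRSI.\fr}$ --- and discharge each by contradiction, using the shape characterisation of $\impGRSI.\hb$ edges provided by Lemma~\ref{lem:rsi_alt_completeness}. The lock-axiom obligations (\ref{ax:lock_wsync}), (\ref{ax:lock_wex}), (\ref{ax:lock_rshared}) from \cref{def:si_implementation_consistency} are immediate from the explicit construction of $\LO_\x$: by inspection, every writer-lock event on $\x$ is totally related to every other $\x$-lock event, and the four sub-families of $\LO_\x$ are chosen so that an acquisition is never $\silo$-between a writer lock and its matching writer unlock, and similarly for reader locks vis-\`a-vis promotions.

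For the first irreflexivity, if $(a,a) \in \impGRSI.\hb$ then \cref{lem:rsi_alt_completeness} yields either (i) $(a,a) \in \impGRSI.\po$, impossible because $\impGRSI.\po$ is acyclic by construction, or (ii) a nonempty rectangle $A \times B \subseteq \absGRSI.\rsihb$ with $a$ in one side and $a$ in the other. In case (ii) I case-split on the stage of $a$: the "full transaction" option gives $\Transactions_{\txid} \times \Transactions_{\txid} \subseteq \rsihb$, which is reflexive and contradicts $\rsicon$; the "external reads only" option gives a reflexive element of $\absGRSI.(\Transactions_{\txid} \cap \EReads)$; and the "$\{\mwi d\}$" option gives $(\mwi d, \mwi d) \in \rsihb$; all are ruled out by $\rsicon$.

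For the $\co$-irreflexivity, let $(a,b) \in \impGRSI.\hb$ and $(b,a) \in \impGRSI.\co$. Since $\impGRSI.\co = \mathsf{MO}$ is built entirely through the matching-write map $\mw$, every $\impGRSI.\co$-edge either already lies in $\absGRSI.\co$ (when both endpoints are non-transactional) or its endpoints' $\mw$-preimages do. I would case-split on whether $a$ and $b$ are transactional. When both lie in the same transaction, \cref{lem:rsi_alt_completeness} forces $(a,b) \in \impGRSI.\po$, so $a$ and $b$ are both in $\mathit{Ws}$ with $\mwi a, \mwi b$ in the same transaction; the inclusion $\absGRSI.\poi \cap \Writes^2 \subseteq \absGRSI.\rsihb$ then closes a $\rsihb;\co$ cycle on the abstract side. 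In the cross-transaction and mixed cases, the lemma's rectangle can always be chosen so that its right-hand side is (or contains) the transactional write $\mwi a$ (using the "$\{\mwi d\}$" alternative when needed), again producing a $\rsihb;\co$ cycle forbidden by $\rsicon$. The $\fr$-irreflexivity proceeds analogously, but now the left-hand side of the rectangle must be aligned with the read (or the snapshot/validation read $\mathit{rs}_\x$ or $\mathit{vr}_\x$) that witnesses $\impGRSI.\fr$, and the edge it closes against is $\absGRSI.\fr$, yielding either a $\rsihb; \fr$ or a $\rsihb; \rsifr$ cycle depending on whether the read is internal or external.

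The main obstacle will be the $\fr$-case, for two interacting reasons. First, the lazy implementation reads each location \emph{twice} --- at $\mathit{rs}_\x$ during snapshotting and at $\mathit{vr}_\x$ during validation --- so $\impGRSI.\fr$ can target either event, and one must show that in every such scenario the corresponding $\absGRSI.\fr$ edge matches up with the appropriate side of the lemma's rectangle. Second, when the $\impGRSI.\hb$ source $a$ is a transactional write in $\mathit{Ts}$ rather than in $\mathit{Ws}$, the lemma forces us to work with $\mwi a$ instead of $a$, which requires the strengthened $\impGRSI.\po$ prefix $\mwi a \relarrow{\impGRSI.\po} a \relarrow{\impGRSI.\hb} b$ and the transitivity of $\impGRSI.\hb$ to relocate the cycle onto the $\mathit{Ws}$-witness --- precisely the place where the lazy construction diverges from the eager one, and where the $\mw$-bookkeeping must be tracked carefully to land in the correct disjunct of \cref{lem:rsi_alt_completeness}.
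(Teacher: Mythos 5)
Your proposal follows essentially the same route as the paper's proof: the same reduction to the three irreflexivity obligations, the same case analysis on the (non-)transactional status of the endpoints, the same use of Lemma~\ref{lem:rsi_alt_completeness} to transfer $\hb$-edges into $\rsihb$-rectangles, and the same relocation of $\co$/$\fr$ endpoints through the matching-write map via $\mwi a \relarrow{\impGRSI.\po} a$ and transitivity of $\impGRSI.\hb$. The obstacles you flag (the double read $\mathit{rs}_\x$/$\mathit{vr}_\x$ and the $\mathit{Ts}$-vs-$\mathit{Ws}$ bookkeeping) are exactly the points the paper's proof handles, so no substantive divergence or gap.
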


\end{document}
